\newcommand\setItemnumber[1]{\setcounter{enum\romannumeral\@enumdepth}{\numexpr#1-1\relax}}
\tikzstyle{medium rectangle}=[fill=white, draw=black, shape=rectangle, minimum width=0.75 cm, minimum height=1 cm]
\tikzstyle{red}=[fill=red, draw=black, shape=circle]
\tikzstyle{system_label}=[fill=none, draw=none, shape=circle]
\tikzstyle{small square}=[fill=white, draw=black, shape=rectangle]
\tikzstyle{big rectangle}=[fill=white, draw=black, shape=rectangle, minimum width=1.2 cm, minimum height=1.5 cm]
\tikzstyle{tall rectangle}=[fill=white, draw=black, shape=rectangle, minimum width=1.2 cm, minimum height=2.1 cm]
\tikzstyle{medium square}=[fill=white, draw=black, shape=rectangle, minimum width=0.6 cm, minimum height=0.6 cm]
\tikzstyle{huge rectangle}=[fill=white, draw=black, shape=rectangle, minimum width=2 cm, minimum height=4.5 cm]
\tikzstyle{yellow small}=[fill={rgb,255: red,255; green,252; blue,144}, draw=black, shape=rectangle]
\tikzstyle{violet node}=[fill={rgb,255: red,195; green,187; blue,255}, draw=black, shape=rectangle, minimum width=1.2 cm, minimum height=1.5 cm]
\tikzstyle{pink node}=[fill={rgb,255: red,238; green,174; blue,255}, draw=black, shape=rectangle, minimum width=1.2 cm, minimum height=1.5 cm]
\tikzstyle{pompelmo node}=[fill={rgb,255: red,255; green,187; blue,166}, draw=black, shape=rectangle, minimum width=1.2 cm, minimum height=1.5 cm]
\tikzstyle{ottano node}=[fill={rgb,255: red,131; green,201; blue,187}, draw=black, shape=rectangle]
\tikzstyle{blue}=[fill={rgb,255: red,214; green,201; blue,255}, draw=black, shape=rectangle]
\tikzstyle{pure}=[fill=blue, draw=blue, shape=circle, minimum size=8pt, inner sep=0pt, outer sep=0pt]
\tikzstyle{extremal}=[fill={rgb,255: red,0; green,233; blue,0}, draw={rgb,255: red,0; green,233; blue,0}, shape=circle, minimum size=8pt, inner sep=0pt, outer sep=0pt]
\tikzstyle{atomic - extremal}=[fill={rgb,255: red,0; green,167; blue,0}, draw={rgb,255: red,0; green,167; blue,0}, shape=circle, minimum size=8pt, inner sep=0pt, outer sep=0pt]
\tikzstyle{lambda}=[-, draw={rgb,255: red,191; green,191; blue,191}]
\tikzstyle{state}=[<-]
\tikzstyle{bluette}=[-, fill={rgb,255: red,214; green,201; blue,255}, draw={rgb,255: red,192; green,181; blue,229}]
\tikzstyle{greenish}=[-, fill={rgb,255: red,160; green,217; blue,255}, draw={rgb,255: red,137; green,188; blue,219}]
\tikzstyle{white}=[-, fill=white]
\tikzstyle{reddish}=[-, fill={rgb,255: red,255; green,143; blue,145}, dashed]
\tikzstyle{yellowish}=[-, fill={rgb,255: red,255; green,252; blue,144}, dashed, line width=1 pt]
\tikzstyle{red non-dashed}=[-, fill={rgb,255: red,255; green,143; blue,145}]
\tikzstyle{violet}=[-, dashed, fill={rgb,255: red,195; green,187; blue,255}]
\tikzstyle{green non-dashed}=[-, fill={rgb,255: red,160; green,217; blue,255}]
\tikzstyle{pompelmo non-dashed}=[-, fill={rgb,255: red,255; green,187; blue,166}]
\tikzstyle{pink}=[-, fill={rgb,255: red,238; green,174; blue,255}, dashed]
\tikzstyle{blue non-dashed}=[-, fill={rgb,255: red,178; green,255; blue,246}]
\tikzstyle{yell non-dashed}=[-, fill={rgb,255: red,255; green,248; blue,137}]
\tikzstyle{ottano}=[-, fill={rgb,255: red,164; green,184; blue,255}, draw={rgb,255: red,147; green,168; blue,229}]
\tikzstyle{pompelmo}=[-, fill={rgb,255: red,255; green,187; blue,166}, draw={rgb,255: red,213; green,156; blue,139}]
\tikzstyle{dark blue}=[-, fill={rgb,255: red,131; green,201; blue,187}, draw={rgb,255: red,119; green,183; blue,170}]
\tikzstyle{d.blue non-dashed}=[-, fill={rgb,255: red,237; green,148; blue,112}]
\tikzstyle{dashed edge}=[-, dashed]
\tikzstyle{ottano non-drawn}=[-, fill={rgb,255: red,164; green,184; blue,255}]
\tikzstyle{atomic}=[-, fill={rgb,255: red,232; green,232; blue,232}, draw={rgb,255: red,255; green,128; blue,0}, line width=1.5pt]
\tikzstyle{atomic - dashed}=[-, dashed, fill={rgb,255: red,232; green,232; blue,232}, draw={rgb,255: red,255; green,128; blue,0}, line width=0.2pt]
\tikzstyle{det}=[-, fill={rgb,255: red,232; green,232; blue,232}, draw={rgb,255: red,227; green,0; blue,3}, line width=1.5pt]
\tikzstyle{det - dashed}=[-, dashed, fill={rgb,255: red,232; green,232; blue,232}, draw={rgb,255: red,227; green,0; blue,3}, line width=0.2pt]
\tikzstyle{standard}=[-, fill={rgb,255: red,232; green,232; blue,232}, line width=1.5pt]
\tikzstyle{empty}=[-, fill={rgb,255: red,232; green,232; blue,232}, draw={rgb,255: red,232; green,232; blue,232}, line width=1.5pt]
\tikzstyle{empty- dashed}=[-, dashed, fill={rgb,255: red,232; green,232; blue,232}, draw={rgb,255: red,232; green,232; blue,232}, line width=1.5pt]
\newcommand*{\hyperlinkcite}[1]{\hyper@link{cite}{cite.#1}}
\pretocmd{\NAT@citexnum}{\@ifnum{\NAT@ctype>\z@}{\let\NAT@hyper@\relax}{}}{}{}
\newcommand{\qw}[1][-1]{\ar @{-} [0,#1]}
\newcommand{\gate}[1]{*{\xy *+<.6em>{#1};p\save+LU;+RU **\dir{-}\restore\save+RU;+RD **\dir{-}\restore\save+RD;+LD **\dir{-}\restore\POS+LD;+LU **\dir{-}\endxy} \qw}
\newcommand{\measureD}[1]{*{\xy*+=+<.5em>{\vphantom{\rule{0em}{.1em}#1}}*\cir{r_l};p\save*!R{#1} \restore\save+UC;+UC-<.5em,0em>*!R{\hphantom{#1}}+L **\dir{-} \restore\save+DC;+DC-<.5em,0em>*!R{\hphantom{#1}}+L **\dir{-} \restore\POS+UC-<.5em,0em>*!R{\hphantom{#1}}+L;+DC-<.5em,0em>*!R{\hphantom{#1}}+L **\dir{-} \endxy} \qw}
\newcommand{\multimeasureD}[2]{*+<1em,.9em>{\hphantom{#2}}\save[0,0].[#1,0];p\save !C *{#2},p+LU+<0em,0em>;+RU+<-.8em,0em> **\dir{-}\restore\save +LD;+LU **\dir{-}\restore\save +LD;+RD-<.8em,0em> **\dir{-} \restore\save +RD+<0em,.8em>;+RU-<0em,.8em> **\dir{-} \restore \POS !UR*!UR{\cir<.9em>{r_d}};!DR*!DR{\cir<.9em>{d_l}}\restore \qw}
\newcommand{\multigate}[2]{*+<1em,.9em>{\hphantom{#2}} \qw \POS[0,0].[#1,0];p !C *{#2},p \save+LU;+RU **\dir{-}\restore\save+RU;+RD **\dir{-}\restore\save+RD;+LD **\dir{-}\restore\save+LD;+LU **\dir{-}\restore}
\newcommand{\ghost}[1]{*+<1em,.9em>{\hphantom{#1}} \qw}
\newcommand{\ustick}[1]{*!D!<0em,-.5em>=<0em>{#1}}
\newcommand{\Qcircuit}[1][0em]{\xymatrix @*=<#1>}
\newcommand{\pureghost}[1]{*+<1em,.9em>{\hphantom{#1}}}
\newcommand{\multiprepareC}[2]{*+<1em,.9em>{\hphantom{#2}}\save[0,0].[#1,0];p\save !C
  *{#2},p+RU+<0em,0em>;+LU+<+.8em,0em> **\dir{-}\restore\save +RD;+RU **\dir{-}\restore\save
  +RD;+LD+<.8em,0em> **\dir{-} \restore\save +LD+<0em,.8em>;+LU-<0em,.8em> **\dir{-} \restore \POS
  !UL*!UL{\cir<.9em>{u_r}};!DL*!DL{\cir<.9em>{l_u}}\restore}
\newcommand{\prepareC}[1]{*{\xy*+=+<.5em>{\vphantom{#1\rule{0em}{.1em}}}*\cir{l^r};p\save*!L{#1} \restore\save+UC;+UC+<.5em,0em>*!L{\hphantom{#1}}+R **\dir{-} \restore\save+DC;+DC+<.5em,0em>*!L{\hphantom{#1}}+R **\dir{-} \restore\POS+UC+<.5em,0em>*!L{\hphantom{#1}}+R;+DC+<.5em,0em>*!L{\hphantom{#1}}+R **\dir{-} \endxy}}
\newcommand{\braidingGhost}{\ghost{0em}}
\newcommand{\braiding}{*+<1em,.9em>{\hphantom{0em}} \qw; \POS[0,0]+L+<0.98em,-0.439em>;\POS[1,0]+R+<-0.98em,0.439em> **\dir{-}; \POS[0,0]+R+<-0.98em,-0.439em>;\POS[1,0]+L+<0.98em,0.439em> **{}?<(0.25)**@{-}; \POS[1,0]+L+<0.98em,0.439em>; \POS[0,0]+R+<-0.98em,-0.439em> **{}?<(0.25)**@{-}; \POS[0,0]*!RU{\cir<1.5em>{r_dr}}; \POS[0,0]*!LU{\cir<1.5em>{ur_r}};  \POS[1,0]*!RD{\cir<1.5em>{dl_l}}; \POS[1,0]*!LD{\cir<1.5em>{l_ul}}}
\newcommand{\braidingInv}{*+<1em,.9em>{\hphantom{0em}} \qw; \POS[0,0]+L+<0.98em,-0.439em>;\POS[1,0]+R+<-0.98em,0.439em> **{}?<(0.25)**@{-}; \POS[1,0]+R+<-0.98em,0.439em>;\POS[0,0]+L+<0.98em,-0.439em> **{}?<(0.25)**@{-}; \POS[0,0]+R+<-0.98em,-0.439em>;\POS[1,0]+L+<0.98em,0.439em> **\dir{-}; \POS[0,0]*!RU{\cir<1.5em>{r_dr}}; \POS[0,0]*!LU{\cir<1.5em>{ur_r}};  \POS[1,0]*!RD{\cir<1.5em>{dl_l}}; \POS[1,0]*!LD{\cir<1.5em>{l_ul}}}
\newcommand{\braidingInvId}{*+<1em,.9em>{\hphantom{0em}}; \POS[0,0]+R+<-0.98em,-0.439em>;\POS[1,0]+L+<0.98em,0.439em> **\dir{-}; \POS[0,0]*!LU{\cir<1.5em>{ur_r}};  \POS[1,0]*!RD{\cir<1.5em>{dl_l}}}
\newcommand{\braidingSym}{*+<1em,.9em>{\hphantom{0em}} \qw; \POS[0,0]+L+<0.98em,-0.439em>;\POS[1,0]+R+<-0.98em,0.439em> **\dir{-}; \POS[0,0]+R+<-0.98em,-0.439em>;\POS[1,0]+L+<0.98em,0.439em> **\dir{-}; \POS[0,0]*!RU{\cir<1.5em>{r_dr}}; \POS[0,0]*!LU{\cir<1.5em>{ur_r}};  \POS[1,0]*!RD{\cir<1.5em>{dl_l}}; \POS[1,0]*!LD{\cir<1.5em>{l_ul}}}
\newcommand{\gategroupColor}[7]{\POS"#1,#2"."#3,#2"."#1,#4"."#3,#4"!C*+<#5>[#7]\frm{#6}}
\newcommand{\splitterGhost}{\pureghost{0em}}
\newcommand{\splitter}{*+<1em,.9em>{\hphantom{0em}} \qw; \POS[0,0]+L+<0.98em,-0.439em>;\POS[1,0]+R+<-0.98em,0.439em> **\dir{-}; \POS[0,0]+L;\POS[0,0]+R **\dir{-}; \POS[0,0]*!RU{\cir<1.5em>{r_dr}}; \POS[1,0]*!LD{\cir<1.5em>{l_ul}}}
\newcommand{\myQcircuitSmall}[1]{\begin{aligned} \Qcircuit @C=0.6em @R=0.8em {#1} \end{aligned} }
\newcommand{\myQcircuit}[1]{\begin{aligned} \Qcircuit @C=0.8em @R=0.8em {#1} \end{aligned} }
\newcommand{\myQcircuitMedium}[1]{\begin{aligned} \Qcircuit @C=1em @R=1em {#1} \end{aligned} }
\newcommand{\myQcircuitBox}[1]{\begin{aligned} \Qcircuit @C=0.8em @R=1.5em {#1} \end{aligned} }
\newcommand{\myQcircuitSupMat}[1]{\begin{aligned} \Qcircuit @C=0.8em @R=1em {#1} \end{aligned} }
\newcommand{\myQcircuitComp}[1]{\begin{aligned} \Qcircuit @C=0.8em @R=1.5em {#1} \end{aligned} }
\theoremstyle{definition}
\newtheorem*{definition*}{Definition}
\newtheorem{definition}{Definition}
\theoremstyle{plain}
\newtheorem{procedure}{Procedure}
\newcommand{\textdef}[1]{\textit{{#1}}}
\theoremstyle{plain}
\newtheorem{property}{Property}
\newtheorem*{property*}{Property}
\newtheorem{postulate}{Postulate}
\newtheorem{theorem}{Theorem}
\newtheorem*{theorem*}{Theorem}
\newtheorem{corollary}{Corollary}
\newtheorem*{corollary*}{Corollary}
\newtheorem*{proposition*}{Proposition}
\newtheorem*{conjecture*}{Conjecture}
\newtheorem*{question*}{Question}
\newtheorem*{problem*}{Problem}
\newtheorem*{lemma*}{Lemma}
\newtheorem{lemma}{Lemma}
\newtheorem*{example*}{Example}
\newtheorem*{remark*}{Remark}
\newtheorem{remark}{Remark}
\newtheorem*{proof*}{Proof}
\newcommand{\proofRight}{\ensuremath{\mathbf{\Longrightarrow ) \; }}} 
\newcommand{\proofLeft}{\ensuremath{\mathbf{\Longleftarrow ) \; }}}
\newtheorem*{rep@theorem}{\rep@title}
\newcommand{\newreptheorem}[2]{%
	\newenvironment{rep#1}[1]{%
		\def\rep@title{\autoref{##1}}%
		\begin{rep@theorem}}%
		{\end{rep@theorem}}}
\DeclareSymbolFont{sfletters}{OML}{cmbrm}{m}{it}
\DeclareMathSymbol{\salpha}{\mathord}{sfletters}{"0B}
\DeclareMathSymbol{\sbeta}{\mathord}{sfletters}{"0C}
\DeclareMathSymbol{\sgamma}{\mathord}{sfletters}{"0D}
\DeclareMathSymbol{\sdelta}{\mathord}{sfletters}{"0E}
\DeclareMathSymbol{\sepsilon}{\mathord}{sfletters}{"0F}
\DeclareMathSymbol{\szeta}{\mathord}{sfletters}{"10}
\DeclareMathSymbol{\seta}{\mathord}{sfletters}{"11}
\DeclareMathSymbol{\stheta}{\mathord}{sfletters}{"12}
\DeclareMathSymbol{\siota}{\mathord}{sfletters}{"13}
\DeclareMathSymbol{\skappa}{\mathord}{sfletters}{"14}
\DeclareMathSymbol{\slambda}{\mathord}{sfletters}{"15}
\DeclareMathSymbol{\smu}{\mathord}{sfletters}{"16}
\DeclareMathSymbol{\snu}{\mathord}{sfletters}{"17}
\DeclareMathSymbol{\sxi}{\mathord}{sfletters}{"18}
\DeclareMathSymbol{\spi}{\mathord}{sfletters}{"19}
\DeclareMathSymbol{\srho}{\mathord}{sfletters}{"1A}
\DeclareMathSymbol{\ssigma}{\mathord}{sfletters}{"1B}
\DeclareMathSymbol{\stau}{\mathord}{sfletters}{"1C}
\DeclareMathSymbol{\supsilon}{\mathord}{sfletters}{"1D}
\DeclareMathSymbol{\sphi}{\mathord}{sfletters}{"1E}
\DeclareMathSymbol{\schi}{\mathord}{sfletters}{"1F}
\DeclareMathSymbol{\spsi}{\mathord}{sfletters}{"20}
\DeclareMathSymbol{\somega}{\mathord}{sfletters}{"21}
\DeclareMathSymbol{\svarepsilon}{\mathord}{sfletters}{"22}
\DeclareMathSymbol{\svartheta}{\mathord}{sfletters}{"23}
\DeclareMathSymbol{\svarpi}{\mathord}{sfletters}{"24}
\DeclareMathSymbol{\svarrho}{\mathord}{sfletters}{"25}
\DeclareMathSymbol{\svarsigma}{\mathord}{sfletters}{"26}
\DeclareMathSymbol{\svarphi}{\mathord}{sfletters}{"27}
\def\<{\langle}\def\>{\rangle}
\newcommand{\mathDef}{:=}
\newcommand{\prim}[1]{{#1^{\prime}}}
\newcommand{\secondE}[1]{#1^{\prime \prime}}
\newcommand*{\cardinality}[1]{\ensuremath{\vert #1 \vert}}
\newcommand*{\algebraicDual}[1]{\ensuremath{#1^{\vee}}}
\newcommand{\cartesianC}{\times}
\newcommand{\cartesianProduct}[2]{\ensuremath{#1 \cartesianC #2}}
\newcommand{\powerSetOrder}[1]{\mathscr{P}_{\text{ord}}\left\{ #1 \right\}}
\newcommand{\inverse}[1]{{#1}^{-1}}
\newcommand{\normGeneric}[2]{\ensuremath{\left\lVert {#1} \right\rVert_{#2}}}
\newcommand{\normOp}[1]{\normGeneric{#1}{op}}
\newcommand{\normSup}[1]{\normGeneric{#1}{sup}}
\newcommand{\normGen}[1]{\normGeneric{#1}{gen}}
\newcommand*{\OPTMath}{\ensuremath{\Theta}}
\newcommand*{\theory}{\ensuremath{\Theta}}
\newcommand{\rbra}[1]{\left({#1}\right\vert}
\newcommand{\rket}[1]{\left\vert{#1}\right)}
\newcommand{\rbraket}[2]{\left( #1 \vphantom{#2} \right|\left. #2 \vphantom{#1} \right)}
\newcommand{\rbraSystem}[2]{{\left({#1}\right\vert}_{\system{#2}}}
\newcommand{\rketSystem}[2]{{\left\vert{#1}\right)}_{\system{#2}}}
\newcommand{\rbraketSystem}[3]{{\left( #1 \vphantom{#2} \right|\left. #2 \vphantom{#1} \right)}_{\system{#3}}}
\newcommand{\uniDetEff}{e}
\newcommand{\system}[1]{\ensuremath{\mathrm{#1}}}
\newcommand{\systemConditioned}[2]{\ensuremath{\mathrm{#1}^{\left(\outcome{#2}\right)}}}
\newcommand{\systemSequence}[2]{\ensuremath{\mathrm{#1}^{#2}}}
\newcommand{\systemIndexDown}[2]{\ensuremath{\mathrm{#1}_{#2}}}
\newcommand{\trivialSystem}{\ensuremath{\system{I}}}
\newcommand{\s}[1]{\ustick{\scriptstyle{\system{#1}}}}
\newcommand{\sSequence}[2]{\ustick{\scriptstyle{\systemSequence{#1}{#2}}}}
\newcommand{\sIndexDown}[2]{\ustick{\scriptstyle{\systemIndexDown{#1}{#2}}}}
\newcommand{\sConditioned}[2]{\ustick{\scriptstyle{\systemConditioned{#1}{#2}}}}
\newcommand{\sSequenceIndex}[3]{\ustick{\scriptstyle{\ensuremath{\mathrm{#1}_{#2}^{#3}}}}}
\newcommand{\sSequenceConditioned}[3]{\ustick{\scriptstyle{ {\systemConditioned{#1}{#2}}}^{#3} }}
\newcommand{\sSequencePrime}[2]{\ustick{\scriptstyle{\systemSequence{#1 '}{#2}}}}
\newcommand{\sEnsemble}[3]{\ustick{\scriptstyle{ \left\{ \system{#1}_{#2} \right\}_{#2 \in #3} } }}
\newcommand{\sEnsembleDouble}[5]{\ustick{\scriptstyle{ \left\{ \system{#1}_{#2} \right\}_{#2 \in #3} \left\{ \system{#1}_{#4} \right\}_{#4 \in #5}} }}
\newcommand*{\Sys}[1]{\ensuremath{\mathsf{Sys\left(\mathrm{#1}\right)}}}
\newcommand*{\@sysDimensionD}{\mathsf{D}}
\newcommand*{\sysDimension}[1]{\ensuremath{\@sysDimensionD_{\system{#1}}}}
\newcommand*{\@sysEquivC}{\cong}
\newcommand*{\sysNotEquiv}[2]{ \ensuremath{ \system{#1} \centernot{\@sysEquivC} \system{#2}} }
\newcommand*{\sysEquiv}[2]{ \ensuremath{ \system{#1} \@sysEquivC \system{#2}} }
\newcommand*\bigcdot{\mathpalette\bigcdot@{.5}}
\newcommand*\bigcdot@[2]{\mathbin{\vcenter{\hbox{\scalebox{#2}{$\m@th#1\bullet$}}}}}
\newcommand*{\outcomeSpace}[1]{\ensuremath{\mathsf{#1}}}
\newcommand*{\outcome}[1]{\ensuremath{#1}}
\newcommand*{\outcomeDouble}[2]{\left(\outcome{#1}, \outcome{#2} \right)}
\newcommand*{\outcomeIncluded}[2]{\ensuremath{ \outcome{#1} \in \outcomeSpace{#2} }}
\newcommand*{\outcomeSpaceDouble}[2]{\ensuremath{\cartesianProduct{\outcomeSpace{#1}}{\outcomeSpace{#2}}}}
\newcommand*{\outcomeIncludedDouble}[4]{\ensuremath{ \outcomeDouble{#1}{#2} \in \outcomeSpaceDouble{#3}{#4} }}
\newcommand*{\outcomeSpaceConditioned}[2]{\ensuremath{\outcomeSpace{#1}^{\left( \outcome{#2} \right)}}}
\newcommand*{\outcomeIncludedConditioned}[3]{\ensuremath{\outcome{#1} \in \outcomeSpaceConditioned{#2}{#3}}}
\newcommand*{\outcomeSpaceSequence}[2]{\ensuremath{\outcomeSpace{#1}^{#2}}}
\newcommand*{\outcomeIncludedSequence}[3]{\ensuremath{\outcome{#1} \in \outcomeSpaceSequence{#2}{#3}}}
\newcommand{\SingletonSet}{\star}
\newcommand*{\@identityI}{\eventNoDown{I}}
\newcommand{\identityTest}[1]{\ensuremath{\@identityI_{\system{#1}}}}
\newcommand{\probEquivC}{\sim}
\newcommand{\probabilisticallyEquivalent}[2]{\ensuremath{#1 \probEquivC #2}}
\newcommand{\testNoDown}[1]{\ensuremath{\mathsf{#1}}}
\newcommand{\testComplete}[4]{\ensuremath{\testNoDown{#1}^{\system{#3} \!\to\! \system{#4}}_{\outcomeSpace{#2}}}}
\newcommand{\test}[2]{\ensuremath{\testNoDown{#1}_{\outcomeSpace{#2}}}}
\newcommand{\conditionedTest}[3]{\ensuremath{\testNoDown{#1}^{\left(\outcome{#3}\right)}_{\outcomeSpace{#2}}}}
\newcommand{\preparationTestNoDown}[1]{{\ensuremath{#1}}}
\newcommand{\preparationTest}[2]{\ensuremath{\preparationTestNoDown{#1}_{\outcomeSpace{#2}}}}
\newcommand{\preparationTestComplete}[3]{\ensuremath{\preparationTestNoDown{#1}^{\system{#3} \!\to\! \trivialSystem}_{\outcomeSpace{#2}} } }
\newcommand{\observationTestNoDown}[1]{{\ensuremath{\text{#1}}}}
\newcommand{\observationTest}[2]{\ensuremath{\observationTestNoDown{#1}_{\outcomeSpace{#2}}}}
\newcommand{\observationTestComplete}[3]{\ensuremath{\observationTestNoDown{#1}^{\system{#3} \!\to\! \trivialSystem}_{\outcomeSpace{#2}} } }
\newcommand{\testCollection}[1]{\ensuremath{\mathsf{Test\left(\mathrm{#1}\right)}}}
\newcommand{\testCollectionAB}[2]{\ensuremath{\testCollection{\system{#1} \!\to\! \system{#2}}}}
\newcommand{\testList}[1]{\ensuremath{ \left\{ #1 \right\} }}
\newcommand{\eventNoDown}[1]{\ensuremath{ \mathscr{#1} }}
\newcommand{\event}[2]{\ensuremath{\eventNoDown{#1}_{\outcome{#2}}}}
\newcommand{\eventCG}[2]{\ensuremath{\eventNoDown{#1}_{\outcomeSpace{#2}}}}
\newcommand{\preparationEventNoDown}[1]{\ensuremath{#1}}
\newcommand{\preparationEvent}[2]{\ensuremath{\preparationEventNoDown{#1}_{\outcome{#2}}}}
\newcommand{\observationEventNoDown}[1]{\ensuremath{\text{#1}}}
\newcommand{\observationEvent}[2]{\ensuremath{\observationEventNoDown{#1}_{\outcome{#2}}}}
\newcommand{\observationUniqueDeterministic}{\observationEventNoDown{\uniDetEff}}
\newcommand{\probabilityEventNoDown}[1]{\ensuremath{ {#1} }}
\newcommand{\probabilityEvent}[2]{\ensuremath{ \probabilityEventNoDown{#1}_{\outcome{#2}} }}
\newcommand{\conditionedEvent}[3]{\ensuremath{\eventNoDown{#1}^{\left(\outcome{#3}\right)}_{\outcome{#2}}}}
\newcommand{\probabilityEventTest}[3]{\ensuremath{\left\{ \probabilityEventNoDown{#1}_{\outcome{#2}} \right\}_{\outcomeIncluded{#2}{#3}}}}
\newcommand{\conditionedEventTest}[4]{\left\{\eventNoDown{#1}^{\left(\outcome{#4}\right)}_{\outcome{#2}} \right\}_{\outcomeIncluded{#2}{#3}}}
\newcommand{\preparationEventTest}[3]{\ensuremath{\left\{\preparationEventNoDown{#1}_{\outcome{#2}} \right\}_{\outcomeIncluded{#2}{#3}}}}
\newcommand{\observationEventTest}[3]{\ensuremath{\left\{\observationEventNoDown{#1}_{\outcome{#2}} \right\}_{\outcomeIncluded{#2}{#3}}}}
\newcommand{\eventCollection}[1]{\ensuremath{\mathsf{Event\left(\mathrm{#1}\right)}}}
\newcommand{\eventCollectionAB}[2]{\ensuremath{\eventCollection{\system{#1} \!\to\! \system{#2}}}}
\newcommand{\eventCollectionAA}[1]{\eventCollectionAB{#1}{#1}}
\newcommand{\nullTransformationSymbol}{\varepsilon}
\newcommand{\nullTransformation}[2]{\nullTransformationSymbol_{\system{#1} \!\to\! \system{#2}}}
\newcommand{\nullState}[1]{\nullTransformationSymbol_{\system{#1}}}
\newcommand{\seqC}{\circ}
\newcommand{\sequentialComp}[2]{{#1} \seqC {#2}}
\newcommand*{\preparationEventNoDownSequence}[2]{\ensuremath{\preparationEventNoDown{#1}^{#2}}}
\newcommand*{\preparationEventNoDownSequenceAll}[2]{\ensuremath{ \left\{ \preparationEventNoDownSequence{#1}{#2} \right\}_{#2 \in \mathbb{N}} }}
\newcommand*{\preparationTestSequence}[3]{\ensuremath{\preparationTest{#1}{#2}^{#3}}}
\newcommand*{\observationEventNoDownSequence}[2]{\ensuremath{\observationEventNoDown{#1}^{#2}}}
\newcommand*{\observationTestSequence}[3]{\ensuremath{\observationTest{#1}{#2}^{#3}}}
\newcommand*{\eventSequenceNoDown}[2]{\ensuremath{ \eventNoDown{#1}^{#2} }}
\newcommand*{\eventSequenceNoDownAll}[2]{\ensuremath{\left\{ \eventNoDown{#1}^{#2}\right\}_{#2 \in \mathbb{N}} }}
\newcommand*{\eventSequence}[3]{\ensuremath{ \event{#1}{#2}^{#3} }}
\newcommand*{\eventSequenceAll}[3]{\ensuremath{ \left\{ \eventSequence{#1}{#2}{#3} \right\}_{#3 \in \mathbb{N}} }}
\newcommand{\conditionedEventSequence}[4]{\ensuremath{{\conditionedEvent{#1}{#2}{#3}}^{#4}}}
\newcommand{\eventTest}[3]{\ensuremath{\left\{\eventNoDown{#1}_{\outcome{#2}} \right\}_{\outcomeIncluded{#2}{#3}}}}
\newcommand*{\eventTestSequence}[4]{ \ensuremath{ \left\{ \eventNoDown{#1}_{\outcome{#2}} \right\}^{#4}_{\outcomeIncluded{#2}{#3}}} }
\newcommand*{\eventTestSequenceAll}[4]{ \ensuremath{ \left\{ \eventTestSequence{#1}{#2}{#3}{#4} \right\}_{#4 \in \mathbb{N}} }}
\newcommand*{\eventTestSequenceOut}[4]{ \ensuremath{ \left\{ \eventNoDown{#1}_{\outcome{#2}} \right\}^{#4}_{\outcomeIncludedSequence{#2}{#3}{#4}}} }
\newcommand*{\eventTestSequenceOutAll}[4]{ \ensuremath{ \left\{ \eventTestSequenceOut{#1}{#2}{#3}{#4} \right\}_{#4 \in \mathbb{N}}} }
\newcommand{\standardTensorProdC}{ \otimes }
\newcommand{\standardTensorProduct}[2]{{#1} \, \standardTensorProdC  \, {#2}}
\newcommand{\paralC}{ \boxtimes }
\newcommand{\parallelComp}[2]{{#1} \paralC {#2}}
\newcommand{\BraidingS}{\mathscr{S}}
\newcommand{\Braid}{\mathcal{S}}
\newcommand{\permutation}{\Braid}
\newcommand{\St}[1]{\ensuremath{\mathsf{St}\left(\system{#1}\right)}}					
\newcommand{\StN}[1]{\ensuremath{\mathsf{St_{1}} \left(\system{#1}\right)}}				
\newcommand{\StR}[1]{\ensuremath{\mathsf{St}_{\mathbb{R}}\left(\system{#1}\right)}}		
\newcommand{\StNOPT}[1]{\ensuremath{\mathsf{St_{1}}\left({#1}\right)}}			
\newcommand{\PurSt}[1]{\ensuremath{\mathsf{PurSt\left(\system{#1}\right)}}}         
\newcommand{\Eff}[1]{\ensuremath{\mathsf{Eff\left(\system{#1}\right)}}}					    
\newcommand{\EffR}[1]{\ensuremath{\mathsf{Eff}_{\mathbb{R}}\left(\system{#1}\right)}}	    
\newcommand{\EffN}[1]{\ensuremath{\mathsf{Eff_{1}\left(\system{#1}\right)}}}				
\newcommand{\Transf}[2]{\ensuremath{\mathsf{Transf\left(\system{#1}\!\to\!\system{#2}\right)}}}
\newcommand{\TransfR}[2]{\ensuremath{\mathsf{Transf_{\mathbb{R}}\left(\system{#1}\!\to\!\system{#2}\right)}}}
\newcommand{\TransfRA}[1]{\ensuremath{\mathsf{Transf_{\mathbb{R}}\left(\system{#1}\right)}}}
\newcommand{\TransfN}[2]{\ensuremath{\mathsf{Transf_{1}\left(\system{#1}\!\to\!\system{#2}\right)}}}
\newcommand{\TransfC}[2]{\ensuremath{\mathsf{Transf_{+}\left(\system{#1}\!\to\!\system{#2}\right)}}}
\newcommand{\TransfA}[1]{\ensuremath{\mathsf{Transf\left(\system{#1}\right)}}}
\newcommand{\RevTransfA}[1]{\ensuremath{\mathsf{RevTransf\left(\system{#1}\right)}}}
\newcommand{\RevTransf}[2]{\ensuremath{\mathsf{RevTransf\left(\system{#1}\!\to\!\system{#2}\right)}}}
\newcommand{\Instr}[2]{\ensuremath{\mathsf{Instr}\left(\system{#1}\!\to\!\system{#2}\right)}}
\newcommand{\InstrRN}[3]{\ensuremath{\mathsf{Instr}_{\mathbb{R}}^{(#3)}\left(\system{#1}\!\to\!\system{#2}\right)}}
\newcommand{\InstrA}[1]{\ensuremath{\mathsf{Instr\left(\system{#1}\right)}}}
\newcommand{\InstrEmpty}[1]{\ensuremath{\mathsf{Instr}\left({#1}\right)}}
\newcommand{\InstrOPT}[1]{\ensuremath{\mathsf{Instr}\left({\text{#1}}\right)}}
\newcommand{\Prep}[1]{\ensuremath{\mathsf{Prep}\left(\system{#1}\right)}}
\newcommand{\ObsOPT}[1]{\ensuremath{\mathsf{Obs}\left({#1}\right)}}
\newcommand{\Obs}[1]{\ensuremath{\mathsf{Obs}\left(\system{#1}\right)}}
\newcommand{\measurePrepare}[4]{
	\myQcircuit{
		&\s{#1}\qw&\measureD{#4}&\prepareC{#3}&\s{#2}\qw&\qw&
	}
}
\newcommand{\minimalDeterministicCausalDestroyReprep}[8]{
	\myQcircuit{
		&\s{#1}\qw&\multigate{1}{#6}&\s{#3}\qw&\measureD{\observationUniqueDeterministic}&\pureghost{}&\prepareC{#8}&\s{#4}\qw&\multigate{1}{#7}&\s{#2}\qw&\qw&
		\\
		&\pureghost{}&\pureghost{#6}&\qw&\qw&\s{#5}\qw&\qw&\qw&\ghost{#7}&
	}
}
\newcommand{\minimalDeterministicCausalDestroyReprepSequencePrime}[9]{
	\myQcircuit{
		&\s{#1}\qw&\multigate{1}{#6}&\sSequencePrime{#3}{#9}\qw&\measureD{\observationUniqueDeterministic}&\pureghost{}&\prepareC{#8}&\sSequencePrime{#4}{#9}\qw&\multigate{1}{#7}&\s{#2}\qw&\qw&
		\\
		&\pureghost{}&\pureghost{#6}&\qw&\qw&\sSequence{#5}{#9}\qw&\qw&\qw&\ghost{#7}&
	}
}
\newcommand{\equivOp}{\sim}
\newcommand{\transfArrow}[1]{\stackrel{#1}{\longrightarrow}}
\newcommand{\allowDisplayBreaks}[1]{
	\begingroup
	\allowdisplaybreaks
	
	#1
	
	\endgroup
}
\newcommand*{\MSBCTPureState}[3]{\left( \preparationEventNoDown{#1}{}_{#3} \preparationEventNoDown{#2} \right)}
\newcommand{\secRef}[1]{\hyperref[#1]{section~\ref*{#1}}}
\newcommand{\aref}[1]{\hyperref[#1]{Appendix~\ref*{#1}}}
\newcommand{\Iref}[1]{\hyperref[#1]{Item~\eqref{#1}}}
\begin{document}
	
\title{Minimal operational theories: classical theories with quantum features}

\author{Davide Rolino}
\email[Corresponding author; ]{davide.rolino01@universitadipavia.it}
\affiliation{Universit\`a degli Studi di Pavia, Dipartimento di Fisica, QUIT Group}
\affiliation{INFN Gruppo IV, Sezione di Pavia, via Bassi 6, 27100 Pavia, Italy}

\author{Marco Erba}
\email{marco.erba@ug.edu.pl}
\affiliation{International Centre for Theory of Quantum Technologies (ICTQT), Uniwersytet Gdański, ul.~Jana Bażyńskiego 1A, 80-309 Gdańsk, Poland}

\author{Alessandro Tosini}
\email{alessandro.tosini@unipv.it}
\affiliation{Universit\`a degli Studi di Pavia, Dipartimento di Fisica, QUIT Group}
\affiliation{INFN Gruppo IV, Sezione di Pavia, via Bassi 6, 27100 Pavia, Italy}

\author{Paolo Perinotti}
\email{paolo.perinotti@unipv.it}
\affiliation{Universit\`a degli Studi di Pavia, Dipartimento di Fisica, QUIT Group}
\affiliation{INFN Gruppo IV, Sezione di Pavia, via Bassi 6, 27100 Pavia, Italy}

\begin{abstract}
	We introduce a class of probabilistic theories, termed \aclp{MSOPT}, where system dynamics are constrained to the minimal set of operations consistent with the set of states and permitting conditional tests. Specifically, the allowed instruments are limited to those derived from compositions of preparations, measurements, swap transformations, and conditional operations. We demonstrate that minimal theories with conditioning and a spanning set of non-separable states satisfy two quantum no-go theorems: no-information without disturbance and no-broadcasting. As a key example, we construct \acl{MSBCT}, a classical toy-theory that lacks incompatible measurements, preparation uncertainty relations, and is noncontextual (both Kochen-Specker and generalised), yet exhibits irreversibility of measurement disturbance, no-information without disturbance, and no-broadcasting. Therefore, the latter three properties cannot be understood \emph{per se} as signatures of non-classicality. We further explore distinctions between a theory and its minimal strongly causal counterpart, showing that while the minimal strongly causal version of quantum theory diverges from full quantum theory, the same does not hold for classical theory. Additionally, we establish the pairwise independence of the properties of simpliciality, strong causality, and local discriminability.
\end{abstract}


\maketitle

\tableofcontents

\acresetall

\section{Introduction}

While classical and quantum mechanics appear to be radically different theories, their structure as theories of information processing share many features that can be understood as the minimal features of any theory that aims at describing physical systems and processes. The processing of systems ranges from preparation procedures to evolution and observations using measuring devices. As the puzzle pieces to model any experiment, the above operational primitives have been taken as the basic elements of the framework of \acp{OPT}~\cite{chiribellaProbabilisticTheoriesPurification2010,darianoQuantumTheoryFirst2016}, then profitably used to understand the origin of quantum peculiarities~\cite{chiribellaInformationalDerivationQuantum2011} among all possible theories of information processing. A similar framework is that of \acp{GPT}~\cite{hardyDisentanglingNonlocalityTeleportation1999,barnumCloningBroadcastingGeneric2006,barrettInformationProcessingGeneralized2007,spekkensEvidenceEpistemicView2007,janottaGeneralizedProbabilisticTheories2013}, which is however focused on statistical aspects arising in prepare-and-measure scenarios, neglecting in most cases the structures related to transformations. 

In order to identify a specific theory, it is necessary to make assumptions on the mathematical entities associated with its processes, either directly or by imposing requirements on tasks that can or cannot be actually implemented manipulating the systems of the theory. For example, both classical and quantum states of a composite system, possibly delocalised over separated labs, can be exhaustively probed via local observations in the two laboratories. This natural feature, can instead be violated by theories that are closely related to the quantum one, as the theory of fermionic systems~\cite{darianoFeynmanProblemFermionic2014,darianoFermionicComputationNonlocal2014}, or to the classical one, as in Ref.~\cite{darianoClassicalityLocalDiscriminability2020} where a non-trivial composition rule for classical systems is introduced. Other assumptions, as the possibility of describing any probabilistic state as part of a perfectly known state of a larger system, sharply discriminate between quantum theory, where it is possible, and classical theory, where it is forbidden. The above criteria are only two instances among several that can in principle bring together or divide classical and quantum theories, or more generally clarify the mathematical structure behind theories of physical systems.

Most studies on \acp{OPT} start from the typical assumption that the admissible transformations of each system coincide with the ``maximal set'' consistent with the set of states, i.~e.~the only requirement is that any state must be transformed into another admissible state. This form of \emph{no-restriction hypothesis} leads, for example, to identify the set of quantum transformations with the set of completely positive, trace-preserving maps, and classical transformations with stochastic matrices. In this paper we explore the opposite scenario where one keeps the minimal set of processes compatible with the structures of the framework, given the systems of the theory and their set of states (and of measurements). The latter scenario is closer to that of a real world laboratory, given that experimenters typically do not have access to all theoretically implementable transformations, but only to certain subsets of them.

The resulting class of theories is proved to preserve relevant quantum no-go theorems independently of the nature of its systems. Surprisingly, also classical systems can support quantum features as the impossibility of  gaining information without introducing irreversible disturbance, or the impossibility of broadcasting states. This shows that the observation of certain quantum phenomena in an experimental setting cannot always be taken as a definitive proof that the system under study in the experiment is actually quantum.\footnote{A similar result is also discussed in Ref.~\cite{schmidShadowsSubsystemsGeneralized2024}. However, we remark that the two considered scenarios are different. The authors of Ref.~\cite{schmidShadowsSubsystemsGeneralized2024} are interested in assessing the relationship between the classical explainability of parts of theories---in particular, between fragments and shadows of \acp{GPT}. While, instead, we consider fully-fledged theories, just with a restriction in the allowed dynamics.} 

In Ref.~\cite{erbaMeasurementIncompatibilityStrictly2024} the class of \acp{MOPT}~was introduced, namely \acp{OPT} where the only allowed operations are preparations, measurements, the identity map, the swap (systems exchange) map, and any operation that can be obtained by composing these operations sequentially or in parallel.  A further minimal operation that is missing in \acp{MOPT} is the possibility of \emph{conditioning} which experiment to perform next based on the outcome of a previous experiment. This property,  also know as \textdef{strong causality}~\cite{perinottiCellularAutomataOperational2020,perinottiCausalInfluenceOperational2021}, is arguably desirable for any reasonable physical theory. Therefore, we introduce a new class of theories, termed \acp{MSOPT}, by incorporating all possible conditional operations, while retaining the minimal resource constraints of \acp{MOPT}. The main intent is to understand which  features of minimal theories are robust under the introduction of strong causality. Remarkably, we show that almost all marking features of minimal models can survive, under suitable conditions---but they can also disappear in other circumstances.

We first systematically analyse the operational \emph{desideratum} that the spaces of operations in an \ac{OPT} must be ``complete'': if there is a procedure to prepare a transformation with arbitrary precision, then the latter is accepted as a transformation of the theory. This property is given in terms of Cauchy sequences with respect to an operational distance for all elements of the theory---states, measurements, transformations and more generally for their collections that generalise the notion of quantum instrument. We introduce a procedure to complete a theory in such a way that it is strongly causal, granting consistency between operational completeness and the compositional structure of the theory.

We then prove that, whenever a minimal theory with strong causality admits a spanning set of non-separable states, the identity transformation for every system is atomic---i.e.~it cannot be obtained performing a measurement and ignoring the outcome. This result leads to a series of consequences: \acp{MSOPT} satisfy \ac{NIWD}~\cite{buschNoInformationDisturbance2009,darianoQuantumTheoryFirst2016,heinosaariNofreeinformationPrincipleGeneral2019,darianoInformationDisturbanceOperational2020}---that is the impossibility of learning something non-trivial from a system without perturbing its state irreversibly---as well as irreversibility of measurement disturbance~\cite{erbaMeasurementIncompatibilityStrictly2024} and no-broadcasting~\cite{woottersSingleQuantumCannot1982,dieksCommunicationEPRDevices1982,yuenAmplificationQuantumStates1986,barnumNoncommutingMixedStates1996,barnumCloningBroadcastingGeneric2006,walkerClassicalBroadcastingPossible2007,barnumGeneralizedNoBroadcastingTheorem2007,pianiNoLocalBroadcastingTheoremMultipartite2008,luoQuantumNoBroadcasting2010}. 

Moreover, we show that in the above class of theories there exist instances that are classical, in the sense that they are \emph{simplicial} \acp{OPT}. In other words, we exhibit theories where the state spaces are simplexes, whose pure states (the vertices of the simplexes) are jointly perfectly discriminable, and at the same time have a wealth of features traditionally considered as signatures of non-classicality. We explicitly construct a classical \ac{MSOPT}, termed \ac{MSBCT}. This theory is a minimal strongly causal version of \ac{BCT}~\cite{darianoClassicalityLocalDiscriminability2020}. The present toy-model shares with \ac{BCT} the property of being locally equivalent to classical theory, thus it has no incompatible measurements, there is no uncertainty in preparation of states and it is noncontextual. However, based on the above results, the model here introduced must exhibit irreversibility, \ac{NIWD}, and no-broadcasting.

Finally, based on the properties of \ac{MSBCT} here introduced, we prove the independence of three main features of classical information theory: local tomography, simpliciality of the set of states and strong causality. Therefore, our results provide further insights on the relationships between different physical properties: these insights are, in turn, useful for the axiomatisation programs of \acf{QT}, and for adjudicating between \ac{QT} and alternative physical theories.

The outline of the presentation is as follows. In \autoref{sec:opt} a review of the framework of \acp{OPT}~\cite{chiribellaProbabilisticTheoriesPurification2010,chiribellaInformationalDerivationQuantum2011,darianoQuantumTheoryFirst2016,chiribellaQuantumPrinciples2016,darianoClassicalityLocalDiscriminability2020,perinottiCellularAutomataOperational2020} is provided, focusing on those aspects that will be central in this work. Big emphasis, with several original results, is placed on the topological structure that characterises \acp{OPT}, introducing the notion of \textdef{generalised instrument spaces} (\autoref{subsec:genInstr}). In \autoref{sec:cauchy} the properties of Cauchy sequences of transformations and instruments are studied. We show that sequences of instruments are Cauchy if and only if those of the transformations that compose them are (\autoref{thm:OPT:norm:instrEventConvergence}), and that the space of deterministic transformations is Cauchy complete (\autoref{lem:OPT:causal:detTransf:convergence}). Grounding on the above properties, the procedure for adding all the conditional instruments to a generic \ac{OPT} is formalised (\autoref{sec:strongCompleteness}). \autoref{thm:OPT:stronglyClosure} guarantees that adding the conditional operations and subsequently Cauchy completing the transformations spaces is sufficient for the new theory obtained to be strongly causal. After recalling the definitions of the properties of interest for this article (broadcasting and irreversibility of measurement disturbance) (\autoref{sec:properties}), in \autoref{sec:MOPT} the classes \acp{MOPT} and \acp{MSOPT} (\autoref{subsec:MSOPT}) are introduced. We show that in \acp{MOPT} the identity is atomic for every systems (\autoref{thm:OPT:minimal:symmetric:causal:idAtomicity}), while in \acp{MSOPT} this property holds for every theory where the entangled states are spanning for the set of states (\autoref{thm:MSOPT:symmetric:idAtomicity}). Then, in \autoref{sec:msbct}, we construct \ac{MSBCT}, characterizing its main properties. In conclusion, an in-depth discussion of the consequences of the results is carried out (\autoref{sec:conclusion}).

\section{Operational Probabilistic Theories}
\label{sec:opt}

In the last two decades, the way we study and understand the quantum world has profoundly changed. With the advent of quantum information~\cite{hardyDisentanglingNonlocalityTeleportation1999,hardyQuantumTheoryFive2001,fuchsQuantumMechanicsQuantum2002,brassardInformationKey2005,darianoProbabilisticTheoriesWhat2010} we started to treat \acf{QT} as a theory of information processing~\cite{barrettInformationProcessingGeneralized2007,darianoProbabilisticTheoriesWhat2010,darianoTestingAxiomsQuantum2010} selected among a universe of possible alternative theories~\cite{hardyDisentanglingNonlocalityTeleportation1999,chiribellaProbabilisticTheoriesPurification2010,chiribellaInformationalDerivationQuantum2011,masanesDerivationQuantumTheory2011,dakicQuantumTheoryEntanglement2011,spekkensEvidenceEpistemicView2007,chiribellaQuantumPrinciples2016,darianoQuantumTheoryFirst2016}. The selection criteria pertain to the ability to perform specific information processing tasks~\cite{chiribellaInformationalDerivationQuantum2011,darianoQuantumTheoryFirst2016}.

Aim of the framework of \acfp{OPT} is exactly to model \ac{QT} along with all these other alternative theories of information processing and to describe every information theory starting from its compositional (combining operations to build up experiments) structure. The same aim and scope is shared with the deeply related frameworks of \acp{GPT}~\cite{barrettInformationProcessingGeneralized2007,barnumTeleportationGeneralProbabilistic2008,barnumInformationProcessingConvex2011,plavalaGeneralProbabilisticTheories2021} and quantum picturialism~\cite{coeckeKindergartenQuantumMechanics2006,coeckePicturingQuantumProcesses2017}, with common roots dating back to Ludwig's works on the foundations of quantum mechanics~\cite{ludwigFoundationsQuantumMechanics1985,wilceTestSpacesOrthoalgebras2000}.

In this section we provide a review of the framework of \acp{OPT}~\cite{chiribellaProbabilisticTheoriesPurification2010,chiribellaQuantumPrinciples2016,darianoQuantumTheoryFirst2016,darianoClassicalTheoriesEntanglement2020,perinottiCellularAutomataOperational2020} emphasizing their \emph{linear} and \emph{topological} structure and proving some properties later used in this work. 

\subsection{Basic structure of the framework}
Every \ac{OPT} \OPTMath{} is completely characterised by a set of systems along with the set of operations that it is possible to perform on them.

\textdef{Systems} represent the physical entities which are probed in a laboratory (e.g.~an electron, a molecule, a radiation field, etc$\ldots$)~\cite{darianoClassicalTheoriesEntanglement2020}. They are denoted with capital Roman letters \system{A}, \system{B}, $\ldots \in \Sys{\OPTMath}$. In \ac{QT} systems are complex Hilbert spaces. The processes occurring between systems are captured by the notion of \textdef{tests}, which represent physical processes that can occur within a given theory. A given test $\testComplete{T}{X}{A}{B} \equiv \test{T}{X} \in \testCollectionAB{A}{B}$ models an experiment acting on a given input system \system{A} with output system \system{B}.\footnote{Most of the times, unless it is not clear from the context, the input and output systems of a test will not be specified, thus preferring the notation $\test{T}{X}$ in place of $\testComplete{T}{X}{A}{B}$.} Systems can also be depicted in diagrammatic notation as wires, while tests as wired boxes:
\begin{equation*}
	\testComplete{T}{X}{A}{B} \quad \longleftrightarrow \quad \myQcircuit{
		&\s{A}\qw&\gate{\test{T}{X}}&\s{B}\qw&\qw&
	}.
\end{equation*}
As a convention the input-output direction is taken to go from the left to the right, which does not imply a preferred direction for the flow of information.\footnote{In the subset of \textdef{causal \acp{OPT}} a preferred direction for the flow of information is instead fixed, typically from the left to the right indeed.}

The ``\outcomeSpace{X}'' appearing in the definition of a test represents the \textdef{outcome space} of the test. It is a finite set containing all the possible outcomes of the experiment. To each outcome $\outcomeIncluded{x}{X}$ is associated an \textdef{event} $\event{T}{x} \in \eventCollectionAB{A}{B}$ representing the realization of a particular occurrence in a physical process.\footnote{Here we do not include the possibility of having infinite, possibly continuous, outcome spaces. However, the framework has no bottlenecks towards non-finite outcome spaces.} Therefore, tests are finite collections of events: $\test{T}{X} \equiv \eventTest{T}{x}{X}$. Diagrammatically:
\begin{equation*}
	\event{T}{x} \quad \longleftrightarrow \quad \myQcircuit{
		&\s{A}\qw&\gate{\event{T}{x}}&\s{B}\qw&\qw&
	}, \quad \forall \outcomeIncluded{x}{X}.
\end{equation*}
There exists a particular set of events called \textdef{deterministic} which are the ones associated to tests whose outcome space has just one element---that is a \textdef{singleton set}---, which will be represented as $\SingletonSet \mathDef \left\{ * \right\}$. Tests associated to deterministic events are called \textdef{singleton tests} and operationally model processes that do not provide information. In \ac{QT}, tests are \emph{quantum instruments}, events are \emph{quantum operations}, and deterministic events are \emph{quantum channels}.

Tests, and consequently events, can be composed in two ways. Sequentially:
\begin{align*}
	\myQcircuit{
		&\s{A}\qw&\gate{\test{GT}{\outcomeSpaceDouble{X}{Y}}}&\s{C}\qw&\qw&
	} &= \quad \myQcircuit{
		&\s{A}\qw&\gate{\test{T}{X}}&\s{B}\qw&\gate{\test{G}{Y}}&\s{C}\qw&\qw&
	}.
\end{align*}
and in parallel:
\begin{align*}
	\label{eqt:OPT:parallelComp}
	\myQcircuit{
		&\s{AC}\qw&\gate{\parallelComp{\test{T}{X}}{\test{G}{Y}}}&\s{BD}\qw&\qw&
	} &= \quad 	\myQcircuit{
		&\s{A}\qw&\multigate{1}{\parallelComp{\test{T}{X}}{\test{G}{Y}}}&\s{B}\qw&\qw&
		\\
		&\s{C}\qw&\ghost{\parallelComp{\test{T}{X}}{\test{G}{Y}}}&\s{D}\qw&\qw&
	}\\[10pt]
	&= \quad \myQcircuit{
		&\s{A}\qw&\gate{\test{T}{X}}&\s{B}\qw&\qw&
		\\
		&\s{C}\qw&\gate{\test{G}{Y}}&\s{D}\qw&\qw&
                                                           },
\end{align*}
where \system{AB} is a \textdef{composite system}, obtained by composing in parallel the two systems \system{A} and \system{B}. The operation of parallel composition of systems is associative and has an identity element: the trivial system. The trivial system \trivialSystem\ is a particular system representing ``nothing the theory cares to describe''~\cite{chiribellaQuantumPrinciples2016}. Furthermore, the set of systems $\Sys{\OPTMath}$ of any \ac{OPT} is closed with respect to the latter operation.\footnote{The fact that \Sys{\OPTMath} is closed with respect to the operation of parallel composition means that if any two systems \system{A} and \system{B} are in \Sys{\OPTMath}, then also $\system{AB} \in \Sys{\OPTMath}$.} In \acp{OPT}, in general, the operation of parallel composition $\paralC$ differs from the standard tensor product, as it happens for example in the composition of fermionic systems in \acl{FQT}~\cite{darianoFeynmanProblemFermionic2014,darianoFermionicComputationNonlocal2014,bravyiFermionicQuantumComputation2002,lugliFermionicStateDiscrimination2020,perinottiShannonTheoryQuantum2023} or in classical theory with bilocal tomography~\cite{darianoClassicalityLocalDiscriminability2020}. Both operations of sequential and parallel composition are associative and have an identity element. In the former case the identities are given by a family of tests $\left\{\testComplete{I}{\SingletonSet}{A}{A}\right\}_{\system{A} \in \Sys{\OPTMath}}$ with the associated family of deterministic events $\left\{ \identityTest{A} \right\}_{\system{A} \in \Sys{\OPTMath}}$, while in the latter case the identity is given by $\identityTest{\trivialSystem}$.
F
Diagrammatically the trivial system will not be represented, leaving a blank space. Accordingly tests of the form $\preparationTestComplete{\rho}{X}{A} \in \testCollectionAB{\trivialSystem}{A}$ and $\observationTestComplete{a}{X}{A}  \in \testCollectionAB{A}{\trivialSystem}$, and the corresponding events, will be represented as follows:
\begin{align*}
	&\myQcircuit{
		&\prepareC{\preparationTest{\rho}{X}}&\s{A}\qw&\qw&
	} \quad , \quad \myQcircuit{
		&\s{A}\qw&\measureD{\observationTest{a}{X}}&
	}; \\[10pt]
	&\myQcircuit{
		&\prepareC{\preparationEvent{\rho}{x}}&\s{A}\qw&\qw&
	} \quad , \quad \myQcircuit{
		&\s{A}\qw&\measureD{\observationEvent{a}{x}}&
	}.
\end{align*}
Tests of this kind are called \textdef{preparation-} and \textdef{observation-tests} of system \system{A}, respectively, while their associated events are called \textdef{preparations} and \textdef{observations} of system \system{A}, respectively. These represent a generalisation of the notions of density matrix and \acf{POVM} of \ac{QT}, respectively. When writing the equations not in diagrammatic form, the \textdef{round ket} $\rket{\cdot}$ and \textdef{round bra} $\rbra{\cdot}$ notation will be used to represent preparations and observations, respectively. The last particular case we have to consider is where both the input and output systems are the trivial one. In this case, the tests $\probabilityEventNoDown{p}_{\outcomeSpace{X}} \in \testCollectionAB{\trivialSystem}{\trivialSystem}$ are called \textdef{scalar-tests}, while the corresponding events $\probabilityEvent{p}{x} \in \eventCollectionAB{\trivialSystem}{\trivialSystem}$ are called \textdef{scalars}.

In conclusion, we observe that the two operations of parallel and sequential composition are required to commute:
\begin{equation}
	\label{eqt:OPT:CompatParallSeq}
	\myQcircuitComp{
		&\s{A}\qw&\gate{\test{T}{X}}&\s{B}\qw&\gate{\test{G}{Y}}&\s{C}\qw&\qw&
		\\
		&\s{D}\qw&\gate{\test{W}{Z}}&\s{E}\qw&\gate{\test{F}{K}}&\s{F}\qw&\qw&
		\relax\gategroupColor{1}{3}{2}{3}{0.6em}{--}{blue}
		\relax\gategroupColor{1}{5}{2}{5}{0.6em}{--}{blue}
		\relax\gategroupColor{1}{3}{2}{5}{2.4em}{--}{orange}
	} = \quad\!
	\myQcircuitComp{
		&\s{A}\qw&\gate{\test{T}{X}}&\s{B}\qw&\gate{\test{G}{Y}}&\s{C}\qw&\qw&
		\\
		&\s{D}\qw&\gate{\test{W}{Z}}&\s{E}\qw&\gate{\test{F}{K}}&\s{F}\qw&\qw&
		\relax\gategroupColor{1}{3}{1}{5}{0.6em}{--}{orange}
		\relax\gategroupColor{2}{3}{2}{5}{0.6em}{--}{orange}
		\relax\gategroupColor{1}{3}{2}{5}{2.4em}{--}{blue}
	}.
\end{equation}
Before concluding our presentation of the compositional structure of \acp{OPT}, we introduce the notion of \textdef{reversible} event. An event $\eventNoDown{R} \in \eventCollectionAB{A}{B}$ is \textdef{reversible} if there exists an event $\inverse{\eventNoDown{R}} \in \eventCollectionAB{B}{A}$ such that $\sequentialComp{\eventNoDown{R}}{\inverse{\eventNoDown{R}}} = \identityTest{A}$ and $\sequentialComp{\inverse{\eventNoDown{R}}}{\eventNoDown{R}} = \identityTest{B}$.

The introduction of reversible transformations also allows to define the notion of \textdef{operational equivalence} between systems. Two systems \system{A} and \system{B} are said to be operationally equivalent $\sysEquiv{\system{A}}{\system{B}}$ if there exists a reversible transformation $\eventNoDown{R} \in \RevTransf{A}{B}$, i.e., whose input and output systems are \system{A} and \system{B} or viceversa.

We are now able to make the final requirement so that the compositional structure of \acp{OPT} turns out to be analogous to that of \ac{QT}. We require the existence of a family of reversible tests called \textdef{braiding} which allows a pair of agents to exchange systems between each other. In other words, given any two systems \system{A}, $\system{B} \in \Sys{\OPTMath}$ there exist two singleton reversible tests: $\testComplete{S}{\SingletonSet}{AB}{BA} = \left\{ \BraidingS_{\system{A}, \system{B}} \right\}$ and its inverse $\testComplete{\left( \inverse{S} \right)}{\SingletonSet}{BA}{AB} = \left\{ \inverse{\BraidingS}_{\system{A}, \system{B}} \right\}$. They will be pictorially represented as follows:
\begin{align*}
	\myQcircuit{
		&\s{A}\qw&\multigate{1}{\BraidingS}&\s{B}\qw&\qw&
		\\
		&\s{B}\qw&\ghost{\BraidingS}&\s{A}\qw&\qw&
	} &\longleftrightarrow \quad\!
	\myQcircuit{
		&\s{A}\qw&\braiding&\s{B}\qw&\qw&
		\\
		&\s{B}\qw&\braidingGhost&\s{A}\qw&\qw&
	},\\[10pt]
	\myQcircuit{
		&\s{B}\qw&\multigate{1}{\inverse{\BraidingS}}&\s{A}\qw&\qw&
		\\
		&\s{A}\qw&\ghost{\inverse{\BraidingS}}&\s{B}\qw&\qw&
	} &\longleftrightarrow \quad\!
	\myQcircuit{
		&\s{B}\qw&\braidingInv&\s{A}\qw&\qw&
		\\
		&\s{A}\qw&\braidingGhost&\s{B}\qw&\qw&
	}.
\end{align*}
These tests must satisfy the \textdef{naturality} property:
\begin{equation}
	\label{eqt:opt:braid:naturality}
	\myQcircuit{
		&\s{A}\qw&\gate{\test{T}{X}}&\s{B}\qw&\braiding&\s{D}\qw&\qw&
		\\
		&\s{C}\qw&\gate{\test{G}{Y}}&\s{D}\qw&\braidingGhost&\s{B}\qw&\qw&
	} = 
	\myQcircuit{
		&\s{A}\qw&\braiding&\s{C}\qw&\gate{\test{G}{Y}}&\s{D}\qw&\qw&
		\\
		&\s{C}\qw&\braidingGhost&\s{A}\qw&\gate{\test{T}{X}}&\s{B}\qw&\qw&
	},
\end{equation}
namely tests and events can slide along the wires. \acp{OPT} where $\BraidingS_{\system{A}, \system{B}} = \inverse{\BraidingS}_{\system{A}, \system{B}}$ for any pair of systems of the theory are called \textdef{symmetric}. In this case the braiding operation becomes a transposition (also referred to as the \textdef{swap} operation) and it is diagrammatically represented as follows:
\begin{equation*}
	\myQcircuit{
		&\s{A}\qw&\braidingSym&\s{B}\qw&\qw&
		\\
		&\s{B}\qw&\braidingGhost&\s{A}\qw&\qw&
	}.
\end{equation*}

The described structure is that of a braided strict monoidal category~\cite{maclaneCategoriesWorkingMathematician1978,awodeyCategoryTheory2006,heunenCategoriesQuantumTheory2019} and carries only an operational interpretation. It is merely a descriptive tool. To allow \acp{OPT} to make predictions about experiments' outcomes, we have to supply them with a \emph{probabilistic structure}. It is then required that to any acyclic circuit of events beginning with a preparation and ending with an observation, i.e., a scalar event, it is associated a conditional probability distribution
\begin{equation*}
	\begin{aligned}
		&\eventCollectionAA{\trivialSystem} \ni \text{G}\left( \event{T}{x}, \event{G}{y}, \ldots, \event{F}{z} \right) \\ & \mathDef \mathbb{P}\left( \outcome{x}, \outcome{y}, \ldots, \outcome{z} \vert \text{G}\left( \test{T}{X}, \test{G}{Y}, \ldots, \test{F}{Z} \right) \right).
	\end{aligned}
\end{equation*}
In other words, given that the experiment $\text{G}\left( \test{T}{X}, \test{G}{Y}, \ldots, \test{F}{Z} \right)$ is performed, the formula provides the probability of  occurrence of any series of events $\text{G}\left( \event{T}{x}, \event{G}{y}, \ldots, \event{F}{z} \right)$ and then reading the corresponding series of outcomes $\left(\outcome{x}, \outcome{y}, \ldots, \outcome{z}\right)$. For example:
\begin{equation*}
	\myQcircuit{
		&\prepareC{\preparationEvent{\rho}{x}}&\s{A}\qw&\gate{\event{T}{y}}&\s{B}\qw&\measureD{\observationEvent{a}{z}}&
	} \mathDef \probabilityEventNoDown{p}\left( \outcome{x}, \outcome{y}, \outcome{z} \vert \preparationTest{\rho}{X}, \test{T}{Y}, \observationTest{a}{Z} \right),
\end{equation*}
where $\event{\rho}{x}$ is an event of the test $\test{\rho}{X}$ and analogously for the others.

Within the framework it is also made the requirement that the spaces of tests and events are quotiented with respect to the following equivalence relation. For all systems \system{A}, $\system{B} \in \Sys{\OPTMath}$ and for all events $\eventNoDown{T}_{1}$, $\eventNoDown{T}_{2} \in \eventCollectionAB{A}{B}$, we define $\probabilisticallyEquivalent{\eventNoDown{T}_{1}}{\eventNoDown{T}_{2}}$ if and only if
\begin{equation}
	\label{eqt:opeEq}
	\myQcircuitBox{
		&\multiprepareC{1}{\preparationEventNoDown{\rho}}&\s{A}\qw&\gate{\eventNoDown{T}_{1}}&\s{B}\qw&\multimeasureD{1}{\observationEventNoDown{a}}&
		\\
		&\pureghost{\preparationEventNoDown{\rho}}&\qw&\s{E}\qw&\qw&\ghost{\observationEventNoDown{a}}&
	} = \myQcircuitBox{
		&\multiprepareC{1}{\preparationEventNoDown{\rho}}&\s{A}\qw&\gate{\eventNoDown{T}_{2}}&\s{B}\qw&\multimeasureD{1}{\observationEventNoDown{a}}&
		\\
		&\pureghost{\preparationEventNoDown{\rho}}&\qw&\s{E}\qw&\qw&\ghost{\observationEventNoDown{a}}&
	},
\end{equation}
for all possible $\system{E} \in \Sys{\OPTMath}$, $\preparationEventNoDown{\rho} \in \eventCollectionAB{\trivialSystem}{AE}$, and $\observationEventNoDown{a} \in \eventCollectionAB{BE}{\trivialSystem}$. This follows from the idea that whenever two events (tests) are characterised by the same statistics in any experiment they are indistinguishable. We observe that, while it is true that whenever $\eventNoDown{T}_{1} = \eventNoDown{T}_{2}$ also \eqref{eqt:opeEq} also holds, the converse is not true in general, hence the requested equivalence relation.

The quotient class of events in an \ac{OPT} are called \textdef{transformations}, and their subset having input system \system{A} and output system \system{B} is denoted by:
\begin{equation*}
	\Transf{A}{B} \mathDef \faktor{\eventCollectionAB{A}{B}}{\equivOp}.
\end{equation*}
The special case of preparations $\St{A} \mathDef \Transf{\trivialSystem}{A}$ and observations $\Eff{A} \mathDef \Transf{A}{\trivialSystem}$ are called the \textdef{states} and \textdef{effects} of system \system{A}, respectively. We define $\TransfN{A}{B}$, $\StN{A}$ and $\EffN{A}$ as the set of deterministic transformations, states and effects,  respectively. The tests from \system{A} to \system{B} become \textdef{instruments}, and their collection is denoted by:
\begin{equation*}
	\Instr{A}{B} \mathDef \faktor{\testCollectionAB{A}{B}}{\equivOp}.
\end{equation*}
Finally, the collections of \textdef{preparation-} and \textdef{observation-instruments} are denoted by $\Prep{A}$ and $\Obs{A}$, respectively.

There is one final assumption for a generic information theory to be an \ac{OPT}: the theory must be closed with respect to the \textdef{coarse-graining} operation. This operation allows one to disregard information related to the outcome of an experiment. Given any test $\test{T}{X}$ and any disjoint partition $\left\{ \outcomeSpaceConditioned{Z}{y} \right\}_{\outcomeIncluded{y}{Y}}$ of the outcome space $\outcomeSpace{X}$ there exists the \textdef{coarse-grained test} $\prim{\test{T}{Y}}$ representing the same operation, where the outcome $\outcomeIncluded{y}{Y}$ stands for ``the outcome of the test $\test{T}{X}$ belongs to $\outcomeSpaceConditioned{Z}{y}$''. The event $\prim{\event{T}{y}} = \sum_{\outcomeIncludedConditioned{x}{Z}{y}}\event{T}{x}$ is called \textdef{coarse-grained}. Obviously, given a test \test{T}{X} the full coarse-grained transformation $\eventCG{T}{X} = \sum_{\outcomeIncluded{x}{X}} \event{T}{x}$ is deterministic. The operations of scalar multiplication, sequential and parallel composition distribute over coarse-graining. As to what precisely is meant by the sum symbol used here we refer to \autoref{ssec:linear}.

In conclusion of this section, we observe that for every \ac{OPT} the event with null-probability in any experiment is an actual event of the theory: $0 \in \TransfA{\trivialSystem}$. Consequence of this fact is that for every pair of systems \system{A}, $\system{B} \in \Sys{\OPTMath}$, there exists a transformation $\nullTransformation{A}{B} \in \Transf{A}{B}$, called \textdef{null transformation}, defined by the following relation:
\begin{equation*}
	 \myQcircuitBox{
	 	&\multiprepareC{1}{\preparationEventNoDown{\rho}}&\s{A}\qw&\gate{\nullTransformation{A}{B}}&\s{B}\qw&\multimeasureD{1}{\observationEventNoDown{a}}&
	 	\\
	 	&\pureghost{\preparationEventNoDown{\rho}}&\qw&\s{E}\qw&\qw&\ghost{\observationEventNoDown{a}}&
	 } = \rbraSystem{\observationEventNoDown{a}}{BE} \left( \parallelComp{\nullTransformation{A}{B}}{\identityTest{E}} \right) \rketSystem{\preparationEventNoDown{\rho}}{AE}  = 0,
\end{equation*}
for any system $\system{E} \in \Sys{\OPTMath}$, $\preparationEventNoDown{\rho} \in \St{AE}$, and $\observationEventNoDown{a} \in \Eff{BE}$. In words, the null transformation is the transformation that always occur with null probability in any closed circuit. Clearly, transformations are invariant for coarse-graining with the null one
\begin{equation*}
	\eventNoDown{T} = \eventNoDown{T} + \nullTransformation{A}{B},
\end{equation*}
for any couple of systems \system{A}, $\system{B} \in \Sys{\OPTMath}$ and transformation $\eventNoDown{T} \in \Transf{A}{B}$.
 
\subsubsection{Atomicity, extremality and purity}
We introduce a classification of transformations based on how they can be decomposed as combinations of other transformations of the theory.

\begin{definition}[Atomic transformation]
	\label{def:opt:transf:atomic}
	A transformation $\eventNoDown{T} \in \Transf{A}{B}$ is \textdef{atomic} if, given $\eventNoDown{T}_{1}$, $\eventNoDown{T}_{2} \in \Transf{A}{B}$, one has the following implication:
	\begin{equation*}
		\eventNoDown{T} = \eventNoDown{T}_{1} + \eventNoDown{T}_{2} \implies \eventNoDown{T}_{1}, \eventNoDown{T}_{2} \propto\eventNoDown{T}.
	\end{equation*}
\end{definition}

The notion of atomic transformations captures the idea of ``indecomposable'' events from a conic point of view. These are the transformations that generate the extremal rays of the cones generated by the transformation sets:
\begin{equation*}
	\TransfC{A}{B}  \mathDef \left\{ \lambda \eventNoDown{T} \mid \lambda \geq 0, \eventNoDown{T} \in \Transf{A}{B} \right\}.
\end{equation*}

The same argument can also be made in the case of convex combinations. 

\begin{definition}[Extremal transformations]
	A transformation $\eventNoDown{T} \in \Transf{A}{B}$ is called \textdef{extremal} if, given $\eventNoDown{T}_{1}$, $\eventNoDown{T}_{2} \in \Transf{A}{B}$ and $\probabilityEventNoDown{p} \in \left(0,1\right)$, the condition $\eventNoDown{T} = \probabilityEventNoDown{p}\eventNoDown{T}_{1} + \left( 1 - \probabilityEventNoDown{p} \right)\eventNoDown{T}_{2}$ implies $\eventNoDown{T}_{1} = \eventNoDown{T}_{2}=\eventNoDown{T}$.
\end{definition}

Extremal transformations embody the notion of extreme points of convex sets. The latter property is of particular interest in the special class of theories where $\Transf{A}{B}$ is convex for every 
$\system A$ and $\system B$ 
\begin{definition}[Convex \acp{OPT}]
	An \ac{OPT} \OPTMath{} where $\Transf{A}{B}$ coincides with its convex hull for every couple of systems \system{A}, $\system{B} \in \Sys{\OPTMath}$ is called \textdef{convex}~\cite{darianoQuantumTheoryFirst2016,darianoClassicalityLocalDiscriminability2020}.
\end{definition} 

In general the two properties of atomicity and extremality of a transformation are not related. There are transformations that are extremal but not atomic and viceversa, an example being the deterministic effect of any system of \ac{CT} or \ac{QT}~\cite{darianoQuantumTheoryFirst2016}, which is clearly an extremal point of the convex set of effects but can be obtained as the coarse-graining of any observation-test. 

Slightly detaching ourselves from quantum theory's tradition, and following the nomenclature of Ref.~\cite{darianoClassicalityLocalDiscriminability2020}, we have the following definition.

\begin{definition}[Pure and mixed transformations]
	A transformation is \textdef{pure} if it is extremal and deterministic. While, a transformation is \textdef{mixed} if it is neither atomic, nor extremal.
\end{definition}

In the following,the set of pure states of a system \system{A} will be indicated with $\PurSt{A}$.

\subsection{Linear structure}
\label{ssec:linear}
The equivalence relation of transformation \eqref{eqt:opeEq} reduces to the following in the case of states:
\begin{equation*}
	\myQcircuit{
		&\prepareC{\preparationEventNoDown{\rho}_{1}}&\s{A}\qw&\measureD{\observationEventNoDown{a}}&
	} = \quad \myQcircuit{
		&\prepareC{\preparationEventNoDown{\rho}_{2}}&\s{A}\qw&\measureD{\observationEventNoDown{a}}&
	},
\end{equation*}
and analogously for effects:
\begin{equation*}
	\myQcircuit{
		&\prepareC{\preparationEventNoDown{\rho}}&\s{A}\qw&\measureD{\observationEventNoDown{a}_{2}}&
	} = \quad \myQcircuit{
		&\prepareC{\preparationEventNoDown{\rho}}&\s{A}\qw&\measureD{\observationEventNoDown{a}_{1}}&
	}.
\end{equation*}

These relations have two important consequences. First, the set of states is separating for that of effects---that is, for every pair of states $\preparationEventNoDown{\rho}_{1}$, $\preparationEventNoDown{\rho}_{2} \in \St{A}$ such that $\preparationEventNoDown{\rho}_{1} \neq \preparationEventNoDown{\rho}_{2}$, there exists an effect $\observationEventNoDown{a} \in \Eff{A}$ such that $\rbraketSystem{\observationEventNoDown{a}}{\preparationEventNoDown{\rho}_{1}}{A} \neq \rbraketSystem{\observationEventNoDown{a}}{\preparationEventNoDown{\rho}_{2}}{A}$---and viceversa for effects. Second, states can be seen as a set of functionals from \Eff{A} to the real interval $[0,1]$, and viceversa effects are a set of functionals that map \St{A} to $[0,1]$.

These two properties allows us to equip \acp{OPT} with a linear structure by extending the functional described above to the whole $\mathbb{R}$~\cite{darianoQuantumTheoryFirst2016}. Considering the collection of all the functionals defined in this way it is possible to construct the spaces of \textdef{generalised states} $\StR{A}$ and \textdef{generalised effects} $\EffR{A}$, which are the real vector spaces for which $\St{A}$ and $\Eff{A}$ are spanning sets, respectively. The separability property between states and effects induces the same property for the generalised spaces. Furthermore, each one is included in the algebraic dual of the other, $\StR{A} \subseteq \algebraicDual{\EffR{A}}$ and $\EffR{A} \subseteq \algebraicDual{\StR{A}}$. In the particular case where the dimension $\dim\StR{A}$ (or equivalently $\dim \EffR{A}$) is finite one has $\StR{A} = \algebraicDual{\EffR{A}}$ ($\EffR{A} = \algebraicDual{\StR{A}}$). The dimension of the generalised state space $\sysDimension{A} \mathDef \dim \StR{A}$ is defined to be the \textdef{dimension} (or \textdef{size}) of system \system{A}. The dimension $\sysDimension{}$ of a system represents the number of probabilities that has to be known in order to completely characterise the states of the system when represented as vectors in $\mathbb{R}^{\sysDimension{}}$. For example, in quantum theory the size of a system \system{A} defined on an Hilbert space of dimension $d_{\system{A}}$ is given by $\sysDimension{A} = d_{\system{A}}^{2}$.

It is also possible to define \textdef{generalised transformations} $\TransfR{A}{B}$. Looking at \eqref{eqt:opeEq}, one has to consider the families of transformations $\left\{\parallelComp{\eventNoDown{T}}{\identityTest{E}}\right\}_{\system{E} \in \Sys{\OPTMath}}$ seen as maps between the collections of states $\St{AE}$ and $\St{BE}$, or through their dual between the collections of effects $\Eff{BE}$ and $\Eff{AE}$. Starting from them, one can define a unique family of linear maps between the generalised spaces and consequently construct the real vector space $\TransfR{A}{B}$. The operations of parallel and sequential composition for the generalised case are induced by the same operations for transformations~\cite{darianoQuantumTheoryFirst2016}.

We observe that the summation symbol used to indicate the coarse-graining operation has now a precise meaning. Whenever, we make the coarse-graining of two operations, we are considering their sum, seeing the transformations as elements of the generalised transformation space.

The linear structure makes any \ac{OPT} ``usable''. In fact, regardless of how abstract are its operational constituents it is always possible to embed everything in a linear vector space where calculations can be made.

As already observed, in \acp{OPT} the rule for parallel composition is normally not given by the standard tensor product $\standardTensorProdC$. Therefore, in general it holds that 
$\standardTensorProduct{\StR{A}}{\StR{B}} \subseteq \StR{AB}$ (and similarly, $\standardTensorProduct{\EffR{A}}{\EffR{B}} \subseteq \EffR{AB}$).  

\subsubsection{Generalised instruments}
\label{subsec:genInstr}
The analysis of the linear structure of instruments, which is relevant for the present purposes, is less straightforward.

Instruments are a family of finite ordered $n$-tuples of transformations with constraints for the compatibility of the general structure of the theory. For example, the full coarse-graining of an instrument must be a deterministic transformation, or the sequential composition of two instruments must be another instrument. Indeed, within the framework of \acp{OPT} the most elementary concept is that of instrument, and not that of transformation: the set of instruments is defined first and the set of transformations follows. It is always true that 

\begin{equation*}
	\Instr{A}{B} \in \powerSetOrder{\Transf{A}{B}}
\end{equation*}

for all $\system{A}, \system{B} \in \Sys{\OPTMath{}}$, where $\powerSetOrder{S}$ represents the set of all ordered subsets of $S$. Therefore, the most natural way to define a sort of generalised space for instruments is:
\begin{equation}
	\InstrRN{A}{B}{N} \mathDef \bigoplus_{n \in N} \TransfR{A}{B},
\end{equation}
where $N$ is the cardinality of the outcome space of the generalised instruments. We recall that by hypothesis the cardinality of the outcome space of instruments is finite, and the operation of direct sum is always well defined. $\InstrRN{A}{B}{N}$ is still a vector space for any $N \in \mathbb{N}$ with the usual scalar multiplication:
\begin{equation*}
	\probabilityEventNoDown{p} \left( \event{T}{1}, \ldots, \event{T}{N} \right) = \left( \probabilityEventNoDown{p} \event{T}{1}, \ldots, \probabilityEventNoDown{p} \event{T}{N} \right),
\end{equation*}
and elementwise sum:
\begin{equation*}
	\left( \event{T}{1}, \ldots, \event{T}{N} \right) + \left( \event{G}{1}, \ldots, \event{G}{N} \right) = \left( \event{T}{1} + \event{G}{1}, \ldots, \event{T}{N} + \event{G}{N} \right),
\end{equation*}
where the scalar multiplication $\probabilityEventNoDown{p} \event{T}{1}$, with $\probabilityEventNoDown{p} \in \TransfRA{\trivialSystem}$, and the sum operation $\event{T}{1} + \event{G}{1}$ are the ones of \TransfR{A}{B}.

The generalised spaces of instruments have the interesting property that $\InstrRN{A}{B}{N}$ can always be seen as a subspace of $\InstrRN{A}{B}{M}$ whenever $N < M$. This comes from the fact that an instrument with $N$ outcomes can always be seen as an instrument with $M$ outcomes of which $M-N$ occur with zero probability, i.e.
\begin{equation}
	\label{eqt:opt:genInstrEmbedd}
	\eventTest{T}{x}{X}\rightarrow \eventTest{T}{x}{X} \bigcup_{n \in (M-N)} \left\{ \nullTransformation{A}{B} \right\},
\end{equation}
where $\eventTest{T}{x}{X} \in \InstrRN{A}{B}{N}$.

The coarse-graining operation on these generalised instruments is simply defined by extension through linearity from the coarse-graining defined on the allowed instruments of a theory. In practice the coarse-graining operation can be seen as a linear function from \InstrRN{A}{B}{N} into \InstrRN{A}{B}{M} with $M < N$.

\subsection{Topological structure and operational completeness}

Here we show how to induce a \emph{topological structure} on the generalised transformation space using the \textdef{operational norm} $\normOp{\cdot}$~\cite{darianoQuantumTheoryFirst2016}. The distance, induced by the norm, is related to the probability of discrimination between two transformations; the closer they are, the harder it is to discriminate between them. The operational norm allows one to express the requirement that the spaces of transformations, states, and effects are Cauchy complete, corresponding to the operational \emph{desideratum} that if there is a procedure to prepare a transformation with arbitrary precision, then it is natural to assume that the latter is a transformation of the theory. 

The operational norm is not the only norm  one can choose for an \ac{OPT}. Another norm of interest is the \textdef{sup norm} $\normSup{\cdot}$, used for example to construction quasi-local algebras of transformations in the \acp{OPT} framework~\cite{perinottiCellularAutomataOperational2020}. In general, these two norms are not equivalent and satisfy different properties. However, in the following discussion we will restrict to the class of \acp{OPT} where these two norms are equivalent, in particular, focusing on \acp{OPT} where all spaces of instruments, and consequently transformations, are finite dimensional.

Before proceeding we just state some properties of the aforementioned norms, since they will be used throughout our argument.

\begin{lemma}[Monotonicity of the operational norm]
	\label{lem:opt:norm:operational:monotonicity}
	Let $\eventNoDown{T} \in \TransfR{B}{C}$, then
	\begin{equation}
		\label{eqt:opt:norm:operational:monotonicity}
		\normOp{\eventNoDown{T}} \geq 	\normOp{\eventNoDown{E}\eventNoDown{T}\eventNoDown{C}},
	\end{equation}	
	where $\eventNoDown{E} \in \TransfN{C}{D}$, and $\eventNoDown{C} \in \TransfN{A}{B}$. The equality holds if both $\eventNoDown{E}$ and $\eventNoDown{C}$ are reversible~\cite{darianoQuantumTheoryFirst2016}.
\end{lemma}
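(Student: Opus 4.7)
The plan is to reduce the transformation-level inequality to its state-level counterpart, which is the standard building block for monotonicity arguments in the framework.

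First, I would invoke the state-level monotonicity: for any generalised state $\sigma$ and any deterministic transformation $\eventNoDown{D}$ whose input matches the system of $\sigma$, one has $\normOp{\eventNoDown{D}\sigma} \leq \normOp{\sigma}$. This follows directly from the definition of the operational norm on states as a supremum over admissible discriminating effects, because the pull-back $\observationEventNoDown{a}\eventNoDown{D}$ of any such effect is again an admissible effect and assigns to $\sigma$ the same number that $\observationEventNoDown{a}$ assigns to $\eventNoDown{D}\sigma$. When $\eventNoDown{D}$ is reversible, the pull-back is a bijection between the two families of admissible effects, so the inequality becomes an equality.

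Next, I would lift this to transformations via the standard definition
\begin{equation*}
\normOp{\eventNoDown{T}} = \sup_{\system{F}}\sup_{\sigma \in \StN{BF}} \normOp{(\parallelComp{\eventNoDown{T}}{\identityTest{F}})\rket{\sigma}}.
\end{equation*}
For every ancilla $\system{F}$ and every $\sigma \in \StN{AF}$, one has
\begin{equation*}
(\parallelComp{\eventNoDown{E}\eventNoDown{T}\eventNoDown{C}}{\identityTest{F}})\rket{\sigma} = (\parallelComp{\eventNoDown{E}}{\identityTest{F}})\,(\parallelComp{\eventNoDown{T}}{\identityTest{F}})\rket{\sigma'},
\end{equation*}
with $\rket{\sigma'} \mathDef (\parallelComp{\eventNoDown{C}}{\identityTest{F}})\rket{\sigma} \in \StN{BF}$, the inclusion holding because $\parallelComp{\eventNoDown{C}}{\identityTest{F}}$ is deterministic. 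Applying state-level monotonicity to the deterministic map $\parallelComp{\eventNoDown{E}}{\identityTest{F}}$ bounds this quantity by $\normOp{(\parallelComp{\eventNoDown{T}}{\identityTest{F}})\rket{\sigma'}}$, and passing to the supremum over $(\system{F},\sigma)$ on the left yields $\normOp{\eventNoDown{E}\eventNoDown{T}\eventNoDown{C}} \leq \normOp{\eventNoDown{T}}$.

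For the equality case with $\eventNoDown{E}$ and $\eventNoDown{C}$ reversible, I would apply the inequality just established to the identity $\eventNoDown{T} = \inverse{\eventNoDown{E}}\,(\eventNoDown{E}\eventNoDown{T}\eventNoDown{C})\,\inverse{\eventNoDown{C}}$, obtaining the reverse bound $\normOp{\eventNoDown{T}} \leq \normOp{\eventNoDown{E}\eventNoDown{T}\eventNoDown{C}}$. The one subtlety, and the main obstacle, is that this step requires the inverses $\inverse{\eventNoDown{E}}$ and $\inverse{\eventNoDown{C}}$ themselves to be deterministic: this is a standard fact, following from the singleton nature of the identity test together with the compositional axioms that force the unique inverse of a deterministic reversible transformation to be deterministic, but it must be invoked explicitly before concluding the equality.
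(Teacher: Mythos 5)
The paper does not prove this lemma: it is stated with a citation to Ref.~\cite{darianoQuantumTheoryFirst2016} and used as a known fact. Your argument is correct and is essentially the standard proof from that reference: state-level monotonicity via pull-back of admissible effects along a deterministic map, lifting to transformations through the supremum over ancillas and deterministic input states (using that deterministic maps send deterministic states to deterministic states), and the equality case by sandwiching with the inverses --- whose determinism you rightly flag as the one hypothesis that must be checked, and which holds here since reversible transformations and their inverses belong to singleton tests.
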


\begin{lemma}[Invariance of the norm in the presence of ancillary systems]
	\label{lem:opt:norm:operational:invariaceAncilas}
	Given a generic \ac{OPT} \OPTMath{}, for arbitrary systems $\system{A}, \system{B}, \system{E} \in \Sys{\OPTMath}$ and any generalised transformation $\eventNoDown{T} \in \TransfR{A}{B}$ it holds that
	\begin{equation*}
		\normOp{\eventNoDown{T}} = \normOp{ \parallelComp{\eventNoDown{T}}{\identityTest{E}} }.
	\end{equation*}
\end{lemma}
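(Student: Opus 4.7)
The plan is to unfold the definition of the operational norm of a generalised transformation as a supremum over normalised preparations of extended systems, namely
\begin{equation*}
\normOp{\eventNoDown{T}} = \sup \{ \normOp{(\parallelComp{\eventNoDown{T}}{\identityTest{F}})\Psi} : \system{F} \in \Sys{\OPTMath}, \Psi \in \StR{AF}, \normOp{\Psi} \leq 1 \},
\end{equation*}
and to reduce both directions of the claimed equality to the analogous state-level invariance (i.e.\ $\normOp{\Psi \paralC \eta} = \normOp{\Psi}$ whenever $\normOp{\eta} = 1$), which is standard in the setting of this section.

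The inequality $\normOp{\parallelComp{\eventNoDown{T}}{\identityTest{E}}} \leq \normOp{\eventNoDown{T}}$ is straightforward: any ancilla $\system{F}$ together with a normalised $\Psi \in \StR{AEF}$ contributing to the left-hand side provide, via the identity $(\parallelComp{\eventNoDown{T}}{\identityTest{E}}) \paralC \identityTest{F} = \parallelComp{\eventNoDown{T}}{\identityTest{EF}}$, a particular instance in the sup defining $\normOp{\eventNoDown{T}}$ with ancilla $\system{EF}$. Taking suprema yields the bound.

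For the converse, I would start from any normalised $\Psi \in \StR{AF}$ and tensor it with a unit-norm auxiliary state $\eta \in \StR{E}$, obtaining $\Psi \paralC \eta \in \StR{AFE}$; a braiding then rearranges it into a state $\widetilde\Psi \in \StR{AEF}$. Since the braiding is reversible, the equality case of \autoref{lem:opt:norm:operational:monotonicity} guarantees that the operational norm is preserved, so $\widetilde\Psi$ remains in the unit ball and is eligible for the supremum defining $\normOp{\parallelComp{\eventNoDown{T}}{\identityTest{E}}}$. Naturality of the braiding then allows $((\parallelComp{\eventNoDown{T}}{\identityTest{E}}) \paralC \identityTest{F}) \widetilde\Psi$ to be rewritten, up to a further reversible output braiding, as $((\parallelComp{\eventNoDown{T}}{\identityTest{F}}) \Psi) \paralC \eta$, whose operational norm equals $\normOp{(\parallelComp{\eventNoDown{T}}{\identityTest{F}}) \Psi}$ by the state-level invariance. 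Sup over $\system{F}$ and $\Psi$ gives the reverse inequality.

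The main obstacle is twofold: exhibiting a unit-norm auxiliary state $\eta \in \StR{E}$, and importing the state-level invariance into the present transformation-level statement. Both points are controlled by the running assumptions of this section---finite-dimensional instrument spaces with the operational and sup norms equivalent---under which a suitable $\eta$ can always be chosen and the state-level result is classical. A minor technical care is required to keep track of the braidings and their reversibility, but these do not alter the norm thanks to \autoref{lem:opt:norm:operational:monotonicity}.
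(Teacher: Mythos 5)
Your proof is correct in substance but takes a more roundabout route than the paper's. The paper handles the non-trivial direction $\normOp{\parallelComp{\eventNoDown{T}}{\identityTest{E}}} \geq \normOp{\eventNoDown{T}}$ with a single application of \autoref{lem:opt:norm:operational:monotonicity}: sandwiching $\parallelComp{\eventNoDown{T}}{\identityTest{E}}$ between a deterministic preparation $\preparationEventNoDown{\rho}\in\StN{E}$ and the deterministic effect $\uniDetEff\in\EffN{E}$ on the ancilla reproduces $\eventNoDown{T}$ exactly (since $\rbraket{\uniDetEff}{\preparationEventNoDown{\rho}}=1$), so monotonicity immediately gives the bound; the opposite inequality is read off the definition of the operational norm, just as in your first step. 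Your version instead unfolds the supremum over extended states and delegates the work to a ``state-level invariance'' $\normOp{\parallelComp{\Psi}{\eta}}=\normOp{\Psi}$, which you treat as standard. Two remarks: (i) that auxiliary fact is not ``controlled by'' finite-dimensionality or the equivalence of the operational and sup norms---it is itself a consequence of the same monotonicity lemma (append $\eta$ via the deterministic transformation $\parallelComp{\identityTest{AF}}{\eta}$ for one direction, apply the deterministic effect on $\system{E}$ for the other), so your argument ultimately rests on the same tool, just applied inside the supremum; (ii) for this to work $\eta$ must be taken \emph{deterministic}, not merely of unit operational norm---otherwise neither $\normOp{\parallelComp{\Psi}{\eta}}\le 1$ (needed for $\widetilde\Psi$ to be an admissible competitor in the sup) nor the recovery of $\normOp{\Psi}$ via the deterministic effect is guaranteed. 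With that substitution your chain of inequalities closes, and the braiding bookkeeping is harmless by the equality case of \autoref{lem:opt:norm:operational:monotonicity}; what the paper's argument buys is brevity and independence from the explicit supremum formula.
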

\begin{proof}
	From the definition of operational norm~\cite{darianoQuantumTheoryFirst2016}, it holds that
	\begin{equation*}
		\normOp{\eventNoDown{T}} \geq \normOp{ \parallelComp{\eventNoDown{T}}{\identityTest{E}} }.
	\end{equation*}
	On the other hand, by considering a generic deterministic state $\preparationEventNoDown{\rho} \in \StN{E}$ and deterministic effect $\observationUniqueDeterministic\in \EffN{E}$, the monotonicity property of the norm implies that
	\begin{align*}
		\normOp{ \quad
			\myQcircuitBox{
				&\s{A}\qw&\gate{\eventNoDown{T}}&\s{B}\qw&\qw&
				\\
				&\qw&\s{E}\qw&\qw&\qw&
			}
		} & \geq 
		\normOp{ \quad
			\myQcircuitMedium{
				&\s{A}\qw&\gate{\eventNoDown{T}}&\s{B}\qw&\qw&
				\\
				&\prepareC{\eventNoDown{\rho}}&\s{E}\qw&\measureD{\uniDetEff}&
			}
		} \\[10pt] & =
		\normOp{ \quad
			\myQcircuitMedium{
				&\s{A}\qw&\gate{\eventNoDown{T}}&\s{B}\qw&\qw&
			}
		}.
	\end{align*}
\end{proof}

Also the sup norm satisfies the property stated in \autoref{lem:opt:norm:operational:invariaceAncilas}~\cite[Corollary 2]{perinottiCellularAutomataOperational2020}.

\begin{lemma}
	\label{lem:opt:norm:sup:inequality}
	Let $\eventNoDown{T} \in \TransfR{A}{B}$ and $\eventNoDown{G} \in \TransfR{B}{C}$, then~\cite{perinottiCellularAutomataOperational2020}
	\begin{equation*}
		\normSup{\eventNoDown{G}\eventNoDown{T}} \leq \normSup{\eventNoDown{G}} \normSup{\eventNoDown{T}}.
	\end{equation*}
\end{lemma}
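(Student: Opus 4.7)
The plan is to invoke the definition of the sup norm as an operator-type norm on generalised transformations, and then run the standard two-step bound used for operator norms on Banach algebras. Concretely, for $\eventNoDown{T}\in\TransfR{A}{B}$ the sup norm has the form
\[
\normSup{\eventNoDown{T}}=\sup_{\eventNoDown{S}\neq 0}\frac{\normOp{\eventNoDown{T}\eventNoDown{S}}}{\normOp{\eventNoDown{S}}},
\]
where $\eventNoDown{S}$ ranges over the generalised transformations having $\system{A}$ as output (across all allowed input systems, including a trivial one, so that states are included). Equivalently, $\normOp{\eventNoDown{T}\eventNoDown{S}}\leq\normSup{\eventNoDown{T}}\normOp{\eventNoDown{S}}$ holds for every admissible $\eventNoDown{S}$, which is the form I will use.

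First, I would fix $\eventNoDown{S}$ with $\normOp{\eventNoDown{S}}\neq 0$ and consider the composition $\eventNoDown{G}\eventNoDown{T}\eventNoDown{S}$. Applying the above inequality with $\eventNoDown{G}\in\TransfR{B}{C}$ and input $\eventNoDown{T}\eventNoDown{S}$ (which has $\system{B}$ as output), I get
\[
\normOp{\eventNoDown{G}\eventNoDown{T}\eventNoDown{S}}\leq\normSup{\eventNoDown{G}}\,\normOp{\eventNoDown{T}\eventNoDown{S}}.
\]
Applying the same inequality once more, now to $\eventNoDown{T}$ acting on $\eventNoDown{S}$, yields $\normOp{\eventNoDown{T}\eventNoDown{S}}\leq\normSup{\eventNoDown{T}}\,\normOp{\eventNoDown{S}}$, so chaining the two bounds gives
\[
\normOp{\eventNoDown{G}\eventNoDown{T}\eventNoDown{S}}\leq\normSup{\eventNoDown{G}}\,\normSup{\eventNoDown{T}}\,\normOp{\eventNoDown{S}}.
\]
Dividing by $\normOp{\eventNoDown{S}}$ and taking the supremum over $\eventNoDown{S}$ on the left-hand side, which by definition produces $\normSup{\eventNoDown{G}\eventNoDown{T}}$, delivers the claim.

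I do not foresee a real obstacle here, since submultiplicativity for norms of this operator-theoretic form is essentially automatic from the definition. The only subtlety is to ensure that the class of test inputs $\eventNoDown{S}$ used in the sup norm is stable under precomposition with $\eventNoDown{T}$, so that the first inequality can legitimately be applied to the composite $\eventNoDown{T}\eventNoDown{S}$; this follows because $\eventNoDown{T}\eventNoDown{S}$ has $\system{B}$ as output and is itself a generalised transformation, and the sup norm of $\eventNoDown{G}$ is defined over all such inputs (including those obtained by composition, and including ancillary extensions, for which \autoref{lem:opt:norm:operational:invariaceAncilas} guarantees that the operational norm is unaffected). Once this point is observed, the argument is one line per inequality and a supremum.
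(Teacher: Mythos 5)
The paper does not actually prove this lemma --- it is imported from Ref.~\cite{perinottiCellularAutomataOperational2020} --- and your argument is precisely the standard submultiplicativity proof for an operator-type norm: bound $\normOp{\eventNoDown{G}\eventNoDown{T}\eventNoDown{S}}$ by two successive applications of the defining inequality $\normOp{\eventNoDown{T}\eventNoDown{S}}\leq\normSup{\eventNoDown{T}}\normOp{\eventNoDown{S}}$ and take the supremum over test inputs. The only cosmetic point is that in the cited reference the sup norm is defined by optimising over generalised states of the dilated system $\system{AC}$ rather than over arbitrary generalised transformations with output $\system{A}$, but since $\parallelComp{\eventNoDown{T}}{\identityTest{C}}$ sends such states to generalised states of $\system{BC}$, your chaining step goes through verbatim.
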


\subsubsection{Operational completeness for instruments}
As already stated Cauchy completeness of the spaces of transformations ensures that whenever a transformation can be arbitrarily approximated by operations in an \ac{OPT} then it is also an admitted transformation of the theory. Here we extend this requirement to the case of instruments.

Let us start by considering the scenario of preparations. Suppose that Alice can prepare two deterministic states $\preparationEventNoDown{\rho}$, $\preparationEventNoDown{\sigma} \in \StN{A}$ and that she wants to probabilistically prepare one or the other, so that the resulting mixed state is given by $\probabilityEventNoDown{p} \preparationEventNoDown{\rho} + \left(1- \probabilityEventNoDown{p}\right)\preparationEventNoDown{\sigma} \in \StN{A}$, with $p\in[0,1]$. This scenario can also be expressed through the notion of preparation test $\testList{\probabilityEventNoDown{p} \preparationEventNoDown{\rho}, \left(1- \probabilityEventNoDown{p}\right)\preparationEventNoDown{\sigma} } \in \Prep{A}$ or, equivalently,  $\testList{ \preparationEventNoDown{\rho}', \preparationEventNoDown{\sigma}' } \in \Prep{A}$, where the non-normalized states $\preparationEventNoDown{\rho}' = \probabilityEventNoDown{p}\preparationEventNoDown{\rho}$ and $\preparationEventNoDown{\sigma}' = \left( 1 - \probabilityEventNoDown{p}\right) \preparationEventNoDown{\sigma}$ were used.

If now a second experimenter, let us call him Bob, would like to reproduce Alice's preparation procedure, he would have to approximate the two states $\preparationEventNoDown{\rho}'$, $\preparationEventNoDown{\sigma}'$ with two other states $\preparationEventNoDown{\rho}''$, $\preparationEventNoDown{\sigma}''$, or analogously with a new instrument
$\testList{ \preparationEventNoDown{\rho}'', \preparationEventNoDown{\sigma}'' } \in \Prep{A}$. At this point we wonder what is the error in Bob's approximation. We assume that the error is equal to the sum of the errors made in approximating the two preparation events separately:
\begin{equation*}
	\normGen{ \testList{ \preparationEventNoDown{\rho}', \preparationEventNoDown{\sigma}' } - \testList{ \preparationEventNoDown{\rho}'', \preparationEventNoDown{\sigma}'' } } = \normGen{ \preparationEventNoDown{\rho}' - \preparationEventNoDown{\rho}''} + \normGen{\preparationEventNoDown{\sigma}' - \preparationEventNoDown{\sigma}'' },
\end{equation*}
where we refer now to a generic norm $\normGen{\cdot}$. We observe that this error is greater than the one that would be obtained by considering the coarse-grained state corresponding to the instrument:
\begin{equation*}
	\normGen{ \preparationEventNoDown{\rho}' - \preparationEventNoDown{\rho}''} + \normGen{\preparationEventNoDown{\sigma}' - \preparationEventNoDown{\sigma}'' } \geq \normGen{ \left(\preparationEventNoDown{\rho}' + \preparationEventNoDown{\sigma}'\right) - \left(\preparationEventNoDown{\rho}'' + \preparationEventNoDown{\sigma}''\right) }.
\end{equation*}
This is not unexpected since via coarse-graining one disregards information about what particular event occurred within the instrument, and this removes one source of error.

We can proceed to equip the spaces of instruments of an \ac{OPT} also with a topological structure. The definition will be given by referring to the generic norm $\normGen{\cdot}$, reminding that the cases of interest for the framework are those where the norm is equal to the operational or sup one.

\begin{definition}[Norm of instruments]
	Let \OPTMath\ be an \ac{OPT} equipped with a generic norm $\normGen{\cdot}$ and $\left\{ \event{T}{n} \right\}_{n \in N} \in \InstrRN{A}{B}{N}$. The norm of $\left\{ \event{T}{n} \right\}_{n \in N}$ is defined as
	\begin{equation}
		\label{eqt:OPT:norm:instr}
		\normGen{\left\{ \event{T}{n} \right\}_{n \in N}} \mathDef \sum_{n \in N} \normGen{ \event{T}{n} }.
	\end{equation}
\end{definition}

In the case of instruments, the operational norm in \eqref{eqt:OPT:norm:instr} does not carry the same operational interpretation as in the case of transformations. Even though it gives an indication as to how close two instruments are, it is not related to the optimal discrimination strategy that could be implemented. The latter would require to solve a minimax problem, whose analysis is typically hard, and amenable only under special circumstances is left for future work.

Based on the topological structure we extend the requirement of Cauchy completeness also for instruments.

\section{Properties of Cauchy sequences}
\label{sec:cauchy}

In this section we analyse the properties of Cauchy sequences of instruments and transformations and how they are related within an \ac{OPT} in terms of a generic norm.

As a first step we have to define a Cauchy sequence of instruments $\eventTestSequenceOutAll{T}{x}{X}{n}$. The most formal way to see it is as a sequence of subsets $\eventTestSequenceOutAll{T}{x}{X}{n} \subset \TransfR{A}{B}$. One has to distinguish two cases:
\begin{enumerate}
	\item $\exists N \in \mathbb{N}$ such that $\cardinality{\outcomeSpaceSequence{X}{n}} \leq N, \; \forall n \in \mathbb{N}$. In this case, it is possible to treat $\eventTestSequenceOutAll{T}{x}{X}{n}$ as a sequence of elements in $\InstrRN{A}{B}{N}$, where everything is well-defined. This is the case we will consider in the following. Furthermore, it is possible to drop the dependence of the outcomes space from $n$, since, thanks to \eqref{eqt:opt:genInstrEmbedd}, it is always possible to embed every outcome space in one of cardinality $N$: $\eventTestSequenceAll{T}{x}{X}{n}$.
	\item $\nexists N \in \mathbb{N}$ such that $\cardinality{\outcomeSpaceSequence{X}{n}} \leq N, \; \forall n \in \mathbb{N}$. This case cannot be treated within the hypothesis assumed in the present work. The fact that two comparable instruments must have outcome spaces with equal cardinality imposes that the cardinality of the outcome spaces of the instruments in the sequence stabilizes to a fixed value. We highlight that the latter hypothesis also guarantees that the cardinality of the outcome space of the limit instrument is finite. 
\end{enumerate}

We can now state the completeness result for instruments and a consequent corollary on their coarse-graining.

\begin{theorem}
	\label{thm:OPT:norm:instrEventConvergence}
	Given a generic \ac{OPT}, $\eventTestSequenceAll{T}{x}{X}{n} \subset \InstrRN{A}{B}{N}$ is Cauchy with respect to the norm $\normGen{\cdot}$ if and only if each sequence of transformations $\eventSequenceAll{T}{x}{n} \subset \TransfR{A}{B}$ is Cauchy with respect to that norm $\forall \outcomeIncluded{x}{X}$. Furthermore $\lim_{n \rightarrow \infty} \left\{ \eventTestSequence{T}{x}{X}{n} \right\} = \left\{ \lim_{n \rightarrow \infty} \eventSequence{T}{x}{n} \right\}_{\outcomeIncluded{x}{X}}$.
\end{theorem}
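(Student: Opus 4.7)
The plan is to exploit directly the definition of the instrument norm \eqref{eqt:OPT:norm:instr}, which turns $\InstrRN{A}{B}{N}$ into an $\ell^{1}$-type finite direct sum of $N$ copies of $\TransfR{A}{B}$. Under the bounded-cardinality hypothesis set just before the statement, every term of the sequence can be viewed as a tuple in this same finite direct sum via the embedding \eqref{eqt:opt:genInstrEmbedd}, so the outcome space may be fixed to a single $\outcomeSpace{X}$ of cardinality at most $N$ and coordinate-wise comparisons are well-defined.

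For the forward implication, fix $\varepsilon > 0$ and pick $M$ such that $\sum_{\outcomeIncluded{x}{X}} \normGen{\event{T}{x}^{n} - \event{T}{x}^{m}} < \varepsilon$ for all $n, m \geq M$; since each summand is nonnegative, every coordinate sequence $\eventSequenceAll{T}{x}{n}$ inherits the same bound and is therefore Cauchy. For the converse, given $\varepsilon > 0$, apply the Cauchy hypothesis to each of the at most $N$ coordinates separately to produce $M_{\outcome{x}}$ with $\normGen{\event{T}{x}^{n} - \event{T}{x}^{m}} < \varepsilon / \cardinality{\outcomeSpace{X}}$ for $n, m \geq M_{\outcome{x}}$, and set $M \mathDef \max_{\outcomeIncluded{x}{X}} M_{\outcome{x}}$; summing over $\outcome{x}$ recovers the instrument Cauchy condition.

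For the limit identity, denote by $\event{T}{x}^{\infty} \mathDef \lim_{n \to \infty} \event{T}{x}^{n}$ the coordinate limits, which exist in $\TransfR{A}{B}$ by the already-assumed Cauchy completeness of the transformation spaces. One more application of the triangle inequality yields $\sum_{\outcomeIncluded{x}{X}} \normGen{\event{T}{x}^{n} - \event{T}{x}^{\infty}} \to 0$, so the tuple $\{\event{T}{x}^{\infty}\}_{\outcomeIncluded{x}{X}}$ is exactly the limit of the instrument sequence with respect to $\normGen{\cdot}$. Conceptually this is the classical fact that a finite $\ell^{1}$ direct sum of Banach spaces is Banach with convergence equivalent to coordinate-wise convergence; the only genuine subtlety is the preliminary step of using the bounded-cardinality hypothesis to place every term of the sequence in a common finite-dimensional space, after which the proof reduces to bookkeeping with nonnegative sums.
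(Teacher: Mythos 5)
Your proposal is correct and follows essentially the same route as the paper: both directions are handled by direct manipulation of the $\ell^{1}$-type sum defining the instrument norm, using nonnegativity of the summands for the forward implication and finiteness of $\cardinality{\outcomeSpace{X}} \leq N$ for the converse. The only cosmetic difference is in the limit identity, where you build the instrument limit from the coordinate limits (via completeness of $\TransfR{A}{B}$) while the paper starts from the limit instrument (via completeness of the instrument space) and reads off its coordinates; the two arguments are equivalent under the paper's standing completeness assumptions.
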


\begin{proof}
	Let us start by proving the first statement.
	\proofRight Let $\eventTestSequenceAll{T}{x}{X}{n}$ be a Cauchy sequence, then $\forall \varepsilon$ there exist $\tilde{n}$ such that $\forall n,m \geq \tilde{n}$ it holds that
	\begin{align*}
		 \normGen{ \eventTestSequence{T}{x}{X}{n} - \eventTestSequence{T}{x}{X}{m}} = \sum_{\outcomeIncluded{x}{X}} \normGen{ \eventSequence{T}{x}{n} - \eventSequence{T}{x}{m} } \leq \varepsilon
	\end{align*}
	which implies
	\begin{equation*}
		\normGen{ \eventSequence{T}{x}{n} - \eventSequence{T}{x}{m} } \leq \varepsilon \quad \forall \outcomeIncluded{x}{X}.
	\end{equation*}	
	\proofLeft On the other hand if the sequences of transformations are Cauchy, i.e., $\forall \varepsilon$ exists $\tilde{n}$ such that $\forall n,m \geq \tilde{n}$ it holds that
	\begin{equation*}
		\normGen{ \eventSequence{T}{x}{n} - \eventSequence{T}{x}{m} } \leq \varepsilon \quad \forall \outcomeIncluded{x}{X},
	\end{equation*}
	then it immediately follows
	\begin{align*}
		\normGen{ \eventTestSequence{T}{x}{X}{n} - \eventTestSequence{T}{x}{X}{m}} = \sum_{\outcomeIncluded{x}{X}} \normGen{ \eventSequence{T}{x}{n} - \eventSequence{T}{x}{m} } \leq \cardinality{\outcomeSpace{X}} \varepsilon,
	\end{align*}
	which, being the cardinality of \outcomeSpace{X} $\left(\cardinality{\outcomeSpace{X}}\right)$ equal to $N < \infty$, implies that $\eventTestSequenceAll{T}{x}{X}{n}$ is Cauchy.	
	We then conclude the proof by showing that the limit of the Cauchy sequence of the collection is given by the collection of the limits. Let indeed $\eventTest{T}{x}{X}$ be the limit of the Cauchy sequence $\eventTestSequenceAll{T}{x}{X}{n}$---that exists by virtue of completeness of instruments. Then $\forall \varepsilon, \exists \tilde{n}$ such that $\forall n \geq \tilde{n}$
	\begin{align*}
		\normGen{ \eventTestSequence{T}{x}{X}{n} - \eventTest{T}{x}{X} }= \sum_{\outcomeIncluded{x}{X}} \normGen{ \eventSequence{T}{x}{n} - \event{T}{x} } \leq \varepsilon,
	\end{align*}
	which, by the completeness requirement for transformations, implies that each sequence of events $\left\{ \eventSequence{T}{x}{n} \right\}_{n \in \mathbb{N}}$ converges to $\event{T}{x}$.
\end{proof}

\begin{corollary}
	\label{corol:OPT:norm:instrEventConvergence}
	Given a generic \ac{OPT} and a Cauchy sequence $\eventTestSequenceAll{T}{x}{X}{n} \subset \InstrRN{A}{B}{N}$ with respect to a generic norm $\normGen{\cdot}$, each sequence of coarse-grainings $\left\{ \sum_{\outcomeIncluded{x}{\prim{X}}} \eventSequence{T}{x}{n} \right\}_{n \in \mathbb{N}} \subset \TransfR{A}{B}$ is Cauchy with respect to that norm $\forall \outcomeSpace{\prim{X}} \subseteq \outcomeSpace{X}$. Furthermore $\lim_{n \rightarrow \infty} \left( \sum_{\outcomeIncluded{x}{\prim{X}}} \eventSequence{T}{x}{n} \right) = \sum_{\outcomeIncluded{x}{\prim{X}}} \left( \lim_{n \rightarrow \infty} \eventSequence{T}{x}{n} \right)$.
\end{corollary}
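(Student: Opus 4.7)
The plan is to deduce the corollary directly from \autoref{thm:OPT:norm:instrEventConvergence} together with the triangle inequality and the continuity of finite sums. Since the hypothesis is that $\eventTestSequenceAll{T}{x}{X}{n}$ is Cauchy, \autoref{thm:OPT:norm:instrEventConvergence} immediately yields that each component sequence $\eventSequenceAll{T}{x}{n} \subset \TransfR{A}{B}$ is Cauchy with respect to $\normGen{\cdot}$, and in particular this holds for every $\outcomeIncluded{x}{\prim{X}}$ since $\outcomeSpace{\prim{X}} \subseteq \outcomeSpace{X}$.

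Next, I would establish the Cauchy property for the coarse-grainings. Applying the triangle inequality to the generic norm, for any $n, m \in \mathbb{N}$ one has
\begin{equation*}
\normGen{ \sum_{\outcomeIncluded{x}{\prim{X}}} \eventSequence{T}{x}{n} - \sum_{\outcomeIncluded{x}{\prim{X}}} \eventSequence{T}{x}{m} } \leq \sum_{\outcomeIncluded{x}{\prim{X}}} \normGen{ \eventSequence{T}{x}{n} - \eventSequence{T}{x}{m} }.
\end{equation*}
Given $\varepsilon > 0$, by the first part one can pick, for each $\outcomeIncluded{x}{\prim{X}}$, an index $\tilde{n}_x$ such that $\normGen{ \eventSequence{T}{x}{n} - \eventSequence{T}{x}{m} } \leq \varepsilon / \cardinality{\outcomeSpace{\prim{X}}}$ for all $n,m \geq \tilde{n}_x$. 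Taking $\tilde{n} \mathDef \max_{\outcomeIncluded{x}{\prim{X}}} \tilde{n}_x$, which is finite because $\cardinality{\outcomeSpace{\prim{X}}} \leq \cardinality{\outcomeSpace{X}} = N < \infty$, the right-hand side is bounded by $\varepsilon$, proving that the coarse-grained sequence is Cauchy.

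For the identification of the limit, I would rely again on the finiteness of $\outcomeSpace{\prim{X}}$: since each $\eventSequenceAll{T}{x}{n}$ converges to some $\event{T}{x} \in \TransfR{A}{B}$ by Cauchy completeness of the transformation spaces, continuity of the finite sum (which is an immediate consequence of the triangle inequality and the linear structure of $\TransfR{A}{B}$) gives
\begin{equation*}
\lim_{n \rightarrow \infty} \sum_{\outcomeIncluded{x}{\prim{X}}} \eventSequence{T}{x}{n} = \sum_{\outcomeIncluded{x}{\prim{X}}} \lim_{n \rightarrow \infty} \eventSequence{T}{x}{n}.
\end{equation*}
There is no serious obstacle here: the argument is essentially a finite-index rearrangement of the norm estimate from \autoref{thm:OPT:norm:instrEventConvergence}, and the only point to be careful about is that $\cardinality{\outcomeSpace{\prim{X}}}$ stays finite, which is guaranteed by the standing finite-outcome assumption of the preceding discussion.
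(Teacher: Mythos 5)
Your proposal is correct and follows essentially the same route as the paper's proof: invoke \autoref{thm:OPT:norm:instrEventConvergence} to get that each component sequence is Cauchy, apply the triangle inequality over the finite set $\outcomeSpace{\prim{X}}$ to conclude the coarse-grained sequence is Cauchy, and exchange the limit with the finite sum. The only (harmless) cosmetic difference is that you distribute $\varepsilon/\cardinality{\outcomeSpace{\prim{X}}}$ over the terms while the paper accepts a final bound of $\cardinality{\outcomeSpace{\prim{X}}}\,\varepsilon$; both are standard and equivalent.
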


\begin{proof}
	Using \autoref{thm:OPT:norm:instrEventConvergence}, the sequences of transformations $\left\{ \eventSequence{T}{x}{n} \right\}_{n \in \mathbb{N}}$ are Cauchy for each $\outcomeIncluded{x}{X}$. Therefore, $\forall \varepsilon$ exists $\tilde{n}$ such that $\forall n,m \geq \tilde{n}$ it holds that 
	\begin{equation*}
		\normOp{ \eventSequence{T}{x}{n} - \eventSequence{T}{x}{m} } \leq \varepsilon \quad \forall \outcomeIncluded{x}{X},
	\end{equation*}
	then it immediately follows that
	\begin{align*}
		 \normOp{ \sum_{x \in \prim{X}} \left( \eventSequence{T}{x}{n} - \eventSequence{T}{x}{m} \right)}  \leq \sum_{\outcomeIncluded{x}{\prim{X}}} \normOp{ \eventSequence{T}{x}{n} - \eventSequence{T}{x}{m} } \leq \cardinality{\prim{\outcomeSpace{X}}} \varepsilon, 
	\end{align*}
	and being $\cardinality{\prim{\outcomeSpace{X}}}\leq\cardinality{\outcomeSpace{X}} = N$, this implies that $\left\{ \sum_{\outcomeIncluded{x}{\prim{X}}} \eventSequence{T}{x}{n} \right\}_{n \in \mathbb{N}}$ is Cauchy. The second statement immediately follows from the possibility of exchanging limits and summations.
\end{proof}

The next two lemmas consider the limit of Cauchy sequences of deterministic transformations.

\begin{lemma}
	\label{lem:OPT:causal:detTransf:convergence}
	In a generic \ac{OPT} the limit of a Cauchy sequence, in the operational norm, of deterministic transformations is a deterministic transformation. In other words, \TransfN{A}{B} is closed with respect to the operational norm for any $\system{A}, \system{B} \in \Sys{\OPTMath}$.
\end{lemma}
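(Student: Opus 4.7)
The plan is to reduce this to the Cauchy completeness of the instrument spaces, which was established earlier in \autoref{sec:cauchy}. The key observation is that a transformation $\eventNoDown{T} \in \Transf{A}{B}$ is deterministic precisely when the one-element collection $\{\eventNoDown{T}\}$ is an admissible instrument of the theory (i.e.~a singleton test). So proving that a Cauchy limit of deterministic transformations is deterministic amounts to proving that the limit of the corresponding sequence of singleton instruments is itself a singleton instrument.

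Concretely, let $\{\eventNoDown{T}_n\}_{n\in\mathbb{N}} \subset \TransfN{A}{B}$ be Cauchy with respect to $\normOp{\cdot}$. By the completeness requirement on transformation spaces, the limit $\eventNoDown{T} \mathDef \lim_{n\to\infty} \eventNoDown{T}_n$ exists in $\TransfR{A}{B}$ and, in fact, lies in $\Transf{A}{B}$. For each $n$ the singleton collection $\testList{\eventNoDown{T}_n}$ belongs to $\Instr{A}{B}$ since $\eventNoDown{T}_n$ is deterministic. These singleton instruments can be viewed as elements of $\InstrRN{A}{B}{1}$, where the instrument norm in \eqref{eqt:OPT:norm:instr} reduces to $\normOp{\testList{\eventNoDown{T}_n}} = \normOp{\eventNoDown{T}_n}$. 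Thus the sequence $\{\testList{\eventNoDown{T}_n}\}_{n\in\mathbb{N}}$ is Cauchy in $\InstrRN{A}{B}{1}$.

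By the Cauchy completeness of the instrument spaces, the sequence $\{\testList{\eventNoDown{T}_n}\}_{n\in\mathbb{N}}$ converges to an admissible instrument of the theory, and by \autoref{thm:OPT:norm:instrEventConvergence} this limit is the collection of the pointwise limits, namely $\testList{\eventNoDown{T}}$. Therefore $\testList{\eventNoDown{T}}$ is itself an instrument of the theory with a singleton outcome space, which by definition means that $\eventNoDown{T}$ is deterministic, so $\eventNoDown{T} \in \TransfN{A}{B}$, as desired.

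I do not foresee a genuine obstacle: the entire argument is a bookkeeping exercise that leverages the definition of deterministic transformation together with \autoref{thm:OPT:norm:instrEventConvergence} and the standing completeness assumption on instruments. The only delicate point is making sure that the embedding of singleton instruments into a fixed generalised instrument space $\InstrRN{A}{B}{1}$ is compatible with the norm and with the notion of convergence, which is immediate from \eqref{eqt:OPT:norm:instr} and the embedding rule \eqref{eqt:opt:genInstrEmbedd}.
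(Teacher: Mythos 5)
Your proof is correct, but it takes a genuinely different route from the paper's. You reduce everything to the definitional fact that a deterministic transformation is precisely one belonging to a singleton instrument, and then invoke the Cauchy completeness of the instrument spaces together with \autoref{thm:OPT:norm:instrEventConvergence} to conclude that the limit of a Cauchy sequence of singleton instruments is again a singleton instrument. The paper instead verifies the probabilistic characterisation of determinism directly: it first proves closedness of the sets of deterministic \emph{states}, using the monotonicity of the operational norm (\autoref{lem:opt:norm:operational:monotonicity}) to pass the condition $\rbraketSystem{\observationUniqueDeterministic}{\preparationEventNoDownSequence{\rho}{n}}{A}=1$ to the limit, and then reduces transformations to states via the characterisation that $\eventNoDown{T}$ is deterministic if and only if $\parallelComp{\eventNoDown{T}}{\identityTest{E}}$ sends deterministic states of $\system{AE}$ to deterministic states of $\system{BE}$, invoking \autoref{lem:opt:norm:operational:invariaceAncilas}. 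The trade-off is this: your argument is shorter, but it rests entirely on reading the instrument-completeness requirement in the strong form that the limit of a Cauchy sequence of one-outcome instruments is itself admitted as a one-outcome instrument of the theory; the paper's argument uses only completeness of the transformation and state spaces plus elementary norm properties, so it remains available in the construction setting of \autoref{subsec:onCauchyCompleting}, where one is Cauchy-completing a tentative theory and the admissibility of the limit instruments (in particular, that their full coarse-graining is deterministic) is precisely what is being established --- in that setting your argument would be circular. As a proof of the lemma as literally stated, for a generic \ac{OPT} already satisfying all the framework's requirements, your reasoning is sound.
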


\begin{proof}
	The first step is to prove that the spaces of deterministic states are 
	closed with respect to the operational norm.
	To show this it is sufficient to show that the limit $\preparationEventNoDown{\rho} = \lim_{n \rightarrow \infty} \preparationEventNoDownSequence{\rho}{n}$, where $\preparationEventNoDownSequenceAll{\rho}{n} \subset \StN{A}$ gives probability equal to one when evaluated on any deterministic effect $\observationUniqueDeterministic \in \EffN{A}$ (this is a necessary and sufficient condition for a state to be deterministic~\cite{darianoClassicalityLocalDiscriminability2020}).
	From the monotonicity of the operational norm (\autoref{lem:opt:norm:operational:monotonicity}) it follows that
	\begin{equation*}
		\normOp{\rbraketSystem{\observationUniqueDeterministic}{\preparationEventNoDown{\rho}}{A} - \rbraketSystem{\observationUniqueDeterministic}{\preparationEventNoDownSequence{\rho}{n}}{A}} \leq \normOp{\preparationEventNoDown{\rho} - \preparationEventNoDownSequence{\rho}{n}}.
	\end{equation*}
	The latter relation then implies that 
	\begin{equation*}
		\lim_{n \to \infty} \rbraketSystem{\observationUniqueDeterministic}{\preparationEventNoDownSequence{\rho}{n}}{A}  = \rbraketSystem{\observationUniqueDeterministic}{\preparationEventNoDown{\rho}}{A} \quad \forall \observationUniqueDeterministic \in \EffN{A}.
	\end{equation*}
	Since $\rbraketSystem{\observationUniqueDeterministic}{\preparationEventNoDownSequence{\rho}{n}}{A} = 1$ for all $n \in \mathbb{N}$, it immediately follows $\rbraketSystem{\observationUniqueDeterministic}{\preparationEventNoDown{\rho}}{A} = 1$ for all $\observationUniqueDeterministic \in \EffN{A}$. 
	
	To prove the result in the case of transformations, we will use the fact that a transformation $\eventNoDown{T} \in \Transf{A}{B}$ is deterministic if and only if $\left( \parallelComp{\eventNoDown{T}}{\identityTest{E}} \right)\rketSystem{\preparationEventNoDown{\rho}}{AE} \in \StN{BE}$ for any $\eventNoDown{\rho} \in \StN{AE}$, where \system{E} is a generic system of the theory.
	Then, let $\eventNoDown{T} = \lim_{n \rightarrow \infty} \eventSequenceNoDown{T}{n}$, with $\eventSequenceNoDownAll{T}{n} \subset \TransfN{A}{B}$. From \autoref{lem:opt:norm:operational:monotonicity} and \autoref{lem:opt:norm:operational:invariaceAncilas}, it immediately follows that
	$\left\{ \left(\parallelComp{\eventSequenceNoDown{T}{n}}{\identityTest{E}}\right)\rketSystem{\preparationEventNoDown{\rho}}{AE} \right\}_{n \in \mathbb{N}} \subset \StN{BE}$, for any system $\system{E}$ of the theory and any deterministic state $\preparationEventNoDown{\rho} \in \StN{AE}$, is still a Cauchy sequence that converges to 
	\begin{equation*}
		\left( \parallelComp{\eventNoDown{T}}{\identityTest{E}}\right)\rketSystem{\preparationEventNoDown{\rho}}{AE} = \lim_{n \rightarrow \infty} \left( \parallelComp{\eventSequenceNoDown{T}{n}}{\identityTest{E}} \right)\rketSystem{\preparationEventNoDown{\rho}}{AE} \in \StN{BE}.
	\end{equation*}
	In conclusion of this proof, we observe that it would have been possible to carry out the second part also referring to effects. A transformation $\eventNoDown{T} \in \Transf{A}{B}$ is deterministic if and only if $\rbraSystem{\observationUniqueDeterministic}{BE}\left( \parallelComp{\eventNoDown{T}}{\identityTest{E}} \right)\in \EffN{BE}$ for any $\observationUniqueDeterministic \in \EffN{BE}$, where \system{E} is a generic system of the theory.
\end{proof}

\begin{lemma}
	\label{lem:OPT:sequence:deterministic:factorized}
	In a generic \ac{OPT} the limit of a Cauchy sequence, in the operational norm, of deterministic product transformations $\left\{ \parallelComp{\eventSequenceNoDown{T}{n}}{\eventSequenceNoDown{G}{n}} \right\}_{n \in \mathbb{N}} \subset \parallelComp{\TransfN{A}{B}}{\TransfN{C}{D}}$ is still a deterministic product transformation. In other words, the set of deterministic product transformations is closed with respect to the operational norm.
\end{lemma}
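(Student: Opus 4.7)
The plan is to split the problem into three conceptual steps: (i) extract from the Cauchy sequence $\left\{\parallelComp{\eventSequenceNoDown{T}{n}}{\eventSequenceNoDown{G}{n}}\right\}_{n\in\mathbb{N}}$ two separate Cauchy sequences for the individual factors; (ii) apply \autoref{lem:OPT:causal:detTransf:convergence} to obtain their deterministic limits $\eventNoDown{T}\in\TransfN{A}{B}$ and $\eventNoDown{G}\in\TransfN{C}{D}$; and (iii) argue that the product of the limits coincides with the limit of the products. The main technical inputs throughout are the monotonicity of the operational norm (\autoref{lem:opt:norm:operational:monotonicity}) and its invariance under parallel composition with identities (\autoref{lem:opt:norm:operational:invariaceAncilas}).

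For step (i), I would fix any deterministic state $\rket{\preparationEventNoDown{\sigma}}\in\StN{C}$ and any deterministic effect $\rbra{\uniDetEff}\in\EffN{D}$. Because each $\eventSequenceNoDown{G}{n}$ is deterministic, the scalar $\rbra{\uniDetEff}\eventSequenceNoDown{G}{n}\rket{\preparationEventNoDown{\sigma}}$ equals $1$ (via the characterisation of determinism already exploited in \autoref{lem:OPT:causal:detTransf:convergence}), so that
\begin{equation*}
(\identityTest{B}\paralC\rbra{\uniDetEff})(\parallelComp{\eventSequenceNoDown{T}{n}}{\eventSequenceNoDown{G}{n}})(\identityTest{A}\paralC\rket{\preparationEventNoDown{\sigma}})=\eventSequenceNoDown{T}{n}.
\end{equation*}
Since the flanking transformations are deterministic, \autoref{lem:opt:norm:operational:monotonicity} bounds $\normOp{\eventSequenceNoDown{T}{n}-\eventSequenceNoDown{T}{m}}$ by $\normOp{(\parallelComp{\eventSequenceNoDown{T}{n}}{\eventSequenceNoDown{G}{n}})-(\parallelComp{\eventSequenceNoDown{T}{m}}{\eventSequenceNoDown{G}{m}})}$, so $\left\{\eventSequenceNoDown{T}{n}\right\}_{n\in\mathbb{N}}$ is Cauchy. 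Step (ii) then applies \autoref{lem:OPT:causal:detTransf:convergence} directly, yielding $\eventNoDown{T}\in\TransfN{A}{B}$; a symmetric argument, with roles of the two factors exchanged, produces $\eventNoDown{G}\in\TransfN{C}{D}$.

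For step (iii), I would use the triangle inequality together with the factorisation $\parallelComp{\eventNoDown{X}}{\eventNoDown{Y}}=(\identityTest{B}\paralC\eventNoDown{Y})(\eventNoDown{X}\paralC\identityTest{C})$ to obtain
\begin{equation*}
\normOp{(\parallelComp{\eventSequenceNoDown{T}{n}}{\eventSequenceNoDown{G}{n}})-(\parallelComp{\eventNoDown{T}}{\eventNoDown{G}})}\leq\normOp{\parallelComp{(\eventSequenceNoDown{T}{n}-\eventNoDown{T})}{\eventSequenceNoDown{G}{n}}}+\normOp{\parallelComp{\eventNoDown{T}}{(\eventSequenceNoDown{G}{n}-\eventNoDown{G})}}.
\end{equation*}
Both $\identityTest{B}\paralC\eventSequenceNoDown{G}{n}$ and $\eventNoDown{T}\paralC\identityTest{D}$ are deterministic (parallel composition preserves singleton outcome spaces), so combining \autoref{lem:opt:norm:operational:monotonicity} with \autoref{lem:opt:norm:operational:invariaceAncilas} bounds the two terms by $\normOp{\eventSequenceNoDown{T}{n}-\eventNoDown{T}}$ and $\normOp{\eventSequenceNoDown{G}{n}-\eventNoDown{G}}$ respectively, each of which vanishes as $n\to\infty$.

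The main obstacle I foresee lies in step (i): one must be able to ``trace out'' each factor through a deterministic state-effect pair on the complementary system, which tacitly requires that $\StN{C}$ and $\EffN{D}$ be non-empty and interact correctly with deterministic transformations. Granted these standard features of the framework, the remaining convergence estimate is a routine combination of monotonicity, ancilla invariance, and the triangle inequality.
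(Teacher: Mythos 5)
Your proposal is correct and follows essentially the same route as the paper's proof: extract Cauchy sequences for the individual factors by sandwiching with a deterministic state--effect pair on the complementary system and invoking monotonicity, pass to deterministic limits via \autoref{lem:OPT:causal:detTransf:convergence}, and then control $\normOp{\parallelComp{\eventSequenceNoDown{T}{n}}{\eventSequenceNoDown{G}{n}}-\parallelComp{\eventNoDown{T}}{\eventNoDown{G}}}$ by the triangle inequality together with monotonicity and ancilla invariance. Your explicit factorisation $\parallelComp{\eventNoDown{X}}{\eventNoDown{Y}}=(\identityTest{B}\paralC\eventNoDown{Y})(\eventNoDown{X}\paralC\identityTest{C})$ merely spells out what the paper leaves implicit, and your worry about non-emptiness of $\StN{C}$ and $\EffN{D}$ is harmless, since every system in an \ac{OPT} admits a deterministic state and effect via full coarse-graining.
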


\begin{proof}
	By \autoref{lem:OPT:causal:detTransf:convergence}, the sequence converges to a deterministic transformation. Therefore, it is sufficient to show that the limit is a product. The first step is to show that also the sequences $\eventSequenceNoDownAll{T}{n} \subset \TransfR{A}{B}$ and	$\eventSequenceNoDownAll{G}{n} \subset \TransfN{C}{D}$ are Cauchy. This can be done by direct calculation:
	\begin{align*}
		&\normOp{ \parallelComp{\eventSequenceNoDown{T}{n}}{\eventSequenceNoDown{G}{n}} - \parallelComp{\eventSequenceNoDown{T}{m}}{\eventSequenceNoDown{G}{m}} } \\ 
		&\qquad \geq  \normOp{ \parallelComp{\eventSequenceNoDown{T}{n}}{\rbraSystem{\observationUniqueDeterministic}{D}} \eventSequenceNoDown{G}{n} \rketSystem{\preparationEventNoDown{\rho}}{C} - \parallelComp{\eventSequenceNoDown{T}{m}}{\rbraSystem{\observationUniqueDeterministic}{D}} \eventSequenceNoDown{G}{m} \rketSystem{\preparationEventNoDown{\rho}}{C} } \\
		&\qquad = \normOp{ \eventSequenceNoDown{T}{n} - \eventSequenceNoDown{T}{m} },
	\end{align*}
	where $\preparationEventNoDown{\rho} \in \StN{C}$ and $\observationUniqueDeterministic \in \EffN{D}$ and the inequality follows from \autoref{lem:opt:norm:operational:monotonicity}, and similarly for the case of \eventSequenceNoDown{G}{n}.
	Defining $\eventNoDown{T} \mathDef \lim_{n \rightarrow \infty} \eventSequenceNoDown{T}{n}$ and $\eventNoDown{G} = \lim_{n \rightarrow \infty} \eventSequenceNoDown{G}{n}$,
and exploiting \autoref{lem:opt:norm:operational:monotonicity}, one can prove that
	\begin{align*}
		&\normOp{ \parallelComp{\eventNoDown{T}}{\eventNoDown{G}} - \parallelComp{\eventSequenceNoDown{T}{n}}{\eventSequenceNoDown{G}{n}} }
		\leq \normOp{ \eventNoDown{T} - \eventSequenceNoDown{T}{n} } + \normOp{ \eventNoDown{G} - \eventSequenceNoDown{G}{n} },
	\end{align*}
	which implies
	\begin{equation*}
		\parallelComp{\eventNoDown{T}}{\eventNoDown{G}} = \lim_{n \rightarrow \infty} \parallelComp{\eventSequenceNoDown{T}{n}}{\eventSequenceNoDown{G}{n}}.\qedhere
	\end{equation*}
	\end{proof}

\subsection{Compositional structure and coarse-graining vs Cauchy completion}
\label{subsec:onCauchyCompleting}
By definition, an \ac{OPT} is closed under sequential and parallel composition, and all the sets of instruments and transformations are closed in the topology induced by the operational norm. However, in the process of \emph{constructing} a theory element by element, one needs to check consistency of the mathematical structures with the axioms of the framework. We then consider whether closure under sequential and parallel composition, as well as under coarse-graining, is preserved after Cauchy-completing tentative spaces of instruments and transformations with respect to the operational norm. The following results provide useful insights in this context.

For simplicity let us denote with $\OPTMath$ the \ac{OPT} obtained from a tentative theory $\tilde{\OPTMath}$ through the operation of completion. Thanks to \autoref{corol:OPT:norm:instrEventConvergence} we know that the spaces of transformations of \OPTMath\ are still closed under the operation of coarse-graining. We have now to check what happens in the case of sequential and parallel composition. 

Let us start with the former. We will present the argument in the case of transformations, but it is immediately extended to the case of instruments through \autoref{thm:OPT:norm:instrEventConvergence}. Let $\eventNoDown{T} \in \Transf{A}{B}$ and $\eventNoDown{G} \in \Transf{B}{C}$ be two transformations of \OPTMath. We have to prove that $\sequentialComp{\eventNoDown{G}}{\eventNoDown{T}}$ is also a transformation of $\OPTMath$. To show this, let us consider $\eventNoDown{T} \in \overline{\Transf{A}{B}}$ as the limit of a Cauchy sequence of transformations $\eventSequenceNoDownAll{T}{n} \subset \Transf{A}{B}$, i.e. $\eventNoDown{T} = \lim_{n \to \infty} \eventSequenceNoDown{T}{n}$, and analogously for $\eventNoDown{G} = \lim_{n \to \infty} \eventSequenceNoDown{G}{n}$. We can then rewrite
\begin{equation*}
	\sequentialComp{\eventNoDown{G}}{\eventNoDown{T}} = \sequentialComp{\lim_{m \to \infty} \eventSequenceNoDown{G}{m} }{\lim_{n \to \infty} \eventSequenceNoDown{T}{n}}.
\end{equation*}
To prove the result, we have then to show that $\sequentialComp{\eventNoDown{G}}{\eventNoDown{T}}$ is the limit of a Cauchy sequence of transformations of $\tilde{\OPTMath}$ by showing that
\begin{equation*}
	\sequentialComp{\eventNoDown{G}}{\eventNoDown{T}} = \lim_{m \to \infty} \lim_{n \to \infty}  \sequentialComp{\eventSequenceNoDown{G}{m}}{\eventSequenceNoDown{T}{n}}.
\end{equation*}
To this end, we will exploit the fact that the operational and sup norm are equivalent, which implies that
\begin{equation}
	\label{opt:strongClosure:usefProp}
	\begin{aligned}
		\normOp{\eventNoDown{G}\eventNoDown{T}} \leq C\normSup{\eventNoDown{G}\eventNoDown{T}} &\leq C \normSup{\eventNoDown{G}} \normSup{\eventNoDown{T}} \\ & \leq \frac{C}{c^2} \normOp{\eventNoDown{G}}\normOp{\eventNoDown{T}},
	\end{aligned}
\end{equation}
where $c$ and $C$ are suitable positive real constants, and the second inequality is obtained by exploiting \autoref{lem:opt:norm:sup:inequality}. Using the latter equation it then follows that:
\begin{align*}
	&\normOp{ \eventSequenceNoDown{G}{m}\eventSequenceNoDown{T}{n} - \eventNoDown{G}\eventNoDown{T}} \\
	&= \normOp{ \eventSequenceNoDown{G}{m}\eventSequenceNoDown{T}{n} - \eventSequenceNoDown{G}{m}\eventNoDown{T} + \eventSequenceNoDown{G}{m}\eventNoDown{T} - \eventNoDown{G}\eventNoDown{T} }\\
	&\leq \normOp{ \eventSequenceNoDown{G}{m} \left( \eventSequenceNoDown{T}{n} - \eventNoDown{T} \right)} + \normOp{ \left( \eventSequenceNoDown{G}{m} - \eventNoDown{G} \right)\eventNoDown{T} }\\
	&\leq \frac{C}{c^2} \normOp{ \eventSequenceNoDown{G}{m} } \normOp{ \eventSequenceNoDown{T}{n} - \eventNoDown{T} } +
	 \frac{C}{c^2} \normOp{ \eventSequenceNoDown{G}{m} - \eventNoDown{G} } \normOp{ \eventNoDown{T} }\\
	&\leq \frac{C}{c^2} \left( \normOp{ \eventSequenceNoDown{T}{n} - \eventNoDown{T} } + \normOp{ \eventSequenceNoDown{G}{m} - \eventNoDown{G} } \right).
\end{align*}
Given that both $\left\{\eventSequenceNoDown{G}{m}\right\}_{m \in \mathbb{N}}$ and $\eventSequenceNoDownAll{T}{n}$ are Cauchy, the latter series of inequalities implies that $\left\{\eventSequenceNoDown{G}{m}\eventSequenceNoDown{T}{n}\right\}_{n,m \in \mathbb{N}}$ is Cauchy and converges to $\eventNoDown{G}\eventNoDown{T}$.

In the last inequality we used the fact that the operational norm of a generic transformation $\eventNoDown{T} \in \Transf{A}{B}$, for any couple of systems \system{A}, $\system{B} \in \Sys{\OPTMath}$, is always bounded by one. This can be checked by direct calculation using the explicit formula for the norm provided in Refs.~\cite{darianoQuantumTheoryFirst2016,perinottiCellularAutomataOperational2020}.

With respect to the closure under parallel composition, the result follows considering that two operations composed in parallel can be always seen as the sequential composition of the same transformations composed with the identity. In diagrams:
\begin{equation*}
	\myQcircuit{
		&\s{A}\qw&\gate{\eventNoDown{T}}&\s{B}\qw&\qw&
		\\
		&\s{C}\qw&\gate{\eventNoDown{G}}&\s{D}\qw&\qw&
	} = \quad\!\! \myQcircuit{
		&\s{A}\qw&\gate{\eventNoDown{T}}&\qw&\s{B}\qw&\qw&\qw&
		\\
		&\qw&\s{C}\qw&\qw&\gate{\eventNoDown{G}}&\s{D}\qw&\qw&
	},
\end{equation*}
or equivalently
\begin{equation*}
	\myQcircuit{
		&\s{A}\qw&\gate{\eventNoDown{T}}&\s{B}\qw&\qw&
		\\
		&\s{C}\qw&\gate{\eventNoDown{G}}&\s{D}\qw&\qw&
	} = \quad\!\! \myQcircuit{
		&\qw&\s{A}\qw&\qw&\gate{\eventNoDown{T}}&\s{B}\qw&\qw&
		\\
		&\s{C}\qw&\gate{\eventNoDown{G}}&\qw&\s{D}\qw&\qw&\qw&
	}.
\end{equation*}
Remembering that $\normOp{\parallelComp{\eventNoDown{T}}{\identityTest{E}}} = \normOp{\eventNoDown{T}}$ for any system \system{E} of the theory (\autoref{lem:opt:norm:operational:invariaceAncilas}) one can reduce to the case of sequential composition.

In summary, we proved that completing a tentative \ac{OPT} including limits of Cauchy sequences of the instruments and transformations spaces returns a well defined \ac{OPT} whenever the operational and sup norm are equivalent. The result is thus the following lemma.

\begin{lemma}
	\label{lem:opt:cauchyCompletion}
	Let $\tilde{\OPTMath}$ be a tentative \ac{OPT} whose instruments and transformations spaces are not Cauchy complete. Then, the theory \OPTMath{} obtained by Cauchy completing these spaces is a well defined \ac{OPT} provided that the operational and sup norm are equivalent.
\end{lemma}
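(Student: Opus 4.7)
The plan is to verify, one after another, that the three structural closure requirements of the \ac{OPT} framework---closure under coarse-graining, under sequential composition, and under parallel composition---are inherited by the Cauchy-completed theory \OPTMath{}. Most of the technical substance is already deposited in the discussion preceding the statement; what is left is to package it cleanly. Completeness of the transformation and instrument spaces in the appropriate norm is automatic, since by construction the completion contains all limits of Cauchy sequences.

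First I would treat coarse-graining. Every instrument of \OPTMath{} arises, by construction, as the limit $\eventTest{T}{x}{X} = \lim_{n \to \infty} \eventTestSequence{T}{x}{X}{n}$ of a Cauchy sequence of instruments of $\tilde{\OPTMath}$. For any partition element $\outcomeSpace{\prim{X}} \subseteq \outcomeSpace{X}$, \autoref{corol:OPT:norm:instrEventConvergence} asserts that $\left\{\sum_{\outcomeIncluded{x}{\prim{X}}} \eventSequence{T}{x}{n}\right\}_{n \in \mathbb{N}}$ is itself Cauchy with limit $\sum_{\outcomeIncluded{x}{\prim{X}}} \event{T}{x}$, which therefore lies in the completed transformation space. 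This gives the coarse-graining axiom.

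Next, for sequential composition I would pick $\eventNoDown{T} = \lim_n \eventSequenceNoDown{T}{n}$ and $\eventNoDown{G} = \lim_m \eventSequenceNoDown{G}{m}$ in the completion, and bound
\begin{equation*}
\normOp{\sequentialComp{\eventSequenceNoDown{G}{m}}{\eventSequenceNoDown{T}{n}} - \sequentialComp{\eventNoDown{G}}{\eventNoDown{T}}} \leq \normOp{\sequentialComp{\eventSequenceNoDown{G}{m}}{(\eventSequenceNoDown{T}{n} - \eventNoDown{T})}} + \normOp{\sequentialComp{(\eventSequenceNoDown{G}{m} - \eventNoDown{G})}{\eventNoDown{T}}}
\end{equation*}
by the triangle inequality. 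Applying the chain of estimates \eqref{opt:strongClosure:usefProp}---which couples the assumed equivalence of operational and sup norms with submultiplicativity of the sup norm (\autoref{lem:opt:norm:sup:inequality}) together with the standard bound $\normOp{\cdot} \leq 1$ on transformations---the right-hand side is dominated by a constant multiple of $\normOp{\eventSequenceNoDown{T}{n} - \eventNoDown{T}} + \normOp{\eventSequenceNoDown{G}{m} - \eventNoDown{G}}$, which vanishes as $n,m \to \infty$. Hence $\sequentialComp{\eventNoDown{G}}{\eventNoDown{T}}$ belongs to \OPTMath{}, and the corresponding statement at the level of instruments follows immediately from \autoref{thm:OPT:norm:instrEventConvergence}.

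Finally, closure under parallel composition is reduced to the sequential case through the circuit identity $\parallelComp{\eventNoDown{T}}{\eventNoDown{G}} = \sequentialComp{(\parallelComp{\identityTest{B}}{\eventNoDown{G}})}{(\parallelComp{\eventNoDown{T}}{\identityTest{C}})}$. By \autoref{lem:opt:norm:operational:invariaceAncilas}, tensoring with an identity preserves the operational norm, so the sequences $\parallelComp{\eventSequenceNoDown{T}{n}}{\identityTest{C}}$ and $\parallelComp{\identityTest{B}}{\eventSequenceNoDown{G}{m}}$ remain Cauchy and converge to $\parallelComp{\eventNoDown{T}}{\identityTest{C}}$ and $\parallelComp{\identityTest{B}}{\eventNoDown{G}}$ respectively; the already-proved sequential step then delivers $\parallelComp{\eventNoDown{T}}{\eventNoDown{G}} \in \OPTMath{}$. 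The only genuine obstacle in the whole argument is the sequential-composition step: the operational norm is not a priori submultiplicative, so the key estimate has to be routed through the sup norm, which is. This is precisely where the hypothesis that the two norms are equivalent becomes indispensable; without it, the composition $\sequentialComp{\eventSequenceNoDown{G}{m}}{\eventSequenceNoDown{T}{n}}$ could fail to approximate $\sequentialComp{\eventNoDown{G}}{\eventNoDown{T}}$, and the resulting structure would not qualify as an \ac{OPT}.
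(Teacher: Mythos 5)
Your proposal is correct and follows essentially the same route as the paper: coarse-graining closure via \autoref{corol:OPT:norm:instrEventConvergence}, sequential composition via the triangle inequality combined with the chain of estimates \eqref{opt:strongClosure:usefProp} (norm equivalence plus submultiplicativity of the sup norm and the bound $\normOp{\cdot}\leq 1$), and parallel composition reduced to the sequential case through the identity-padding circuit identity and \autoref{lem:opt:norm:operational:invariaceAncilas}. You also correctly isolate the equivalence of the operational and sup norms as the indispensable hypothesis, exactly as the paper does.
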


The argument we just exposed also shows another important property that \acp{OPT} satisfy whenever the operational and sup norm are equivalent.

\begin{lemma}
	\label{lem:opt:instr:seqComp:limit}
	Let \OPTMath\ be an \ac{OPT} where the operational and sup norms are equivalent. Consider the two Cauchy sequences of instruments $\eventTestSequenceAll{T}{x}{X}{n} \subset \Instr{A}{B}$ and $\eventTestSequenceAll{G}{y}{Y}{n} \subset \Instr{B}{C}$, whose limits are $\eventTest{T}{x}{X} = \lim_{n \to \infty} \eventTestSequence{T}{x}{X}{n}$ and $\eventTest{G}{y}{Y} = \lim_{n \to \infty} \eventTestSequence{G}{y}{Y}{n}$, respectively. Then, the composite limit instrument is given by the composition of the limits
	\begin{equation*}
		\sequentialComp{\eventTest{G}{y}{Y}}{\eventTest{T}{x}{X}} = \lim_{m \to \infty} \lim_{n \to \infty} \sequentialComp{\eventTestSequence{G}{y}{Y}{m}}{\eventTestSequence{T}{x}{X}{n}}.
	\end{equation*}
\end{lemma}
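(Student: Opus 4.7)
The plan is to reduce the lemma to the continuity of sequential composition at the level of individual transformations (essentially already established in Subsection~\ref{subsec:onCauchyCompleting}), and then reassemble the composite instrument component-wise via \autoref{thm:OPT:norm:instrEventConvergence}.

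First, I would apply \autoref{thm:OPT:norm:instrEventConvergence} to the two sequences $\eventTestSequenceAll{T}{x}{X}{n}$ and $\eventTestSequenceAll{G}{y}{Y}{m}$, obtaining that for every $\outcomeIncluded{x}{X}$ the component sequence $\eventSequenceAll{T}{x}{n} \subset \TransfR{A}{B}$ is Cauchy in the operational norm with limit $\event{T}{x}$, and analogously that $\eventSequence{G}{y}{m} \to \event{G}{y}$ for every $\outcomeIncluded{y}{Y}$. This step packages the instrument-level hypothesis into $\cardinality{\outcomeSpace{X}}+\cardinality{\outcomeSpace{Y}}$ ordinary Cauchy sequences of transformations.

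Next, for each fixed pair $\outcomeDouble{x}{y}$ I would establish the identity $\sequentialComp{\event{G}{y}}{\event{T}{x}} = \lim_{m \to \infty} \lim_{n \to \infty} \sequentialComp{\eventSequence{G}{y}{m}}{\eventSequence{T}{x}{n}}$. This is exactly the estimate performed in Subsection~\ref{subsec:onCauchyCompleting}: by adding and subtracting $\sequentialComp{\eventSequence{G}{y}{m}}{\event{T}{x}}$, applying the triangle inequality, and invoking the hypothesis that the operational and sup norms are equivalent together with \autoref{lem:opt:norm:sup:inequality}, one obtains a bound of the form
\begin{equation*}
\normOp{\sequentialComp{\eventSequence{G}{y}{m}}{\eventSequence{T}{x}{n}} - \sequentialComp{\event{G}{y}}{\event{T}{x}}} \leq K\bigl(\normOp{\eventSequence{T}{x}{n} - \event{T}{x}} + \normOp{\eventSequence{G}{y}{m} - \event{G}{y}}\bigr),
\end{equation*}
for a suitable positive constant $K$, where one uses that the operational norms of $\event{T}{x}$ and of each $\eventSequence{G}{y}{m}$ are uniformly bounded (cf.\ Refs.~\cite{darianoQuantumTheoryFirst2016,perinottiCellularAutomataOperational2020}). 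Taking $n\to\infty$ first (at fixed $m$) yields $\sequentialComp{\eventSequence{G}{y}{m}}{\event{T}{x}}$, and then letting $m\to\infty$ gives the component-wise claim.

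Finally, I would lift this component-wise convergence back to the instrument level. The composite instrument has outcome space $\outcomeSpaceDouble{X}{Y}$ of finite cardinality $\cardinality{\outcomeSpace{X}}\cdot\cardinality{\outcomeSpace{Y}}$, and each of its component sequences converges in operational norm to $\sequentialComp{\event{G}{y}}{\event{T}{x}}$; applying \autoref{thm:OPT:norm:instrEventConvergence} in the ``if'' direction to this finite collection yields convergence of the iterated sequence of composite instruments to $\sequentialComp{\eventTest{G}{y}{Y}}{\eventTest{T}{x}{X}}$ in the instrument norm, which is the claim. The main obstacle is really only bookkeeping between norms: the sole analytic ingredient is the continuity-of-composition estimate, whose validity rests precisely on the equivalence of the operational and sup norms assumed in the hypothesis; without that equivalence the bound in the displayed inequality would not follow from \autoref{lem:opt:norm:sup:inequality} alone, and the double limit could fail to reproduce the composition of the limits.
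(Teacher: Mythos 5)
Your proposal is correct and follows essentially the same route as the paper: the paper derives this lemma by observing that the continuity-of-composition estimate of Subsection~\ref{subsec:onCauchyCompleting} (add-and-subtract, triangle inequality, \autoref{lem:opt:norm:sup:inequality} together with norm equivalence and the uniform bound on operational norms) holds for each component transformation, and then passes between instruments and their components in both directions via \autoref{thm:OPT:norm:instrEventConvergence}, exactly as you do. No gaps.
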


\begin{lemma}
	\label{lem:opt:transf:seqComp:limit}
	Let \OPTMath\ be an \ac{OPT} where the operational and sup norms are equivalent. Consider the two Cauchy sequences of transformations $\eventSequenceNoDownAll{T}{n} \subset \Transf{A}{B}$ and $\eventSequenceNoDownAll{G}{n} 	\subset \Transf{B}{C}$, whose limits are $\eventNoDown{T} = \lim_{n \to \infty} \eventSequenceNoDownAll{T}{n}$ and $\eventNoDown{G} = \lim_{n \to \infty} \eventSequenceNoDownAll{G}{n}$, respectively. Then, the composite limit transformation is given by the composition of the limits
	\begin{equation*}
		\sequentialComp{\eventNoDown{G}}{\eventNoDown{T}} = \lim_{m \to \infty} \lim_{n \to \infty} \sequentialComp{\eventSequenceNoDown{G}{m}}{\eventSequenceNoDown{T}{n}}.
	\end{equation*}
\end{lemma}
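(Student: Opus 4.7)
The plan is to reduce \autoref{lem:opt:transf:seqComp:limit} to the computation already carried out in \autoref{subsec:onCauchyCompleting}, isolating it as a self-contained estimate on the composition of Cauchy sequences. First I would add and subtract an intermediate term in order to decouple the two limits: writing
\begin{align*}
	\sequentialComp{\eventSequenceNoDown{G}{m}}{\eventSequenceNoDown{T}{n}}-\sequentialComp{\eventNoDown{G}}{\eventNoDown{T}} &= \sequentialComp{\eventSequenceNoDown{G}{m}}{\left(\eventSequenceNoDown{T}{n}-\eventNoDown{T}\right)} \\
	&\quad + \sequentialComp{\left(\eventSequenceNoDown{G}{m}-\eventNoDown{G}\right)}{\eventNoDown{T}},
\end{align*}
and applying the triangle inequality for $\normOp{\cdot}$.

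Next, I would control each summand by invoking the equivalence of the operational and sup norms (assumed in the hypothesis), together with the submultiplicativity of the sup norm recalled in \autoref{lem:opt:norm:sup:inequality}. This reproduces the chain of inequalities \eqref{opt:strongClosure:usefProp}, so that
\begin{equation*}
	\normOp{\sequentialComp{\eventNoDown{A}}{\eventNoDown{B}}} \leq \frac{C}{c^2}\normOp{\eventNoDown{A}}\,\normOp{\eventNoDown{B}}
\end{equation*}
for suitable positive constants $c,C$. Combined with the boundedness of the operational norm on transformations (at most $1$, as used at the end of \autoref{subsec:onCauchyCompleting}) and the Cauchy property of $\eventSequenceNoDownAll{G}{m}$, this yields a bound
\begin{equation*}
	\normOp{\sequentialComp{\eventSequenceNoDown{G}{m}}{\eventSequenceNoDown{T}{n}}-\sequentialComp{\eventNoDown{G}}{\eventNoDown{T}}} \leq K\left(\normOp{\eventSequenceNoDown{T}{n}-\eventNoDown{T}}+\normOp{\eventSequenceNoDown{G}{m}-\eventNoDown{G}}\right)
\end{equation*}
for some constant $K>0$. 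The right-hand side vanishes as $n,m\to\infty$, proving the iterated limit identity.

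Since the estimate is already present verbatim in the discussion preceding the statement, there is no genuine obstacle: the lemma is essentially the extraction of that calculation as a stand-alone fact. The only subtlety worth flagging explicitly is that one must know the sequences $\eventSequenceNoDownAll{T}{n}$ and $\eventSequenceNoDownAll{G}{m}$ are uniformly bounded in operational norm, which is automatic for Cauchy sequences (and is moreover trivial here because transformations have operational norm at most one). The dominated behaviour of the cross term $\sequentialComp{\eventSequenceNoDown{G}{m}}{\left(\eventSequenceNoDown{T}{n}-\eventNoDown{T}\right)}$ is what justifies exchanging the two limits, giving the claim.
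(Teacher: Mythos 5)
Your proposal is correct and follows essentially the same route as the paper: the paper's justification for this lemma is precisely the computation in the preceding subsection, namely adding and subtracting $\sequentialComp{\eventSequenceNoDown{G}{m}}{\eventNoDown{T}}$, applying the triangle inequality, and bounding each term via the norm-equivalence chain \eqref{opt:strongClosure:usefProp} together with the fact that transformations have operational norm at most one. Your additional remark about uniform boundedness of the sequences is a harmless (and correct) clarification of the same argument.
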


These two latter results can be immediately generalised to hold for generalised instruments and transformations also, and in the case of the parallel composition.

\section{Strong completeness}
\label{sec:strongCompleteness}

We prove here how to complete a theory in such a way that it possesses all the conditional operations, or more formally to a theory that is \emph{strongly causal}.

\subsection{Strongly causal operational probabilistic theories}

The idea is to include in the theory every instrument that can originate from conditional operations based on experimental results. Thus we refer to the most general conditional operation that can occur in a theory that is a conditional instrument.

\begin{definition}[Conditional instruments]
	Let \OPTMath{} be an \ac{OPT}, $\test{T}{X} = \eventTest{T}{x}{X} \in \Instr{A}{B}$ a test of the theory, and $\left\{ \conditionedTest{G}{Y}{x} = \conditionedEventTest{G}{y}{Y}{x} \right\}_{\outcomeIncluded{x}{X}} \subset \Instr{B}{C}$ a labelled collection of instruments. We then define the following collection as a \textdef{conditional instrument}
	\begin{equation}
		\label{eqt:OPT:condInstrDef}
		\bigcup_{\outcomeIncluded{x}{X}} \left\{ \sequentialComp{\conditionedEvent{G}{y}{x}}{\event{T}{x}} \right\}_{\outcomeIncluded{y}{Y}} = \left\{ \left\{ \sequentialComp{\conditionedEvent{G}{y}{x}}{\event{T}{x}} \right\}_{\outcomeIncluded{y}{Y}} \right\}_{\outcomeIncluded{x}{X}}.
	\end{equation}
\end{definition}

\begin{remark}
	Although the outcome space of the conditional instrument may depend on \outcome{x}, this dependence can always be eliminated—thanks to \eqref{eqt:opt:genInstrEmbedd}—by considering a common outcome space \outcomeSpace{Y} with the largest cardinality.
\end{remark}

\begin{remark}
	\label{remark:condInstr}
	In general the object in~\eqref{eqt:OPT:condInstrDef} may not belong to $\InstrEmpty{\OPTMath}$, namely it may model an operation that it is not actually implementable in the theory. An example of a theory where not every conditional 
	instrument is an actual instrument of the theory is \ac{MCT} proposed in Ref.~\cite{erbaMeasurementIncompatibilityStrictly2024}.
\end{remark}

The name conditional instruments is derived from the fact that the outcome of the first test \emph{conditions} which test the experimenter implements next.\footnote{{We observe that in the literature, the families of instruments depending on a classical parameter, like $\left\{ \conditionedTest{G}{Y}{x} = \conditionedEventTest{G}{y}{Y}{x} \right\}_{\outcomeIncluded{x}{X}} \subset \Instr{B}{C}$, are also referred to as \textdef{multimeters}~\cite{selbyContextualityIncompatibility2023,selbyAccessibleFragmentsGeneralized2023,bluhmSimulationQuantumMultimeters2024}}.} We can now provide the formal definition of \textdef{strong causality}.
\begin{definition}[Strongly causal \acp{OPT}]
	\label{def:opt:causality:strong}
	An \ac{OPT} \OPTMath{} is \textdef{strongly causal} if every operation of the form \eqref{eqt:OPT:condInstrDef} belongs to $\InstrEmpty{\OPTMath}$, that is if every conditional instrument is an instrument of the theory.
\end{definition}

In the literature on quantum theory the property of strong causality is also referred to as \textdef{classical control on outcomes} and \textdef{post-processing}~\cite{leppajarviPostprocessingQuantumInstruments2021}.
The name strong causality comes from the fact, that this property is stronger than the causality property usually considered within the framework:

\begin{definition}[Causal \acp{OPT}]
	A \textdef{causal \ac{OPT}} \OPTMath{} is a theory where every system $\system{A} \in \Sys{\OPTMath}$ admits a unique deterministic effect. In symbols, $\forall \system{A} \in \Sys{\OPTMath} \; \exists ! \; \observationUniqueDeterministic \in \EffN{A}$~\cite{darianoQuantumTheoryFirst2016}.
\end{definition}
The causality condition can be proven to be equivalent to the property of \emph{no-signalling from the future}, that is that the probability distributions of preparation-instruments does not depend on the choice of the observation-instruments at their output~\cite{darianoQuantumTheoryFirst2016}.
In causal theories the flow of information is fixed from input to output (diagrammatically, from left to right): the input-output direction can then be identified  with the direction from past to future, i.e., the \emph{arrow of time}. Furthermore, the causality condition also implies the property of \emph{no-signalling without interaction}, namely, the fact that separated parties cannot influence each other if they do not interact. The causal structure induced by causality among parties in a network coincides with the so-called \emph{Einstein locality}~\cite{chiribellaProbabilisticTheoriesPurification2010,darianoQuantumTheoryFirst2016}.

\subsection{Completeness with respect to strong causality}

In the following theorem we show how to complete a causal theory under the strong causality assumption (remind that causality is a necessary, but not sufficient condition for strong causality):

\begin{theorem}[Completion under strong causality]
	\label{thm:OPT:stronglyClosure}
	Let \OPTMath\ be a causal \ac{OPT}. If one constructs a new theory $\tilde{\OPTMath}$ by adding all the conditional instruments and subsequently completing the spaces of transformations and instruments endowed with the operational norm, then $\tilde{\OPTMath}$ is a strongly causal \ac{OPT}.
\end{theorem}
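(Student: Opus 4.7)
The plan is to verify two points in order. First, that $\tilde{\OPTMath}$ is a bona fide \ac{OPT}: this is immediate from \autoref{lem:opt:cauchyCompletion}, since by construction we are Cauchy-completing (in the operational norm) the tentative theory obtained by adjoining all conditional instruments to $\OPTMath$, and the standing hypothesis of this section is the equivalence of the operational and sup norms. The substantive content is the second point: that $\tilde{\OPTMath}$ is \emph{closed under conditioning}, so that any conditional instrument built from instruments of $\tilde{\OPTMath}$ is itself an instrument of $\tilde{\OPTMath}$.

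For this, I would fix a generic candidate conditional instrument of $\tilde{\OPTMath}$, built out of a test $\test{T}{X} = \eventTest{T}{x}{X} \in \Instr{A}{B}$ and a labelled family $\{\conditionedTest{G}{Y}{x}\}_{\outcomeIncluded{x}{X}} \subset \Instr{B}{C}$ (assuming a common outcome space $\outcomeSpace{Y}$ by null-padding via \eqref{eqt:opt:genInstrEmbedd}). By Cauchy completeness of $\tilde{\OPTMath}$, there exist Cauchy sequences $\{\test{T^{(n)}}{X}\}_{n \in \mathbb{N}}$ and $\{\conditionedTest{G^{(n)}}{Y}{x}\}_{n \in \mathbb{N}}$, drawn from the pre-completion intermediate theory, converging respectively to $\test{T}{X}$ and to each $\conditionedTest{G}{Y}{x}$. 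For each $n$, the collection
\begin{equation*}
    \bigcup_{\outcomeIncluded{x}{X}} \left\{ \sequentialComp{\mathscr{G}^{(n,x)}_{\outcome{y}}}{\mathscr{T}^{(n)}_{\outcome{x}}} \right\}_{\outcomeIncluded{y}{Y}}
\end{equation*}
is a conditional instrument built out of bona fide instruments of the intermediate theory, hence an instrument of that theory by construction, and thus also of $\tilde{\OPTMath}$.

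It then suffices to show that this sequence of conditional instruments converges, in the operational norm for instruments with outcome set of cardinality $|\outcomeSpace{X}|\cdot|\outcomeSpace{Y}|$, to the target conditional instrument; Cauchy completeness of $\tilde{\OPTMath}$ will then place the limit inside $\tilde{\OPTMath}$. By \autoref{thm:OPT:norm:instrEventConvergence} this collection-level convergence reduces to the component-wise convergence $\sequentialComp{\mathscr{G}^{(n,x)}_{y}}{\mathscr{T}^{(n)}_{x}} \to \sequentialComp{\mathscr{G}^{(x)}_{y}}{\mathscr{T}_{x}}$ for each pair $(\outcome{x},\outcome{y})$, which is in turn exactly the statement of \autoref{lem:opt:transf:seqComp:limit}.

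The main obstacle I foresee is bookkeeping rather than conceptual: one must package the various parallel Cauchy sequences (the outer test, indexed by $n$, and the family of inner tests, indexed by both $n$ and $x$) into a single Cauchy sequence in a generalised-instrument space large enough to accommodate the combined outcome set $\outcomeSpace{X}\times\outcomeSpace{Y}$, and then apply \autoref{thm:OPT:norm:instrEventConvergence} at that level. A related subtlety worth flagging is that the intermediate pre-completion theory must itself already be closed under conditioning (so that the finite-level approximants are genuine instruments of it): this is automatic, since a conditional instrument of conditional instruments collapses back to a single conditional instrument by associativity of sequential composition together with distributivity of coarse-graining over composition.
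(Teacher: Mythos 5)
Your proposal is correct and follows essentially the same route as the paper's proof: well-definedness of $\tilde{\OPTMath}$ via \autoref{lem:opt:cauchyCompletion}, then closure under conditioning by writing the components of a candidate conditional instrument as limits of Cauchy sequences from the pre-completion theory, forming the conditional instruments of the approximants, and passing to the limit via \autoref{thm:OPT:norm:instrEventConvergence} together with \autoref{lem:opt:instr:seqComp:limit} and \autoref{lem:opt:transf:seqComp:limit} (the paper phrases this as an iterated limit $\lim_m\lim_n$ rather than your diagonal single index, an immaterial difference given the uniform estimate behind those lemmas). Your closing remark that conditioning of conditional instruments collapses to a single finite conditioning step is exactly the content of the paper's \autoref{remark:strongClosure:conditioningInception}, so that subtlety is handled the same way.
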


The proof will be constructive, and in particular it consists in the following extension procedure, along with a consistency check.

\begin{procedure}[How to make an \ac{OPT} strongly causal]
	\label{procedure:OPT:stronglyClosure}
	The algorithm to make an \ac{OPT} strongly causal consists of three different steps:
	\begin{enumerate}[I)]
		\item Addition of all the functions obtained from conditioning a family of instruments a finite number of times. More precisely, one has to add all instruments of the form
		\begin{equation}
			\label{eqt:procedure:OPT:stronglyClosure:1}
			\begin{aligned}
				&\myQcircuit{
					&\s{A}\qw&\gate{\eventTest{T}{x}{X}}&\s{B}\qw&\gate{\conditionedEventTest{G}{y}{Y}{x}}&\s{C}\qw&\qw&\cdots
				}\\[10pt] 
				& \myQcircuit{
					&\cdots&&\s{D}\qw&\gate{\conditionedEventTest{H}{z}{Z}{x,y,\ldots}}&\s{E}\qw&\qw&
				}
			\end{aligned}
		\end{equation}
		where $\outcome{\left( x,y,\ldots \right)}$ indicates the finite space of the outcomes of the instruments that precede $\conditionedEventTest{H}{z}{Z}{x,y,\ldots}$. 
		
		\item Closure of the new instruments and transformations spaces under sequential and parallel composition, and coarse-graining,
		\item Cauchy completion of the spaces of instruments and transformations with respect to the operational norm.
	\end{enumerate}
\end{procedure}

\begin{remark}
	\label{remark:strongClosure:finiteOutcome}	
	Coherently with the requirements stated at the beginning of \autoref{sec:cauchy}, the procedure described above does not lead to include in $\tilde{\OPTMath}$  instruments obtained by conditioning an infinite number of times. Otherwise, it could lead to the definition of sequences of instruments where the cardinality of the outcome space is not bounded.
\end{remark}

\begin{remark}	
	\label{remark:strongClosure:conditioningInception}	
	The reason why in the first step of \autoref{procedure:OPT:stronglyClosure} one is adding all instruments obtained through a finite number of conditioning steps, instead of just the instruments of the form \eqref{eqt:OPT:condInstrDef}---i.e.,with just one conditioning step---as would be required by the definition of strongly causal \acp{OPT} (\autoref{def:opt:causality:strong}) is to avoid recursion in the procedure. Indeed, if one just added the instruments with one conditioning step, then the instruments obtained through conditioning on conditional instruments would be required to be added in a subsequent step, and so on. Instead, the requirement~\eqref{eqt:procedure:OPT:stronglyClosure:1} already encompasses all instruments that would be obtained through the iteration procedure.	To illustrate this we consider as an example conditioning of conditional instruments. The argument can then be straightforwardly generalised. Let
	\begin{equation*}
		\left\{ \sequentialComp{\conditionedEvent{G}{y}{x}}{\event{T}{x}} \right\}_{\outcomeIncludedDouble{x}{y}{X}{Y}},
	\end{equation*}
	be an instrument in $\Instr{A}{E}$, where 
	\begin{equation*}
		\begin{aligned}
			\eventTest{T}{x}{X} &= \left\{ \event{T}{x' , x''} \right\}_{\outcomeIncludedDouble{x'}{x''}{X'}{X''}} \\
			&= \left\{ \sequentialComp{ { \eventNoDown{T}'_{\outcome{x}''} }^{\left( \outcome{x}' \right)} }{ \eventNoDown{T}'_{\outcome{x}'} } \right\}_{\outcomeIncludedDouble{x'}{x''}{X'}{X''}},
		\end{aligned}
	\end{equation*}
	is an instrument in $\Instr{A}{C}$, and
	\begin{equation*}
		\begin{aligned}
			&\left\{\conditionedEventTest{G}{y}{Y}{x}\right\}_{\outcomeIncluded{x}{X}}\\
			&= \left\{\left\{ \event{G}{y' ,y''} \right\}^{\left(\outcome{x' , x''}\right)} _{\outcomeIncludedDouble{y'}{y''}{Y'}{Y''}}\right\}_{\outcomeIncludedDouble{x'}{x''}{X'}{X''}} \\ 
			&= \left\{ \left\{ \sequentialComp{ { \eventNoDown{G}'_{\outcome{y}''} }^{\left( \outcome{y}' \right)} }{ \eventNoDown{G}'_{\outcome{y}'} } \right\}^{\left(\outcome{x' , x''}\right)} _{\outcomeIncludedDouble{y'}{y''}{Y'}{Y''}}\right\}_{\outcomeIncludedDouble{x'}{x''}{X'}{X''}} \\ 
			&= \left\{ \sequentialComp{ {\eventNoDown{G}''_{\outcome{y''}}}^{\left( \outcome{y',x',x''} \right)} }{ { \eventNoDown{G}'_{y'} }^{\left( \outcome{x',x''} \right)} } \right\}_{\left(\outcome{x',x'',y',y''}\right) \in \cartesianProduct{\outcomeSpace{X'}}{\cartesianProduct{\outcomeSpace{X''}}{\cartesianProduct{Y'}{Y''}}}},
		\end{aligned}
	\end{equation*}
	is an instrument in $\Instr{C}{E}$. By composing the two instruments one obtains
	\begin{equation*}
		\left\{ \sequentialComp{ {\eventNoDown{G}''_{\outcome{y''}}}^{\left( \outcome{y',x',x''} \right)} }{\sequentialComp{ { \eventNoDown{G}'_{y'} }^{\left( \outcome{x',x''} \right)} }{ \sequentialComp{ { \eventNoDown{T}'_{\outcome{x}''} }^{\left( \outcome{x}' \right)} }{ \eventNoDown{T}'_{\outcome{x}'} } }} \right\}_{\outcomeIncluded{z}{Z}},
	\end{equation*}
	where $\outcomeIncluded{z}{Z}$ was used in place of $\left(\outcome{x',x'',y',y''}\right) \in \cartesianProduct{\outcomeSpace{X'}}{\cartesianProduct{\outcomeSpace{X''}}{\cartesianProduct{Y'}{Y''}}}$, which is exactly of the form \eqref{eqt:procedure:OPT:stronglyClosure:1}.
\end{remark}

We are now in a position to prove \autoref{thm:OPT:stronglyClosure}.
\begin{proof}
	One has to show that $\tilde{\OPTMath{}}$, obtained through \autoref{procedure:OPT:stronglyClosure}, is a well defined \ac{OPT}. This amounts to showing that after the operation of Cauchy completion all the conditional instruments are actual instruments of the theory and that this new collection of instruments is closed under sequential and parallel composition, and coarse-graining~\cite{perinottiCellularAutomataOperational2020,darianoClassicalTheoriesEntanglement2020}. Closure under composition and coarse-graining is guaranteed by \autoref{lem:opt:instr:seqComp:limit}, \autoref{lem:opt:transf:seqComp:limit}, and \autoref{corol:OPT:norm:instrEventConvergence}. Therefore, the first requirement is the only one that needs to be checked explicitly. We then show that, after adding all the instruments obtained through a finite number of conditioning steps to \OPTMath\ and subsequently completing with respect of the operational norm, for every instrument $\test{T}{X} \equiv \eventTest{T}{x}{X} \in \Instr{A}{B}$ and for every family $\conditionedTest{G}{Y}{x} \equiv \conditionedEventTest{G}{y}{Y}{x} \subset \Instr{B}{C}$, for any system \system{A}, \system{B}, $\system{C} \in \Sys{\tilde{\OPTMath}}$,  the conditional set of operations
	\begin{equation*}
		\left\{ \left\{ \sequentialComp{\conditionedEvent{G}{y}{x}}{\event{T}{x}} \right\}_{\outcomeIncluded{y}{Y}} \right\}_{\outcomeIncluded{x}{X}} 
	\end{equation*}
	is a test of the theory. 
	
	In this case, it is sufficient to study what happens for instruments with a single conditional step since one is only interested in proving that $\tilde{\OPTMath{}}$ satisfies \autoref{def:opt:causality:strong}. Let us suppose that there exists a particular family of tests for which this does not hold. Since we already added to $\tilde{\OPTMath{}}$ all the conditional instruments that it is possible to construct using the tests of \OPTMath, the only cases we have to consider are the ones where at least one between $\test{T}{X}$ or $\left\{ \conditionedTest{G}{Y}{x} \right\}_{\outcomeIncluded{x}{X}}$ is an instrument obtained as a limit from the procedure of Cauchy completion. Given that an instrument can always be seen as the limit of a constant sequence, we will treat both $\test{T}{X}$ and $\left\{ \conditionedTest{G}{Y}{x} \right\}_{\outcomeIncluded{x}{X}}$ as limits. Thanks to \autoref{thm:OPT:norm:instrEventConvergence} along with \autoref{lem:opt:instr:seqComp:limit} and \autoref{lem:opt:transf:seqComp:limit} we have that
	\begin{align*}
		\sequentialComp{\conditionedEvent{G}{y}{x}}{\event{T}{x}} &= \sequentialComp{\left( \lim_{m \rightarrow \infty} \conditionedEventSequence{G}{y}{x}{m} \right)}{\left( \lim_{n \rightarrow \infty} \eventSequence{T}{x}{n} \right)}\\
		&= \lim_{m \rightarrow \infty}  \lim_{n \rightarrow \infty}  \sequentialComp{\conditionedEventSequence{G}{y}{x}{m}}{\eventSequence{T}{x}{n}}.
	\end{align*}
	Hence, $\sequentialComp{\conditionedEvent{G}{y}{x}}{\event{T}{x}}$ is the limit of a sequence of transformations of $\tilde{\OPTMath{}}$, and by the requirement of Cauchy completeness of the spaces of transformations one obtains the result, i.e.
	\begin{equation*}
		\left\{ \left\{ \sequentialComp{\conditionedEvent{G}{y}{x}}{\event{T}{x}} \right\}_{\outcomeIncluded{y}{Y}} \right\}_{\outcomeIncluded{x}{X}} \in \Instr{A}{C}
	\end{equation*}
	is an actual instrument of $\tilde{\OPTMath{}}$.
\end{proof}

\section{Broadcasting, compatibility and disturbance}
\label{sec:properties}
In this section we introduce the notions of \textdef{broadcasting}, \textdef{compatibility of observations}, and \textdef{irreversibility of measurement disturbance}. 

\subsection{Broadcasting}
Let us start from the definition of broadcasting.
\begin{definition}[Broadcasting transformation]
	\label{def:opt:bradcasting}
	Considering a causal \ac{OPT}, a transformation $\eventNoDown{B} \in \TransfN{A}{AA}$ is \textdef{broadcasting} if: 
	\begin{equation}
		\label{eqt:opt:transf:broadcasting}
		\myQcircuit{
			&\s{A}\qw&\multigate{1}{\eventNoDown{B}}&\s{A}\qw&\measureD{\observationUniqueDeterministic}&
			\\
			&\pureghost{}&\pureghost{\eventNoDown{B}}&\s{A}\qw&\qw&\qw&
		} = \;
		\myQcircuit{
			&\s{A}\qw&\multigate{1}{\eventNoDown{B}}&\s{A}\qw&\qw&\qw&
			\\
			&\pureghost{}&\pureghost{\eventNoDown{B}}&\s{A}\qw&\measureD{\observationUniqueDeterministic}&
		} = \;
		\myQcircuit{
			&\s{A}\qw&\qw&
		}.
	\end{equation}
\end{definition}

Notice that, following the above definition, a transformation $\eventNoDown{B} \in \TransfN{A}{AA}$ is broadcasting if and only if
\begin{align*}
	\myQcircuit{
		&\multiprepareC{1}{\preparationEventNoDown{\rho}}&\s{A}\qw&\qw&
		\\
		&\pureghost{\preparationEventNoDown{\rho}}&\s{E}\qw&\qw&
	} &= 
	\myQcircuit{
		&\multiprepareC{2}{\preparationEventNoDown{\rho}}&\s{A}\qw&\multigate{1}{\eventNoDown{B}}&\s{A}\qw&\measureD{\observationUniqueDeterministic}&
		\\
		&\pureghost{\preparationEventNoDown{\rho}}&\pureghost{}&\pureghost{\eventNoDown{B}}&\s{A}\qw&\qw&\qw&
		\\
		&\pureghost{\preparationEventNoDown{\rho}}&\qw&\qw&\s{E}\qw&\qw&\qw&
	}\\[10pt]
	&= 
	\myQcircuit{
		&\multiprepareC{2}{\preparationEventNoDown{\rho}}&\s{A}\qw&\multigate{1}{\eventNoDown{B}}&\s{A}\qw&\qw&\qw&
		\\
		&\pureghost{\preparationEventNoDown{\rho}}&\pureghost{}&\pureghost{\eventNoDown{B}}&\s{A}\qw&\measureD{\observationUniqueDeterministic}&
		\\
		&\pureghost{\preparationEventNoDown{\rho}}&\qw&\qw&\s{E}\qw&\qw&\qw&
	},
\end{align*}
for every system $\system{E}$ of the theory and for every state $\preparationEventNoDown{\rho} \in \St{AE}$.

While \ac{QT} does not allow for a broadcasting transformation, \ac{CT} does~\cite{woottersSingleQuantumCannot1982,dieksCommunicationEPRDevices1982,yuenAmplificationQuantumStates1986,barnumNoncommutingMixedStates1996,daffertshoferClassicalNoCloningTheorem2002,barnumCloningBroadcastingGeneric2006,walkerClassicalBroadcastingPossible2007,barnumGeneralizedNoBroadcastingTheorem2007,pianiNoLocalBroadcastingTheoremMultipartite2008,luoQuantumNoBroadcasting2010} and the classical broadcasting map is given by:
\begin{equation}
	\label{eqt:opt:transf:broadcast:classical}
	\sum_{i = 1}^{\sysDimension{A}} \quad \myQcircuit{
		&\s{A}\qw&\measureD{\observationEventNoDown{i}}&\prepareC{\preparationEventNoDown{i}}&\s{A}\qw&\qw&
		\\
		&\pureghost{}&\pureghost{\observationEventNoDown{i}}&\prepareC{\preparationEventNoDown{i}}&\s{A}\qw&\qw&
              },
\end{equation}
where $\preparationEventNoDown{i} \in \PurSt{A}$ are the pure states of system \system{A} and $\observationEventNoDown{i} \in \Eff{A}$ are the measurements that jointly perfectly discriminate the pure states.\footnote{A set of states $\left\{ \preparationEventNoDown{\rho}_{n} \right\}_{n \in N}$ is \textdef{jointly perfectly discriminable} if there exists a test $\observationEventTest{a}{x}{\mathit N} \in \Obs{A}$ such that $\rbraket{\observationEvent{a}{x}}{\preparationEventNoDown{\rho}_{x}} = 1$ for all $\outcomeIncluded{x}{\mathit N}$.}

We will say that a theory satisfies the property of \textdef{broadcasting} if every system of the theory admits a broadcasting transformation. Otherwise, we will say that the theory satisfies the \textdef{no-broadcasting} theorem.

We now prove a sufficient condition for a system of an \ac{OPT} to satisfy no-broadcasting.
\begin{lemma}
	\label{lem:opt:broadcasting:idAtomic}
	Consider a causal \ac{OPT} and a system \system{A} of the theory of dimension greater than 1, $\sysDimension{A} \geq 2$. If $\identityTest{A}$ is atomic, then the system \system{A} does not admit a broadcasting channel $\eventNoDown{B} \in \TransfN{A}{AA}$.
\end{lemma}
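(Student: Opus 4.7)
The plan is to assume towards contradiction that a broadcasting channel $\eventNoDown{B} \in \TransfN{A}{AA}$ exists and exploit atomicity of $\identityTest{A}$ to force every effect of $\system{A}$ to be proportional to the deterministic effect $\observationUniqueDeterministic$, contradicting $\sysDimension{A} \geq 2$.

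First, I would observe that for any effect $a \in \Eff{A}$, the pair $\{a, \observationUniqueDeterministic - a\}$ is an observation-test (using causality and the fact that effects form a ``causal'' cone), so $(\identityTest{A} \otimes a) \circ \eventNoDown{B}$ and $(\identityTest{A} \otimes (\observationUniqueDeterministic - a)) \circ \eventNoDown{B}$ are both transformations in $\Transf{A}{A}$. By the second equality in \eqref{eqt:opt:transf:broadcasting} they sum to $\identityTest{A}$, and atomicity of $\identityTest{A}$ (\autoref{def:opt:transf:atomic}) then forces
\[
(\identityTest{A} \otimes a) \circ \eventNoDown{B} = p_{a}\, \identityTest{A}
\]
for some scalar $p_{a} \geq 0$, with $p_{\observationUniqueDeterministic} = 1$ by the broadcasting condition itself. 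Applying the same argument to the other output wire yields
\[
(a \otimes \identityTest{A}) \circ \eventNoDown{B} = q_{a}\, \identityTest{A},
\]
with $q_{\observationUniqueDeterministic} = 1$.

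Next, I would compute the effect $(b \otimes a) \circ \eventNoDown{B} \in \Eff{A}$ in two ways. Grouping the $a$ on the second wire first gives $b \circ \bigl[(\identityTest{A} \otimes a) \circ \eventNoDown{B}\bigr] = p_{a}\, b$, while grouping the $b$ on the first wire first gives $a \circ \bigl[(b \otimes \identityTest{A}) \circ \eventNoDown{B}\bigr] = q_{b}\, a$. Hence, as elements of $\EffR{A}$,
\[
p_{a}\, b = q_{b}\, a \qquad \forall a,b \in \Eff{A}.
\]

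Setting $a = \observationUniqueDeterministic$ this collapses to $b = q_{b}\, \observationUniqueDeterministic$ for every effect $b$, i.e.\ every effect of $\system{A}$ is proportional to $\observationUniqueDeterministic$. Since $\sysDimension{A} \geq 2$ implies $\dim \EffR{A} \geq 2$, there exists an effect not proportional to $\observationUniqueDeterministic$, which is the desired contradiction. The main subtlety to handle carefully is justifying that $\{a, \observationUniqueDeterministic - a\}$ is a genuine observation-test of the theory (so that atomicity applies to a bona fide decomposition of $\identityTest{A}$ into transformations), and that the two ways of composing $b \otimes a$ with $\eventNoDown{B}$ genuinely agree on $\Eff{A}$ rather than merely on deterministic states; both follow from the compositional axioms \eqref{eqt:OPT:CompatParallSeq} together with causality, but they are worth spelling out to keep the argument airtight.
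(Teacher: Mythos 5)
Your proof is correct and uses essentially the same mechanism as the paper's: atomicity of $\identityTest{A}$ forces each marginal $(\identityTest{A}\paralC a)\circ\eventNoDown{B}$ to be proportional to the identity, and the broadcasting condition on the remaining wire then forces every effect to be proportional to $\observationUniqueDeterministic$, contradicting $\sysDimension{A}\geq 2$. The only cosmetic difference is that the paper works with a single fixed non-trivial binary decomposition $\{\observationEvent{a}{0},\observationEvent{a}{1}\}$ of the deterministic effect rather than quantifying over all effects $a$ via the test $\{a,\observationUniqueDeterministic-a\}$, which, as you correctly note, is a legitimate observation-test by causality and closure under coarse-graining.
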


\begin{proof}
  Suppose by contradiction that there exists a system \system{A}, $\sysDimension{A} \geq 2$, such that \identityTest{A} is atomic and there exists a broadcasting channel $\eventNoDown{B} \in \TransfN{A}{AA}$. Consider now a non-trivial decomposition $\left\{ \observationEvent{a}{0}, \observationEvent{a}{1} \right\} \neq \left\{ \probabilityEvent{p}{0} \observationUniqueDeterministic, \probabilityEvent{p}{1}\observationUniqueDeterministic \right\} \in \Obs{A}$ of the deterministic effect $\observationUniqueDeterministic \in \EffN{A}$. An observation-test of this kind always exists, since otherwise the system would have dimension equal to 1.	By \eqref{eqt:opt:transf:broadcasting}, it follows that:
	\begin{equation*}
		\myQcircuit{
			&\s{A}\qw&\multigate{1}{\eventNoDown{B}}&\s{A}\qw&\measureD{\observationEvent{a}{0}}&
			\\
			&\pureghost{}&\pureghost{\eventNoDown{B}}&\s{A}\qw&\qw&\qw&
		} + \;
		\myQcircuit{
			&\s{A}\qw&\multigate{1}{\eventNoDown{B}}&\s{A}\qw&\measureD{\observationEvent{a}{1}}&
			\\
			&\pureghost{}&\pureghost{\eventNoDown{B}}&\s{A}\qw&\qw&\qw&
		} = \;		\myQcircuit{
			&\s{A}\qw&\qw&
		},
	\end{equation*}
	which, by the atomicity of the identity transformation, implies:
	\begin{equation*}
		\myQcircuit{
			&\s{A}\qw&\multigate{1}{\eventNoDown{B}}&\s{A}\qw&\measureD{\observationEvent{a}{0}}&
			\\
			&\pureghost{}&\pureghost{\eventNoDown{B}}&\s{A}\qw&\qw&\qw&
		} \propto
		\myQcircuit{
			&\s{A}\qw&\multigate{1}{\eventNoDown{B}}&\s{A}\qw&\measureD{\observationEvent{a}{1}}&
			\\
			&\pureghost{}&\pureghost{\eventNoDown{B}}&\s{A}\qw&\qw&\qw&
		} \propto
		\myQcircuit{
			&\s{A}\qw&\qw&
		}.
	\end{equation*}
	Exploiting again \eqref{eqt:opt:transf:broadcasting} one obtains that:
	\begin{equation*}
		\myQcircuit{
			&\s{A}\qw&\multigate{1}{\eventNoDown{B}}&\s{A}\qw&\measureD{\observationEvent{a}{0}}&
			\\
			&\pureghost{}&\pureghost{\eventNoDown{B}}&\s{A}\qw&\measureD{\observationUniqueDeterministic}&
		} =
		\myQcircuit{
			&\s{A}\qw&\measureD{\observationEvent{a}{0}}&
		}
		\propto
		\myQcircuit{
			&\s{A}\qw&\measureD{\observationUniqueDeterministic}&
		},
	\end{equation*}
	and analogously for $\observationEvent{a}{1}$, which contradicts the hypothesis $\sysDimension{A} \geq 2$.
\end{proof}

\subsection{Compatibility and irreversibility}
Two operations are said to be \emph{compatible} if operating one does not preclude the possibility of performing the other on the same system. For example, position and momentum measurements are compatible in \ac{CT} as they can be performed simultaneously, but not in \ac{QT}. In the following,we use ``compatibility'' referring to observation-tests, and we define theories with full compatibility as those theories where all observation-tests are pairwise compatible. On the contrary, we will define \emph{irreversible disturbance} the existence of some operation that prevents from making another one on the same system. In this sense the first operation causes an unavoidable disturbance.

Let us look at observation-instruments and identify the class of theories where all observations are compatible,  as it happens for example in classical theory:
\begin{definition}[\acp{OPT} with full-compatibility of the observation-instruments]
	A causal \ac{OPT} \OPTMath{} is said to satisfy \textdef{full-compatibility of the observation-instruments} if every pair of observation-instruments \observationEventTest{a}{x}{X}, $\observationEventTest{b}{y}{Y} \in \Obs{A}$ of the theory, for every system $\system{A} \in \Sys{\OPTMath}$, are \textdef{compatible}, namely there exists a third test $\observationEventTest{c}{(x,y)}{\outcomeSpaceDouble{X}{Y}} \in \Obs{A}$ such that~\cite{darianoIncompatibilityObservablesChannels2022,erbaMeasurementIncompatibilityStrictly2024}
	\begin{align*}
		&\myQcircuit{
			&\s{A}\qw&\measureD{\observationEvent{a}{x}}& 
		}  = \quad \sum_{\outcomeIncluded{y}{Y}} \myQcircuit{
			&\s{A}\qw&\measureD{\observationEvent{c}{\left(x,y\right)}}&
		} \quad \forall \outcomeIncluded{x}{X}, \\	
		&\myQcircuit{
			&\s{A}\qw&\measureD{\observationEvent{b}{y}}& 
		}  = \quad \sum_{\outcomeIncluded{x}{X}} 
		\myQcircuit{
			&\s{A}\qw&\measureD{\observationEvent{c}{\left(x,y\right)}}&
		}  \quad \forall \outcomeIncluded{y}{Y}.
	\end{align*}
\end{definition}
This definition can be seen as a special instance of the more complex definition involving arbitrary instruments, however in the above form it is a straightforward extension to \acp{OPT} of the definition introduced in the quantum literature~\cite{buschNoInformationDisturbance2009,heinosaariInvitationQuantumIncompatibility2016,heinosaariNofreeinformationPrincipleGeneral2019}

In the following,the shorter nomenclature \textdef{compatibility} will be used in place of \emph{full-compatibility of the observation-instruments}.

The notion of irreversibility, on the other hand, describes processes that cause an irreducible disturbance on the systems on which they act. This is based on the notion of exclusion between instruments. 
\begin{definition}[Does not exclude]
	Let \OPTMath\ be a causal \ac{OPT}, we say that an instrument $\eventTest{T}{x}{X} \in \Instr{A}{B}$ \textdef{does not exclude} another instrument $\eventTest{G}{y}{Y} \in \Instr{A}{C}$ if there exists a test $\eventTest{C}{z}{Z} \in \Instr{A}{BE}$ and a post-processing, i.e.~a family of instruments $\left\{\conditionedEventTest{P}{y}{Y}{z}\right\}_{\outcomeIncluded{z}{Z}} \subset \Instr{BE}{C}$ such that
	\begin{align}
		\label{eqt:doesNotExclude}
		&\myQcircuit{
			&\s{A}\qw&\gate{\event{A}{x}}&\s {B}\qw&\qw&
		} = \sum_{\outcomeIncludedConditioned{z}{S}{x}}
		\myQcircuit{
			&\s{A}\qw&\multigate{1}{\event{C}{z}}&\qw&\s {B}\qw&\qw&
			\\
			&\pureghost{}&\pureghost{\event{C}{z}}&\s{E}\qw&\measureD{\observationUniqueDeterministic}
		},\\[10pt]
		&\myQcircuit{
			&\s{A}\qw&\gate{\event{B}{y}}&\s{C}\qw&\qw&
		} = \sum_{\outcomeIncluded{z}{Z}}
		\myQcircuit{
			&\s{A}\qw&\multigate{1}{\event{C}{z}}&\s {B}\qw&\multigate{1}{\conditionedEvent{P}{y}{z}}&\s{C}\qw&\qw&
			\\
			&\pureghost{}&\pureghost{\event{C}{z}}&\s{E}\qw&\ghost{\conditionedEvent{P}{y}{z}}&\pureghost{}&
		},
	\end{align}
	where $\left\{\outcomeSpaceConditioned{S}{x}\right\}_{\outcomeIncluded{x}{X}}$ is a suitable partition of \outcomeSpace{X}~\cite{darianoIncompatibilityObservablesChannels2022}. On the other hand, if the above condition fails, we say that the instrument $\eventTest{T}{x}{X} $ \textdef{excludes} $\eventTest{G}{y}{Y} $.
\end{definition}

A theory with irreversibility is defined as follows.

\begin{definition}[\acp{OPT} with irreversibility]
	A causal \ac{OPT} \OPTMath{} is said to have \textdef{irreversibility of measurement disturbance} (or in short \textdef{irreversible disturbance}) if it admits an \textdef{intrinsically irreversible} instrument, i.e., a test that excludes some other test of the theory.
\end{definition}

Notice that the kind of disturbance introduced by an intrinsically irreversible instrument is the ``strictest'' possible: if an instrument $\eventTest{T}{x}{X}$ excludes $\eventTest{G}{y}{Y}$, one cannot obtain $\eventTest{G}{y}{Y}$ even using any conceivable resources, namely any kind of post-processing involving arbitrary ancillary systems. An example of theory with irreversibility is \ac{QT}, where almost all quantum instruments are intrinsically irreversible. We remark, however, that since every quantum channel admits a unitary dilation involving ancillary systems, all quantum channels are not intrinsically irreversible~\cite{erbaMeasurementIncompatibilityStrictly2024}. 

On the contrary, \ac{CT} does not have irreversibility. This implies that whatever operation in \ac{CT} can be implemented in such a way to reverse it and restore the initial state of the system.

Two sufficient conditions for irreversibility have been proved in Ref.~\cite{erbaMeasurementIncompatibilityStrictly2024}. The first is given by the following lemma:

\begin{lemma}
	\label{opt:intrIrrev:identity}
	An instrument is intrinsically irreversible if and only if it excludes the identity~\cite{erbaMeasurementIncompatibilityStrictly2024}.
\end{lemma}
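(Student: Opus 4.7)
The statement is an if-and-only-if, so the plan splits into two implications, with one direction essentially immediate from the definitions and the other requiring a concrete construction.

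For the easy direction, suppose the instrument $\eventTest{T}{x}{X}$ excludes the identity. The identity can be packaged as the singleton instrument $\{\identityTest{A}\} \in \Instr{A}{A}$, which is an instrument of any \ac{OPT}. Since by hypothesis there exists at least one test that $\eventTest{T}{x}{X}$ excludes, namely $\{\identityTest{A}\}$, the instrument $\eventTest{T}{x}{X}$ is by definition intrinsically irreversible.

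For the converse, I will argue the contrapositive: assuming $\eventTest{T}{x}{X}$ does not exclude $\identityTest{A}$, I will show that $\eventTest{T}{x}{X}$ does not exclude any instrument $\eventTest{G}{y}{Y} \in \Instr{A}{C}$. Unpacking the definition of ``does not exclude'' applied to the singleton identity instrument yields a dilation $\eventTest{C}{z}{Z} \in \Instr{A}{BE}$, a partition $\{\outcomeSpaceConditioned{S}{x}\}_{\outcomeIncluded{x}{X}}$ of $\outcomeSpace{Z}$, and a family of deterministic transformations $\{\conditionedEventNoDown{P}{z}\}_{\outcomeIncluded{z}{Z}} \subset \TransfN{BE}{A}$ (each $\conditionedEventNoDown{P}{z}$ is deterministic because the identity instrument has a singleton outcome space) such that the marginal of $\eventTest{C}{z}{Z}$ reproduces $\eventTest{T}{x}{X}$ and $\sum_{\outcomeIncluded{z}{Z}} \sequentialComp{\conditionedEventNoDown{P}{z}}{\event{C}{z}} = \identityTest{A}$.

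Given an arbitrary target instrument $\eventTest{G}{y}{Y}$, I define a new post-processing family by composing each branch of the recovery with the target: set $\conditionedEvent{P'}{y}{z} \mathDef \sequentialComp{\event{G}{y}}{\conditionedEventNoDown{P}{z}}$, so that $\{\conditionedEventTest{P'}{y}{Y}{z}\}_{\outcomeIncluded{z}{Z}} \subset \Instr{BE}{C}$ is a well-defined labelled collection of instruments (one instrument for each fixed $z$). Then by linearity and sequential composition,
\begin{equation*}
	\sum_{\outcomeIncluded{z}{Z}} \sequentialComp{\conditionedEvent{P'}{y}{z}}{\event{C}{z}} = \sequentialComp{\event{G}{y}}{\left( \sum_{\outcomeIncluded{z}{Z}} \sequentialComp{\conditionedEventNoDown{P}{z}}{\event{C}{z}} \right)} = \sequentialComp{\event{G}{y}}{\identityTest{A}} = \event{G}{y},
\end{equation*}
together with the original marginal condition on $\eventTest{C}{z}{Z}$ that already reproduces $\eventTest{T}{x}{X}$. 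Hence $\eventTest{T}{x}{X}$ does not exclude $\eventTest{G}{y}{Y}$; since $\eventTest{G}{y}{Y}$ was arbitrary, $\eventTest{T}{x}{X}$ is not intrinsically irreversible.

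The only nontrivial point is verifying that the recovery map of the identity is made of deterministic events (so that pre-composing with any event $\event{G}{y}$ yields a genuine instrument), and that the composite family $\{\conditionedEventTest{P'}{y}{Y}{z}\}_{z}$ qualifies as a post-processing in the sense of the definition of exclusion; both follow from the fact that the identity instrument is a singleton and from the closure of the transformation and instrument spaces of any \ac{OPT} under sequential composition. No appeal to strong causality is needed here, because the definition of ``does not exclude'' already allows a $z$-dependent post-processing.
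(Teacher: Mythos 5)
Your proof is correct. Note that the paper itself states this lemma without proof, citing Ref.~\cite{erbaMeasurementIncompatibilityStrictly2024}, so there is no in-paper argument to compare against; your two-step argument --- the trivial direction from the definition of intrinsic irreversibility, and the contrapositive of the converse obtained by post-composing the deterministic recovery maps $\conditionedEventNoDown{P}{}^{(z)}$ of the identity with an arbitrary target instrument $\eventTest{G}{y}{Y}$ and invoking closure of instruments under sequential composition --- is exactly the natural proof, and you correctly identify that no strong causality is needed because the definition of exclusion only requires each $z$-labelled post-processing to be an instrument individually.
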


Another condition for irreversibility is the atomicity of the identity transformation:\footnote{We remark that the identity transformation is atomic if every instrument whose full coarse-graining is the identity is of the form $\left\{ \probabilityEvent{p}{x} \identityTest{} \right\}_{\outcomeIncluded{x}{X}}$, with \probabilityEventTest{p}{x}{X} a probability distribution (\autoref{def:opt:transf:atomic}).}

\begin{theorem}
	\label{thm:opt:irrev:idAtomicity}
	Let \OPTMath{} be a causal \ac{OPT} with a system $\system{A} \in \Sys{\OPTMath}$ such that $\sysDimension{A} \geq 2$, and let its identity transformation $\identityTest{A}$ be atomic. Then, there exists an instrument $\eventTest{T}{x}{X} \in \Instr{A}{B}$, for some system $\system{B} \in \Sys{\OPTMath}$ that is intrinsically irreversible. Hence, the theory has irreversibility~\cite{darianoInformationDisturbanceOperational2020}.
\end{theorem}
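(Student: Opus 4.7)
The plan is to invoke \autoref{opt:intrIrrev:identity} and produce an intrinsically irreversible instrument by exhibiting one that excludes the identity $\identityTest{A}$. Since $\sysDimension{A}\geq 2$, there exists (as already argued in the proof of \autoref{lem:opt:broadcasting:idAtomic}) a non-trivial two-outcome observation-test $\left\{\observationEvent{a}{0},\observationEvent{a}{1}\right\}\in\Obs{A}$ with $\observationEvent{a}{0}+\observationEvent{a}{1}=\observationUniqueDeterministic$ where neither $\observationEvent{a}{x}$ is proportional to $\observationUniqueDeterministic$. View this as a destructive instrument $\left\{\observationEvent{a}{x}\right\}_{x\in\{0,1\}}\in\Instr{A}{\trivialSystem}$; I claim it is the sought intrinsically irreversible instrument.

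Suppose, for contradiction, that this instrument does not exclude the identity instrument $\left\{\identityTest{A}\right\}\in\Instr{A}{A}$. Specializing the definition of ``does not exclude'' with $\system{B}=\trivialSystem$ and $\system{C}=\system{A}$, there exist an ancillary system $\system{E}$, an instrument $\left\{\event{C}{z}\right\}_{z\in Z}\in\Instr{A}{E}$, a partition $\left\{\outcomeSpaceConditioned{S}{x}\right\}_{x\in\{0,1\}}$ of $Z$, and a family of deterministic post-processings $\left\{\conditionedEventNoDown{P}{z}\right\}_{z\in Z}\subset\TransfN{E}{A}$ such that
\begin{equation*}
\observationEvent{a}{x}=\sum_{z\in\outcomeSpaceConditioned{S}{x}}\rbraSystem{\observationUniqueDeterministic}{E}\event{C}{z},\qquad \identityTest{A}=\sum_{z\in Z}\conditionedEventNoDown{P}{z}\event{C}{z}.
\end{equation*}

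The second identity, together with atomicity of $\identityTest{A}$, forces $\conditionedEventNoDown{P}{z}\event{C}{z}=\probabilityEventNoDown{p}_{z}\,\identityTest{A}$ for each $z$, where $\left\{\probabilityEventNoDown{p}_{z}\right\}_{z\in Z}$ is a probability distribution. Composing both sides with $\rbraSystem{\observationUniqueDeterministic}{A}$ and using causality in the form $\rbraSystem{\observationUniqueDeterministic}{A}\conditionedEventNoDown{P}{z}=\rbraSystem{\observationUniqueDeterministic}{E}$ (since $\conditionedEventNoDown{P}{z}$ is deterministic) yields $\rbraSystem{\observationUniqueDeterministic}{E}\event{C}{z}=\probabilityEventNoDown{p}_{z}\,\rbraSystem{\observationUniqueDeterministic}{A}$ for every $z\in Z$. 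Substituting into the first identity gives $\observationEvent{a}{x}=\bigl(\sum_{z\in\outcomeSpaceConditioned{S}{x}}\probabilityEventNoDown{p}_{z}\bigr)\observationUniqueDeterministic$, contradicting the choice of $\left\{\observationEvent{a}{0},\observationEvent{a}{1}\right\}$ as non-trivial. Hence $\left\{\observationEvent{a}{x}\right\}$ excludes the identity, and by \autoref{opt:intrIrrev:identity} it is intrinsically irreversible, so the theory has irreversibility. The main subtle step is converting the ``recoverability'' condition $\identityTest{A}=\sum_{z}\conditionedEventNoDown{P}{z}\event{C}{z}$ into a \emph{pointwise} rigidity $\rbraSystem{\observationUniqueDeterministic}{E}\event{C}{z}\propto\rbraSystem{\observationUniqueDeterministic}{A}$ via the interplay of atomicity of $\identityTest{A}$ with causality; without this, one only gets a constraint on the full coarse-graining of the $\event{C}{z}$ rather than on each individual term, which is insufficient to rule out the non-trivial $\observationEvent{a}{x}$.
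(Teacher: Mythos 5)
Your proof is correct and follows essentially the same route as the paper's: both reduce to \autoref{opt:intrIrrev:identity}, use atomicity of $\identityTest{A}$ to force $\conditionedEventNoDown{P}{z}\event{C}{z}=\probabilityEventNoDown{p}_{z}\identityTest{A}$ pointwise, and then compose with the unique deterministic effect to conclude that the recovered effects are trivial multiples of $\observationUniqueDeterministic$. The only (cosmetic) difference is that you exhibit one non-trivial observation-test and show directly that it excludes the identity, whereas the paper argues contrapositively that if no instrument were intrinsically irreversible then every observation-instrument would be a randomisation of $\observationUniqueDeterministic$, forcing $\sysDimension{A}=1$.
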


The proof of the above theorem is also reported in \aref{opt:atomicityAndIrreversibility} for completeness.

A particular class of theories that satisfy the hypothesis of \autoref{thm:opt:irrev:idAtomicity} is that of \acp{OPT} where the identity transformation is atomic for every system of the theory, which is equivalent to \textdef{\acf{NIWD}}~\cite{darianoInformationDisturbanceOperational2020}. The last property means that in order to extract non-trivial information from a system one must necessarily disturb it. \ac{QT} is an example of a theory with \ac{NIWD}, which in turn underlies the possibility of devising information-theoretically secure cryptographic protocols, since any intervention from an eavesdropper would be detectable by the other parties~\cite{bennettQuantumCryptographyUsing1992,bennettExperimentalQuantumCryptography1992,barnettInformationtheoreticLimitsQuantum1993,ekertEavesdroppingQuantumcryptographicalSystems1994,kentUnconditionallySecureBit1999,kentUnconditionallySecureBit1999,loUnconditionalSecurityQuantum1999,mayersUnconditionalSecurityQuantum2001,nielsenQuantumComputationQuantum2010,bennettQuantumCryptographyPublic2014}.

Regarding the relation between the notions of incompatibility of measurements and of processes (named irreversibility) it has been proven in Ref.~\cite{erbaMeasurementIncompatibilityStrictly2024} that while
\begin{align*}
	\text{incompatibility} &\Rightarrow \text{irreversibility}, 
\end{align*}
on the other hand
\begin{align*}
	\text{incompatibility} &\nLeftarrow \text{irreversibility}.
\end{align*}
A counterexample to the second implication is \ac{MCT}~\cite{erbaMeasurementIncompatibilityStrictly2024}, which has full-compatibility of the observation-instruments, yet it admits intrinsically irreversible tests.

\section{Minimal Operational Probabilistic Theories}
\label{sec:MOPT}

We will now proceed to introduce and characterise the class of \acfp{MOPT}. The idea is to explore the diametrically opposed situation as to what is typically considered in the probabilistic theories scenario. Indeed the set of operations in a theory is often taken as large as possible, so for example given the set of states, all maps sending legitimate states to legitimate states are included as maps of the theory. However, no bottlenecks to consistent theories with restricted set of instruments and transformations occur in principle. On the contrary, here we look at the ``smallest conceivable set'' of operations, in the sense that the removal of any of the instruments would no longer lead to a legitimate \ac{OPT}.

The first point we need to analyse concerns the structure of composite systems. In a general theory, systems can, in principle, be obtained in multiple ways by composing other systems, except for those that do not admit any decomposition, which we term \textdef{elementary}.

More formally, let us start from the following definition

\begin{definition}[Elementary systems and minimal decomposition]
We say that a system $\system{A}$ is \textdef{elementary} if $\system{A} = \system{BC}$ implies $\system{B} = \trivialSystem$ or $\system{C} = \trivialSystem$. Given a system $\system{S}$, we say that $\system{A} = \system{A}_{1} \system{A}_{2} \ldots \system{A}_{k}$ is a \emph{minimal decomposition} of $\system{S}$ in elementary systems if $\system{A}_{i}$ is elementary and non trivial for every $i$.
\end{definition}

In the following,we require the decomposition in elementary systems to be unique up to operational equivalence of the single elementary systems.

\begin{definition}[Unique decomposition \acsp{OPT}]
	\label{def:opt:system:uniqueDec}	
	We say that an \ac{OPT} has \textdef{unique decomposition} if for every system $\system{S}$, given two minimal decompositions in elementary systems  $\system{S} = \system{A}_{1} \system{A}_{2} \ldots \system{A}_{k}$ and $\system{S} = \system{B}_{1} \system{B}_{2} \ldots \system{B}_{l}$ implies $k = l$ and ${\system{B}_{i}}={\system{A}_{i}}, \forall i = 1,\ldots,k$.
\end{definition}

Finally, we can define minimal theories in the following way:

\begin{definition}[Minimal Operational Probabilistic Theory]
	\label{def:OPT:minimal}
	We define a \textdef{\acf{MOPT}} as an \ac{OPT} with unique decomposition where the only allowed tests are the ones obtainable by composing the elements of
	\begin{equation}
		\label{eqt:OPT:minimal:def:test}
		\left\{		
		\testComplete{I}{\SingletonSet}{A}{A} \; , \; \testComplete{S}{\SingletonSet}{AB}{BA} \; , \; \testComplete{\left( \inverse{S} \right)}{\SingletonSet}{BA}{AB} \; , \; \preparationTest{\rho}{X} \; , \; \observationTest{a}{X}
		\right\},
	\end{equation} 
	where $\preparationTest{\rho}{X} \in \Prep{A}$ and $\observationTest{a}{X} \in \Obs{A}$ are all the possible preparation- and observation-tests of the theory, and the Cauchy completion of the aforementioned set. Thus the only allowed events are those obtainable by sequential and parallel composition of the elements of
	\begin{equation}
		\label{eqt:OPT:minimal:def:event}
		\left\{ 
		\identityTest{A} \; , \; \BraidingS_{\system{A}, \system{B}} \; , \; \inverse{\BraidingS}_{\system{A}, \system{B}} \; , \; \preparationEventNoDown{\rho} \; , \; \observationEventNoDown{a}
		\right\},
	\end{equation}
	for every $\system{A}, \system{B} \in \Sys{\OPTMath}$, $\preparationEventNoDown{\rho} \in \eventCollectionAB{\trivialSystem}{A}$ and $\observationEventNoDown{a} \in \eventCollectionAB{A}{\trivialSystem}$, and the Cauchy completion of spaces of events of this type that belong to a test of the theory.
\end{definition}

\begin{remark}
	\label{remark:opt:minimal:wellDefined}
	The above class is well-defined. One should check that the spaces of instruments and transformations are closed under the operations of sequential and parallel composition, and under the operation of coarse-graining. While the closure under coarse-graining is guaranteed by \autoref{corol:OPT:norm:instrEventConvergence}, the closure under the compositional structure is guaranteed by \autoref{lem:opt:instr:seqComp:limit} and \autoref{lem:opt:transf:seqComp:limit} under the here assumed hypothesis that the operational and sup norm are equivalent.
\end{remark}

Similar restrictions on a theory, in analogy to the ones we consider for \acp{MOPT}, have been proposed in Ref.~\cite{selbyAccessibleFragmentsGeneralized2023} and Ref.~\cite{barrettComputationalLandscapeGeneral2019}. Ref.~\cite{selbyAccessibleFragmentsGeneralized2023} studies \emph{accessible \acs{GPT} fragments} which are designed to describe scenarios where states and effects are limited to just those accessible in a particular experimental setting. Even though the notion of accessible \acs{GPT} fragments is close to the notion of \acp{MOPT}, there is an important difference. \acp{MOPT}, despite the restrictions, are still fully-fledged theories, while accessible \acs{GPT} fragments are not in general \ac{GPT} themselves. One aspect in which these two definitions differ is that of the state and effect spaces of a \ac{MOPT} must be separating from each other, while this may not be true in the case of accessible \ac{GPT} fragments. In Ref.~\cite{barrettComputationalLandscapeGeneral2019} a class of operational theories is introduced, called \emph{non-free}, where restrictions on operations are not followed by closure with respect to parallel and sequential composition.

\subsection{Characterization of instruments and transformations}
As noted in \autoref{remark:opt:minimal:wellDefined}, any \ac{MOPT} is formally well-defined, although the transformations and instruments are provided only implicitly. In this section, we prove structural theorems that specify the circuital realization of the theory's instruments and transformations. Some of these results were previously presented in the supplementary material of Ref.~\cite{erbaMeasurementIncompatibilityStrictly2024}, but they are derived here in a more general framework. The proofs will be included in the appendix.

Let us start with the complete characterization of the instruments and transformations that are obtainable by composing sequentially and in parallel the elements of \eqref{eqt:OPT:minimal:def:test} and \eqref{eqt:OPT:minimal:def:event}, respectively, postponing the analysis of instruments and transformations obtained as limits of Cauchy sequences. 

Then one has the following structural results.

\begin{lemma}
	\label{lem:OPT:minimal:generalInstr}
	In every \ac{MOPT} any instrument $\eventTest{T}{x}{X} \in \Instr{A}{B}$ obtained as parallel and sequential composition of the elements of \eqref{eqt:OPT:minimal:def:test} is of the form:
	\begin{equation}
		\label{eqt:OPT:minimal:instr:general}
		\myQcircuit{
			&\prepareC{\preparationEventTest{\rho}{y}{Y}}&\s{C}\qw&\multigate{1}{\Braid}&\s{D}\qw&\measureD{\observationEventTest{a}{z}{Z}}&
			\\
			&\s{A}\qw&\qw&\ghost{\Braid}&\qw&\s{B}\qw&\qw&
		}, 
	\end{equation}
	where $\Braid \in \RevTransf{AC}{DB}$ is a suitable braid transformation, $\preparationEventTest{\rho}{y}{Y} \in \Prep{C}$, $\observationEventTest{a}{z}{Z} \in \Obs{A}$, the outcome space $\outcomeSpace{X} = \cartesianProduct{\outcomeSpace Y}{\outcomeSpace Z}$, and \system{A}, \system{B}, \system{C}, $\system{D} \in \Sys{\OPTMath}$ may also be equal to the trivial system~\cite{erbaMeasurementIncompatibilityStrictly2024}.
\end{lemma}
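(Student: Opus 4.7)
The plan is to proceed by structural induction on the number of generators from~\eqref{eqt:OPT:minimal:def:test} used to build the instrument. The key observation is that the minimal generating set splits into two classes: the singleton reversible tests---identity, braid, inverse braid---which carry no outcome information and constitute precisely the generators of the braid transformations; and the preparation- and observation-tests, which are the only source of nontrivial outcomes. The claimed normal form~\eqref{eqt:OPT:minimal:instr:general} thus asserts that, up to rearrangement, every minimal instrument can be normalised so that all preparations happen first, all observations happen last, and everything in between is absorbed into a single braid.

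For the base case, each element of~\eqref{eqt:OPT:minimal:def:test} is manifestly of the form~\eqref{eqt:OPT:minimal:instr:general}: the identity on $\system{A}$ corresponds to $\system{C}=\system{D}=\trivialSystem$, singleton $\outcomeSpace{Y}$ and $\outcomeSpace{Z}$, and $\Braid=\identityTest{A}$; braids and inverse braids are obtained with trivial preparation- and observation-tests and $\Braid$ equal to the given braid; a preparation-test corresponds to $\system{A}=\system{D}=\trivialSystem$ with trivial observation-test; and an observation-test is handled dually.

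For the inductive step I would assume the claim for smaller composites and consider an instrument obtained by sequential or parallel composition of two instruments $\testNoDown{T}$ and $\testNoDown{G}$ already in normal form. For parallel composition, the two preparation-tests combine into one on the joint ancilla, the two observation-tests combine dually, and the parallel composite of the two braids---suitably reordered with further braids---yields a single braid in the middle. For sequential composition $\sequentialComp{\testNoDown{G}}{\testNoDown{T}}$, I would invoke the naturality of the braiding~\eqref{eqt:opt:braid:naturality}: the preparation ancilla of $\testNoDown{G}$ is disjoint from every wire of $\testNoDown{T}$, so the preparation-test of $\testNoDown{G}$ can be slid backward past the braid and observation-test of $\testNoDown{T}$ to occur in parallel with the preparation-test of $\testNoDown{T}$; dually, the observation-test of $\testNoDown{T}$ is slid forward past the preparation-test and braid of $\testNoDown{G}$ to the end. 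What remains in the middle is a sequential composition of braid transformations tensored with identities, which is itself a braid, and the joint outcome space factors correctly into a preparation part and an observation part, confirming the required pattern $\outcomeSpace{X}=\outcomeSpaceDouble{Y}{Z}$.

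The main obstacle is not conceptual but diagrammatic: one has to carry out the sliding moves while carefully tracking all wires and outcome-space factorisations, and one must verify that the middle transformation $\Braid$ really is a braid---i.e.~generated by swaps, inverse swaps and identities---rather than merely a generic reversible transformation. This closure follows from the coherence theorem for braided strict monoidal categories, which guarantees that any composite of swaps, inverse swaps, identities and tensor products thereof is again a morphism in the braiding class, so that the inductive step does not inadvertently leave this class.
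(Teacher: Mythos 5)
Your proof is correct and follows essentially the same strategy as the paper's: both normalise the circuit by using reversibility and naturality of the braiding \eqref{eqt:opt:braid:naturality} to slide every preparation-test to the input boundary and every observation-test to the output boundary, leaving in the middle a composite of braidings and identities that is a braid transformation by definition (\autoref{def:permutations}), so no appeal to the coherence theorem is needed here. The only difference is organisational---you induct on the structure of the composition tree, whereas the paper iterates by peeling off one observation- or preparation-test at a time from a generic decomposition---and this does not change the substance of the argument.
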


Formally braid transformations are defined in the following way:
\begin{definition}[Set of braid transformations] 
	\label{def:permutations}
	The \textdef{set of braid transformations}, whose representatives will be indicated with $\Braid$, is defined as the ensemble of transformations which are obtained by parallel and sequential composition of the braiding and identity transformations.
\end{definition} 

The proof of the above theorem can be found in \aref{app:mopt:instr:characterisation} and from it the analogous result for transformations immediately follows.

\begin{corollary}
	\label{cor:OPT:minimal:generalTransf}
	In every \ac{MOPT} any transformation $\eventNoDown{T} \in \Transf{A}{B}$ obtained as parallel and sequential composition of the elements of \eqref{eqt:OPT:minimal:def:event} is of the form:
		\begin{equation}
		\label{eqt:OPT:minimal:transf:general}
		\myQcircuit{
			&\prepareC{\preparationEventNoDown{\rho}}&\s{C}\qw&\multigate{1}{\Braid}&\s{D}\qw&\measureD{\observationEventNoDown{a}}&
			\\
			&\s{A}\qw&\qw&\ghost{\Braid}&\qw&\s{B}\qw&\qw&
		}, 
	\end{equation}
	where $\Braid \in \RevTransf{AC}{DB}$ is a suitable braid transformation, $\preparationEventNoDown{\rho} \in \St{C}$, $\observationEventNoDown{a} \in \Eff{A}$, and \system{A}, \system{B}, \system{C}, $\system{D} \in \Sys{\OPTMath}$ may also be equal to the trivial system~\cite{erbaMeasurementIncompatibilityStrictly2024}.
\end{corollary}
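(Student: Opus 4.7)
The plan is to deduce the statement directly from Lemma~\ref{lem:OPT:minimal:generalInstr}, which already provides the analogous structural decomposition at the instrument level. The key observation is that by Definition~\ref{def:OPT:minimal}, every event obtained by sequential and parallel composition of the elements of \eqref{eqt:OPT:minimal:def:event} belongs to some test obtained by sequential and parallel composition of the corresponding tests in \eqref{eqt:OPT:minimal:def:test}, since the compositions of events refine compositions of tests outcome by outcome.

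First I would fix an arbitrary transformation $\eventNoDown{T} \in \Transf{A}{B}$ of the required form and identify it as an event $\event{T}{\bar x}$ of an instrument $\eventTest{T}{x}{X} \in \Instr{A}{B}$ obtainable by composing elements of \eqref{eqt:OPT:minimal:def:test}, for some outcome $\bar{\outcome x} \in \outcomeSpace X$. Then Lemma~\ref{lem:OPT:minimal:generalInstr} applies and yields systems $\system{C}, \system{D} \in \Sys{\OPTMath}$, a braid $\Braid \in \RevTransf{AC}{DB}$, a preparation test $\preparationEventTest{\rho}{y}{Y} \in \Prep{C}$ and an observation test $\observationEventTest{a}{z}{Z} \in \Obs{D}$ such that $\outcomeSpace X = \cartesianProduct{\outcomeSpace Y}{\outcomeSpace Z}$ and $\event{T}{x}$ has the circuital form \eqref{eqt:OPT:minimal:instr:general} for every $\outcomeIncluded{x}{X}$.

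Next I would specialise to the outcome $\bar{\outcome x} = (\bar{\outcome y}, \bar{\outcome z})$: setting $\preparationEventNoDown{\rho} \mathDef \preparationEvent{\rho}{\bar y} \in \St{C}$ and $\observationEventNoDown{a} \mathDef \observationEvent{a}{\bar z} \in \Eff{D}$, the corresponding event is exactly
\begin{equation*}
	\myQcircuit{
		&\prepareC{\preparationEventNoDown{\rho}}&\s{C}\qw&\multigate{1}{\Braid}&\s{D}\qw&\measureD{\observationEventNoDown{a}}&
		\\
		&\s{A}\qw&\qw&\ghost{\Braid}&\qw&\s{B}\qw&\qw&
	},
\end{equation*}
which is precisely the form \eqref{eqt:OPT:minimal:transf:general} claimed by the corollary.

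There is essentially no obstacle here beyond correctly invoking the instrument-level result; the only mild subtlety is ensuring that an arbitrary event-level composition can indeed be lifted to a test-level composition, but this is guaranteed by Definition~\ref{def:OPT:minimal}, which defines the allowed events as exactly those appearing in tests built from \eqref{eqt:OPT:minimal:def:test}. Since Lemma~\ref{lem:OPT:minimal:generalInstr} is assumed proved (its proof is deferred to the appendix), the corollary reduces to this one-line outcome-selection argument.
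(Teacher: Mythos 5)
Your proposal is correct and follows the same route as the paper, which simply notes that the corollary ``immediately follows'' from \autoref{lem:OPT:minimal:generalInstr}; you make explicit the outcome-selection step that the paper leaves implicit. As a minor aside, your placement of the effect in $\Eff{D}$ (rather than the $\Eff{A}$ appearing in the corollary's statement) is the one consistent with the circuit diagram.
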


An important property that will be used throughout the discussion is that \eqref{eqt:OPT:minimal:transf:general} is invariant under parallel and sequential composition~\cite{erbaMeasurementIncompatibilityStrictly2024}.

In the case of symmetric \acp{MOPT}, braid transformations become \emph{permutations}, which allows us to further specialise the characterisation of the form of instruments and transformations, thanks to the following results.

First, let us observe that
\begin{lemma}
	\label{lem:opt:permutations:characterisation}
	Consider a symmetric \ac{OPT} with unique decomposition and let $\Braid \in \RevTransf{A}{B}$ that permutes the systems as 	$\system{B}_{j} = \system{A}_{\sigma\left(i\right)}$, where $\system{A} = \system{A}_{1}\ldots\system{A}_{n}$ and $\system{B} = \system{B}_{1}\ldots\system{B}_{n}$ are the unique decompositions of \system{A} and \system{B}, respectively. Then the action of $\Braid$ is completely characterised by the permutation $\sigma$.
\end{lemma}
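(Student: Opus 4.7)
The plan is to show that a braid transformation in a symmetric \ac{OPT} with unique decomposition is completely determined by the permutation $\sigma \in S_{n}$ it induces on the elementary factors. By \autoref{cor:OPT:minimal:generalTransf} and the fact that in the symmetric case $\BraidingS$ and $\inverse{\BraidingS}$ coincide, every such $\Braid \in \RevTransf{A}{B}$ can be rewritten as a sequential composition of \emph{elementary adjacent swaps} of the form $\hat\tau_{i} \mathDef \parallelComp{\parallelComp{\identityTest{A_{1}\ldots A_{i-1}}}{\BraidingS_{\system{A}_{i},\system{A}_{i+1}}}}{\identityTest{A_{i+2}\ldots A_{n}}}$. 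It then suffices to prove that the assignment $\tau_{i} \mapsto \hat\tau_{i}$, where $\tau_{i}=(i,i+1)\in S_{n}$ is the $i$-th adjacent transposition, extends to a well-defined group homomorphism from $S_{n}$ into the group of reversible transformations between strings of elementary systems. The lemma follows at once, since any two circuit realizations of $\Braid$ factor through the same permutation $\sigma\in S_{n}$ and therefore yield the same transformation.

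To establish the homomorphism property I would verify the three Coxeter relations presenting $S_{n}$. The involutivity $\tau_{i}^{2}=e$ is a direct consequence of symmetry: $\BraidingS_{\system{A}_{i},\system{A}_{i+1}}$ composed with $\BraidingS_{\system{A}_{i+1},\system{A}_{i}}$ equals $\identityTest{A_{i}A_{i+1}}$, and tensoring with identities preserves the relation. The commutation $\tau_{i}\tau_{j}=\tau_{j}\tau_{i}$ for $|i-j|\geq 2$ is an immediate instance of the interchange law \eqref{eqt:OPT:CompatParallSeq}: the two adjacent swaps act on disjoint pairs of wires, so they can be reorganized as a single parallel composition. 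The braid relation $\tau_{i}\tau_{i+1}\tau_{i}=\tau_{i+1}\tau_{i}\tau_{i+1}$ is the hexagon (Yang--Baxter) identity, which I would derive by treating the middle swap $\BraidingS_{\system{A}_{i},\system{A}_{i+1}}$ as a morphism and sliding it past $\BraidingS_{\system{A}_{i+1},\system{A}_{i+2}}$ via the naturality equation \eqref{eqt:opt:braid:naturality}, then recasting the resulting diagram using symmetry of the braiding. The unique decomposition hypothesis of \autoref{def:opt:system:uniqueDec} guarantees that the elementary factors keep their identities along these manipulations, so the reduction to adjacent swaps is unambiguous.

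The main obstacle will be the Yang--Baxter step, which is essentially MacLane's coherence theorem for symmetric monoidal categories translated into the \ac{OPT} language: one must verify that naturality combined with symmetry forces the two sides of the hexagon to coincide as generalised transformations, not merely as maps between specific state spaces. Once the three Coxeter relations are established at the level of the $\hat\tau_{i}$, the assignment $\tau_{i}\mapsto \hat\tau_{i}$ descends to a well-defined homomorphism $S_{n}\to\RevTransfA{A}$, and the action of $\Braid$ is thereby determined by the permutation $\sigma$ alone.
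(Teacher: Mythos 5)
Your proof is correct, but it takes a genuinely different route from the paper. The paper disposes of \autoref{lem:opt:permutations:characterisation} in two sentences by invoking MacLane's coherence theorem for symmetric monoidal categories as a black box; you instead open that box and re-derive the relevant special case by hand, reducing every braid transformation to a word in adjacent elementary swaps and then checking the three Coxeter relations presenting $S_{n}$ (involutivity from symmetry of the braiding, distant commutation from the interchange law \eqref{eqt:OPT:CompatParallSeq}, and the braid relation from naturality \eqref{eqt:opt:braid:naturality} plus symmetry). Since two words in the adjacent transpositions represent the same element of $S_{n}$ exactly when one can be carried to the other by these relations, verifying them at the level of transformations does imply that the transformation depends only on $\sigma$, so the logic is sound. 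What your approach buys is self-containedness and an explicit accounting of where each structural axiom --- and in particular the unique-decomposition hypothesis, whose necessity the paper only illustrates by counterexample in \autoref{remark:opt:symmetric:uniqueDec} --- actually enters; what it costs is length, and two small points you should tighten. First, the citation of \autoref{cor:OPT:minimal:generalTransf} for the reduction to adjacent swaps is misplaced: that corollary concerns the jellyfish form of transformations in \acp{MOPT}, whereas the lemma is stated for arbitrary symmetric \acp{OPT} with unique decomposition; the correct justification for splitting a swap $\BraidingS_{\system{C},\system{D}}$ with composite $\system{C},\system{D}$ into adjacent elementary swaps is the hexagon axioms of the (strict) braided monoidal structure, which the paper assumes implicitly when declaring \acp{OPT} to be braided strict monoidal categories. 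Second, the worry you raise about the Yang--Baxter identity holding ``as generalised transformations, not merely as maps between specific state spaces'' dissolves once one recalls that transformations in an \ac{OPT} are already the morphisms of the category (equivalence classes under \eqref{eqt:opeEq}), so an identity derived purely from the categorical axioms automatically holds at that level.
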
 

\begin{proof}
	The proof is a straightforward application of the coherence theorem for symmetric monoidal categories, to which symmetric \acp{OPT} belong. The theorem guarantees that the operations obtained through the composition of swap are completely defined by how they permute the order of their input objects~\cite{maclaneCategoriesWorkingMathematician1978,joyalBraidedTensorCategories1993}.
\end{proof}

\begin{remark}
	\label{remark:braidReorder}
	In the case of non-symmetric \acp{OPT} for a generic braid transformations the reordering of the subsystems does not completely characterise the action of the transformation. Consider, for example, the identity $\identityTest{AB}$ and the following transformation:
	\begin{equation*}
		\myQcircuit{
			&\s{A}\qw&\braiding&\s{B}\qw&\braiding&\s{A}\qw&\qw&
			\\
			&\s{B}\qw&\braidingGhost&\s{A}\qw&\braidingGhost&\s{B}\qw&\qw&
		}.
	\end{equation*}
	Despite reordering the systems in the same way, the two transformations are generally different. 
\end{remark}

\begin{remark}
	\label{remark:opt:symmetric:uniqueDec}
	Also in the case in which uniqueness of decomposition does not hold, it is impossible to completely characterise a permutation by how it permutes elementary systems. Let $\system{S}$ be a system admitting two different decompositions in elementary systems $\system{AB} = \system{CD}$ such that $\system{A} \neq \system{C}$ and $\system{A} \neq \system{D}$, and analogously for \system{B}. Consider then the following permutation
	\begin{equation}
		\label{eqt:remark:mopt:uniquenessSystem:1}
		\myQcircuit{
			&\s{B}\qw&\braidingSym&\s{A}\qw&\multigate{1}{\identityTest{AB \to CD}}&\s{C}\qw&\braidingSym&\s{D}\qw&\qw&
			\\
			&\s{A}\qw&\braidingGhost&\s{B}\qw&\ghost{\identityTest{AB \to CD}}&\s{D}\qw&\braidingGhost&\s{C}\qw&\qw&,
		}
	\end{equation}
	where $\identityTest{AB \to CD}$ is the identity test for $\system{AB} = \system{CD}$. Given that no relation is known between the systems $\system{BA}$ and $\system{DC}$, other than the fact that they are operationally equivalent, it is not possible to state that this permutation can be completely characterised by how it permutes its input systems. 
\end{remark}

From \autoref{lem:opt:permutations:characterisation} the next result immediately follows.

\begin{lemma}[Permutations on bipartite systems]
	\label{thm:OPT:symmetric:permutations:generalForm}
	In every symmetric \ac{OPT} with unique decomposition, for any permutation acting on a bipartite system $\system{AB}$ there exist suitable systems $\system{A}'$, $\system{B}'$, $\system{A}''$, $\system{B}''$, and reversible transformations $\Braid_{1}$, $\Braid_{2}$, $\Braid_{3}$, $\Braid_{4}$ such that
	\begin{equation}
		\label{eqt:OPT:permutation:symm:formula}
		\myQcircuitSupMat{
			&\s{A}\qw&\multigate{1}{\Braid}&\s{C}\qw&\qw&
			\\
			&\s{B}\qw&\ghost{\Braid}&\s{D}\qw&\qw&
		} =
		\myQcircuitSupMat{
			&\s{A}\qw&\multigate{1}{\Braid_{3}}&\qw&\s{\prim{A}}\qw&\qw&\multigate{1}{\Braid_{4}}&\s{C}\qw&\qw&
			\\
			&\pureghost{}&\pureghost{\Braid_{3}}&\s{\secondE{A}}\qw&\braidingSym&\s{\prim{B}}\qw&\ghost{\Braid_{4}}\qw&
			\\
			&\s{B}\qw&\multigate{1}{\Braid_{1}}&\s{\prim{B}}\qw&\braidingGhost&\s{\secondE{A}}\qw&\multigate{1}{\Braid_{2}}&\s{D}\qw&\qw&
			\\
			&\pureghost{}&\pureghost{\Braid_{1}}&\qw&\s{\secondE{B}}\qw&\qw&\ghost{\Braid_{2}}&
		}, 
	\end{equation}
	where \system{A}, \system{B} are generic systems of the theory and \system{C}, \system{D} are systems such that \system{CD} has the same decomposition in elementary systems as \system{AB}. In general, any of \system{A}, \system{B}, \system{C}, \system{D} can be the trivial system, and the same holds also for \system{\prim{A}}, \system{\secondE{A}}, \system{\prim{B}}, \system{\secondE{B}}.
\end{lemma}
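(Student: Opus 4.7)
The plan is to reduce the claim to a combinatorial statement about bijections of elementary subsystems, using \autoref{lem:opt:permutations:characterisation} as the bridge between permutations and their induced maps on the factors of the unique decomposition.

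First, I would invoke unique decomposition to write $\system{A} = \system{A}_{1} \ldots \system{A}_{n}$ and $\system{B} = \system{B}_{1} \ldots \system{B}_{m}$ as products of elementary systems, and correspondingly $\system{C}$, $\system{D}$ as products of elementary factors whose joint list agrees (as a multiset, up to operational equivalence) with $(\system{A}_{1}, \ldots, \system{A}_{n}, \system{B}_{1}, \ldots, \system{B}_{m})$. The transformation $\Braid$ then induces a bijection $\sigma$ between these two ordered lists, and by \autoref{lem:opt:permutations:characterisation} the transformation is completely characterised by $\sigma$.

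Next, I would partition the factors of $\system{A}$ into those that $\sigma$ sends to the $\system{C}$-block and those sent to the $\system{D}$-block, calling their respective parallel composites $\system{\prim{A}}$ and $\system{\secondE{A}}$; the factors of $\system{B}$ are partitioned analogously into $\system{\prim{B}}$ and $\system{\secondE{B}}$ (any of these composites may collapse to the trivial system if the corresponding group is empty). I would define $\Braid_{3}$ and $\Braid_{1}$ as the permutations implementing the regroupings $\system{A} \to \system{\prim{A}}\system{\secondE{A}}$ and $\system{B} \to \system{\prim{B}}\system{\secondE{B}}$, which exist by the symmetric monoidal structure as compositions of swaps and identities; likewise $\Braid_{4}$ and $\Braid_{2}$ would be defined as the permutations $\system{\prim{A}}\system{\prim{B}} \to \system{C}$ and $\system{\secondE{A}}\system{\secondE{B}} \to \system{D}$ that arrange the elementary factors in the order dictated by the decompositions of $\system{C}$ and $\system{D}$.

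Finally, I would verify that the right-hand side of \eqref{eqt:OPT:permutation:symm:formula} induces precisely the bijection $\sigma$ on elementary factors: the two parallel regroupings split each input into a \emph{$\system{C}$-destined} and a \emph{$\system{D}$-destined} half, the central swap exchanges the middle halves so that the $\system{C}$-destined parts of $\system{A}$ and $\system{B}$ are colocated (and similarly for $\system{D}$), and the two final regroupings place the factors in the correct target order. By \autoref{lem:opt:permutations:characterisation} applied once more, the composite circuit must then equal $\Braid$. The main obstacle is essentially notational: carefully tracking $\sigma$ through the four blocks and checking that the degenerate cases in which some of $\system{\prim{A}}$, $\system{\secondE{A}}$, $\system{\prim{B}}$, $\system{\secondE{B}}$ reduce to $\trivialSystem$ remain harmless, the corresponding wires and swaps simply collapsing to identities on the trivial system.
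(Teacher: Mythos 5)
Your proposal is correct and follows essentially the same route as the paper's proof: both partition the elementary factors of $\system{A}$ and $\system{B}$ into $\system{C}$-destined and $\system{D}$-destined groups, realise the regroupings and final reorderings by permutations, exchange the middle blocks with the central swap, and conclude via \autoref{lem:opt:permutations:characterisation} that the composite induces the same bijection of elementary subsystems as $\Braid$ and hence equals it.
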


We have now all the elements to prove the following characterisation theorems for instruments and transformations in symmetric \acp{MOPT}.

\begin{theorem}
	\label{thm:OPT:minimal:symmetric:generalInstr}
	In every symmetric \ac{MOPT} any instrument $\eventTest{T}{x}{X} \in \Instr{A}{B}$ obtained as parallel and sequential composition of the elements of \eqref{eqt:OPT:minimal:def:test} is of the form:
	\begin{equation}
		\label{eqt:OPT:minimal:symmetric:instr:generic}
		\myQcircuit{
			&\pureghost{}&\multiprepareC{1}{\preparationEventTest{\rho}{y}{Y}}&\qw&\s{C}\qw&\qw&\multimeasureD{1}{\observationEventTest{a}{z}{Z}}&
			\\			&\pureghost{}&\pureghost{\preparationEventTest{\rho}{y}{Y}}&\s{\prim{B}}\qw&\braidingSym&\s{\prim{A}}\qw&\ghost{\observationEventTest{a}{z}{Z}}&
			\\
			&\s{A}\qw&\multigate{1}{\Braid_{1}}&\s{\prim{A}}\qw&\braidingGhost&\s{\prim{B}}\qw&\multigate{1}{\Braid_{2}}&\s{B}\qw&\qw&
			\\
			&\pureghost{}&\pureghost{\Braid_{1}}&\qw&\s{E}\qw&\qw&\ghost{\Braid_{2}}&
		},
	\end{equation}
	where $\Braid_{1}, \Braid_{2} \in \RevTransfA{\OPTMath}$ are suitable permutations, $\preparationEventTest{\rho}{y}{Ys} \in \Prep{C \prim{B}}$, $\observationEventTest{a}{z}{Z} \in \Obs{C \prim{A}}$, the outcome space $\outcomeSpace{X} = \cartesianProduct{Y}{Z}$, and \system{A}, \system{B}, $\system{\prim{A}}$, $\system{\prim{B}}$, $\system{C}$, $\system{E} \in \Sys{\OPTMath}$ may also be equal to the trivial system~\cite{erbaMeasurementIncompatibilityStrictly2024}.
\end{theorem}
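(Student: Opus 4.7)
The plan is to combine Lemma \ref{lem:OPT:minimal:generalInstr} with Lemma \ref{thm:OPT:symmetric:permutations:generalForm}. By Lemma \ref{lem:OPT:minimal:generalInstr}, any instrument $\eventTest{T}{x}{X} \in \Instr{A}{B}$ obtained by parallel and sequential composition of the elements of \eqref{eqt:OPT:minimal:def:test} can be written as in \eqref{eqt:OPT:minimal:instr:general}, namely with a single preparation test feeding one input wire of a braid $\Braid$ whose other input wire is the external system $\system{A}$, and with one output wire of $\Braid$ routed to an observation test while the other delivers $\system{B}$. Since the theory is symmetric, the braid $\Braid$ is a permutation between bipartite systems, and is therefore subject to the factorisation provided by Lemma \ref{thm:OPT:symmetric:permutations:generalForm}.

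Applying \eqref{eqt:OPT:permutation:symm:formula} to $\Braid$ produces two inner permutations $\Braid_{1}, \Braid_{3}$ acting on the two input wires, a single swap between the crossing subsystems, and two outer permutations $\Braid_{2}, \Braid_{4}$ acting on the two output wires. The permutation $\Braid_{3}$ sits immediately after the original preparation $\preparationEventTest{\rho}{y}{Y}$: composing the two sequentially yields a new preparation test on a suitable bipartite system $\system{C\prim{B}}$, since in a \ac{MOPT} the sequential composition of a preparation test with a permutation---which is itself built solely from elements of \eqref{eqt:OPT:minimal:def:event}---is again a preparation test of the theory. Dually, the permutation $\Braid_{4}$ sits immediately before the original observation $\observationEventTest{a}{z}{Z}$, and composing the two sequentially yields a new observation test on a suitable bipartite system $\system{C\prim{A}}$. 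After these two absorptions only $\Braid_{1}$, the swap, and $\Braid_{2}$ remain as structural elements, which is exactly the skeleton of \eqref{eqt:OPT:minimal:symmetric:instr:generic}. The outcome space factorises as $\cartesianProduct{Y}{Z}$ because the only sources of outcomes left in the circuit are the new preparation and observation tests.

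The main obstacle I foresee is the bookkeeping of the auxiliary systems: the factorisation of Lemma \ref{thm:OPT:symmetric:permutations:generalForm} introduces systems $\system{\prim{A}}, \system{\prim{B}}$ and an ancillary $\system{E}$ whose unambiguous identification relies on the unique-decomposition hypothesis for \acp{MOPT} (\autoref{def:opt:system:uniqueDec}), and any of the newly introduced systems---as well as those in the final statement---may degenerate to the trivial one. These corner cases are routine specialisations of the generic argument, but they must be handled explicitly to cover every combination of auxiliary systems being either nontrivial or trivial.
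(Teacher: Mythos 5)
Your proposal is correct and takes essentially the same route as the paper's proof: apply \autoref{lem:OPT:minimal:generalInstr} to obtain the form \eqref{eqt:OPT:minimal:instr:general}, factorise the resulting permutation via \autoref{thm:OPT:symmetric:permutations:generalForm}, and absorb the two outer permutations into the preparation- and observation-tests to land on \eqref{eqt:OPT:minimal:symmetric:instr:generic}. The paper performs exactly these absorptions (of $\Braid_{3}$ and $\Braid_{4}$) without further comment, so your additional remarks on closure of preparation-tests under composition with permutations and on the degenerate (trivial-system) cases only make the argument more explicit.
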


\begin{corollary}
\label{cor:OPT:minimal:symmetric:generalTransf}
	In every symmetric \ac{MOPT} any transformation $\eventNoDown{T} \in \Transf{A}{B}$ obtained as parallel and sequential composition of the elements of \eqref{eqt:OPT:minimal:def:event} is of the form:
	\begin{equation}
		\label{eqt:OPT:minimal:symmetric:transf:generic}
		\myQcircuit{
			&\pureghost{}&\multiprepareC{1}{\preparationEventNoDown{\rho}}&\qw&\s{C}\qw&\qw&\multimeasureD{1}{\observationEventNoDown{a}}&
			\\			&\pureghost{}&\pureghost{\preparationEventNoDown{\rho}}&\s{\prim{B}}\qw&\braidingSym&\s{\prim{A}}\qw&\ghost{\observationEventNoDown{a}}&
			\\
			&\s{A}\qw&\multigate{1}{\Braid_{1}}&\s{\prim{A}}\qw&\braidingGhost&\s{\prim{B}}\qw&\multigate{1}{\Braid_{2}}&\s{B}\qw&\qw&
			\\
			&\pureghost{}&\pureghost{\Braid_{1}}&\qw&\s{E}\qw&\qw&\ghost{\Braid_{2}}&
		},
	\end{equation}
	where $\Braid_{1}, \Braid_{2} \in \RevTransfA{\OPTMath}$ are suitable permutations, $\preparationEventNoDown{\rho} \in \St{C \prim{B}}$, $\observationEventNoDown{a} \in \Eff{C \prim{A}}$, and \system{A}, \system{B}, $\system{\prim{A}}$, $\system{\prim{B}}$, $\system{C}$, $\system{E} \in \Sys{\OPTMath}$ may also be equal to the trivial system~\cite{erbaMeasurementIncompatibilityStrictly2024}.
\end{corollary}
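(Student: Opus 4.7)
The plan is to obtain the corollary as an immediate specialisation of Theorem \ref{thm:OPT:minimal:symmetric:generalInstr}, observing that transformations in a \ac{MOPT} are nothing but individual events of the instruments classified there. Concretely, any transformation $\eventNoDown{T} \in \Transf{A}{B}$ built by sequential and parallel composition of the elementary events listed in \eqref{eqt:OPT:minimal:def:event} can be regarded as one of the events of an instrument $\eventTest{T}{x}{X} \in \Instr{A}{B}$ assembled from the corresponding tests in \eqref{eqt:OPT:minimal:def:test}: one simply bundles $\eventNoDown{T}$ with the remaining events of the preparation- and observation-tests entering its construction.

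First, I would apply Theorem \ref{thm:OPT:minimal:symmetric:generalInstr} to this enveloping instrument, obtaining a diagrammatic decomposition of the form \eqref{eqt:OPT:minimal:symmetric:instr:generic} with outcome space $\outcomeSpace{X} = \cartesianProduct{Y}{Z}$, preparation-test $\preparationEventTest{\rho}{y}{Y} \in \Prep{C \prim{B}}$, observation-test $\observationEventTest{a}{z}{Z} \in \Obs{C \prim{A}}$, and two permutations $\Braid_{1}, \Braid_{2} \in \RevTransfA{\OPTMath}$, together with suitable (possibly trivial) auxiliary systems $\system{\prim{A}}, \system{\prim{B}}, \system{C}, \system{E}$.

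Second, I would specialise to the particular outcome $\outcome{x} = (\outcome{y}, \outcome{z}) \in \outcomeSpace{X}$ associated with $\eventNoDown{T}$: the preparation-test $\preparationEventTest{\rho}{y}{Y}$ collapses to the single preparation event $\preparationEventNoDown{\rho} \mathDef \preparationEvent{\rho}{y} \in \St{C \prim{B}}$, and the observation-test $\observationEventTest{a}{z}{Z}$ collapses to the single observation event $\observationEventNoDown{a} \mathDef \observationEvent{a}{z} \in \Eff{C \prim{A}}$. Substituting these into \eqref{eqt:OPT:minimal:symmetric:instr:generic}, while keeping the same permutations $\Braid_{1}, \Braid_{2}$ and the same wiring of ancillary systems, yields exactly the diagram \eqref{eqt:OPT:minimal:symmetric:transf:generic}. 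There is no real obstacle here, as all the combinatorial content—the control over the two permutations, the placement of the braiding, and the assignment of auxiliary systems—has already been carried out at the level of instruments; the corollary is simply the event-level reading of that structural theorem.
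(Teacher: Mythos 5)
Your proposal is correct and matches the paper's own treatment: the corollary is presented there as an immediate event-level reading of \autoref{thm:OPT:minimal:symmetric:generalInstr}, obtained by embedding the transformation in the enveloping instrument built from the corresponding tests and then selecting the single outcome $(\outcome{y},\outcome{z})$. Nothing further is needed.
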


The diagrams in~\eqref{eqt:OPT:minimal:symmetric:instr:generic} and \eqref{eqt:OPT:minimal:symmetric:transf:generic}, are colloquially referred to as \emph{jellyfish} instruments and transformations, respectively.

\subsubsection{Limits of Cauchy sequences for deterministic instruments}
The explicit form of instruments and transformations given by limits of Cauchy sequences of \eqref{eqt:OPT:minimal:def:test} and \eqref{eqt:OPT:minimal:def:event}, which have to be included as physical objects by the completeness requirement, remains an open question. However, in the special case of symmetric causal theories, a structure theorem can be proved for Cauchy sequences of \emph{deterministic} transformations.

A first tool in this direction is the following lemma that holds for all symmetric \ac{MOPT} (not necessarily causal). 

\begin{lemma}
	\label{thm:OPT:minimal:transf:stabilization}
	In a symmetric \ac{MOPT} any Cauchy sequence of transformations obtained as parallel and sequential composition of the elements in~\eqref{eqt:OPT:minimal:def:event},
	\begin{equation}
		\label{eqt:sequenceJelly}
		\left\{
		\myQcircuit{
			&\pureghost{}&\multiprepareC{1}{\preparationEventNoDownSequence{\rho}{n}}&\qw&\sSequence{C}{n}\qw&\qw&\multimeasureD{1}{\observationEventNoDownSequence{a}{n}}&
			\\
			&\pureghost{}&\pureghost{\preparationEventNoDownSequence{\rho}{n}}&\sSequencePrime{B}{n}\qw&\braidingSym&\sSequencePrime{A}{n}\qw&\ghost{\observationEventNoDownSequence{a}{n}}&
			\\							&\s{A}\qw&\multigate{1}{\Braid_{1}^{n}}&\sSequencePrime{A}{n}\qw&\braidingGhost&\sSequencePrime{B}{n}\qw&\multigate{1}{\Braid_{2}^{n}}&\s{B}\qw&\qw&
			\\
			&\pureghost{}&\pureghost{\Braid_{1}^{n}}&\qw&\sSequence{E}{n}\qw&\qw&\ghost{\Braid_{2}^{n}}&
		}
		\right\}_{n \in \mathbb{N}},
	\end{equation}	
	admits a subsequence where the systems $\systemSequence{E}{n}$, $\systemSequence{A'}{n}$, $\systemSequence{B'}{n}$ and the where the permutations $\Braid_{1}^{n}$, $\Braid_{2}^{n}$ are fixed:
	\begin{equation*}
		\left\{
		\myQcircuit{
			&\pureghost{}&\multiprepareC{1}{\preparationEventNoDownSequence{\rho}{n}}&\qw&\sSequence{C}{n}\qw&\qw&\multimeasureD{1}{\observationEventNoDownSequence{a}{n}}&
			\\
			&\pureghost{}&\pureghost{\preparationEventNoDownSequence{\rho}{n}}&\s{B'}\qw&\braidingSym&\s{A'}\qw&\ghost{\observationEventNoDownSequence{a}{n}}&
			\\
			&\s{A}\qw&\multigate{1}{\Braid_{1}}&\s{A'}\qw&\braidingGhost&\s{B'}\qw&\multigate{1}{\Braid_{2}}&\s{B}\qw&\qw&
			\\
			&\pureghost{}&\pureghost{\Braid_{1}}&\qw&\s{E}\qw&\qw&\ghost{\Braid_{2}}&
		}
		\right\}_{n \in \mathbb{N}}.
	\end{equation*}	
\end{lemma}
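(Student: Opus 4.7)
The plan is a pigeonhole argument based on the unique-decomposition hypothesis. The key observation is that the structural data of the jellyfish in~\eqref{eqt:sequenceJelly}---namely the ancillary systems $\systemSequence{E}{n}, \systemSequence{A'}{n}, \systemSequence{B'}{n}$ and the permutations $\Braid_{1}^{n}, \Braid_{2}^{n}$---are forced to lie in a finite set, so that the required subsequence is produced by elementary combinatorics; the Cauchy assumption actually plays no active role in the argument and is only stated because of the context in which the lemma will be applied.

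First, I would read off from~\eqref{eqt:sequenceJelly} the input and output systems of the two permutations, obtaining $\Braid_{1}^{n}\in\RevTransf{A}{\systemSequence{A'}{n}\systemSequence{E}{n}}$ and $\Braid_{2}^{n}\in\RevTransf{\systemSequence{B'}{n}\systemSequence{E}{n}}{B}$. A braid transformation is, by \autoref{def:permutations}, a parallel and sequential composition of identities and swaps, so it can only exist between systems that share the same minimal decomposition into elementary systems up to reordering of the factors. Combined with \autoref{def:opt:system:uniqueDec}, this forces the ordered elementary decomposition of $\systemSequence{A'}{n}\systemSequence{E}{n}$ to be a permutation of that of $\system{A}$, and similarly for $\systemSequence{B'}{n}\systemSequence{E}{n}$ with respect to $\system{B}$.

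Since the minimal decompositions of $\system{A}$ and $\system{B}$ contain finitely many elementary factors, the factors of $\system{A}$ admit only finitely many splittings between $\systemSequence{A'}{n}$ and $\systemSequence{E}{n}$, and similarly for $\system{B}$ between $\systemSequence{B'}{n}$ and $\systemSequence{E}{n}$. Hence the triple $(\systemSequence{A'}{n}, \systemSequence{B'}{n}, \systemSequence{E}{n})$ takes values in a finite set, and a first application of the pigeonhole principle extracts an infinite subsequence along which the triple is constant, equal to some fixed $(\system{A'}, \system{B'}, \system{E})$.

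Once the three ancillary systems have been fixed, each $\Braid_{i}^{n}$ is a reversible transformation between a pair of fixed finite products of elementary systems, and by \autoref{lem:opt:permutations:characterisation} it is completely determined by the permutation it induces on the elementary factors. The number of such permutations is bounded by the factorial of the number of elementary factors involved, so a second pigeonhole extraction yields a further subsequence on which $\Braid_{1}^{n}=\Braid_{1}$ and $\Braid_{2}^{n}=\Braid_{2}$ are also constant, producing the sequence claimed in the statement. The only conceptual point that must be handled with care---the main obstacle, such as it is---is to invoke unique decomposition early, so as to prevent the ancilla $\systemSequence{E}{n}$ from growing unboundedly along the sequence; once this is secured, the argument reduces to pure finite combinatorics.
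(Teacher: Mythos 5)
Your proof is correct and follows essentially the same strategy as the paper's: a double pigeonhole extraction over the finitely many admissible structural configurations, justified by unique decomposition and by the fact that permutations are fully determined by how they reorder elementary factors (\autoref{lem:opt:permutations:characterisation}). The only difference is cosmetic—you fix the systems first and the permutations second, whereas the paper fixes the permutations first and then pigeonholes over the finitely many groupings of the elementary factors into $\systemSequence{A'}{n}$, $\systemSequence{B'}{n}$, $\systemSequence{E}{n}$—and your observation that the Cauchy hypothesis is inert in the extraction itself is also consistent with the paper's proof.
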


\begin{remark}
Thanks to \autoref{thm:OPT:norm:instrEventConvergence}, the result of \autoref{thm:OPT:minimal:transf:stabilization} holds also in the case of instruments.
\end{remark}

Then we have a more detailed version of \autoref{cor:OPT:minimal:symmetric:generalTransf} in the case of deterministic transformations of a causal theory (not necessarily symmetric)

\begin{lemma}
	\label{lem:OPT:minimal:transf:causalDeterm}
	In a causal \ac{MOPT} (also non-symmetric) every deterministic transformation obtained as composition of the elements in \eqref{eqt:OPT:minimal:def:event} is of the form~\cite{erbaMeasurementIncompatibilityStrictly2024}
	\begin{equation}
		\label{eqt:OPT:minimal:transf:lem:causalDeterm}
		\minimalDeterministicCausalDestroyReprep{A}{B}{\prim{A}}{\prim{B}}{E}{\Braid_{1}}{\Braid_{2}}{\rho}.
	\end{equation}
\end{lemma}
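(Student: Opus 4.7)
The plan is to apply Corollary~\ref{cor:OPT:minimal:generalTransf} and then use causality to refine the general jellyfish form into the one claimed. By that corollary, every transformation $\eventNoDown{T}\in\Transf{A}{B}$ obtained from compositions of the primitives in~\eqref{eqt:OPT:minimal:def:event} admits a representation $\eventNoDown{T}=(\observationEventNoDown{a}\boxtimes\identityTest{B})\circ\Braid\circ(\preparationEventNoDown{\rho}\boxtimes\identityTest{A})$ with $\Braid:\system{CA}\to\system{DB}$, a state $\preparationEventNoDown{\rho}\in\St{C}$ and an effect $\observationEventNoDown{a}\in\Eff{D}$. Since $\eventNoDown{T}$ is assumed deterministic, it is the sole event of some singleton instrument in $\Instr{A}{B}$, and applying Lemma~\ref{lem:OPT:minimal:generalInstr} to that instrument forces both the preparation- and observation-tests in~\eqref{eqt:OPT:minimal:instr:general} to be singletons. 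The preparation-test therefore collapses to a deterministic state $\preparationEventNoDown{\rho}\in\StN{C}$, while the observation-test reduces to a single-outcome observation-test on $\system{D}$, which in a causal theory necessarily coincides with the unique deterministic effect $\observationUniqueDeterministic_{\system{D}}$. One thus obtains the intermediate form $\eventNoDown{T}=(\observationUniqueDeterministic_{\system{D}}\boxtimes\identityTest{B})\circ\Braid\circ(\preparationEventNoDown{\rho}\boxtimes\identityTest{A})$.

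Next I would factor the braid $\Braid:\system{CA}\to\system{DB}$ according to how it routes elementary subsystems. Using unique decomposition, identify which elementary components of $\system{C}$ and of $\system{A}$ land in $\system{D}$ versus in $\system{B}$. Writing $\system{C}=\system{C}_{1}\system{C}_{2}$ and $\system{A}=\system{A}_{1}\system{A}_{2}$, with the subscript $1$ parts routed to $\system{D}$ and the subscript $2$ parts routed to $\system{B}$, coherence for the underlying (braided) strict monoidal category gives a factorisation of $\Braid$ as a composition of two internal input braids $\Braid_{C}:\system{C}\to\system{C}_{1}\system{C}_{2}$ and $\Braid_{1}:\system{A}\to\system{A}_{1}\system{A}_{2}$, an intermediate braid re-ordering the wires from $\system{C}_{1}\system{C}_{2}\system{A}_{1}\system{A}_{2}$ to $\system{C}_{1}\system{A}_{1}\system{C}_{2}\system{A}_{2}$, and two output braids $\Braid_{D}:\system{C}_{1}\system{A}_{1}\to\system{D}$ and $\Braid_{2}:\system{C}_{2}\system{A}_{2}\to\system{B}$.

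Finally I would collapse the $\system{D}$ side using causality and compress the preparation. Since $\Braid_{D}$ is reversible, $\observationUniqueDeterministic_{\system{D}}\circ\Braid_{D}$ is a deterministic effect on $\system{C}_{1}\system{A}_{1}$, and by uniqueness of the deterministic effect on a composite system it factorises as $\observationUniqueDeterministic_{\system{C}_{1}}\boxtimes\observationUniqueDeterministic_{\system{A}_{1}}$. The factor $\observationUniqueDeterministic_{\system{C}_{1}}$ absorbs the corresponding wire of $\Braid_{C}\circ\preparationEventNoDown{\rho}$, producing a residual deterministic state $\preparationEventNoDown{\rho}'\mathDef(\observationUniqueDeterministic_{\system{C}_{1}}\boxtimes\identityTest{C_{2}})\circ\Braid_{C}\circ\preparationEventNoDown{\rho}\in\StN{C_{2}}$, which is itself a bona fide MOPT state since it is built from sequential and parallel compositions of primitives in~\eqref{eqt:OPT:minimal:def:event}. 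Invoking the interchange law of the monoidal structure to rearrange disjoint operations and setting $\system{A}'\mathDef\system{A}_{1}$, $\system{B}'\mathDef\system{C}_{2}$ and $\system{E}\mathDef\system{A}_{2}$, one reads off precisely the form~\eqref{eqt:OPT:minimal:transf:lem:causalDeterm}, with the braids $\Braid_{1},\Braid_{2}$ as above and preparation $\preparationEventNoDown{\rho}'$. The main obstacle is the braid factorisation step: in the non-symmetric case one cannot directly invoke the permutation-splitting of Lemma~\ref{thm:OPT:symmetric:permutations:generalForm}, so one has to argue carefully via the coherence theorem for braided strict monoidal categories that the over/under crossings group consistently with the routing of elementary subsystems.
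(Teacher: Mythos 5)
Your proposal is correct and follows essentially the same route as the paper's proof: start from the general braid form of Corollary~\ref{cor:OPT:minimal:generalTransf}, use causality to force a deterministic preparation and the unique deterministic effect, split the wires according to how the braid routes the elementary subsystems, factor the deterministic effect over the product, and rearrange. The only cosmetic difference is that where you factorise the braid explicitly via the coherence theorem, the paper instead slides the two deterministic effects backwards through the braid using naturality~\eqref{eqt:opt:braid:naturality}, which sidesteps the factorisation step you flag as the main obstacle.
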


\begin{proof}
	To prove the result it is sufficient to use \eqref{eqt:opt:braid:naturality}, \autoref{lem:opt:permutations:characterisation}, and the uniqueness of the deterministic effect.	
	
	The most general deterministic transformation of an \ac{MOPT} is given by
	\begin{equation*}
		\myQcircuit{
			&\prepareC{\preparationEventNoDown{\rho}}&\s{C}\qw&\multigate{1}{\Braid}&\s{D}\qw&\measureD{\observationUniqueDeterministic}&
			\\
			&\s{A}\qw&\qw&\ghost{\Braid}&\qw&\s{B}\qw&\qw&
		},
	\end{equation*}
	where $\rho \in \StN{C}$ and $\observationUniqueDeterministic \in \EffN{D}$. This immediately follows from \autoref{lem:OPT:minimal:generalInstr}. Given that we are considering also the case of non-symmetric \acp{MOPT}, \autoref{lem:opt:permutations:characterisation} cannot be exploited in this case. However, even though a generic braid transformation cannot be completely characterised by how it permutes its input wires (\autoref{remark:braidReorder}), its action is still of permuting them in some way. Consequently, one has	
   	\begin{align*}
		\myQcircuitBox{			&\prepareC{\preparationEventNoDown{\rho}}&\qw&\s{C'C''}\qw&\multigate{1}{\Braid}&\qw&\qw&\s{\sigma(C')\sigma(A')}\qw&\qw&\qw&\measureD{\observationUniqueDeterministic}&
			\\
			&\s{A'A''}\qw&\qw&\qw&\ghost{\Braid}&\qw&\qw&\qw&\qw&\qw&\s{\sigma(C'')\sigma(A'')}\qw&\qw&\qw&\qw&
		},
	\end{align*}
	where we used $\sigma\left(E\right) = \sigma\left(\system{E}_{1} \ldots \system{E}_{n}\right)$ as a shorthand notation for $\system{E}_{\sigma\left(1\right)}\ldots\system{E}_{\sigma\left(n\right)}$.	Using the uniqueness of the deterministic effect, which implies	
	
	\begin{align*}
		\myQcircuitBox{
			&\s{\sigma\left(C'\right)}\qw&\qw&\multimeasureD{1}{\observationUniqueDeterministic}&
			\\
			&\s{\sigma\left(A'\right)}\qw&\qw&\ghost{\observationUniqueDeterministic}&
		} = \quad\!\! \myQcircuitBox{
			&\s{\sigma\left(C'\right)}\qw&\qw&\measureD{\observationUniqueDeterministic}&
			\\
			&\s{\sigma\left(A'\right)}\qw&\qw&\measureD{\observationUniqueDeterministic}&
		},
	\end{align*}
	and the naturality property of the braiding \eqref{eqt:opt:braid:naturality}, one obtains the desired result.
\end{proof}	

The transformation
\begin{equation}
  \label{eqt:opt:transf:destrReprep}
  \measurePrepare{\prim{A}}{\prim{B}}{\rho}{\observationUniqueDeterministic}
\end{equation}
featuring in \eqref{eqt:OPT:minimal:transf:lem:causalDeterm} is usually referred to as \textdef{erase and prepare} since, whatever the input is, it will discard it and prepare the state $\rho$.

Finally, the key information is that for symmetric causal \acp{MOPT}, the form \eqref{eqt:OPT:minimal:transf:lem:causalDeterm} remains valid even when considering the limits of sequences of deterministic transformations:

\begin{theorem}
	\label{thm:OPT:minimal:transf:goodDeterministic}
	In a causal symmetric \ac{MOPT} the limits of Cauchy sequences of deterministic transformations are still of the form \eqref{eqt:OPT:minimal:transf:lem:causalDeterm}~\cite{erbaMeasurementIncompatibilityStrictly2024}.
\end{theorem}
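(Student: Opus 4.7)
Let $\{\eventSequenceNoDown{T}{n}\}_{n \in \mathbb{N}} \subset \TransfN{A}{B}$ be a Cauchy sequence of deterministic transformations. Since the theory is causal, Lemma~\ref{lem:OPT:minimal:transf:causalDeterm} applied termwise represents each $\eventSequenceNoDown{T}{n}$ in the erase-and-reprepare form~\eqref{eqt:OPT:minimal:transf:lem:causalDeterm}, with $n$-dependent ancillary systems, permutations, and deterministic preparation. This form is a special instance of the jellyfish form of Corollary~\ref{cor:OPT:minimal:symmetric:generalTransf}, with trivial central system and the deterministic observation $\observationUniqueDeterministic$. Applying Lemma~\ref{thm:OPT:minimal:transf:stabilization} and passing to a subsequence---which converges to the same limit as the original sequence---we may fix the ancillary systems $\prim{A}, \prim{B}, \system{E}$ and the permutations $\Braid_{1}, \Braid_{2}$, so that the only residual $n$-dependent datum is the deterministic preparation $\preparationEventNoDownSequence{\rho}{n} \in \StN{\prim{B}}$.

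The next step is to show that $\{\preparationEventNoDownSequence{\rho}{n}\}_{n \in \mathbb{N}}$ is itself Cauchy in the operational norm. Fix any deterministic state $\preparationEventNoDown{\sigma} \in \StN{A}$, and let $\preparationEventNoDown{\tau} \in \StN{E}$ be the deterministic state obtained from $\preparationEventNoDown{\sigma}$ by applying $\Braid_{1}$ and discarding the $\prim{A}$-wire via $\observationUniqueDeterministic$. Then, reading off the erase-and-reprepare form,
\begin{equation*}
	\left(\eventSequenceNoDown{T}{n} - \eventSequenceNoDown{T}{m}\right)\rket{\preparationEventNoDown{\sigma}} = \Braid_{2} \bigl((\preparationEventNoDownSequence{\rho}{n} - \preparationEventNoDownSequence{\rho}{m}) \paralC \preparationEventNoDown{\tau}\bigr).
\end{equation*}
The monotonicity of $\normOp{\cdot}$ under precomposition with $\preparationEventNoDown{\sigma}$, its invariance under the reversible $\Braid_{2}$ (Lemma~\ref{lem:opt:norm:operational:monotonicity}), and a further application of monotonicity when $\observationUniqueDeterministic$ is applied on the $\system{E}$-wire---which kills $\preparationEventNoDown{\tau}$---combine to yield
\begin{equation*}
	\normOp{\preparationEventNoDownSequence{\rho}{n} - \preparationEventNoDownSequence{\rho}{m}} \leq \normOp{\eventSequenceNoDown{T}{n} - \eventSequenceNoDown{T}{m}}.
\end{equation*}
Thus $\{\preparationEventNoDownSequence{\rho}{n}\}_{n \in \mathbb{N}}$ is Cauchy. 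By Cauchy completeness of the state space together with Lemma~\ref{lem:OPT:causal:detTransf:convergence}, its limit $\preparationEventNoDown{\rho}$ exists and is a deterministic state of $\prim{B}$. Plugging $\preparationEventNoDown{\rho}$ back into~\eqref{eqt:OPT:minimal:transf:lem:causalDeterm} and invoking Lemma~\ref{lem:opt:transf:seqComp:limit} delivers the limit of the subsequence---and therefore of $\{\eventSequenceNoDown{T}{n}\}_{n \in \mathbb{N}}$---in the claimed erase-and-reprepare form.

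The main obstacle is the isolation step in the second paragraph: one needs the two structural reductions to leave the preparation $\preparationEventNoDownSequence{\rho}{n}$ as the \emph{sole} free parameter, so that a Cauchy bound on the transformations transfers to a Cauchy bound on the preparations. This in turn relies on the strict norm-preservation by reversible braids and on the fact that composing with deterministic states and effects can only decrease the operational norm; both are available under the standing assumption that the operational and sup norms are equivalent.
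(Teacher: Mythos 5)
Your proposal is correct and follows essentially the same route as the paper's proof: termwise application of Lemma~\ref{lem:OPT:minimal:transf:causalDeterm}, stabilization of systems and permutations via Lemma~\ref{thm:OPT:minimal:transf:stabilization}, and reduction of the Cauchy property of the transformations to that of the residual preparations $\preparationEventNoDownSequence{\rho}{n}$ using monotonicity and reversibility of the operational norm. The only (immaterial) difference is that the paper obtains the equality $\normOp{\eventSequenceNoDown{T}{n}-\eventSequenceNoDown{T}{m}}=\normOp{\preparationEventNoDownSequence{\rho}{n}-\preparationEventNoDownSequence{\rho}{m}}$ directly at the level of transformations, whereas you derive the (sufficient) inequality by first precomposing with a fixed deterministic input state.
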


As for \autoref{cor:OPT:minimal:symmetric:generalTransf}, also the last result cannot be proven in the case of non-symmetric theories due to the fact that the reordering of the subsystems does not completely characterise the braid transformations.

\subsection{Atomicity of the identity in minimal theories}

The full characterization of limits of deterministic transformations after completion in \autoref{thm:OPT:minimal:transf:goodDeterministic} is at the core of one of the main results of this work, that is the atomicity of the identity transformation. The last one in turn has a number of relevant consequence, most notably the \ac{NIWD} property (actually the atomicity of the identity has been proved equivalent to \ac{NIWD}~\cite{darianoInformationDisturbanceOperational2020}).

The relation between minimality and atomicity of the identity is proved in the following theorem:

\begin{theorem}
	\label{thm:OPT:minimal:symmetric:causal:idAtomicity}
	In every causal symmetric \ac{MOPT} \OPTMath{} the identity transformation \identityTest{A} is atomic for every system $\system{A} \in \Sys{\OPTMath}$~\cite{erbaMeasurementIncompatibilityStrictly2024}.
\end{theorem}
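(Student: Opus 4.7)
The plan is to reduce the atomicity statement to a convex decomposition of $\identityTest{A}$ into two deterministic transformations, apply the structure theorem \autoref{thm:OPT:minimal:transf:goodDeterministic} to each summand, and then rule out the non-trivial destroy-and-reprepare cases by combining operational-norm monotonicity with the extremality of pure states.

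First I would exploit causality: if $\identityTest{A} = \eventNoDown{T}_1 + \eventNoDown{T}_2$ with both $\eventNoDown{T}_i$ non-zero (the opposite case being trivial), each $\eventNoDown{T}_i$ has a probability $p_i \in (0,1)$, with $p_1 + p_2 = 1$, so one can factor $\eventNoDown{T}_i = p_i \tilde{\eventNoDown{T}}_i$ with $\tilde{\eventNoDown{T}}_i \in \TransfN{A}{A}$. Invoking \autoref{thm:OPT:minimal:transf:goodDeterministic}, each $\tilde{\eventNoDown{T}}_i$ takes the destroy-and-reprepare form \eqref{eqt:OPT:minimal:transf:lem:causalDeterm}, parametrised by systems $\system{A}'_i,\system{B}'_i,\system{E}_i$ with $\system{A}'_i\system{E}_i \cong \system{A} \cong \system{B}'_i\system{E}_i$, permutations $\Braid_1^i,\Braid_2^i$, and a deterministic state $\preparationEventNoDown{\rho}_i \in \StN{B'_i}$. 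Crucially, $\tilde{\eventNoDown{T}}_i$ factors through the marginal on $\system{E}_i$ (taken after $\Braid_1^i$), so whenever a generalised state $\omega \in \StR{A}$ has vanishing $\system{E}_i$-marginal one obtains $\tilde{\eventNoDown{T}}_i\omega = 0$.

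The hard part will be ruling out $\system{A}'_i \neq \trivialSystem$. The plan is to observe that local product preparations embed $\StR{A'_1}\otimes\StR{E_1}$ into $\StR{A}$, whence $\dim\StR{A} \geq 2\dim\StR{E_1}$, so the $\system{E}_1$-marginal map has a non-zero kernel element $\omega \in \StR{A}$. Extending the identity equation $\identityTest{A} = p_1 \tilde{\eventNoDown{T}}_1 + p_2 \tilde{\eventNoDown{T}}_2$ linearly to $\StR{A}$ and applying it to $\omega$ yields $\omega = p_2 \tilde{\eventNoDown{T}}_2 \omega$; the monotonicity of the operational norm (\autoref{lem:opt:norm:operational:monotonicity}) then gives $\normOp{\omega} \leq p_2 \normOp{\omega}$, forcing $p_2 \geq 1$ and contradicting $p_2 < 1$. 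Thus $\system{A}'_i = \system{B}'_i = \trivialSystem$ for $i=1,2$, and each $\tilde{\eventNoDown{T}}_i$ collapses to a reversible permutation $\Braid^i \in \RevTransfA{A}$.

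To finish, I would test the resulting identity $\identityTest{A} = p_1 \Braid^1 + p_2 \Braid^2$ on an arbitrary pure state $\preparationEventNoDown{\sigma} \in \PurSt{A}$: the relation $\preparationEventNoDown{\sigma} = p_1 \Braid^1 \preparationEventNoDown{\sigma} + p_2 \Braid^2 \preparationEventNoDown{\sigma}$ presents $\preparationEventNoDown{\sigma}$ as a proper convex combination of deterministic states, and extremality forces $\Braid^i \preparationEventNoDown{\sigma} = \preparationEventNoDown{\sigma}$ for both $i$. Since pure states linearly span $\StR{A}$, this lifts by linearity to $\Braid^1 = \Braid^2 = \identityTest{A}$, whence $\eventNoDown{T}_i = p_i \identityTest{A} \propto \identityTest{A}$, establishing atomicity.
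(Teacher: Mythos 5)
Your argument has a fatal gap at the very first step. From $\identityTest{A}=\eventNoDown{T}_1+\eventNoDown{T}_2$ and causality you cannot conclude that each summand factors as $\eventNoDown{T}_i=p_i\tilde{\eventNoDown{T}}_i$ with $\tilde{\eventNoDown{T}}_i\in\TransfN{A}{A}$. Causality only gives $\rbra{\uniDetEff}\eventNoDown{T}_1+\rbra{\uniDetEff}\eventNoDown{T}_2=\rbra{\uniDetEff}$, i.e.\ the two effects $\rbra{\uniDetEff}\eventNoDown{T}_i$ form an observation-test; it does not force them to be multiples of $\rbra{\uniDetEff}$. Indeed, in ordinary classical theory the identity decomposes as $\sum_i\rket{i}\rbra{i}$, and none of the summands is proportional to a deterministic transformation — so the property you are assuming is false in general causal theories and is, in substance, the statement to be proved (compare Eq.~\eqref{eqt:atomicityVSirreversibility} in \aref{opt:atomicityAndIrreversibility}, where that proportionality is \emph{derived from} atomicity of the identity, not used to prove it). Because of this, you are not entitled to invoke \autoref{thm:OPT:minimal:transf:goodDeterministic} on the individual summands: that structure theorem applies only to deterministic transformations (and their Cauchy limits). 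The paper's proof instead works with a Cauchy sequence of \emph{probabilistic} jellyfish instruments (\autoref{thm:OPT:minimal:symmetric:generalInstr}), stabilizes the permutations and systems via \autoref{thm:OPT:minimal:transf:stabilization}, applies the deterministic structure theorem only to the sequence of \emph{full coarse-grainings}, deduces from $\eventNoDown{C}=\lim_n\text{(coarse-graining)}=\identityTest{A}$ that $\system{A}'=\trivialSystem$ and $\system{E}=\system{A}$, and only then reads off that every individual event in the instrument is a scalar ($\rbraket{\observationEventNoDown{a}^n_z}{\preparationEventNoDown{\rho}^n_y}$ on the ancilla $\systemSequence{C}{n}$) times $\identityTest{A}$. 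Your norm argument ruling out a non-trivial discarded system is in the same spirit as the paper's dimension-count, but it is applied to objects whose existence you have not established.

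Two secondary problems would remain even if the first step were repaired. First, since \ac{MOPT} instruments are defined up to Cauchy completion, the two-outcome instrument realizing $\{\eventNoDown{T}_1,\eventNoDown{T}_2\}$ may only exist as a limit, so any structural claim about its events must pass through \autoref{thm:OPT:norm:instrEventConvergence} and the stabilization lemma rather than being read off a single circuit. Second, your final step needs pure states to linearly span $\StR{A}$ in order to upgrade $\Braid^i\preparationEventNoDown{\sigma}=\preparationEventNoDown{\sigma}$ on extremal deterministic states to $\Braid^i=\identityTest{A}$; generic \acp{OPT} (and \acp{MOPT} in particular) are not assumed convex, so neither the existence of ``enough'' extremal points nor their spanning property is available without further hypotheses. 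The paper avoids this entirely: once $\system{A}'=\trivialSystem$ and $\system{E}=\system{A}$ are established, the permutations $\Braid$ and $\inverse{\Braid}$ cancel and no extremality argument is needed.
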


\begin{proof}
	To prove this result we have to show that the full coarse-graining of the limit of any Cauchy sequence of instruments of the theory that converges to the identity has to consist of transformations proportional to the identity itself. Let us start by considering a Cauchy sequence of generic instruments of the theory, which are of the form
	\begin{equation}
		\label{eqt:OPT:minimal:NIWD:proof}
		\left\{
		\myQcircuit{
			&\pureghost{}&\multiprepareC{1}{\preparationTestSequence{\rho}{X}{n}}&\qw&\sSequence{C}{n}\qw&\qw&\multimeasureD{1}{\observationTestSequence{a}{Y}{n}}&
			\\
			&\pureghost{}&\pureghost{\preparationTestSequence{\rho}{X}{n}}&\sSequencePrime{B}{n}\qw&\braidingSym&\sSequencePrime{A}{n}\qw&\ghost{\observationTestSequence{a}{Y}{n}}&
			\\								&\s{A}\qw&\multigate{1}{\Braid_{1}^{n}}&\sSequencePrime{A}{n}\qw&\braidingGhost&\sSequencePrime{B}{n}\qw&\multigate{1}{\Braid_{2}^{n}}&\s{B}\qw&\qw&
			\\
			&\pureghost{}&\pureghost{\Braid_{1}^{n}}&\qw&\sSequence{E}{n}\qw&\qw&\ghost{\Braid_{2}^{n}}&
		}
		\right\}_{n \in \mathbb{N}},
	\end{equation}
	by \autoref{thm:OPT:minimal:symmetric:generalInstr}. We now consider only instruments with the same input and output system since we are interested in sequences whose coarse-graining converges to the identity transformation. \autoref{thm:OPT:minimal:transf:stabilization} guarantees that it is always possible to find a subsequence with the permutations and part of the systems fixed
	\begin{equation*}
		\left\{
			\myQcircuit{
				&\pureghost{}&\multiprepareC{1}{\preparationTestSequence{\rho}{X}{n}}&\qw&\sSequence{C}{n}\qw&\qw&\multimeasureD{1}{\observationTestSequence{a}{Y}{n}}&
				\\
				&\pureghost{}&\pureghost{\preparationTestSequence{\rho}{X}{n}}&\s{A'}\qw&\braidingSym&\s{A'}\qw&\ghost{\observationTestSequence{a}{Y}{n}}&
				\\									&\s{A}\qw&\multigate{1}{\Braid}&\s{A'}\qw&\braidingGhost&\s{A'}\qw&\multigate{1}{\inverse{\Braid}}&\s{A}\qw&\qw&
				\\
				&\pureghost{}&\pureghost{\Braid}&\qw&\s{E}\qw&\qw&\ghost{\inverse{\Braid}}&
			}
			\right\}_{n \in \mathbb{N}}.
	\end{equation*}
	Consequently the sequence of full coarse-grainings is of the form
	\begin{equation*}
		\left\{ \minimalDeterministicCausalDestroyReprep{A}{A}{\prim{A}}{\prim{A}}{E}{\Braid}{\inverse{\Braid}}{\preparationEventNoDown{\rho}^{n}} \right\}_{n \in \mathbb{N}}.
	\end{equation*}
	In order to be equal to the identity, the limit must be 
	\begin{equation*}
		\minimalDeterministicCausalDestroyReprep{A}{A}{\prim{A}}{\prim{A}}{E}{\Braid}{\inverse{\Braid}}{\preparationEventNoDown{\rho}} = \myQcircuitSmall{&\qw&\s{A}\qw&\qw&\qw&},
	\end{equation*}
	or equivalently
	\begin{equation*}
		\myQcircuit{
			&\s{\prim{A}}\qw&\measureD{\observationUniqueDeterministic}&\pureghost{}&\prepareC{\preparationEventNoDown{\rho}}&\s{\prim{A}}\qw&\qw&
			\\
			&\qw&\qw&\s{E}\qw&\qw&\qw&\qw&
		} = \myQcircuit{
			&\qw&\s{A}\qw&\qw&\qw&
		},
	\end{equation*}
	where the limit of the sequence is deterministic and can be obtained exploiting \autoref{thm:OPT:minimal:transf:goodDeterministic}, defining $\eventNoDown{\rho} = \lim_{n \to \infty} \eventNoDown{\rho}^{n}$. The last identity can hold only if $\system{\prim{A}} = \trivialSystem$ and $\system{E} = \system{A}$, which means that a sequence of instruments whose coarse-graining converges to the identity transformation is of the form
	\begin{equation}
		\label{proof:thm:OPT:minimal:symmetric:causal:idAtomicity2}
		\left\{ \quad \myQcircuitSupMat{
			&\prepareC{{\preparationTest{\rho}{X}'}^{n}}&\sSequence{C}{n}\qw&\measureD{{\observationTest{a}{Y}'}^{n}}&
			\\
			&\qw&\s{A}\qw&\qw&\qw&
		} \right\}_{n \in \mathbb{N}},
	\end{equation}
    obtained by substituting $\system{\prim{A}} = \trivialSystem$ and $\system{E} = \system{A}$ into \eqref{eqt:OPT:minimal:NIWD:proof}. The last equation implies that every admissible decomposition of the identity is trivial, i.e.,made of transformations proportional to the identity itself. It follows that the identity is atomic.
\end{proof}

\subsection{Properties of minimal theories}

We show here the main informational consequences following the structure of minimal theories.

A first interesting fact regarding \acp{OPT} is that whenever the identity transformation is atomic for a certain system \system{A}, then any reversible transformation having that system as input or output must also be atomic. This implies that causal symmetric \acp{MOPT} do not admit reversible transformations different from permutations.

\begin{lemma}
	\label{lem:opt:idAtomic:revAtomic}
	In any \ac{OPT}  whenever the identity transformation is atomic for a certain system \system{A}, then any reversible transformation having that system as input or output must also be atomic~\cite{darianoInformationDisturbanceOperational2020}.
\end{lemma}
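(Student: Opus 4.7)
The strategy is to reduce atomicity of a reversible $\eventNoDown{R}$ to atomicity of $\identityTest{A}$ by exploiting its inverse: sequentially composing any candidate decomposition of $\eventNoDown{R}$ with $\inverse{\eventNoDown{R}}$ produces a decomposition of $\identityTest{A}$, and atomicity of the latter then propagates back to $\eventNoDown{R}$ after composing once more with $\eventNoDown{R}$ itself. The two cases---$\system{A}$ being the input, or the output, of $\eventNoDown{R}$---are handled in a perfectly symmetric manner, so I only sketch the first.

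Concretely, assume $\eventNoDown{R} \in \RevTransf{A}{B}$ with inverse $\inverse{\eventNoDown{R}} \in \RevTransf{B}{A}$, and take any decomposition $\eventNoDown{R} = \eventNoDown{R}_{1} + \eventNoDown{R}_{2}$ with $\eventNoDown{R}_{1}, \eventNoDown{R}_{2} \in \Transf{A}{B}$. Sequentially composing $\inverse{\eventNoDown{R}}$ after each side, and invoking the distributivity of sequential composition with respect to coarse-graining together with $\inverse{\eventNoDown{R}}\, \eventNoDown{R} = \identityTest{A}$, yields
\begin{equation*}
\identityTest{A} = \inverse{\eventNoDown{R}}\, \eventNoDown{R}_{1} + \inverse{\eventNoDown{R}}\, \eventNoDown{R}_{2},
\end{equation*}
which is a decomposition inside $\Transf{A}{A}$ by closure under sequential composition. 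Atomicity of $\identityTest{A}$ now forces $\inverse{\eventNoDown{R}}\, \eventNoDown{R}_{i} = \lambda_{i}\, \identityTest{A}$ for some non-negative scalars $\lambda_{1}, \lambda_{2}$. Composing $\eventNoDown{R}$ on the left of both sides and using $\eventNoDown{R}\, \inverse{\eventNoDown{R}} = \identityTest{B}$, these relations unravel into $\eventNoDown{R}_{i} = \lambda_{i}\, \eventNoDown{R}$, which is precisely the atomicity of $\eventNoDown{R}$. The case $\eventNoDown{R} \in \RevTransf{B}{A}$ is dual: one post-composes by $\inverse{\eventNoDown{R}}$ on the opposite side and runs the same argument.

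No genuine obstacle arises, since the proof is a direct manipulation of the definitions of reversibility and atomicity, together with the closure properties of the transformation spaces recalled in \autoref{sec:opt}. The content of the lemma is essentially that atomicity is preserved under conjugation by reversible transformations, so that atomicity of $\identityTest{A}$ is intrinsic to the whole reversible orbit of $\system{A}$.
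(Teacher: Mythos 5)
Your proof is correct: composing a putative decomposition of $\eventNoDown{R}$ with $\inverse{\eventNoDown{R}}$ to turn it into a decomposition of $\identityTest{A}$, invoking atomicity of the identity, and then composing back with $\eventNoDown{R}$ is exactly the standard argument behind this result (the paper itself states the lemma without proof, deferring to the cited reference, where the same reduction is used). Your handling of the dual case with $\system{A}$ as output, and your appeal to distributivity of sequential composition over coarse-graining and to closure of the transformation sets, are all sound.
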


\begin{lemma}
	\label{lem:mopt:atomRev}
	In any causal symmetric \ac{MOPT} every reversible transformation is atomic and the set of reversible transformations coincides with that of permutations.
\end{lemma}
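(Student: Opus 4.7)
The plan is to dispatch the atomicity statement by chaining two earlier results and then to use the canonical erase-and-reprepare form of deterministic transformations, together with reversibility, to force every reversible map to collapse to a permutation. For atomicity, \autoref{thm:OPT:minimal:symmetric:causal:idAtomicity} yields that $\identityTest{A}$ is atomic for every $\system{A}$ in a causal symmetric \ac{MOPT}, whereupon \autoref{lem:opt:idAtomic:revAtomic} immediately gives that every reversible transformation is atomic. Since every permutation is manifestly reversible, only the reverse inclusion in the identification between reversibles and permutations requires further work.

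To this end, let $\mathcal{R} \in \RevTransf{A}{B}$. In a causal theory a reversible transformation is automatically deterministic, so by \autoref{thm:OPT:minimal:transf:goodDeterministic} it admits the canonical form
\begin{equation*}
\mathcal{R} = \Braid_2 \circ \left( \rbraSystem{\uniDetEff}{\prim{A}} \otimes \rketSystem{\rho}{\prim{B}} \right) \circ \Braid_1 ,
\end{equation*}
with the identity on an auxiliary system $\system{E}$ implicit, where $\Braid_1 \colon \system{A} \to \system{\prim{A}E}$ and $\Braid_2 \colon \system{\prim{B}E} \to \system{B}$ are permutations and $\rho \in \StN{\prim{B}}$. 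The target is to establish $\system{\prim{A}} = \system{\prim{B}} = \trivialSystem$; once this is achieved, the middle erase-and-reprepare block trivialises to $\identityTest{E}$ and $\mathcal{R} = \Braid_2 \circ \Braid_1$ becomes a composition of permutations, hence itself a permutation.

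To force $\system{\prim{B}} = \trivialSystem$ I would use surjectivity of $\mathcal{R}$ on $\StN{B}$. A direct computation gives $\mathcal{R}\rket{\sigma} = \Braid_2(\sigma_E \otimes \rho)$ for $\sigma \in \StN{A}$, where $\sigma_E \in \StN{E}$ denotes the $E$-marginal of $\Braid_1\rket{\sigma}$; reversibility therefore forces the image of $\mathcal{R}$, which after $\Braid_2^{-1}$ is $\{\tau \otimes \rho \colon \tau \in \StN{E}\}$, to exhaust $\StN{\prim{B}E}$. Choosing any $\rho' \in \StN{\prim{B}}$ and any $\tau_0 \in \StN{E}$, the product $\rho' \otimes \tau_0$ lies in $\StN{\prim{B}E}$ and thus must equal some $\tau' \otimes \rho$; marginalising on $\system{\prim{B}}$ via the unique deterministic effect on $\system{E}$ yields $\rho' = \rho$, so $\StN{\prim{B}}$ is a singleton and therefore $\sysDimension{\prim{B}} = 1$. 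A symmetric argument using injectivity of $\mathcal{R}$---which, since $\rho$ is the fixed $\system{\prim{B}}$-output, reduces to injectivity of the $E$-marginal map on $\StN{\prim{A}E}$, broken by product states $\rho'' \otimes \tau_0$ for distinct $\rho'' \in \StN{\prim{A}}$---delivers $\system{\prim{A}} = \trivialSystem$.

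The main delicate step is the dimension argument that promotes a system's having a unique deterministic state to triviality of that system. It relies on the fact that in a causal \ac{OPT} the deterministic states form the base of the state cone in the hyperplane $\{\uniDetEff = 1\}$, and this base affinely spans the full generalised state space; the proof therefore uses essentially the uniqueness of the deterministic effect in a causal theory and does not \emph{per se} extend to non-causal settings.
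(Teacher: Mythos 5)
Your handling of the atomicity claim is exactly the paper's: \autoref{thm:OPT:minimal:symmetric:causal:idAtomicity} followed by \autoref{lem:opt:idAtomic:revAtomic}. For the identification of reversibles with permutations you take a genuinely different and more direct route: you put $\eventNoDown{R}$ into the canonical erase-and-reprepare form of \autoref{thm:OPT:minimal:transf:goodDeterministic} and try to kill the middle block by bijectivity on normalised states. The paper instead writes $\eventNoDown{R}=\lim_n\eventSequenceNoDown{T}{n}$ with $\eventSequenceNoDown{T}{n}$ deterministic compositions of the generators, observes that $\inverse{\eventNoDown{R}}\circ\eventSequenceNoDown{T}{n}$ converges to a decomposition of $\identityTest{A}$, and feeds this into the identity-decomposition analysis of \autoref{thm:OPT:minimal:symmetric:causal:idAtomicity} to conclude the sequence is constant, so that no new reversible transformation is created by completion. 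Your route would be shorter if it closed, but it has a genuine gap at its last step.

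The gap is the inference from ``$\StN{\prim{A}}$ and $\StN{\prim{B}}$ are singletons'' to ``$\system{\prim{A}}=\system{\prim{B}}=\trivialSystem$''. A singleton deterministic state set gives $\sysDimension{\prim{B}}=1$, but in the theories this paper is built around a dimension-one system need not be operationally equivalent to the trivial system: the paper states this explicitly when discussing the elementary systems of \ac{MSBCT}, where composing a dimension-one system with $\system{E}$ yields a system of dimension $2\sysDimension{E}$. Since the trivial system also has dimension one, counting dimensions cannot separate the two cases. The underlying problem is that you test reversibility only through the action of $\eventNoDown{R}$ on normalised states of $\system{A}$ and $\system{B}$ themselves; without local discriminability this does not characterise the map. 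When $\system{\prim{A}}$ and $\system{\prim{B}}$ are non-trivial but one-dimensional, the erase-and-prepare block is bijective on local deterministic states and yet still fails to be reversible, because applying the deterministic effect $\uniDetEff$ on $\system{\prim{A}}$ destroys the correlations between $\system{\prim{A}}$ and any ancilla $\system{E'}$ --- an obstruction visible only on states of $\system{AE'}$, which your surjectivity/injectivity argument never probes. The paper's constant-sequence argument survives precisely because the identity-decomposition analysis it invokes is carried out with arbitrary ancillary systems attached. To repair your proof you would have to replace the final dimension count by that ancilla-level argument (or by a separate proof that a system all of whose joint state spaces with ancillas factorise is operationally equivalent to $\trivialSystem$), at which point the two proofs essentially merge.
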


\begin{proof}
	The first part of the corollary derives directly from \autoref{thm:OPT:minimal:symmetric:causal:idAtomicity} and \autoref{lem:opt:idAtomic:revAtomic}.
		We now prove that the only reversible transformations are permutations. In the minimal setting, the only way in which a new reversible transformation, different from a permutation, can be included in the theory is as limit of some sequence of transformations. Moreover, since the set of deterministic transformations is Cauchy complete by \autoref{lem:OPT:causal:detTransf:convergence}, and $\eventNoDown{R}$ is deterministic, the sequence can be taken in  the set of deterministic transformations $\left\{\eventSequenceNoDown{T}{n}\right\}_{n \in \mathbb{N}} \subset \TransfN{A}{B}$:
	\begin{equation*}
		\eventNoDown{R} = \lim_{n \rightarrow \infty} \eventSequenceNoDown{T}{n} \in \RevTransf{A}{B},  
	\end{equation*}
	Now consider the sequence $\eventSequenceNoDownAll{G}{n} \mathDef \left\{ \inverse{\eventNoDown{R}}\eventSequenceNoDown{T}{n} \right\}_{n \in \mathbb{N}}$, which is still Cauchy and converges to an instrument whose coarse-graining is \identityTest{A} as a consequence of (\autoref{lem:opt:norm:operational:monotonicity}) (or, equivalently, \autoref{lem:opt:norm:sup:inequality} in the case of the sup norm). By the same argument used in the proof of \autoref{thm:OPT:minimal:symmetric:causal:idAtomicity}, it follows that any sequence of instruments whose coarse-graining converges to the identity consists of transformations that are proportional to the identity itself. In this specific case, where the sequence is composed of deterministic maps, all transformations must coincide with the identity, implying that the sequence is constant. Therefore, $\identityTest{A} = \sequentialComp{\inverse{\eventNoDown{R}}}{\eventSequenceNoDown{T}{n}}$ for all $n \in \mathbb{N}$, which further implies that $\eventNoDown{R} = \eventSequenceNoDown{T}{n}$ for any $n \in \mathbb{N}$. Since $\eventSequenceNoDownAll{T}{n}$ is a sequence of transformations included in~\eqref{eqt:OPT:minimal:def:event}, we have thus proven that Cauchy completing a causal symmetric \ac{MOPT} does not add any reversible transformation. Therefore, the set of reversible transformations coincides with the set of permutations.
\end{proof}

In Ref.~\cite{darianoInformationDisturbanceOperational2020}  it has been shown that any theory where the identity transformation is atomic for every system satisfies the property of \ac{NIWD}, namely any process that provides some non-trivial information on a system must perturb the system. Therefore, by \autoref{thm:OPT:minimal:symmetric:causal:idAtomicity} we have that any minimal theory exhibit \ac{NIWD}: 

\begin{corollary}
  \label{corol:MOPT:niwd}
	Every causal symmetric \ac{MOPT} satisfies the property of \ac{NIWD}.
\end{corollary}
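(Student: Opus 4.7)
The plan is to derive this corollary directly from \autoref{thm:OPT:minimal:symmetric:causal:idAtomicity} together with the characterisation of \ac{NIWD} in terms of atomicity of the identity that is recalled in the text preceding the corollary (and proved in Ref.~\cite{darianoInformationDisturbanceOperational2020}). Concretely, I would first invoke \autoref{thm:OPT:minimal:symmetric:causal:idAtomicity} to conclude that, in any causal symmetric \ac{MOPT} \OPTMath, the identity transformation $\identityTest{A}$ is atomic for every system $\system{A} \in \Sys{\OPTMath}$. Then I would appeal to the known equivalence stating that an \ac{OPT} in which $\identityTest{A}$ is atomic for all $\system{A}$ satisfies \ac{NIWD}, and draw the conclusion.

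For the reader's convenience I would briefly unpack why atomicity of the identity yields \ac{NIWD} in the present framework. Suppose an instrument $\eventTest{T}{x}{X} \in \Instr{A}{A}$ is such that its full coarse-graining is the identity, i.e.~$\sum_{\outcomeIncluded{x}{X}} \event{T}{x} = \identityTest{A}$. By the atomicity of $\identityTest{A}$, each $\event{T}{x}$ must be proportional to $\identityTest{A}$, say $\event{T}{x} = \probabilityEvent{p}{x}\identityTest{A}$ with $\probabilityEventTest{p}{x}{X}$ a probability distribution. Hence the outcome statistics are state-independent, so no non-trivial information can be extracted; any instrument that does provide some non-trivial information must therefore differ from the identity upon coarse-graining, which is precisely the operational content of \ac{NIWD}.

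There is no real obstacle here: the work has already been done in \autoref{thm:OPT:minimal:symmetric:causal:idAtomicity}, whose proof required the structural characterisation of jellyfish transformations together with the completion result \autoref{thm:OPT:minimal:transf:goodDeterministic}. The only mildly delicate point worth stating explicitly is that atomicity has to hold for \emph{every} system of the theory (and not just, say, elementary ones) in order to yield \ac{NIWD} in full generality; but this uniformity is exactly what \autoref{thm:OPT:minimal:symmetric:causal:idAtomicity} guarantees, since its proof applies verbatim to arbitrary $\system{A} \in \Sys{\OPTMath}$.
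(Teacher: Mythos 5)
Your proposal is correct and follows exactly the paper's own route: the corollary is obtained by combining \autoref{thm:OPT:minimal:symmetric:causal:idAtomicity} (atomicity of the identity for every system of a causal symmetric \ac{MOPT}) with the result of Ref.~\cite{darianoInformationDisturbanceOperational2020} that atomicity of the identity for all systems is equivalent to \ac{NIWD}. The brief unpacking you add of why atomicity forces every decomposition of the identity to be of the form $\left\{ \probabilityEvent{p}{x} \identityTest{A} \right\}_{\outcomeIncluded{x}{X}}$ is consistent with the paper's footnote and does not change the argument.
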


From \autoref{lem:mopt:atomRev} and \autoref{thm:opt:irrev:idAtomicity} it immediately follows the irreversibility of minimal theories:
\begin{corollary}
	\label{corol:MOPT:irrev}
	Every non-trivial causal symmetric \ac{MOPT} has irreversibility.\footnote{We term \textdef{non-trivial} \acp{OPT} those theories that admit systems of dimension greater than 1.}
\end{corollary}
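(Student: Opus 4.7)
The plan is to combine two results already established in the excerpt: the atomicity of the identity in causal symmetric \acp{MOPT} (\autoref{thm:OPT:minimal:symmetric:causal:idAtomicity}) and the sufficient condition for irreversibility via atomicity of the identity on some non-trivial system (\autoref{thm:opt:irrev:idAtomicity}). The deduction is essentially a syllogism, so the main work has already been done upstream in the paper; the corollary reduces to checking the hypotheses of the two invoked results.

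First, I would unpack what ``non-trivial'' means in this context, namely that there exists at least one system $\system{A} \in \Sys{\OPTMath}$ with $\sysDimension{A} \geq 2$ (as noted in the accompanying footnote). Second, since \OPTMath\ is a causal symmetric \ac{MOPT}, \autoref{thm:OPT:minimal:symmetric:causal:idAtomicity} applies and grants that the identity transformation $\identityTest{A}$ is atomic for every system of the theory, including the one we just selected. Third, with the two hypotheses of \autoref{thm:opt:irrev:idAtomicity} now in place---namely a system of dimension at least $2$ whose identity is atomic---we can invoke that theorem to conclude the existence of an intrinsically irreversible instrument $\eventTest{T}{x}{X} \in \Instr{A}{B}$ for some $\system{B} \in \Sys{\OPTMath}$. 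By the very definition of an \ac{OPT} with irreversibility, the existence of such an instrument is exactly what it means for the theory to have irreversibility of measurement disturbance.

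There is no significant obstacle here: both referenced results have been proved (or at least fully stated with proof sketches) earlier in the paper, and the only subtle point is making sure one properly invokes non-triviality to produce a system satisfying $\sysDimension{A} \geq 2$, without which \autoref{thm:opt:irrev:idAtomicity} would not apply. The proof will therefore be at most a few lines long, consisting essentially of chaining the two theorems together with an explicit mention of the non-triviality hypothesis.
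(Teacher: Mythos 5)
Your proposal is correct and follows essentially the same route as the paper, which derives the corollary immediately by combining the atomicity of the identity in causal symmetric \acp{MOPT} with \autoref{thm:opt:irrev:idAtomicity}, using non-triviality to supply a system of dimension at least $2$. Your explicit appeal to \autoref{thm:OPT:minimal:symmetric:causal:idAtomicity} is, if anything, slightly more direct than the paper's citation of \autoref{lem:mopt:atomRev}, but the logical content is identical.
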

Furthermore, in every minimal theory there cannot exist a broadcasting channel, as it follows from \autoref{lem:opt:broadcasting:idAtomic}.
\begin{corollary}
	\label{corol:MOPT:broadcasting}
	Every causal symmetric \ac{MOPT} does not admit a broadcasting channel.
\end{corollary}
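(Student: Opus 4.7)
The plan is to deduce the corollary as a direct consequence of two previously established results: \autoref{thm:OPT:minimal:symmetric:causal:idAtomicity} and \autoref{lem:opt:broadcasting:idAtomic}. First I would fix an arbitrary non-trivial system $\system{A}$ of a causal symmetric \ac{MOPT} $\OPTMath$, i.e.~a system with $\sysDimension{A} \geq 2$ (for $\sysDimension{A} = 1$, the broadcasting condition \eqref{eqt:opt:transf:broadcasting} is vacuously satisfied and carries no information-theoretic content, so it can be set aside at the outset).

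Next, I would invoke \autoref{thm:OPT:minimal:symmetric:causal:idAtomicity} to conclude that $\identityTest{A}$ is atomic. This is the heart of the argument and is already done in the proof of that theorem, via the structure theorem for limits of Cauchy sequences of deterministic transformations in causal symmetric \acp{MOPT} (\autoref{thm:OPT:minimal:transf:goodDeterministic}). Having atomicity in hand, I would then apply \autoref{lem:opt:broadcasting:idAtomic} to the system $\system{A}$: since the identity is atomic and $\sysDimension{A} \geq 2$, no deterministic transformation $\eventNoDown{B} \in \TransfN{A}{AA}$ can satisfy the defining marginal conditions of a broadcasting channel. Because $\system{A}$ was arbitrary, this establishes the absence of broadcasting channels for every non-trivial system of the theory.

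There is no real obstacle to overcome: the technical content has already been discharged in the two cited results. The only care needed is dimensional bookkeeping (excluding the trivial system) and noting that the hypotheses of \autoref{lem:opt:broadcasting:idAtomic}—causality and atomicity of the identity on a system of size at least two—are all granted, respectively by assumption on $\OPTMath$ and by \autoref{thm:OPT:minimal:symmetric:causal:idAtomicity}. Hence the proof reduces to a one-line chaining of these two statements, yielding no-broadcasting as an immediate corollary.
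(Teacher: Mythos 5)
Your proposal is correct and follows exactly the paper's route: the corollary is obtained by combining the atomicity of the identity for every system (\autoref{thm:OPT:minimal:symmetric:causal:idAtomicity}) with \autoref{lem:opt:broadcasting:idAtomic}, which rules out a broadcasting channel on any system of dimension at least two. Your extra remark about setting aside trivial systems of dimension one is a reasonable piece of bookkeeping that the paper leaves implicit.
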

In conclusion of this section, we observe that no \acp{MOPT} can satisfy the programming theorem.
\begin{definition}[Universal simulator]
	Consider a causal \ac{OPT} and a pair of systems \system{A} and \system{B} of the theory. We define a \emph{universal simulator for $\Transf{A}{B}$}, a deterministic transformation $\eventNoDown{P}_{\system{A}, \system{B}} \in \TransfN{PA}{PB}$ for some suitable system~\system{P}, such that for every deterministic transformation $\eventNoDown{C} \in \TransfN{A}{B}$, there exists a program state $\sigma \in \StN{P}$ such that~\cite{chiribellaProbabilisticTheoriesPurification2010}:
	\begin{equation*}
		\myQcircuit{
			&\s{A}\qw&\gate{\eventNoDown{C}}&\s{B}\qw&\qw& 
		} = \myQcircuit{
			&\prepareC{\preparationEventNoDown{\sigma}}&\s{P}\qw&\multigate{1}{\eventNoDown{P}_{\system{A}, \system{B}}}&\s{P}\qw&\measureD{\observationUniqueDeterministic}&\pureghost{}&
			\\
			&\s{A}\qw&\qw&\ghost{\eventNoDown{P}_{\system{A}, \system{B}}}&\qw&\s{B}\qw&\qw&
		}.
	\end{equation*}
\end{definition}

\begin{definition}[No-programming]
	\label{def:opt:programming}
	We say that an \ac{OPT} has \emph{no-programming} if some pair of systems $\system A$ 
	and $\system B$ does not have an universal simulator.
\end{definition}

\begin{corollary}
	\label{corol:MOPT:programming}
	Every causal symmetric \ac{MOPT} has no-programming.
      \end{corollary}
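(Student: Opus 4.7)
The plan is to use the structural theorem on deterministic transformations in causal symmetric \acp{MOPT} (\autoref{thm:OPT:minimal:transf:goodDeterministic}) to show that the channel obtained by tracing out the program register is forced into the erase-and-prepare form, which is too restrictive to realise every deterministic map at once. Fix an elementary system $\system{A}$ with $\sysDimension{A}\geq 2$ (such a system exists whenever the theory is non-trivial; otherwise the claim is vacuous). Assume for contradiction that a universal simulator $\eventNoDown{P}_{\system{A},\system{A}}\in\TransfN{PA}{PA}$ exists, and introduce the deterministic marginal
\begin{equation*}
M\mathDef\sequentialComp{\bigl(\rbraSystem{\observationUniqueDeterministic}{P}\paralC\identityTest{A}\bigr)}{\eventNoDown{P}_{\system{A},\system{A}}}\in\TransfN{PA}{A},
\end{equation*}
so that for every $\eventNoDown{C}\in\TransfN{A}{A}$ there is $\sigma_{\eventNoDown{C}}\in\StN{P}$ with $\eventNoDown{C}=\sequentialComp{M}{\bigl(\rketSystem{\sigma_{\eventNoDown{C}}}{P}\paralC\identityTest{A}\bigr)}$.

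Applying \autoref{thm:OPT:minimal:transf:goodDeterministic} to $M$ yields a decomposition in which $M$ first permutes its input $\system{PA}$ into $\system{XE}$ via a permutation $\Braid_{1}$, discards $\system{X}$ through the unique deterministic effect, prepares a fresh deterministic state $\rho_{0}\in\StN{Y}$, and recombines $\system{YE}$ into the elementary output $\system{A}$ via a second permutation $\Braid_{2}$. Since $\system{A}$ is elementary and permutations in a symmetric \ac{MOPT} are determined by their action on elementary subsystems (\autoref{lem:opt:permutations:characterisation}), uniqueness of the elementary decomposition (\autoref{def:opt:system:uniqueDec}) forces $\system{YE}$ to be elementary of the same type as $\system{A}$, leaving only two alternatives: (a) $\system{Y}=\system{A},\;\system{E}=\trivialSystem$, or (b) $\system{Y}=\trivialSystem,\;\system{E}=\system{A}$.

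In case (a), $M$ is a constant channel with output $\rho_{0}$, so $\eventNoDown{C}_{\sigma}=\rket{\rho_{0}}\rbra{\observationUniqueDeterministic}$ for every $\sigma\in\StN{P}$; since $\sysDimension{A}\geq 2$, the identity $\identityTest{A}\in\TransfN{A}{A}$ is distinct from this constant channel, contradicting universality. In case (b), $\Braid_{1}$ selects exactly one elementary $\system{A}$-type subsystem $\system{W}$ of $\system{PA}$ to be retained in $\system{E}$: if $\system{W}$ is the input-$\system{A}$ factor, then $M(\sigma\paralC\alpha)=\alpha$ identically, so every $\eventNoDown{C}_{\sigma}$ equals $\identityTest{A}$ and the destroy-and-reprepare maps $\rket{\rho}\rbra{\observationUniqueDeterministic}\neq\identityTest{A}$ in $\TransfN{A}{A}$ cannot be simulated; if $\system{W}$ is instead an $\system{A}$-type factor of $\system{P}$, then $M(\sigma\paralC\alpha)$ is a fixed marginal of $\sigma$, independent of $\alpha$, so $\eventNoDown{C}_{\sigma}$ is constant in $\alpha$ and differs from $\identityTest{A}$.

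The main obstacle is the rigorous tracking of how $\Braid_{1}$ and $\Braid_{2}$ route the elementary $\system{A}$-type subsystems of $\system{PA}$ and $\system{YE}$, respectively; this step hinges on \autoref{lem:opt:permutations:characterisation} together with unique decomposition, and involves a careful enumeration of which slot of the input product the retained wire comes from. Once the routing is pinned down, each of the three subcases collapses into a direct verification that the image of $\sigma\mapsto\eventNoDown{C}_{\sigma}$ is a proper subset of $\TransfN{A}{A}$, yielding the contradiction that establishes no-programming.
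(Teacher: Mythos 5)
Your proof is correct and follows essentially the same route as the paper's: both reduce the universal simulator (after discarding the program output) to the erase-and-prepare form of \autoref{thm:OPT:minimal:transf:goodDeterministic} and then track how the permutations route the elementary $\system{A}$ wire, concluding that either the identity or a constant (erase-and-prepare) channel fails to be programmable. Your specialisation to an elementary target system with $\sysDimension{A}\geq 2$ is a legitimate shortcut---it replaces the paper's appeal to the argument of \autoref{thm:OPT:minimal:symmetric:causal:idAtomicity} with an explicit two-case analysis---and suffices because \autoref{def:opt:programming} only requires exhibiting one pair of systems without a universal simulator.
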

\begin{proof}
      By contradiction, suppose that a universal simulator for system $\system A$ exists, and let us use it to program the identity. If we decompose the channel $\eventNoDown{P}_{\system{A}, \system{B}}$ using \eqref{eqt:OPT:minimal:transf:lem:causalDeterm}, then a program $\sigma \in \StN{P}$ for the identity must satisfy
\begin{equation*}
	\myQcircuitSmall{
		&\s{A}\qw&\qw&
	} = 
	\myQcircuitSmall{
		&\prepareC{\preparationEventNoDown{\sigma}}&\s{P}\qw&\multigate{1}{\Braid_1}&\s{\prim{A}}\qw&\measureD{\observationUniqueDeterministic}&\pureghost{}&\prepareC{\preparationEventNoDown{\rho}}&\s{\prim{A}}\qw&\multigate{1}{\Braid_2}&\s{P}\qw&\measureD{\observationUniqueDeterministic}&
		\\
		&\s{A}\qw&\qw&\ghost{\Braid_1}&\qw&\qw&\s{E}\qw&\qw&\qw&\ghost{\Braid_2}&\qw&\s{A}\qw&\qw&
	},
\end{equation*}
for some suitable states and permutations. However, by the same argument used in the proof of \autoref{thm:OPT:minimal:symmetric:causal:idAtomicity}, the latter equality can be satisfied only if $\Braid_1=\tilde\Braid_1\boxtimes\tilde\Braid_2$ and $\Braid_2=\tilde\Braid_3\boxtimes\inverse{\tilde\Braid}_2$ the following holds:
\begin{equation*}
	\myQcircuitSmall{
		&\s{A}\qw&\qw&
	} = 
	\myQcircuitSmall{
		&\prepareC{\preparationEventNoDown{\sigma}}&\s{P}\qw&\gate{\tilde\Braid_{1}}&\s{\prim{A}}\qw&\measureD{\observationUniqueDeterministic}&\pureghost{}&\prepareC{\preparationEventNoDown{\rho}}&\s{\prim{A}}\qw&\gate{{\tilde\Braid_{3}}}&\s{P}\qw&\measureD{\observationUniqueDeterministic}&
		\\
		&\s{A}\qw&\qw&\gate{\tilde\Braid_{2}}&\qw&\qw&\s{E}\qw&\qw&\qw&\gate{\inverse{\tilde\Braid_{2}}}&\qw&\s{A}\qw&\qw&
	}.
\end{equation*}
This shows that the identity can be programmed, then it is impossible to program any other transformation, thus forbidding the existence of a universal simulator.
\end{proof}

\section{Minimal theories with strong causality}
\label{subsec:MSOPT}
The results presented in the previous section pertain to operational theories that do not satisfy strong causality, i.e.~the very natural possibility of performing conditional experiments—a feature that is reasonably expected in a theory of physical systems. This raises the question we aim to address in this section: do the characteristics of minimal theories derived so far remain valid when strong causality is assumed? To explore this, we analyse the properties of what we term \acfp{MSOPT}, which can be understood as generic \acp{MOPT} completed to satisfy strong causality through the procedure outlined in \autoref{procedure:OPT:stronglyClosure}.

\begin{definition}[Minimal Strongly Causal Operational Probabilistic Theory]
	\label{def:OPT:minimalsc}
	We define as \textdef{\acf{MSOPT}} an \ac{OPT} with unique decomposition where the allowed tests are those of the \ac{MOPT} with the same systems, plus all conditional tests thereof, and finally Cauchy completed. 
\end{definition}

We remark that completion with respect to conditioning completely sets apart \acp{MSOPT} form \acp{MOPT}, with the set of instruments of a given minimal theory strictly contained in its completion under strong causality. For example, while the single transformations of the form
\begin{equation*}
	\myQcircuit{&\s{A}\qw&\gate{\event{T}{x}}&\s{B}\qw}\, \mathDef \,
	\myQcircuit{		&\s{A}\qw&\measureD{\observationEvent{a}{x}}&\prepareC{\preparationEventNoDown{\rho}^{\left(\outcome{x}\right)}}&\s{B}\qw&\qw&	
	},
\end{equation*}
where \system{A} and \system{B} are generic systems of the theory, $\observationEventTest{a}{x}{X} \in \Obs{A}$ is a generic observation-test, and $\left\{ \preparationEventNoDown{\rho}^{\left( \outcome{x} \right)} \right\}_{\outcomeIncluded{x}{X}} \subset \StN{B}$ is a collection of generic deterministic states of the theory, are legitimate in \acp{MOPT}, their collection $\eventTest{T}{x}{X}$ is an instrument only in \acp{MSOPT}.

Due to the additional operations enabled by conditioning, it is no longer possible to characterise the generic instruments and transformations of \acp{MSOPT}. This is due to the fact that generic conditional instruments do not necessarily take the simple forms given by \eqref{eqt:OPT:minimal:symmetric:transf:generic}, or, in the deterministic case, \eqref{eqt:OPT:minimal:transf:lem:causalDeterm}. However, by introducing an additional assumption, we will demonstrate that the identity transformation remains atomic for every system, even when all conditional instruments are considered.

We recall that the kind of transformations added to an \ac{MOPT} via closure under conditioning are (\autoref{procedure:OPT:stronglyClosure})
\begin{align}
	\label{eqt:proof:MSOPT:atomicity:1}
	\myQcircuit{			&\s{A}\qw&\gate{{}^{1}{\event{T}{x_{1}}}}&\s{A_{1}}\qw&\gate{{}^{2}\conditionedEvent{T}{x_{2}}{x_{1}}}&\s{A_{2}}\qw&\qw&\;&\ldots&\pureghost{}&\s{A_{k-1}}\qw&\qw&\gate{{}^{k}\conditionedEvent{T}{x_{k}}{\boldsymbol{x}}}&\s{B}\qw&\qw&
	},
\end{align}
where $\boldsymbol{x} \mathDef \outcome{x_{1}, \ldots, x_{k-1}}$ and ${}^{i}\event{T}{x_{i}}$ are transformations of the original \ac{MOPT}---thus of the form \eqref{eqt:OPT:minimal:symmetric:transf:generic} or limits of Cauchy sequences thereof. We highlight that according to \autoref{remark:strongClosure:finiteOutcome} the number of conditioning steps we consider is arbitrary but finite (in this case equal to $k$). Moreover, one has to include transformations that are limits of Cauchy sequences of transformations above,
\begin{align}
	\label{eqt:proof:MSOPT:atomicity:bis}
	\left\{ \myQcircuit{			&\s{A}\qw&\gate{{}^{1}\eventSequence{T}{x_{1}}{n}}&\sSequenceIndex{A}{1}{n}\qw&\qw&\;&\ldots&\pureghost{}&\sSequenceIndex{A}{k(n)-1}{n}\qw&\qw&\gate{{}^{k(n)}\conditionedEventSequence{T}{x_{k(n)}}{\boldsymbol{x}}{n}}&\s{B}\qw&\qw&
	} \right\}_{n \in \mathbb{N}}.
\end{align}
Notice that in \eqref{eqt:proof:MSOPT:atomicity:bis} the systems on the ``inside'' wires $\systemSequence{A_{i}}{n}$ depend on the index of the sequence, while the input and output systems are fixed. Moreover, also the number of conditioning steps $k(n)$ can vary with the index of the sequence. However, for each $n$ we can take the maximal value $k=\max_mk(m)$ for each sequence item, conditioning with the identity at the extra steps  $k-k(n)$ if necessary, thus reducing to consider transformations that are limits of Cauchy sequences of the form
\begin{align}
	\label{eqt:proof:MSOPT:atomicity:2}
	\left\{ \myQcircuit{			&\s{A}\qw&\gate{{}^{1}\eventSequence{T}{x_{1}}{n}}&\sSequenceIndex{A}{1}{n}\qw&\qw&\;&\ldots&\pureghost{}&\sSequenceIndex{A}{k-1}{n}\qw&\qw&\gate{{}^{k}\conditionedEventSequence{T}{x_{k}}{\boldsymbol{x}}{n}}&\s{B}\qw&\qw&
	} \right\}_{n \in \mathbb{N}}.
\end{align}
Clearly \eqref{eqt:proof:MSOPT:atomicity:1} can be seen as special case of \eqref{eqt:proof:MSOPT:atomicity:2} and analogously the most general instruments to include are Cauchy sequences
\begin{equation}
	\label{eqt:proof:MSOPT:atomicity:3}
	\left\{  \left\{ \sequentialComp{\conditionedEvent{G}{x_{k}}{\boldsymbol{x}}}{\event{T}{\boldsymbol{x}}} \right\}^{n}_{\outcomeIncluded{x_{k}}{X_{k}}, \outcomeIncluded{\boldsymbol{x}}{\boldsymbol{\outcomeSpace X}}} \right\}_{n \in \mathbb{N}} \subset \InstrA{A},
\end{equation}
where for simplicity we highlight here only the last conditioning step. Explicitly one has that $\{ \{ \eventNoDown{G}_{\outcome{x}_{k}}^{( \outcome{\boldsymbol{x}} )} \}_{\outcome{x}_{k} \in \outcomeSpace{X}_{k} }^{n} \}_{n \in \mathbb{N}} \subset \InstrEmpty{\system{A}_{k-1}^{n} \!\to\! \system{A}}$ and $\eventSequenceAll{T}{\boldsymbol{x}}{n} \subset \InstrEmpty{\system{A} \!\to\! \system{A}_{k-1}^{n}}$ with ${\boldsymbol{x}} \mathDef ( \outcome{x_{1}, \ldots, x_{k-1}} ) \in \boldsymbol{\outcomeSpace{X}} = \cartesianProduct{\outcomeSpace{X}_{1}}{\cartesianProduct{\cdots}{\outcomeSpace{X}_{k-1}}}$.

\subsection{Atomicity of the identity}
We can now state and prove the main theorem on the structure of minimal theories with conditioning
\begin{theorem}
	\label{thm:MSOPT:symmetric:idAtomicity}
	In every symmetric \ac{MSOPT} that admits a spanning set of entangled states for every composite system, the identity transformation is atomic for every system.
\end{theorem}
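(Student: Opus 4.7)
The plan is to mimic the strategy used in the proof of \autoref{thm:OPT:minimal:symmetric:causal:idAtomicity} for the non-conditional case, but replacing the step that rules out non-trivial erase-and-reprepare structures (which was automatic in an \ac{MOPT}) with an argument that invokes the spanning entangled states. First, I would take an arbitrary instrument $\eventTest{T}{x}{X}\in\InstrA{A}$ whose full coarse-graining equals $\identityTest{A}$ and try to show $\event{T}{x}\propto\identityTest{A}$ for every $\outcomeIncluded{x}{X}$. By \autoref{def:OPT:minimalsc}, \autoref{procedure:OPT:stronglyClosure} and \autoref{remark:strongClosure:conditioningInception}, each such event is, up to \eqref{eqt:proof:MSOPT:atomicity:bis}--\eqref{eqt:proof:MSOPT:atomicity:2}, the limit of a Cauchy sequence of finite conditional chains built out of jellyfish events of the underlying \ac{MOPT}. \autoref{thm:OPT:norm:instrEventConvergence} together with \autoref{lem:opt:instr:seqComp:limit} allows one to commute limits with coarse-graining and with sequential composition, so the question reduces to analysing the coarse-graining of a generic conditional chain and then passing to the limit.

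Second, I would flatten the conditional chain. Each transformation appearing in a chain of the form \eqref{eqt:proof:MSOPT:atomicity:2} admits, by \autoref{thm:OPT:minimal:symmetric:generalInstr}, a jellyfish decomposition \eqref{eqt:OPT:minimal:symmetric:transf:generic}; using the naturality of the braiding \eqref{eqt:opt:braid:naturality} and \autoref{lem:opt:permutations:characterisation}, the whole chain of conditioned jellyfish can be rewritten as a single circuit with all the preparations on one side, all the observations on the other, and a permutation of wires in between, with auxiliary wires carrying the outcome registers to the branches that depend on them. Taking now the coarse-graining over all outcomes and applying \autoref{lem:OPT:minimal:transf:causalDeterm} together with \autoref{thm:OPT:minimal:transf:stabilization} along a stabilised subsequence, and finally \autoref{thm:OPT:minimal:transf:goodDeterministic} for the limit, the coarse-graining is a deterministic transformation of the erase-and-reprepare form \eqref{eqt:OPT:minimal:transf:lem:causalDeterm}.

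Third, I would activate the hypothesis. Consider an arbitrary composite system $\system{AE}$ and an entangled state $\preparationEventNoDown{\sigma}\in\St{AE}$; since the coarse-graining acts as $\identityTest{A}$ on $\system{A}$, applying $\parallelComp{\event{T}{x}}{\identityTest{E}}$ to $\rket{\preparationEventNoDown{\sigma}}$ and summing over $\outcome{x}$ must return $\rket{\preparationEventNoDown{\sigma}}$ itself. In the flattened form, any non-trivial ancilla in the erase-and-reprepare part on each branch would factorise the output state across the $\system{A}|\system{E}$ cut, contradicting the entanglement of $\preparationEventNoDown{\sigma}$ unless the ancillary systems collapse. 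Because the entangled states are assumed to span $\StR{AE}$, this constraint propagates by linearity to every state on $\system{AE}$; exactly as in the proof of \autoref{thm:OPT:minimal:symmetric:causal:idAtomicity}, one is then forced to conclude that the erase wire and the outcome registers are trivial on every branch, so each individual event is a scalar multiple of $\identityTest{A}$. Passing to the limit via \autoref{lem:opt:transf:seqComp:limit} preserves the conclusion, giving atomicity of $\identityTest{A}$.

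The step I expect to be the main obstacle is the second one, namely making the flattening rigorous when both the number of conditioning steps and the intermediate systems $\systemSequence{A_{i}}{n}$ depend on the sequence index $n$ and on the particular branch $\boldsymbol{x}$; this requires a nested application of \autoref{thm:OPT:minimal:transf:stabilization}, first along $n$ to stabilise the inner wires and permutations of each level, and then across the different conditioning steps to produce a uniform single-jellyfish form to which the entangled-states argument can be applied. A secondary delicate point is that the hypothesis must be used at the level of the individual (non-deterministic) branches, and not only at the level of the deterministic coarse-graining, which is why the erase-and-reprepare normal form must be lifted to each conditioned event before the limit is taken.
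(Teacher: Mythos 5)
Your overall strategy---reduce to finite conditional chains and their Cauchy limits, then use the spanning entangled states to kill the re-preparation wires so that every branch collapses to a multiple of the identity---is the same as the paper's, and your first and third steps correspond closely to what is actually done. The gap is in your second step, the ``flattening''. You claim that a chain of conditioned jellyfish events can be rewritten, via naturality of the braiding and the coherence of permutations, as a \emph{single} circuit with all preparations on one side, all observations on the other, and a permutation in between. That normal form is exactly the jellyfish form \eqref{eqt:OPT:minimal:symmetric:transf:generic}, and the paper explicitly stresses that conditional instruments do \emph{not} have it: the preparation appearing at step $i+1$ is classically selected by the outcome $\outcome{x}_{i}$ of the observation at step $i$, and this classical control cannot be expressed as a wire permutation between a block of outcome-independent preparations and a block of observations. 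If the flattening were available, \autoref{thm:OPT:minimal:symmetric:causal:idAtomicity} would apply verbatim and there would be nothing left to prove; your ``auxiliary wires carrying the outcome registers'' are not objects of a minimal theory and cannot be introduced without circularity.

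The paper circumvents this by never flattening. It iterates \emph{backwards} from the last ($k$-th) conditioning step, whose output system $\system{A}$ is fixed, so that \autoref{thm:OPT:minimal:transf:stabilization} can stabilise the output-side systems and permutation of that step along a subsequence. The entangled-state argument is then applied \emph{at that step} to conclude that the re-prepared system $\systemConditioned{D}{\boldsymbol{x}}$ is trivial; once the preparation is gone, the only surviving outcome dependence at that step sits in effects that sum to the unique deterministic effect, which is harmless. One then moves to step $k-1$, where a further subtlety arises (the systems ${\systemConditioned{G}{\boldsymbol{x}'}}^{n}$ and $\systemSequence{C}{n}$ may share a subsystem), and repeats. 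Your own closing paragraph correctly identifies the $n$- and branch-dependence of the intermediate systems as the obstacle, but nested stabilisation alone does not produce the uniform single-jellyfish form you need; the backward iteration combined with the early elimination of the conditioned preparations is the missing idea. A second, more minor, inaccuracy: the entanglement hypothesis is first exploited at the level of the deterministic coarse-grainings of each conditional step (the paper's item (i)), and only afterwards is the structure lifted to the individual non-deterministic branches (item (ii) and \aref{app:proof:MSOPT:atomicity}); your proposal inverts this order, which would leave you without the stabilised systems needed to even state the branch-level claim.
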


\begin{proof}
	We have to show that the full coarse-graining of every instrument cannot converge to the identity. The only cases we have to check are those of conditional instruments~\eqref{eqt:proof:MSOPT:atomicity:3}, since \autoref{thm:OPT:minimal:symmetric:causal:idAtomicity} guarantees the impossibility of decomposing the identity in any other way. The statement then follows proving that the sequence of deterministic transformations obtained by coarse-graining completely over the outcome spaces $\boldsymbol{\outcomeSpace{X}}$ and $\outcomeSpace{X}_{k}$ cannot be equal to the identity unless the instruments are trivial decompositions of $\identityTest{A}$, i.e., are of the form $\left\{ \probabilityEvent{p}{x} \identityTest{A} \right\}_{\outcomeIncluded{x}{X}}$. Thus, we characterize under which conditions the following equality can be satisfied
	\begin{equation}
		\label{eqt:proof:MSOPT:atomicity:4}
		\lim_{n \to \infty} \sum_{\outcomeIncluded{\boldsymbol{x}}{\boldsymbol{\outcomeSpace X}}} 
		\myQcircuit{
                  &\s{A}\qw&\gate{\eventSequence{T}{\boldsymbol{x}}{n}}&\qw&\sSequenceIndex{A}{k-1}{n}\qw&\qw&\gate{
                    {{\mathscr{G}_{\outcomeSpace{X}_{k}}^{(\boldsymbol{x})}}}^{n}
                  }&\s{A}\qw&\qw&
		} = \quad\!\! \myQcircuit{&\s{A}\qw&\qw&},
	\end{equation}
    where by ${{\mathscr{G}_{\outcomeSpace{X}_{k}}^{(\boldsymbol{x})}}}^{n}$ we denote the deterministic transformation given by the full coarse-graining of the test $\{ \{ \eventNoDown{G}_{\outcome{x}_{k}}^{( \outcome{\boldsymbol{x}} )} \}_{\outcome{x}_{k} \in \outcomeSpace{X}_{k} }^{n} \}_{n \in \mathbb{N}}$.

    The main issue here is that the sequences of transformations in the cascade of instruments above have a complicated dependency of the index $n$: at an arbitrary conditioning step $k'$ both the input $\system{A}_{k'-1}^{n}$ and the output $\system{A}_{k'}^{n}$ are $n$-dependent thus preventing from straightforwardly applying the stabilization \autoref{thm:OPT:minimal:transf:stabilization}. However, we can proceed by iteration from the last step, which has fixed output system $\system{A}$, as follows:
    \begin{enumerate}[(i)]
        \item\label{it:channels} Consider the limit of the coarse-graining of sequence~\eqref{eqt:proof:MSOPT:atomicity:3}. Step by step, starting from the last ($k$-th) conditional gate, each instrument can be coarse-grained to a deterministic map and  \eqref{eqt:proof:MSOPT:atomicity:4} gives information on their form, especially on the  stabilization (dropping of dependence on the sequence step $n$) of systems involved. 
        \item\label{it:transformations} With the constraints obtained in the last step, we look at the structure of the limit transformations within the conditional instruments. This is crucial to exclude non-trivial decompositions of the identity which holds only if all the limit transformations are proportional to the identity. This second step is quite involved and full details are carried out in \aref{app:proof:MSOPT:atomicity}.         
    \end{enumerate}
	\Iref{it:channels}: Thanks to \autoref{lem:OPT:minimal:transf:causalDeterm} and \autoref{thm:OPT:minimal:transf:goodDeterministic}, the deterministic map ${{\mathscr{G}_{\outcomeSpace{X}_{k}}^{(\boldsymbol{x})}}}^{n}$ is equal to
	\begin{equation}
		\label{eqt:proof:MSOPT:atomicity:5}
		\myQcircuit{
			&\sSequenceIndex{A}{k-1}{n}\qw&\multigate{1}{{\permutation_{1}^{(\outcome{\boldsymbol{x}})}}^{n}}&\qw&\sSequenceConditioned{C}{\boldsymbol{x}}{n}\qw&\measureD{\observationUniqueDeterministic}&\pureghost{}&\prepareC{{\preparationEventNoDown{\rho}^{(\outcome{\boldsymbol{x}})}}^{n}}&\qw&\sSequenceConditioned{D}{\boldsymbol{x}}{n}\qw&\multigate{1}{{\permutation_{2}^{(\outcome{\boldsymbol{x}})}}^{n}}&\s{A}\qw&\qw&
			\\
			&\pureghost{}&\pureghost{{\permutation_{1}^{(\outcome{\boldsymbol{x}})}}^{n}}&\qw&\qw&\qw&\sSequenceConditioned{E}{\boldsymbol{x}}{n}\qw&\qw&\qw&\qw&\ghost{{\permutation_{2}^{(\outcome{\boldsymbol{x}})}}^{n}}&
		},
	\end{equation}
	with $\{ {\preparationEventNoDown{\rho}^{(\outcome{\boldsymbol{x}})}}^{n} \}_{n \in \mathbb{N}} \subset \StNOPT{ {\system{D}^{\left( \outcome{\boldsymbol{x}} \right)}}^{n} }$.
	Contrarily to what happens in \autoref{thm:OPT:minimal:transf:stabilization}, the dependence on the index $n$ of the sequence cannot be removed, since the input system $\system{A}_{k-1}^{n}$ also depends on $n$. However, given that the outcome system \system{A} is fixed, it is always possible to find a subsequence where ${\system{D}^{\left(\outcome{\boldsymbol{x}}\right)}}^{n}$, ${\system{E}^{\left(\outcome{\boldsymbol{x}}\right)}}^{n}$, and ${\permutation_{2}^{(\outcome{\boldsymbol{x}})}}^{n}$ are fixed, being the number of permutations of elementary systems composing \system{A} and the ways to group them finite. Hence, one can always consider a subsequence such that ${{\mathscr{G}_{\outcomeSpace{X}_{k}}^{(\boldsymbol{x})}}}^{n}$ is of the form
	\begin{equation}
		\label{eqt:proof:MSOPT:atomicity:6}
		\myQcircuit{
			&\sSequenceIndex{A}{k-1}{n}\qw&\multigate{1}{{\permutation_{1}^{(\outcome{\boldsymbol{x}})}}^{n}}&\qw&\sSequenceConditioned{C}{\boldsymbol{x}}{n}\qw&\measureD{\observationUniqueDeterministic}&\pureghost{}&\prepareC{{\preparationEventNoDown{\rho}^{(\outcome{\boldsymbol{x}})}}^{n}}&\qw&\sConditioned{D}{\boldsymbol{x}}\qw&\multigate{1}{\permutation_{2}^{(\outcome{\boldsymbol{x}})}}&\s{A}\qw&\qw&
			\\
			&\pureghost{}&\pureghost{{\permutation_{1}^{(\outcome{\boldsymbol{x}})}}^{n}}&\qw&\qw&\qw&\sConditioned{E}{\boldsymbol{x}}\qw&\qw&\qw&\qw&\ghost{\permutation_{2}^{(\outcome{\boldsymbol{x}})}}&
		}.
	\end{equation}
	The condition expressed in \eqref{eqt:proof:MSOPT:atomicity:4} can be rewritten as
	\begin{widetext}
		\begin{equation}
			\label{eqt:proof:MSOPT:atomicity:7}
			\sum_{\outcomeIncluded{\boldsymbol{x}}{\boldsymbol{X}}} \; \lim_{n \to \infty}  \; \myQcircuit{
				&\sConditioned{D}{\tilde{\boldsymbol{x}}}\qw&\multigate{1}{\permutation_{2}^{(\outcome{\tilde{\boldsymbol{x}}})}}&\s{A}\qw&\gate{\eventSequence{T}{\boldsymbol{x}}{n}}&\sSequenceIndex{A}{k-1}{n}\qw&\multigate{1}{{\permutation_{1}^{(\outcome{\boldsymbol{x}})}}^{n}}&\qw&\sSequenceConditioned{C}{\boldsymbol{x}}{n}\qw&\measureD{\observationUniqueDeterministic}&\prepareC{{\preparationEventNoDown{\rho}^{(\outcome{\boldsymbol{x}})}}^{n}}&\qw&\sConditioned{D}{\boldsymbol{x}}\qw&\qw&\multigate{1}{\permutation_{2}^{(\outcome{\boldsymbol{x}})}}&\s{A}\qw&\multigate{1}{\inverse{\permutation_{2}^{(\outcome{\tilde{\boldsymbol{x}}})}}}&\qw&\sConditioned{D}{\tilde{\boldsymbol{x}}}\qw&\qw&
				\\
				&\sConditioned{E}{\tilde{\boldsymbol{x}}}\qw&\ghost{\permutation_{2}^{(\outcome{\tilde{\boldsymbol{x}}})}}&\pureghost{}&\pureghost{}&\pureghost{}&\pureghost{{\permutation_{1}^{(\outcome{\boldsymbol{x}})}}^{n}}&\qw&\qw&\qw&\qw&\qw&\sConditioned{E}{\boldsymbol{x}}\qw&\qw&\ghost{\permutation_{2}^{(\outcome{\boldsymbol{x}})}}&\pureghost{}&\pureghost{\inverse{\permutation_{2}^{(\outcome{\tilde{\boldsymbol{x}}})}}}&\qw&\sConditioned{E}{\tilde{\boldsymbol{x}}}\qw&\qw&
			} = \quad\! \begin{aligned} \Qcircuit @C=1em @R=2.1em {
				&\sConditioned{D}{\tilde{\boldsymbol{x}}}\qw&\qw&
				\\
				&\sConditioned{E}{\tilde{\boldsymbol{x}}}\qw&\qw&
			} \end{aligned},
		\end{equation}   
	\end{widetext}
	where we exploited \autoref{corol:OPT:norm:instrEventConvergence} to exchange the operations of limit and coarse-graining and we have applied the permutation $\permutation_{2}^{(\outcome{\tilde{\boldsymbol{x}}})}$ (its inverse) to the left (right) at both sides of the equality. Let now $\preparationEventNoDown{\omega}\in\St{\systemConditioned{D}{\tilde{\boldsymbol{x}}}\systemConditioned{{F}}{\tilde{\boldsymbol{x}}}}$ be an atomic entangled state, where $\systemConditioned{{F}}{\tilde{\boldsymbol{x}}}=\systemConditioned{{E}}{\tilde{\boldsymbol{x}}}\system{E'}$ for some arbitrary extra ancillary system $\system{E'}$. If we apply both sides of \eqref{eqt:proof:MSOPT:atomicity:7} to the state $\preparationEventNoDown{\omega}$, by atomicity it follows that any term of the coarse-graining on the right hand side of \eqref{eqt:proof:MSOPT:atomicity:7} must be proportional to $\preparationEventNoDown{\omega}$ itself, that is
	\begin{widetext}
		\begin{equation*}
			\lim_{n \to \infty} \; \myQcircuit{
				&\multiprepareC{2}{\preparationEventNoDown{\omega}}&\qw&\sConditioned{D}{\tilde{\boldsymbol{x}}}\qw&\multigate{1}{\permutation_{2}^{(\outcome{\tilde{\boldsymbol{x}}})}}&\s{A}\qw&\gate{\eventSequence{T}{\tilde{\boldsymbol{x}}}{n}}&\sSequenceIndex{A}{k-1}{n}\qw&\multigate{1}{{\permutation_{1}^{(\outcome{\tilde{\boldsymbol{x}}})}}^{n}}&\qw&\sSequenceConditioned{C}{\tilde{\boldsymbol{x}}}{n}\qw&\measureD{\observationUniqueDeterministic}&\prepareC{{\preparationEventNoDown{\rho}^{(\outcome{\tilde{\boldsymbol{x}}})}}^{n}}&\qw&\sConditioned{D}{\tilde{\boldsymbol{x}}}\qw&\qw&
				\\
				&\pureghost{\preparationEventNoDown{\omega}}&\qw&\sConditioned{E}{\tilde{\boldsymbol{x}}}\qw&\ghost{\permutation_{2}^{(\outcome{\tilde{\boldsymbol{x}}})}}&\pureghost{}&\pureghost{}&\pureghost{}&\pureghost{{\permutation_{1}^{(\outcome{\tilde{\boldsymbol{x}}})}}^{n}}&\qw&\qw&\qw&\qw&\qw&\sConditioned{E}{\tilde{\boldsymbol{x}}}\qw&\qw&
				\\
				&\pureghost{\preparationEventNoDown{\omega}}&\qw&\qw&\qw&\qw&\qw&\qw&\s{E'}\qw&\qw&\qw&\qw&\qw&\qw&\qw&\qw&
			}\,=\lim_{n\to\infty}
			\myQcircuit{
				&\prepareC{{\preparationEventNoDown{\rho}^{(\outcome{\tilde{\boldsymbol{x}}})}}^{n}}&\qw&\sConditioned{D}{\tilde{\boldsymbol{x}}}\qw&\qw&
				\\
				&\multiprepareC{1}{\sigma^{(\tilde{\boldsymbol x})}_n}&\qw&\sConditioned{E}{\tilde{\boldsymbol{x}}}\qw&\qw&
				\\
				&\pureghost{\sigma^{(\tilde{\boldsymbol x})}_n}&\qw&\s{E'}\qw&\qw&
			}
			 \propto \; \myQcircuit{
				&\multiprepareC{2}{\preparationEventNoDown{\omega}}&\qw&\sConditioned{D}{\tilde{\boldsymbol{x}}}\qw&\qw&
				\\
				&\pureghost{\preparationEventNoDown{\omega}}&\qw&\sConditioned{E}{\tilde{\boldsymbol{x}}}\qw&\qw&
				\\
				&\pureghost{\preparationEventNoDown{\omega}}&\qw&\s{E'}\qw&\qw&
			}\,.
		\end{equation*}
	\end{widetext}
	Since a Cauchy sequence whose elements are parallel composition of states is such that the sequences of marginals are Cauchy, one can easily prove that the limit is a parallel composition itself (\autoref{lem:OPT:sequence:deterministic:factorized}), that is
	\begin{align*}
		\myQcircuit{
				&\prepareC{{\preparationEventNoDown{\rho}^{(\outcome{\tilde{\boldsymbol{x}}})}}}&\qw&\sConditioned{D}{\tilde{\boldsymbol{x}}}\qw&\qw&
				\\
				&\multiprepareC{1}{\sigma^{(\tilde{\boldsymbol x})}}&\qw&\sConditioned{E}{\tilde{\boldsymbol{x}}}\qw&\qw&
				\\
				&\pureghost{\sigma^{(\tilde{\boldsymbol x})}_n}&\qw&\s{E'}\qw&\qw&
			}
			 \propto \; \myQcircuit{
				&\multiprepareC{2}{\preparationEventNoDown{\omega}}&\qw&\sConditioned{D}{\tilde{\boldsymbol{x}}}\qw&\qw&
				\\
				&\pureghost{\preparationEventNoDown{\omega}}&\qw&\sConditioned{E}{\tilde{\boldsymbol{x}}}\qw&\qw&
				\\
				&\pureghost{\preparationEventNoDown{\omega}}&\qw&\s{E'}\qw&\qw&
			}\,.		
	\end{align*}
	The only possibility for the above equality to hold is that $\systemConditioned{D}{\tilde{\boldsymbol{x}}} = \trivialSystem$. Since this must hold for every atomic entangled state $\omega$, and the latter states are spanning, we can conclude that $\systemConditioned{D}{\tilde{\boldsymbol{x}}} = \trivialSystem$ for every $\tilde{\boldsymbol x}$, and this implies that \eqref{eqt:proof:MSOPT:atomicity:6} is of the form
	\begin{equation}
		\label{eqt:proof:MSOPT:atomicity:10}
		\myQcircuit{
			&\sSequenceIndex{A}{k-1}{n}\qw&\multigate{1}{{\permutation_{3}^{(\outcome{\boldsymbol{x}})}}^{n}}&\qw&\sSequenceConditioned{C}{\boldsymbol{x}}{n}\qw&\measureD{\observationUniqueDeterministic}&
			\\
			&\pureghost{}&\pureghost{{\permutation_{3}^{(\outcome{\boldsymbol{x}})}}^{n}}&\qw&\s{A}\qw&\qw&\qw&
		}= \, \myQcircuit{
			&\sSequenceIndex{A}{k-1}{n}\qw&\multigate{1}{\permutation_{4}^{n}}&\sSequence{C}{n}\qw&\measureD{\observationUniqueDeterministic\left(\boldsymbol{x}\right)}&
			\\
			&\pureghost{}&\pureghost{\permutation_{4}^{n}}&\s{A}\qw&\qw&\qw&
		},
	\end{equation}
	where ${\permutation_{3}^{(\outcome{\boldsymbol{x}})}}^{n} \mathDef \sequentialComp{\left( \parallelComp{ \eventNoDown{I}_{{\systemConditioned{C}{\boldsymbol{x}}}^{n} }}{\permutation_{2}^{(\outcome{\boldsymbol{x}})}} \right)}{ {\permutation_{1}^{(\outcome{\boldsymbol{x}})}}^{n} }$. The equality follows reminding that a permutation is completely fixed by how it permutes the elementary subsystems (\autoref{lem:opt:permutations:characterisation}). Therefore, ${\systemConditioned{C}{\boldsymbol{x}}}^{n}$ can be replaced by a fixed system up to a local permutation depending on $\boldsymbol{x}$, which can then be absorbed within the deterministic effect. Clearly the dependence from $\boldsymbol{x}$ in the deterministic effect could be dropped. However, in the study of the transformations composing the above deterministic transformations at \Iref{it:transformations} the effects in the observation-tests summing to $\observationUniqueDeterministic$ will depend of the outcome $\outcome{\boldsymbol{x}}$, for this reason we keep it in the notation.
	
	We can now proceeded to make explicit the $k-1$ conditioning step in \eqref{eqt:proof:MSOPT:atomicity:4}, so we express also the coarse-graining of the instrument at position $k-1$ as a generic deterministic transformation (see \autoref{lem:OPT:minimal:transf:causalDeterm} and \autoref{thm:OPT:minimal:transf:goodDeterministic}) followed by the deterministic transformation at position $k$ derived in \eqref{eqt:proof:MSOPT:atomicity:10}. The result is of the form
	\begin{widetext}
		\begin{equation*}
            \lim_{n \to \infty} \sum_{\outcomeIncluded{\boldsymbol{x}'}{\boldsymbol{\outcomeSpace X}'}} \; \myQcircuit{
				&\s{A}\qw&\gate{{\event{T}{\boldsymbol{x}'}'}^{n}}&\sSequenceIndex{A}{k-2}{n}\qw&\multigate{1}{{\permutation_{5}^{(\outcome{\boldsymbol{x}'})}}^{n}}&\qw&\sSequenceConditioned{F}{\boldsymbol{x}'}{n}\qw&\measureD{\observationUniqueDeterministic}&\pureghost{}&\prepareC{{\preparationEventNoDown{\sigma}^{(\outcome{\boldsymbol{x}'})}}^{n}}&\qw&\sSequenceConditioned{G}{\boldsymbol{x}'}{n}\qw&\multigate{1}{{\permutation_{6}^{(\outcome{\boldsymbol{x}'})}}^{n}}&\sSequenceIndex{A}{k-1}{n}\qw&\multigate{1}{\permutation_{4}^{n}}&\sSequence{C}{n}\qw&\measureD{\observationUniqueDeterministic\left(\outcome{\boldsymbol{x}}\right)}&
				\\
				&\pureghost{}&\pureghost{}&\pureghost{}&\pureghost{{\permutation_{5}^{(\outcome{\boldsymbol{x}'})}}^{n}}&\qw&\qw&\qw&\sSequenceConditioned{H}{\boldsymbol{x}'}{n}\qw&\qw&\qw&\qw&\ghost{{\permutation_{6}^{(\outcome{\boldsymbol{x}'})}}^{n}}&\pureghost{}&\pureghost{\permutation_{4}^{n}}&\s{A}\qw&\qw&\qw&
			} = \quad \!\! \myQcircuit{
				&\s{A}\qw&\qw&	
			},
		\end{equation*}
	\end{widetext}
	where $\outcome{\boldsymbol{x}'} = \left( \outcome{x_{1}, \ldots, x_{k-2}} \right) \in \boldsymbol{\outcomeSpace{X}'} = \cartesianProduct{\outcomeSpace{X}_{1}}{\cartesianProduct{\cdots}{\outcomeSpace{X}_{k-2}}}$, denotes the outcome space of the first $k-2$ instruments. At the first conditioning step~\eqref{eqt:proof:MSOPT:atomicity:7}  we were able to conclude the triviality of system $\system{D}^{\left(\outcome{\boldsymbol{x}}\right)}$, here instead we cannot immediately conclude the triviality of system $\system{G}^{\left(\outcome{\boldsymbol{x}'}\right)}$, but only of part of it. To see this we highlight the fact that ${\systemConditioned{G}{\outcome{\boldsymbol{x}'}}}^{n}$ and $\systemSequence{C}{n}$ can share a subsystem $\systemSequence{J}{n}$ made of some elementary subsystems\footnote{To be precise $\systemSequence{J}{n}$ is defined up to a local permutation. However, this is not important for the discussion since it can be always absorbed within the deterministic effect.}. Let ${\permutation_{7}^{(\outcome{\boldsymbol{x}'})}}^{n} \mathDef \sequentialComp{\permutation_{4}^{n}}{{\permutation_{6}^{(\outcome{\boldsymbol{x}'})}}^{n}}$,
	\begin{widetext}
		\begin{equation*}
			\lim_{n \to \infty} \sum_{\outcomeIncluded{\boldsymbol{x}'}{\boldsymbol{\outcomeSpace X}'}} \; \myQcircuit{
				&\pureghost{}&\pureghost{}&\pureghost{}&\pureghost{}&\pureghost{}&\pureghost{}&\pureghost{}&\multiprepareC{1}{{\preparationEventNoDown{\sigma}^{(\outcome{\boldsymbol{x}'})}}^{n}}&\qw&\sSequence{J}{n}\qw&\qw&\multigate{2}{{\permutation_{7}^{(\outcome{\boldsymbol{x}'})}}^{n}}&\sSequence{J}{n}\qw&\measureD{\observationUniqueDeterministic\left(\outcome{\boldsymbol{x}}\right)}&
				\\
				&\s{A}\qw&\gate{{\event{T}{\boldsymbol{x}'}'}^{n}}&\sSequenceIndex{A}{k-2}{n}\qw&\multigate{1}{{\permutation_{5}^{(\outcome{\boldsymbol{x}'})}}^{n}}&\sSequenceConditioned{F}{\boldsymbol{x}'}{n}\qw&\measureD{\observationUniqueDeterministic}&\pureghost{}&\pureghost{{\preparationEventNoDown{\sigma}^{(\outcome{\boldsymbol{x}'})}}^{n}}&\qw&\sSequenceConditioned{G'}{\boldsymbol{x}'}{n}\qw&\qw&\ghost{{\permutation_{7}^{(\outcome{\boldsymbol{x}'})}}^{n}}&\sSequencePrime{C}{n}\qw&\measureD{\observationUniqueDeterministic\left(\outcome{\boldsymbol{x}}\right)}&
				\\
				&\pureghost{}&\pureghost{}&\pureghost{}&\pureghost{{\permutation_{5}^{(\outcome{\boldsymbol{x}'})}}^{n}}&\qw&\qw&\sSequenceConditioned{H}{\boldsymbol{x}'}{n}\qw&\qw&\qw&\qw&\qw&\ghost{{\permutation_{7}^{(\outcome{\boldsymbol{x}'})}}^{n}}&\s{A}\qw&\qw&\qw&
			} = \quad \!\! \myQcircuit{
				&\s{A}\qw&\qw&	
			}
		\end{equation*}
	\end{widetext}
	where we used the fact that the deterministic effect is unique and uniquely splits over $\systemSequence{J}{n}\systemSequence{C'}{n}$. Exploiting the naturality property of the permutations one can then make the deterministic effects slide to the left obtaining
	\begin{widetext}
		\begin{equation}
			\label{eqt:proof:MSOPT:atomicity:14}
			\lim_{n \to \infty} \sum_{\outcomeIncluded{\boldsymbol{x}'}{\boldsymbol{\outcomeSpace X}'}} \; \myQcircuit{
				&\s{A}\qw&\gate{{\event{T}{\boldsymbol{x}'}'}^{n}}&\sSequenceIndex{A}{k-2}{n}\qw&\multigate{2}{{\permutation_{5}^{(\outcome{\boldsymbol{x}'})}}^{n}}&\qw&\sSequenceConditioned{F}{\boldsymbol{x}'}{n}\qw&\measureD{\observationUniqueDeterministic}&
				\\
				&\pureghost{}&\pureghost{}&\pureghost{}&\pureghost{{\permutation_{5}^{(\outcome{\boldsymbol{x}'})}}^{n}}&\qw&\sSequencePrime{C}{n}\qw&\measureD{\observationUniqueDeterministic\left(\outcome{\boldsymbol{x}}\right)}&\pureghost{}&\prepareC{{\preparationEventNoDown{\sigma}^{(\outcome{\boldsymbol{x}'})}}^{n}}&\qw&\sSequenceConditioned{G'}{\boldsymbol{x}'}{n}\qw&\qw&\multigate{1}{{\permutation_{7}^{(\outcome{\boldsymbol{x}'})}}^{n}}&\s{A}\qw&\qw&
				\\	
				&\pureghost{}&\pureghost{}&\pureghost{}&\pureghost{{\permutation_{5}^{(\outcome{\boldsymbol{x}'})}}^{n}}&\qw&\qw&\qw&\sSequenceConditioned{H}{\boldsymbol{x}'}{n}\qw&\qw&\qw&\qw&\qw&\ghost{{\permutation_{7}^{(\outcome{\boldsymbol{x}'})}}^{n}}&
			} = \quad \!\! \myQcircuit{
				&\s{A}\qw&\qw&	
			}.
		\end{equation}
	\end{widetext}
	We can now apply the permutation $\permutation_{7}^{(\outcome{\boldsymbol{x}'})}$ (its inverse) on the left (right) side of both circuits, and repeat the argument after~\eqref{eqt:proof:MSOPT:atomicity:7} to conclude that ${\systemConditioned{G'}{\outcome{\boldsymbol{x}'}}}^{n}$ stabilizes to the trivial system. The channel corresponding to last two steps ($k$ and $k-1$) is thus of the form
	\begin{equation}
		\label{eqt:proof:MSOPT:atomicity-bis}
		\myQcircuit{
			&\sSequenceIndex{A}{k-2}{n}\qw&\multigate{2}{{\permutation_{5}^{(\outcome{\boldsymbol{x}'})}}}&\qw&\sSequence{F}{n}\qw&\measureD{\observationUniqueDeterministic\left(\outcome{\boldsymbol{x}'}\right)}&
			\\
			&\pureghost{}&\pureghost{{\permutation_{5}^{(\outcome{\boldsymbol{x}'})}}}&\qw&\sSequencePrime{C}{n}\qw&\measureD{\observationUniqueDeterministic\left(\outcome{\boldsymbol{x}}\right)}
			\\	
			&\pureghost{}&\pureghost{{\permutation_{5}^{(\outcome{\boldsymbol{x}'})}}}&\qw&\s{A}\qw&\qw&\qw&\qw&
		}.
	\end{equation}
    where the dependence of $\boldsymbol{x}'$ disappears in the deterministic effect, but again we keep track of $\boldsymbol{x}'$ since the single events in the observation-test generally depend on it.
    By iteration, the condition for the coarse-graining in~\eqref{eqt:proof:MSOPT:atomicity:4} reduces to:
	\begin{equation}
		\label{eqt:final-deterministic}
		\myQcircuit{
           &\s{A}\qw&\multigate{1}{\permutation_{8}}&\s{L}\qw&\measureD{\observationUniqueDeterministic{(\boldsymbol{x})}}&
			\\
			&\pureghost{}&\pureghost{\permutation_{8}}&\s{A}\qw&\qw&\qw&
		}= \quad \!\! \myQcircuit{
				&\s{A}\qw&\qw&	
		},
	\end{equation}
	for a suitable permutation $\permutation_{8}$ and system $\system{L}$ (notice that by \autoref{thm:OPT:minimal:transf:stabilization} the dependence on $n$ is dropped). The last condition requires the system $\system{L}$ to be the trivial system.
	\\
	
	\noindent\Iref{it:transformations}: Here we check the form of the transformations that appear in the instrument of the sequence \eqref{eqt:proof:MSOPT:atomicity:3}, and verify that all of them must be proportional to the identity if \eqref{eqt:proof:MSOPT:atomicity:4} is satisfied. The idea is to repeat the steps above for transformations instead of their coarse-graining. Since, we are now interested in the single transformations that form the instruments of sequence \eqref{eqt:proof:MSOPT:atomicity:3} we consider a fixed outcome $\left( \outcome{\boldsymbol{x}'}, \outcome{x}_{k-1}, \outcome{x}_{k} \right) \in \cartesianProduct{\boldsymbol{\outcomeSpace{X}}'}{\cartesianProduct{\outcomeSpace{X}_{k-1}}{\outcomeSpace{X}_{k}}}$.
	\begin{widetext}
		\begin{equation}
			\label{eqt:proof:MSOPT:atomicity-transf}
			\begin{aligned}
				\lim_{m_{k-1} \to \infty} \; \lim_{m_{k} \to \infty} \quad
				&\myQcircuit{
					&\pureghost{}&\pureghost{}&\pureghost{}&\multiprepareC{1}{{\preparationEventNoDown{\Phi}_{\outcome{x_{k-1}}}^{\left(\outcome{\boldsymbol{x}'}\right)}}^{m_{k-1},n}}&\qw&\qw&\qw&\qw&\sSequenceConditioned{H'}{\boldsymbol{x}'}{m_{k-1},n}\qw&\qw&\qw&\qw&\qw&\multimeasureD{1}{{\observationEventNoDown{B}_{\outcome{x_{k-1}}}^{\left(\outcome{\boldsymbol{x}'}\right)}}^{m_{k-1},n}}&
					\\
					&\pureghost{}&\pureghost{}&\pureghost{}&\pureghost{{\preparationEventNoDown{\Phi}_{\outcome{x_{k-1}}}^{\left(\outcome{\boldsymbol{x}'}\right)}}^{m_{k-1},n}}&\qw&\qw&\sSequenceConditioned{G}{\boldsymbol{x}'}{m_{k-1},n}\qw&\qw&\braidingSym&\qw&\sSequenceConditioned{F}{\boldsymbol{x}'}{m_{k-1},n}\qw&\qw&\qw&\ghost{{\observationEventNoDown{B}_{\outcome{x_{k-1}}}^{\left(\outcome{\boldsymbol{x}'}\right)}}^{m_{k-1},n}}&
					\\
					&\s{A}\qw&\gate{{\event{T}{\boldsymbol{x}'}'}^{n}}&\sSequenceIndex{A}{k-2}{n}\qw&\multigate{1}{{\permutation_{5}^{(\outcome{\boldsymbol{x}'})}}^{m_{k-1},n}}&\qw&\qw&\sSequenceConditioned{F}{\boldsymbol{x}'}{m_{k-1},n}\qw&\qw&\braidingGhost&\qw&\sSequenceConditioned{G}{\boldsymbol{x}'}{m_{k-1},n}\qw&\qw&\qw&\multigate{1}{{\permutation_{6}^{(\outcome{\boldsymbol{x}'})}}^{m_{k-1},n}}&\sSequenceIndex{A}{k}{n-1}\qw&\qw&
					\\
					&\pureghost{}&\pureghost{}&\pureghost{}&\pureghost{{\permutation_{5}^{(\outcome{\boldsymbol{x}'})}}^{m_{k-1},n}}&\qw&\qw&\qw&\qw&\sSequenceConditioned{H}{\boldsymbol{x}'}{m_{k-1},n}\qw&\qw&\qw&\qw&\qw&\ghost{{\permutation_{6}^{(\outcome{\boldsymbol{x}'})}}^{m_{k-1},n}}&\pureghost{}&
				} \cdots \\[10pt] & \cdots \quad
				\myQcircuit{
					&\pureghost{}&\multiprepareC{1}{{\preparationEventNoDown{\Psi}_{\outcome{x_{k}}}^{\left(\outcome{\boldsymbol{x}}\right)}}^{m_{k},n}}&\qw&\qw&\qw&\qw&\sSequenceConditioned{E'}{\boldsymbol{x}}{m_{k},n}\qw&\qw&\qw&\qw&\qw&\multimeasureD{1}{{\observationEventNoDown{A}_{\outcome{x_{k}}}^{\left(\outcome{\boldsymbol{x}}\right)}}^{m_{k},n}}&
					\\
					&\pureghost{}&\pureghost{{\preparationEventNoDown{\Psi}_{\outcome{x_{k}}}^{\left(\outcome{\boldsymbol{x}}\right)}}^{m_{k},n}}&\qw&\qw&\sSequenceConditioned{D}{\boldsymbol{x}}{m_{k},n}\qw&\qw&\braidingSym&\qw&\sSequenceConditioned{C}{\boldsymbol{x}}{m_{k},n}\qw&\qw&\qw&\ghost{{\observationEventNoDown{A}_{\outcome{x_{k}}}^{\left(\outcome{\boldsymbol{x}}\right)}}^{m_{k},n}}&
					\\
					&\sSequenceIndex{A}{k-1}{n}\qw&\multigate{1}{{\permutation_{1}^{(\outcome{\boldsymbol{x}})}}^{m_{k},n}}&\qw&\qw&\sSequenceConditioned{C}{\boldsymbol{x}}{m_{k},n}\qw&\qw&\braidingGhost&\qw&\sSequenceConditioned{D}{\boldsymbol{x}}{m_{k},n}\qw&\qw&\qw&\multigate{1}{{\permutation_{2}^{(\outcome{\boldsymbol{x}})}}^{m_{k},n}}&\s{A}\qw&\qw&
					\\
					&\pureghost{}&\pureghost{{\permutation_{1}^{(\outcome{\boldsymbol{x}})}}^{m_{k},n}}&\qw&\qw&\qw&\qw&\sSequenceConditioned{E}{\boldsymbol{x}}{m_{k},n}\qw&\qw&\qw&\qw&\qw&\ghost{{\permutation_{2}^{(\outcome{\boldsymbol{x}})}}^{m_{k},n}}&\pureghost{}&
				},
			\end{aligned}
		\end{equation}	
	\end{widetext}
	We highlighted the outcomes of the last two transformations, the $k$-th and the $(k-1)$-th ones, in the sequential composition which we write explicitly as limits with respect to the variables $m_{k-1}$ and $m_{k}$, respectively. This is due to the fact that the transformations within an instrument of a minimal theory are generally limits of sequences of transformations of the form \eqref{eqt:OPT:minimal:symmetric:transf:generic}. Following the same argument that led to \eqref{eqt:proof:MSOPT:atomicity-bis} in the case of transformations (\aref{app:proof:MSOPT:atomicity}), one proves that \eqref{eqt:proof:MSOPT:atomicity-transf} reduces to
	\begin{widetext}
		\begin{equation}
			\label{eqt:proof:MSOPT:atomicity:16-bis}
			\lim_{m_{k-1} \to \infty} \; \lim_{m_{k} \to \infty} \;
			\myQcircuit{
				&\s{A}\qw&\gate{{\event{T}{\boldsymbol{x}'}'}^{n}}&\sSequenceIndex{A}{k-2}{n}\qw&\multigate{2}{\permutation_{7}^{n}}&\sSequence{F}{n}\qw&\measureD{{\observationEventNoDown{b}_{\outcome{x_{k-1}}}^{\left(\outcome{\boldsymbol{x}'}\right)}}^{m_{k-1},n}}&
				\\
				&\pureghost{}&\pureghost{}&\pureghost{}&\pureghost{\permutation_{7}^{n}}&\sSequence{C'}{n}\qw&\measureD{{{\observationEventNoDown{a}_{\outcome{x_{k-1},x_{k}}}}^{\left(\outcome{\boldsymbol{x}}\right)}}^{m_{k-1},m_{k},n}}&
				\\
				&\pureghost{}&\pureghost{}&\pureghost{}&\pureghost{\permutation_{7}^{n}}&\s{A}\qw&\qw&\qw&
			},
		\end{equation}
	\end{widetext}
	where suitable observation-tests appear in place of the deterministic effect of \eqref{eqt:proof:MSOPT:atomicity-bis}. Further iteration leads, in analogy with~\eqref{eqt:final-deterministic}, to the following expression for $\lim_{n \to \infty}\mathscr{G}_{\outcomeSpace{X}_{k}}^{(\boldsymbol{x})n}\circ\eventSequence{T}{\boldsymbol{x}}{n}$
	\begin{align}
		\label{eqt:proof:MSOPT:atomicity:17-bis}
	&	\lim_{m_{1} \to \infty} \; \lim_{\overline{\boldsymbol m} \to \infty} \;
		\myQcircuit{
			&\pureghost{}&\multiprepareC{1}{\preparationEvent{\Gamma}{x_{1}}^{m_{1},n}}&\qw&\sSequencePrime{N}{m_{1},n}\qw&\qw&\multimeasureD{1}{\observationEvent{C}{x_{1}}^{m_{1},n}}&
			\\
			&\pureghost{}&\pureghost{\preparationEvent{\Gamma}{x_{1}}^{m_{1},n}}&\sSequence{P}{n}\qw&\braidingSym&\s{L}\qw&\ghost{\observationEvent{C}{x_{1}}^{m_{1},n}}&
			\\
			&\s{A}\qw&\multigate{1}{\permutation_{8}}&\s{L}\qw&\braidingGhost&\sSequence{P}{n}\qw&\measureD{{\observationEvent{d}{\overline{\boldsymbol x}}^{\left( \outcome{\boldsymbol{x}} \right)}}^{\overline{\boldsymbol m},n}}&
			\\
			&\pureghost{}&\pureghost{\permutation_{8}}&\qw&\s{A}\qw&\qw&\qw&\qw&
      	}\,,
	\end{align}
	where $\overline{\boldsymbol m} = (m_{2}, \ldots, m_{k})$, $\outcome{\overline{\boldsymbol x}} = \left( \outcome{x_{2}, \ldots, x_{k}} \right)$. Reminding the triviality of system $\system{L}$, we then obtain that any transformation in the instrument \eqref{eqt:proof:MSOPT:atomicity:3} must be of the form
	\begin{equation}
		\label{eqt:proof:MSOPT:atomicity:18-bis}
		\lim_{m_{1} \to \infty} \; \lim_{\overline{\boldsymbol m} \to \infty} \; 
			\myQcircuit{			&\multiprepareC{1}{\preparationEvent{\Gamma}{x_{1}}^{m_{1},n}}&\qw&\sSequencePrime{N}{m_{1},n}\qw&\qw&\measureD{\observationEvent{C}{x_{1}}^{m_{1},n}}&
			\\
			&\pureghost{\preparationEvent{\Gamma}{x_{1}}^{m_{1},n}}&\qw&\sSequence{P}{n}\qw&\qw&\measureD{{\observationEvent{d}{\overline{\boldsymbol x}}^{\left( \outcome{\boldsymbol{x}} \right)}}^{\overline{\boldsymbol m},n}}&
			\\
			&\qw&\qw&\s{A}\qw&\qw&\qw&\qw&
		},
	\end{equation}
	namely a probability multiplied by the identity map for system $\system{A}$.
	
	Summarizing, we started from a generic Cauchy sequence of instruments of 
	an \ac{MSOPT} obtained from elementary processes and showed that if it is coarse-grained to the identity then it is of the form
	\begin{equation*}
		\left\{ \lim_{m \to \infty} \probabilityEvent{p}{x}^{m,n} \identityTest{A}
		\right\}_{\outcomeIncluded{x}{X}} \quad \forall n \in \mathbb{N},
	\end{equation*}
	where $m = m_{1}, \ldots, m_{k}$, $\outcome{x} = \left( \outcome{x_{1}, \ldots, x_{k}} \right)$, and $\outcomeSpace{X} = \cartesianProduct{\outcomeSpace{X}_{1}}{\cartesianProduct{\cdots}{\outcomeSpace{X}_{k}}}$. Since the limit of a Cauchy sequence of probability distributions is still a probability distribution, the limit instrument is equal to  
	\begin{equation*}
		\left\{ \probabilityEventNoDown{p}'_{\outcome{x}} \identityTest{A}
		\right\}_{\outcomeIncluded{x}{X}},
	\end{equation*}
	which concludes the proof.
\end{proof}

\subsection{Properties}

As in the case of minimal theories, the atomicity of the identity leads to a series of properties of \acp{MSOPT}, which persist even after the completion by strong causality.

First, \acp{MSOPT} that satisfy the hypothesis of \autoref{thm:MSOPT:symmetric:idAtomicity} do not admit reversible transformations different from permutations:

\begin{corollary}\label{corol:msopt:revTransfPerm}
	Every \ac{MSOPT} satisfying the hypothesis of \autoref{thm:MSOPT:symmetric:idAtomicity} does not admit of reversible transformations different from permutations.
\end{corollary}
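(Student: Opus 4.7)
The plan is to mirror the strategy of \autoref{lem:mopt:atomRev}, replacing the use of \autoref{thm:OPT:minimal:symmetric:causal:idAtomicity} with the stronger \autoref{thm:MSOPT:symmetric:idAtomicity}. First I would observe that any reversible transformation $\eventNoDown{R}\in\RevTransf{A}{B}$ is deterministic, and since by \autoref{thm:MSOPT:symmetric:idAtomicity} the identity is atomic for every system, \autoref{lem:opt:idAtomic:revAtomic} implies that $\eventNoDown{R}$ itself is atomic.

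The core step is then to show that no reversible transformation is genuinely introduced by Cauchy completion. Since $\TransfN{A}{B}$ is closed with respect to the operational norm (\autoref{lem:OPT:causal:detTransf:convergence}), one can write $\eventNoDown{R}=\lim_{n\to\infty}\eventSequenceNoDown{T}{n}$, with $\left\{\eventSequenceNoDown{T}{n}\right\}_{n\in\mathbb{N}}\subset\TransfN{A}{B}$ a Cauchy sequence of deterministic transformations belonging to the strongly causal theory \emph{before} the final Cauchy completion (i.e.~obtained by composing and conditioning the elementary tests of the underlying \ac{MOPT}). Define the auxiliary sequence $\eventSequenceNoDown{G}{n}\mathDef \sequentialComp{\inverse{\eventNoDown{R}}}{\eventSequenceNoDown{T}{n}}$; monotonicity (\autoref{lem:opt:norm:operational:monotonicity}) ensures that it is Cauchy and converges to $\identityTest{A}$. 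Each $\eventSequenceNoDown{G}{n}$ is a single deterministic map, so its trivial one-outcome instrument $\{\eventSequenceNoDown{G}{n}\}$ coarse-grains to $\eventSequenceNoDown{G}{n}$ itself. Applying the full strength of the proof of \autoref{thm:MSOPT:symmetric:idAtomicity}—which characterises any Cauchy sequence of instruments whose coarse-graining tends to the identity as consisting of transformations proportional to $\identityTest{A}$—forces $\eventSequenceNoDown{G}{n}\propto \identityTest{A}$ for every $n$. Determinism then fixes the proportionality constant to $1$, so $\eventSequenceNoDown{G}{n}=\identityTest{A}$, i.e.~$\eventSequenceNoDown{T}{n}=\eventNoDown{R}$ for every $n$. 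Hence $\eventNoDown{R}$ already belongs to the set of transformations obtainable through a finite number of conditioning steps from the elementary \ac{MOPT} tests.

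The remaining task—and the main obstacle—is to show that within this pre-completion set the only reversible transformations are permutations. The pre-completion of an \ac{MSOPT} strictly enlarges the jellyfish form of \autoref{cor:OPT:minimal:symmetric:generalTransf} via conditional instruments of the type~\eqref{eqt:proof:MSOPT:atomicity:1}, so one cannot simply invoke \autoref{lem:mopt:atomRev}. To close this gap, I would apply the same inductive analysis used in the proof of \autoref{thm:MSOPT:symmetric:idAtomicity} to the identity decomposition $\sequentialComp{\inverse{\eventNoDown{R}}}{\eventNoDown{R}}=\identityTest{A}$: reading $\eventNoDown{R}$ as the cascade of conditional jellyfish transformations that compose it and processing the conditioning steps from the innermost outward, the same argument that collapses the conditional systems $\systemConditioned{D}{\boldsymbol{x}}$, $\systemConditioned{G'}{\boldsymbol{x}'}$, $\dots$ to the trivial system here forces every conditioning gate in the decomposition of $\eventNoDown{R}$ to be independent of the conditioning outcomes, up to permutations on the wires. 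Consequently the conditional structure of $\eventNoDown{R}$ collapses to a non-conditional composition of elementary \ac{MOPT} transformations, and by \autoref{lem:mopt:atomRev} applied within the underlying \ac{MOPT} the only such reversibles are permutations. This yields $\eventNoDown{R}\in\permutationCollectionAB{A}{B}$ and completes the proof.
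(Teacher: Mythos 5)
Your proposal follows essentially the same route as the paper: the paper's proof is a one-line remark that the argument of \autoref{lem:mopt:atomRev} carries over, with the atomicity argument of \autoref{thm:MSOPT:symmetric:idAtomicity} replacing that of \autoref{thm:OPT:minimal:symmetric:causal:idAtomicity} to force the sequence $\eventSequenceNoDown{G}{n}=\sequentialComp{\inverse{\eventNoDown{R}}}{\eventSequenceNoDown{T}{n}}$ converging to the identity to be constant, hence $\eventNoDown{R}=\eventSequenceNoDown{T}{n}$ already before completion. Your additional final step---re-running the inductive collapse of the conditional systems to conclude that pre-completion reversible transformations are permutations---only makes explicit what the paper leaves implicit, so the two arguments coincide.
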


The proof follows that of \autoref{lem:mopt:atomRev}, simply noticing that also in this case the argument used in the proof of \autoref{thm:MSOPT:symmetric:idAtomicity} guarantees that the only sequence of deterministic transformations that can converge to the identity is the constant one.

Continuing to analyse the properties satisfied by the \acp{MSOPT}, from \autoref{thm:MSOPT:symmetric:idAtomicity}  
a property that follows, being equivalent to the atomicity of the identity, is \ac{NIWD}~\cite{darianoInformationDisturbanceOperational2020}:
\begin{corollary}
	\label{corol:MSOPT:niwd}
	Every \ac{MSOPT} satisfying the hypothesis of \autoref{thm:MSOPT:symmetric:idAtomicity} has \ac{NIWD}.
\end{corollary}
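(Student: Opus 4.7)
The plan is to invoke directly the characterisation of \ac{NIWD} in terms of atomicity of the identity, together with the main result of the previous section. First I would recall that, as stated in the paragraph following \autoref{thm:opt:irrev:idAtomicity}, the class of causal \acp{OPT} in which the identity transformation is atomic for every system is exactly the class of causal \acp{OPT} satisfying \ac{NIWD}: this equivalence is the content of the result proved in Ref.~\cite{darianoInformationDisturbanceOperational2020}.

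Next, I would observe that an \ac{MSOPT} is by construction a causal \ac{OPT} (causality is implied by strong causality, and the conditional-closure construction preserves the compositional structure, cf.~\autoref{thm:OPT:stronglyClosure}), so the above equivalence is applicable to the theory at hand. The hypotheses of \autoref{thm:MSOPT:symmetric:idAtomicity} are exactly those assumed in the statement, namely that the theory is a symmetric \ac{MSOPT} admitting a spanning set of entangled states for every composite system; therefore \autoref{thm:MSOPT:symmetric:idAtomicity} applies and yields that $\identityTest{A}$ is atomic for every system $\system{A} \in \Sys{\OPTMath}$.

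Combining the two ingredients, the conclusion is immediate: atomicity of $\identityTest{A}$ for every $\system{A}$, together with the equivalence recalled above, implies that the theory satisfies \ac{NIWD}. No further computation is required, so there is no real technical obstacle once \autoref{thm:MSOPT:symmetric:idAtomicity} is in hand; the whole content of the corollary is in packaging the previous theorem together with the known equivalence between atomicity of the identity and \ac{NIWD}.
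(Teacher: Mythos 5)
Your proposal is correct and follows exactly the route the paper takes: it combines \autoref{thm:MSOPT:symmetric:idAtomicity} (atomicity of the identity for every system) with the equivalence between that property and \ac{NIWD} established in Ref.~\cite{darianoInformationDisturbanceOperational2020}. The additional remark that an \ac{MSOPT} is causal, so the equivalence applies, is a harmless and correct piece of bookkeeping that the paper leaves implicit.
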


One also has irreversibility, as immediately follows from~\autoref{thm:opt:irrev:idAtomicity}.
\begin{corollary}
	\label{corol:MSOPT:irrev}
	Every non-trivial \ac{MSOPT} satisfying the hypothesis of \autoref{thm:MSOPT:symmetric:idAtomicity} has irreversibility.
\end{corollary}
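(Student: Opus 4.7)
The plan is to derive the result as an immediate consequence of two previously established facts: the atomicity of the identity in symmetric \acp{MSOPT} with a spanning set of entangled states (\autoref{thm:MSOPT:symmetric:idAtomicity}) and the implication from atomicity of the identity to irreversibility (\autoref{thm:opt:irrev:idAtomicity}). Concretely, I would first invoke \autoref{thm:MSOPT:symmetric:idAtomicity} to conclude that $\identityTest{A}$ is atomic for every system $\system{A} \in \Sys{\OPTMath}$. Then, since the theory is non-trivial by assumption, there exists at least one system $\system{A}$ with $\sysDimension{A} \geq 2$. Feeding this system into \autoref{thm:opt:irrev:idAtomicity} gives the existence of an intrinsically irreversible instrument $\eventTest{T}{x}{X} \in \Instr{A}{B}$ for some $\system{B} \in \Sys{\OPTMath}$, which by definition means that the theory has irreversibility of measurement disturbance.

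No subtle step is expected: the two referenced theorems already do all the heavy lifting. The only thing worth checking explicitly is that ``non-trivial'' is indeed the precise hypothesis needed to invoke \autoref{thm:opt:irrev:idAtomicity}, namely the existence of a system with $\sysDimension{}\geq 2$, which matches the definition of non-triviality adopted in the footnote after \autoref{corol:MOPT:irrev}. The main ``obstacle'' is purely bookkeeping: making sure that the hypotheses of \autoref{thm:MSOPT:symmetric:idAtomicity} (symmetry and a spanning set of entangled states for every composite system) are preserved through the reduction, which they are by assumption. Thus the argument can be presented in a couple of lines.
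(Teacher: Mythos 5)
Your proposal is correct and follows exactly the paper's route: the paper derives this corollary in one line as an immediate consequence of \autoref{thm:MSOPT:symmetric:idAtomicity} (atomicity of the identity) combined with \autoref{thm:opt:irrev:idAtomicity}, with non-triviality supplying the required system of dimension at least~2. Your spelled-out version adds nothing beyond the bookkeeping the paper leaves implicit, so there is nothing to correct.
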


The last feature, which derives directly from \autoref{lem:opt:broadcasting:idAtomic}, is no-broadcasting:
\begin{corollary}
	\label{corol:MSOPT:broadcasting}
	Every \ac{MSOPT} satisfying the hypothesis of \autoref{thm:MSOPT:symmetric:idAtomicity} does not admit of a broadcasting channel.
\end{corollary}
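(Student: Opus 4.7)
The plan is to combine two earlier ingredients in a direct way. Theorem \ref{thm:MSOPT:symmetric:idAtomicity} asserts that, under the stated hypothesis, the identity transformation $\identityTest{A}$ is atomic for every system $\system{A} \in \Sys{\OPTMath}$. On the other hand, \autoref{lem:opt:broadcasting:idAtomic} shows that in any causal \ac{OPT}, atomicity of $\identityTest{A}$ for a system with $\sysDimension{A} \geq 2$ rules out the existence of a broadcasting channel $\eventNoDown{B} \in \TransfN{A}{AA}$. So the strategy is simply to check that both hypotheses are met and chain the two results.

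First I would verify that every \ac{MSOPT} is in particular causal, so that \autoref{lem:opt:broadcasting:idAtomic} applies. This is immediate from the construction: an \ac{MSOPT} is the strong-causal completion (via \autoref{procedure:OPT:stronglyClosure}) of a causal \ac{MOPT}, and strong causality implies causality; moreover Cauchy completion of the deterministic transformations stays within $\TransfN{}{}$ by \autoref{lem:OPT:causal:detTransf:convergence}, so uniqueness of the deterministic effect on each system is preserved. Then I would invoke \autoref{thm:MSOPT:symmetric:idAtomicity} to obtain atomicity of $\identityTest{A}$ on every system of the theory, and specialise to an arbitrary non-trivial system $\system{A}$ with $\sysDimension{A} \geq 2$. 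Applying \autoref{lem:opt:broadcasting:idAtomic} to such $\system{A}$ gives that no broadcasting channel exists on $\system{A}$, which already suffices for the theory to satisfy no-broadcasting in the sense given after \eqref{eqt:opt:transf:broadcast:classical}.

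There is no real obstacle here: all the substantive work was carried out in proving \autoref{thm:MSOPT:symmetric:idAtomicity} and \autoref{lem:opt:broadcasting:idAtomic}. The only minor care needed is in the trivial case $\sysDimension{A}=1$, for which broadcasting is vacuously available but does not affect the conclusion, since the statement concerns the existence of any system that fails to admit broadcasting, and non-triviality of the theory furnishes at least one such \system{A}.
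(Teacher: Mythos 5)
Your proposal is correct and matches the paper's own argument: the corollary is obtained exactly by chaining \autoref{thm:MSOPT:symmetric:idAtomicity} (atomicity of the identity on every system) with \autoref{lem:opt:broadcasting:idAtomic} (atomicity of the identity forbids a broadcasting channel on any system of dimension at least 2). The extra checks you mention (causality of the completed theory and the trivial one-dimensional case) are sensible but routine, and the paper treats the deduction as immediate.
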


The only property of minimal theories that has not been proved also in the presence of conditional tests is the no-programming theorem (\autoref{def:opt:programming}). This result would require a complete characterisation of the deterministic transformations which is still lacking.

Finally, we would like to remark that if, on the contrary, the hypothesis of \autoref{thm:MSOPT:symmetric:idAtomicity} are not satisfied, one can have \acp{MSOPT} with none of the above properties. The straightforward example is the \ac{MSOPT} completion of \ac{MCT}~\cite{erbaMeasurementIncompatibilityStrictly2024}, which is the standard classical theory. This latter result is discussed more in detail at the end of \autoref{sec:axioms-for-ct}.

\section{Classical toy-models}

\label{sec:msbct}
In this section we introduce \acf{MSBCT}, a toy-theory that is locally classical and satisfies strong causality, but still has irreversibility and does not admit a broadcasting channel. The theory is the minimal version of \ac{BCT}~\cite{darianoClassicalTheoriesEntanglement2020} completed under strong causality. 

We formalise the theory through the following postulates.  

\subsection{Postulates of the theory}

The first postulate is on the nature of the systems of the theory. We call \textdef{classical} any theory such that the set of states of every system has a simplicial basis, all of whose vertices of the simplex belong to a jointly perfectly discriminable set~\cite{darianoInformationDisturbanceOperational2020}. If all the vertices are jointly perfectly discriminable, this also implies that the theory is convex, every state is proportional to a deterministic one, and, consequently, that the theory is causal~\cite{darianoQuantumTheoryFirst2016,darianoClassicalTheoriesEntanglement2020}. In the latter case the set of states $\St A$ is a simplex whose vertices are the states in the perfectly discriminable set and the null state $\nullState A$, and $\StN A$ is a simplex itself whose vertices are all those of $\St A$ except $\nullState A$. 
\begin{postulate}[Classicality, convexity and system types]
	\label{pos:MSBCT:1}
	The theory \theory\ is classical and the vertices of state sets are jointly perfectly discriminable. In addition to the trivial system, for every integer $\sysDimension{} \geq 1$, $\Sys{\theory}$ contains a type of system of dimension $\sysDimension{}$.
\end{postulate}
As observed above, the theory is causal as a consequence of the first postulate. In the following, $\observationUniqueDeterministic_\system A$ will denote the unique deterministic effect of system $\system A$, that amounts to 1 on the vertices of the simplex (except for the vertex given by the null state $\nullState A$). The second postulate fixes a parallel composition rule that differs from the tensor product.

\begin{postulate}[Parallel composition of systems and states]
	\label{pos:MSBCT:2}
	For every pair of systems \system{A}, $\system{B} \in \Sys{\theory}$, the dimension of the composite system \system{AB} is given by the following rule:
	\begin{equation*}
		\sysDimension{AB} = \sysDimension{BA} = 	\begin{cases}
			2 \sysDimension{A} \sysDimension{B}, \quad &\textit{if} \; \; \system{A}, \system{B} \neq \trivialSystem,\\
			\sysDimension{A}, & \textit{if} \; \; \system{B} = \trivialSystem.
		\end{cases}
	\end{equation*}
	Let $\trivialSystem \neq \system{A}, \system{B}, \system{C} \in \Sys{\theory}$. Denoting the pure states of the composite system $\system{AB}$ as $\PurSt{AB} = \left\{ \MSBCTPureState{i}{j}{-}, \MSBCTPureState{i}{j}{+} \mid 1 \leq i \leq \sysDimension{A}, 1 \leq j \leq \sysDimension{B} \right\}$, for all state $\preparationEventNoDown{i} \in \PurSt{\system{A}}$, $\observationEventNoDown{j} \in \PurSt{\system{B}}$ the following parallel composition rule holds:
	\begin{equation}
		\label{eqt:MSBCT:postulate2}
		\myQcircuit{
			&\prepareC{i}&\s{A}\qw&\qw&
			\\
			&\prepareC{j}&\s{B}\qw&\qw&
		} = \frac{1}{2} \sum_{s = +,-}
		\myQcircuit{
			&\multiprepareC{1}{\MSBCTPureState{i}{j}{s}}&\s{A}\qw&\qw&
			\\
			&\pureghost{\MSBCTPureState{i}{j}{s}}&\s{B}\qw&\qw&
		}.
	\end{equation}
	The relation between states of the same composite system defined 
	via compositions in different orders is given by 
	\begin{equation}
		\MSBCTPureState{\MSBCTPureState{i}{j}{s_1}}{l}{s_2} = \MSBCTPureState{i}{\MSBCTPureState{j}{l}{s_1 s_2}}{s_1},
	\end{equation}
	for all indices $i, j, k$ and signs $s_1, s_2$.
\end{postulate}

The former postulate also implies that \ac{MSBCT} is a bilocal theory~\cite{darianoClassicalTheoriesEntanglement2020}. In words, this means that it is not possible to discriminate two different states of a composite system by local measurements (and classical communication), as it is instead possible to do in the case of \ac{CT} or \ac{QT} which have local discriminability~\cite{10.1063/1.2219356,hardyLimitedHolismRealVectorSpace2012,darianoQuantumTheoryFirst2016,darianoClassicalTheoriesEntanglement2020}. In the case of bilocal theories, one has generally to perform measurements also on any possible couple of subsystems. For example, considering the state $\MSBCTPureState{i}{j}{s} \in \St{AB}$, with $s = \pm$, by simply measuring the local state of the system $\preparationEventNoDown{i} \in \St{A}$ and $\preparationEventNoDown{j} \in \St{B}$, one would not obtain any information about the sign $s$. The latter information could be recovered only by performing a measurement on the composite system \system{AB}.

\begin{postulate}[Swap]
	\label{pos:MSBCT:3}
	The theory is symmetric. Considering $\trivialSystem \neq \system{A}, \system{B}, \system{E} \in \Sys{\theory}$, the swap $\BraidingS_{\system A,\system B}$ is defined as follows
	\begin{equation}
		\myQcircuitSmall{
			&\multiprepareC{2}{\MSBCTPureState{\MSBCTPureState{i}{j}{s_1}}{k}{s_2}}&\s{A}\qw&\braidingSym&\s{B}\qw&\qw&
			\\
			&\pureghost{\MSBCTPureState{\MSBCTPureState{i}{j}{s_1}}{k}{s_2}}&\s{B}\qw&\braidingGhost&\s{A}\qw&\qw&
			\\
			&\pureghost{\MSBCTPureState{\MSBCTPureState{i}{j}{s_1}}{k}{s_2}}&\qw&\s{E}\qw&\qw&\qw&
		} = 
		\myQcircuitSmall{
			&\multiprepareC{2}{\MSBCTPureState{\MSBCTPureState{j}{i}{s_1}}{k}{s_1 s_2}}&\s{B}\qw&\qw&
			\\
			&\pureghost{\MSBCTPureState{\MSBCTPureState{j}{i}{s_1}}{k}{s_1 s_2}}&\s{A}\qw&\qw&
			\\
			&\pureghost{\MSBCTPureState{\MSBCTPureState{j}{i}{s_1}}{k}{s_1 s_2}}&\s{E}\qw&\qw&
		}.
	\end{equation}
\end{postulate}

\begin{postulate}[Preparation- and observation-instruments]
	\label{pos:MSBCT:4}
	Given any system $\system{A} \in \Sys{\theory}$, a collection $\preparationEventTest{\rho}{x}{X} \subset \St{A}$ is a preparation instrument if and only if $\sum_{\outcomeIncluded{x}{X}} \rbraketSystem{\observationUniqueDeterministic}{\preparationEvent{\rho}{i}}{A} = 1$. The observation-instruments of every system $\system{A} \in \Sys{\theory}$ are all the collections $\observationEventTest{a}{y}{Y} \subset \EffR{A}$ of generalised effects such that $\left\{ \parallelComp{\observationEvent{a}{y}}{\identityTest{E}} \right\}_{\outcomeIncluded{y}{Y}}$ maps preparation instruments of $\system{AE}$ to preparation instruments of $\system{E}$ for all $\system{E} \in \Sys{\theory}$.
\end{postulate}
The sets $\EffR{A}$, for every $\system{A} \in \Sys{\theory}$, are identified  by \autoref{pos:MSBCT:1} through the property of \emph{joint perfect discriminability}~\cite{darianoClassicalTheoriesEntanglement2020}.

We highlight that the postulates presented so far are in common with the ones of \ac{BCT}. The two theories are set apart by the spaces of instruments and transformations. 

\begin{postulate}[Minimal Strong-causality]
	\label{pos:MSBCT:5}
	The theory \theory\ is minimal and strongly causal as in \autoref{def:OPT:minimalsc}.
\end{postulate}
Notice that the postulate actually defines a class of theories rather than a single theory. \autoref{pos:MSBCT:5} prescribes only that the theory \theory\ must satisfy the uniqueness of the decomposition of systems into subsystems (\autoref{def:opt:system:uniqueDec}), without specifying which set of elementary systems is chosen. This aspect itself is not of particular importance, since the results discussed here are valid regardless of this choice. However, for completeness, we will make a couple of remarks in this regard.

The most conservative choice is that there exists an elementary system of every dimension. This models the possibility of the existence of high-dimensional non-composite systems. For example, in the case of \ac{MSBCT}, there is no reason to exclude that a system of dimension 8 could be elementary, and not the composition of two systems of dimension 2.

However, it is also possible to make a more minimalist choice in which high-dimensional systems are always seen as compositions of smaller-dimensional ones. In particular, in the case of \ac{MSBCT}, this would involve defining that the elementary systems of the theory are those of odd dimension, given that starting from them, through the compositional rule described in \autoref{pos:MSBCT:2}, it becomes possible to reconstruct systems of any dimension. This can be proven by observing that a system of dimension 1 that is not operationally equivalent to the trivial system enables the reconstruction of all even-dimensional systems. Furthermore, the fact that this is the minimal choice can be demonstrated by utilizing the fact that the only way to obtain an odd number as a product of other numbers is for all of those numbers to be odd.

In the following, we write \ac{MSBCT} as if there were only one theory because the properties that we illustrate actually hold in any of the theories obtained via possible choices of elementary systems.

\subsection{Properties}

Having introduced the theory, we move on to illustrate its main properties.
First of all, it is self-evident that \ac{MSBCT} is \emph{locally classical}. In other words, if we only consider local operations and measurements on composite systems, the theory cannot be discriminated from classical theory. For example, local states admit a broadcasting channel. More precisely, forgetting for a moment \eqref{def:opt:bradcasting}, if the broadcasting channel was defined disregarding the need of copying also arbitrary correlations, i.e., if a map $\eventNoDown{B}' \in \TransfN{A}{AA}$ was defined only requiring that
\begin{align*}
	\myQcircuit{
		&\prepareC{\preparationEventNoDown{\rho}}&\s{A}\qw&\measureD{\observationEventNoDown{a}}&
	} & =
	\myQcircuit{
		&\prepareC{\preparationEventNoDown{\rho}}&\s{A}\qw&\multigate{1}{\eventNoDown{B}'}&\s{A}\qw&\measureD{\observationUniqueDeterministic}&
		\\
		&\pureghost{}&\pureghost{}&\pureghost{\eventNoDown{B}'}&\s{A}\qw&\measureD{\observationEventNoDown{a}}&
	} \\[10pt]
	& =
	\myQcircuit{
		&\prepareC{\preparationEventNoDown{\rho}}&\s{A}\qw&\multigate{1}{\eventNoDown{B}'}&\s{A}\qw&\measureD{\observationEventNoDown{a}}&
		\\
		&\pureghost{}&\pureghost{}&\pureghost{\eventNoDown{B}'}&\s{A}\qw&\measureD{\observationUniqueDeterministic}&
	}
\end{align*}
for any state $\preparationEventNoDown{\rho} \in \St{A}$ and $\observationEventNoDown{a} \in \Eff{A}$, then \eqref{eqt:opt:transf:broadcast:classical} has the required features for a broadcasting channel. However, \ac{BCT}---that has the same states as \ac{MSBCT}---features entanglement \cite{darianoClassicalityLocalDiscriminability2020}. There are then data stored in entangled states that would not be broadcast by the above channel 
$\eventNoDown{B}'$. In particular, the data that would be destroyed by the above channel are represented by the sign $s$ of pure states $\MSBCTPureState{i}{j}{s}$. Therefore, if one considers the strict definition of broadcasting channel given by \eqref{def:opt:bradcasting}, then \ac{MSBCT} does not admit such a map. This descends form \autoref{lem:opt:broadcasting:idAtomic}, since \ac{MSBCT} admits of systems with dimension greater than 1 whose associated identity transformation is atomic.

In addition to satisfying the no-broadcasting property, \ac{MSBCT} also has irreversibility of measurement disturbance and it satisfies the property of \ac{NIWD}. Due to the fact that the entangled states of \ac{MSBCT} are spanning for the generalised state spaces $\StR{AB}$ of every composite system \system{AB}, \ac{MSBCT} satisfies the hypothesis of \autoref{thm:MSOPT:symmetric:idAtomicity} and, consequently, also \autoref{corol:MSOPT:irrev} and \autoref{corol:MSOPT:niwd}.

\ac{MSBCT} provides then an evidence that strong causality and classicality do not prevent no-broadcasting and \ac{NIWD}. Also, strong causality and classicality do not forbid the existence of ``incompatible'' dynamics (irreversibility), despite the fact that all observation-tests of \ac{MSBCT} are compatible (since their set is equal to that of the observation-tests of \ac{BCT} which in turn satisfy this property~\cite{darianoClassicalityLocalDiscriminability2020}).

We observe that all the properties discussed so far also hold in the case of \ac{MBCT}, since they only descend from the nature of state spaces along with the fact that the theory is an \acp{MOPT}. However, the fact that these properties hold in a minimal classical theory \emph{without} strong causality is not particularly interesting, as they were already been shown in Ref.~\cite{erbaMeasurementIncompatibilityStrictly2024}, even without bilocal discriminability, through the construction of a minimal version of \ac{CT}, named \ac{MCT}.

Both in \ac{MBCT} and in \ac{MSBCT} no mixed state has a purification, and the operational superposition principle~\cite{chiribellaInformationalDerivationQuantum2011,darianoQuantumTheoryFirst2016} is not satisfied. The latter features are a consequence of the no-go results proven in Ref.~\cite{darianoClassicalTheoriesEntanglement2020} for general simplicial theories that are not locally tomographic.

Finally, the results concerning the characterization of reversible transformations also allow us to prove that \ac{MBCT} and \ac{MSBCT} do not satisfy the property of \textdef{essential uniqueness of purification}~\cite{darianoQuantumTheoryFirst2016} here reported for the convenience of the reader: 
\begin{property}[Uniqueness of purification]
	Let \system{A} and \system{B} be a couple of generic systems of an \ac{OPT}. If there exist $\preparationEventNoDown{\Sigma}_{1}$, $\preparationEventNoDown{\Sigma}_{2} \in \PurSt{AB}$ and $\observationUniqueDeterministic_{1}$, $\observationUniqueDeterministic_{2} \in \EffN{B}$ such that:
	\begin{equation*}
		\myQcircuit{
			&\multiprepareC{1}{\preparationEventNoDown{\Sigma}_{1}}&\qw&\s{A}\qw&\qw&\qw&
			\\
			&\pureghost{\preparationEventNoDown{\Sigma}_{1}}&\s{B}\qw&\measureD{\observationUniqueDeterministic_{1}}
		} = \myQcircuit{
			&\multiprepareC{1}{\preparationEventNoDown{\Sigma}_{2}}&\qw&\s{A}\qw&\qw&\qw&
			\\
			&\pureghost{\preparationEventNoDown{\Sigma}_{2}}&\s{B}\qw&\measureD{\observationUniqueDeterministic_{2}}
		},
	\end{equation*}
	then there exists $\eventNoDown{R} \in \RevTransf{B}{B}$ such that:
	\begin{equation*}
		\myQcircuit{
			&\multiprepareC{1}{\preparationEventNoDown{\Sigma}_{1}}&\s{A}\qw&\qw&
			\\
			&\pureghost{\preparationEventNoDown{\Sigma}_{1}}&\s{B}\qw&\qw&
		} = \myQcircuit{
			&\multiprepareC{1}{\preparationEventNoDown{\Sigma}_{2}}&\qw&\s{A}\qw&\qw&\qw&
			\\
			&\pureghost{\preparationEventNoDown{\Sigma}_{2}}&\s{B}\qw&\gate{\eventNoDown{R}}&\s{B}\qw&\qw&
		}.
	\end{equation*}
\end{property}
In a causal theory one would have $\observationUniqueDeterministic_{1} = \observationUniqueDeterministic_{2}$, corresponding to the unique deterministic effect.

The fact that neither \ac{MBCT} nor \ac{MSBCT} satisfy the property just introduced derives from the fact that no permutation in $\RevTransf{B}{B}$ can map a pure state of the form $\MSBCTPureState{i}{j}{s} \in \PurSt{AB}$ into one of the form $\MSBCTPureState{i'}{j'}{s'} \in \PurSt{AB}$ whenever, for example, $\preparationEventNoDown{i} \neq \preparationEventNoDown{i}'$. This follows directly from the definition of the swap operation in these theories (\autoref{pos:MSBCT:3}), since it cannot modify the state of a system on which it is not acting directly.

In conclusion of this section, we observe that the fact that the identity transformation is atomic for every system of \ac{MSBCT} sets it apart from \ac{BCT}. In the latter theory, in fact, there always exists a non-trivial test that decomposes the identity~\cite{darianoClassicalTheoriesEntanglement2020}. The transformation spaces of \ac{MSBCT} are \emph{strictly} contained within those of \ac{BCT}. The same result can also be derived from \autoref{corol:msopt:revTransfPerm} observing that \ac{BCT} admits reversible transformations different from permutations.

\section{On the relation between theories and their minimal versions}
\label{sec:axioms-for-ct}

In general, it may not be true that an \ac{OPT} and its minimal strongly causal version are different. In this respect, we briefly discuss the case of three theories of interest: \ac{QT}, \ac{CT}, and \ac{BCT}.

Let us look first at the quantum case and observe that in this case it is indeed possible to distinguish \ac{QT} from its minimal strongly causal version. 
First, being strongly causal~\cite{darianoQuantumTheoryFirst2016} \ac{QT} differs from its minimal version; \ac{MQT}, which for sure does not admit conditional instruments. By \ac{MQT} we mean a version of \ac{QT} where the states and effects spaces are kept unchanged, while the only admissible operations are preparations, observations, the identity and the swap.\footnote{As for \ac{MSBCT}, the most conservative approach for choosing the elementary systems in both the minimal and minimal strongly causal versions of \ac{QT} and \ac{CT} is to include an elementary system for every possible dimension. Instead, the minimal choice for the set of elementary systems in the case of \ac{QT} would consist of those whose dimensions are squares of prime numbers—i.e., systems associated with Hilbert spaces of prime dimension. Meanwhile, for \ac{CT}, the minimal choice would be to consider only systems of prime dimension. Nonetheless, as with for \ac{MSBCT}, all of the results discussed here hold regardless of the particular choice of the set of elementary systems.} The procedure to build \ac{MQT} starting from \ac{QT} is the same as the one followed to build \ac{MCT} starting from \ac{CT} in Ref.~\cite{erbaMeasurementIncompatibilityStrictly2024} or \ac{MBCT} from \ac{BCT} in \autoref{sec:msbct}. Exploiting \autoref{corol:msopt:revTransfPerm} one can then show that also the minimal strongly causal version of \ac{QT} (\ac{MSQT}) differs form \ac{QT}. Indeed \ac{QT} admits reversible transformations that are different from permutations, for example the CNOT gate.
Therefore, summarizing, in the case of \ac{QT} the following strict inclusions hold
\begin{equation*}
	\InstrOPT{\ac{MQT}} \subset \InstrOPT{\ac{MSQT}} \subset \InstrOPT{\ac{QT}},
\end{equation*}
where $\InstrOPT{\ac{OPT}}$ denotes the set of instruments of a particular \ac{OPT}.

We now move on to the case of \ac{BCT}. The fact that the identity transformation is atomic for every system of \ac{MSBCT} makes it different from \ac{BCT}. In the latter theory, indeed, there always exists a non-trivial test that decomposes the identity~\cite{darianoClassicalTheoriesEntanglement2020}. Therefore, the transformations spaces of \ac{MSBCT} are \emph{strictly} contained within those of \ac{BCT}. This result can also be derived from \autoref{corol:msopt:revTransfPerm} observing that \ac{BCT} admits reversible transformations different from permutations. Moreover, given that \ac{MSBCT} is a strongly causal version of \ac{MBCT}, it possess by definition more instruments than \ac{MBCT}. In summary one then has
\begin{equation*}
	\InstrOPT{\ac{MBCT}} \subset \InstrOPT{\ac{MSBCT}} \subset \InstrOPT{\ac{BCT}},
\end{equation*}

Last, let us consider the case of standard \ac{CT}. As anticipated, in the case of \ac{MCT}, closure by conditioning recovers the original ``full'' theory. 
The reason why a strongly causal version of \ac{MCT} coincides with \ac{CT} is that this theory would include all instruments of the form:
\begin{equation*}
	\left\{
		\myQcircuit{
			&\s{A}\qw&\measureD{\observationEventNoDown{i}}&\prepareC{\preparationEventNoDown{f(i)}}&\s{B}\qw&\qw&
		}
	\right\}_{\outcomeIncluded{i}{I}},
\end{equation*}
for an arbitrary $f:\outcomeSpace{I} \to \outcomeSpace{J}$, where the sets \outcomeSpace{I} and $\outcomeSpace{J}$ label the pure states of $\system A$ and \system{B}, respectively. Coarse-grainings of the above instruments include all reversible transformation in \ac{CT}. By the fact that every transformation of \ac{CT} admits a reversible dilation~\cite{darianoQuantumTheoryFirst2016,darianoClassicalTheoriesEntanglement2020,darianoClassicalityLocalDiscriminability2020}, the result immediately follows. On the other hand, \ac{MCT} is a strict subset of \ac{MSCT} due to the lack of conditional operations in the former. In summary, for classical theory one has
\begin{equation*}
	\InstrOPT{\ac{MCT}} \subset \InstrOPT{\ac{MSCT}} \equiv \InstrOPT{\ac{CT}},
\end{equation*}

In this light, the assumption of the existence of entangled states plays a crucial role in the proof of the atomicity of the identity in \acp{MSOPT}. Removing this assumption immediately leads to a counterexample to all the statements just made. Indeed, a minimal and strongly causal version of \ac{CT} coincides with the latter.

As a byproduct we have also proved that the three properties of simpliciality, strong causality and local discriminability are a set of independent properties.

The independence of the three properties derives from the fact that any pair of them does not imply the third one. \ac{QT} is an example of a theory that is strongly causal and locally tomographic, but non-simplicial. \ac{MCT} is a locally tomographic non-strongly causal simplicial theory. Finally, \ac{MSBCT} is a non-locally-tomographic strongly casual simplicial \ac{OPT}.

Moreover, theories can be exhibited that have exactly one out of the three properties. Indeed, \acl{FQT}~\cite{bravyiFermionicQuantumComputation2002,darianoFeynmanProblemFermionic2014} has strong causality without simpliciality and local discriminability; \ac{MBCT} has simpliciality without local discriminability and strong causality; finally, \ac{MQT} has local discriminability without simpliciality and strong causality.

\section{Discussion}
\label{sec:conclusion}
Exploring the consequences of limiting the allowed dynamics in a generic theory of information processing, we have shown that even a strongly causal classical theory can consistently support features usually regarded as distinguishing marks of non-classicality. Those are the existence of intrinsically irreversible instruments---namely operations that are not compatible with each other---, \ac{NIWD}---that is the impossibility of performing non-trivial measurements on a system without perturbing it irreversibly---, and no-broadcasting---which prevents the possibility of ``copying'' the state of a system (including its remote correlations). An explicit example of theory with all the above properties has been developed and named \ac{MSBCT}, showing that they cannot be considered \emph{per se} as signatures of non-classicality.

In order to construct \ac{MSBCT}, we introduced the class of \acp{MSOPT}. These can be seen as \acp{MOPT}~\cite{erbaMeasurementIncompatibilityStrictly2024}—i.e., \acp{OPT} where all operations are obtained through sequential and parallel composition of preparations, measurements, and permutations—extended by closure under strong causality, thereby ensuring that all conditional instruments are admissible. The minimality criterion introduces the minimal set of dynamical maps consistent with the possible states and measurements of systems, while closure by strong causality enlarges such a minimal set in order to allow for conditioning of instruments on the basis of outcomes of previous experiments. Both \acp{MOPT} and \acp{MSOPT} aim to model scenarios where operations on a physical system are severely restricted, reflecting the constraints typical of real-world laboratories. However, \acp{MSOPT} address a significant limitation of \acp{MOPT}: the inability to perform conditioned operations. Such operations are not only routinely possible in experimental settings but are also considered a necessary feature of any physical theory that seeks to accurately describe and model our reality.

The properties of \ac{MSBCT} have been shown to hold for a broader class of minimal theories. Specifically, all causal symmetric \acp{MOPT} and all symmetric \acp{MSOPT} that admit a spanning set of entangled states exhibit irreversibility of measurement disturbance, satisfy \ac{NIWD}, and do not allow broadcasting. These properties arise from the atomicity of the identity transformation, which significantly limits the set of operations that can be performed in these theories on a physical system without causing disturbance.

More specifically, \acp{MOPT} do not allow any transformations other than permutations that leave a physical system undisturbed. In the case of \acp{MSOPT}, although they may admit transformations that do not disturb the system locally (e.g., \eqref{eqt:opt:transf:broadcast:classical} in the case of \ac{MSBCT}), all non-reversible transformations irreversibly destroy correlations with the environment.

Furthermore, the atomicity of the identity transformation also implies that these theories do not admit reversible transformations other than permutations. This property allows, in some cases, the separation of an \ac{OPT} from its minimal strongly causal version. For example this separation occurs in \ac{QT} and \ac{BCT}, but not for \ac{CT}. While this result may initially seem unremarkable, it provides valuable insights into the information-processing capabilities of \ac{CT} and \ac{QT}.\\
In the classical scenario, any conceivable computation can be achieved through a sequence of measurements followed by conditioned preparations. However, this is not the case in \ac{QT}, where there are evolutions that cannot be reproduced simply by measuring and re-preparing a system. The fact that \ac{CT} can be fully recovered from its minimal version (\ac{MCT}) by simply adding all conditional instruments also highlights the crucial role of entangled states in proving the atomicity of the identity in \acp{MSOPT}.

The difference between a \ac{MOPT} and the associated \ac{MSOPT} is, instead, always guaranteed by the presence of conditional instruments. For example, instruments of the form \eqref{eqt:opt:transf:broadcast:classical} are always allowed in \acp{MSOPT} and never in \acp{MOPT}.

Getting back to the study of the properties of \ac{MSBCT}, the results presented in Ref.~\cite{darianoClassicalTheoriesEntanglement2020} guarantee that in this theory no mixed state has a purification, and that superposition of states are not admitted. Furthermore, due to the form of the reversible transformations of the theory, \ac{MSBCT} also violates essential uniqueness of purification~\cite{darianoQuantumTheoryFirst2016}. 
	
We also highlight that, even though the notion of classicality adopted by us is related to the simpliciality of the state space, this theory is also Kochen-Specker~\cite{bellProblemHiddenVariables1966,kochenProblemHiddenVariables1975,merminSimpleUnifiedForm1990,merminHiddenVariablesTwo1993} and generalised noncontextual~\cite{spekkensContextualityPreparationsTransformations2005,schmidStructureTheoremGeneralizednoncontextual2024,schmidCharacterizationNoncontextualityFramework2021,soltaniLocalNoncontextualOntological}.

Interestingly, these last few results also hold even when considering just \ac{MBCT}, i.e., the minimal version of \ac{BCT} without conditioned instruments.

\ac{MSBCT} and \ac{MBCT} also allow us to answer some open questions raised in the conclusions of our previous work~\cite{erbaMeasurementIncompatibilityStrictly2024}. First, they provide another example of theories exhibiting irreversibility despite the full compatibility of the observation-instruments. This proves that \ac{MCT} is not the only \ac{OPT} possessing these properties simultaneously. Second, \ac{MSBCT} shows that adding conditional instruments to a classical theory is not sufficient to guarantee the possibility of generalised broadcasting, highlighting the necessity of the local tomography assumption. This is also the reason why our result does not contradict the one presented in Refs.~\cite{barnumCloningBroadcastingGeneric2006,barnumGeneralizedNoBroadcastingTheorem2007}.

The construction of \ac{MSBCT}, along with the minimal versions of \ac{CT} and \ac{QT}, also further characterises the relationships between different physical properties, providing a useful contribution to the axiomatisation program of \ac{QT}. Specifically, we demonstrate that the three properties of simpliciality, strong causality, and local discriminability are pairwise independent (\autoref{sec:axioms-for-ct}).
  
As final remarks, we list some topics that in our opinion deserve further investigation.

First, we note that the class of \acp{MOPT} enables the construction of \acp{OPT} by focusing solely on the state and measurement spaces. This approach is similar to that used in prepare-and-measure scenarios commonly found in the \ac{GPT} literature. Therefore, given that \acp{MOPT} are minimal working examples of \acp{OPT} that can be extracted from \acp{GPT} defined in the prepare-and-measure scenario, they represent a possible bridge between these two areas of research.

Second, the fact that \acp{MOPT} and \acp{MSOPT} satisfy the \ac{NIWD} property suggests the potential for studying whether they could support information-theoretically secure cryptographic protocols for key distribution, similar to \ac{QT}~\cite{bennettQuantumCryptographyUsing1992,bennettExperimentalQuantumCryptography1992,barnettInformationtheoreticLimitsQuantum1993,ekertEavesdroppingQuantumcryptographicalSystems1994,kentUnconditionallySecureBit1999,kentUnconditionallySecureBit1999,loUnconditionalSecurityQuantum1999,mayersUnconditionalSecurityQuantum2001,nielsenQuantumComputationQuantum2010,bennettQuantumCryptographyPublic2014}. The strong restrictions on the set of transformations would allow for greater control over all the variables involved in a communication process, thus allowing the problem to be studied in a simplified setting.  This would provide interesting insights into the properties that generic information processing theories are required to satisfy to admit intrinsically secure cryptographic protocols. In particular, it would be interesting to determine whether an \ac{OPT} like \ac{MSBCT} could admit one, given that it is classical.  We hypothesize that this may be feasible by leveraging the fact that an eavesdropper’s intervention could be detected, as any attempt to obtain information about the system would irreversibly alter it due to the \ac{NIWD} property.

In conclusion, we observe that \acp{MSOPT} open the possibility of studying the paradigm of \emph{measurement-based computation}~\cite{gottesmanDemonstratingViabilityUniversal1999,raussendorfOneWayQuantumComputer2001,leungTwoqubitProjectiveMeasurements2002,nielsenQuantumComputationMeasurement2003,raussendorfMeasurementbasedQuantumComputation2003,leungQuantumComputationMeasurements2004,jozsaIntroductionMeasurementBased2005} in an operational setting, given that this computational paradigm consists in carrying out a series of conditional measurements on a suitable initial entangled state. Such a study would open the possibility to further characterize the computational properties of generic operational theories~\cite{barrettInformationProcessingGeneralized2007,barrettComputationalLandscapeGeneral2019}. For example, one could study under which conditions an operational theory is computationally equivalent to its minimal strongly causal version, as in the case of \ac{CT}, or which additional assumptions are needed.

We finally observe that our findings could have implications concerning Ref.~\cite{jokinenNobroadcastingCharacterizesOperational2024}, where the authors extensively study the relation between the no-broadcasting theorem and contextuality for probabilistic theories. In this scenario, the classical noncontextual theories with no-brodcasting here developed could be used as test toy-models for the implications proved in Ref.~\cite{jokinenNobroadcastingCharacterizesOperational2024} and for their possible generalization to a broader class of theories.

\section*{Acknowledgments}
D.R.~is currently visiting Perimeter Institute for Theoretical Physics and would like to thank the quantum foundations group there for their hospitality during his visit. This research was supported in part by Perimeter Institute for Theoretical Physics. Research at Perimeter Institute is supported by the Government of Canada through the Department of Innovation, Science, and Economic Development, and by the Province of Ontario through the Ministry of Colleges and Universities. M.E.~acknowledges financial support from the National Science Centre, Poland (Opus project, Categorical Foundations of the Non-Classicality of Nature, Project No.~2021/41/B/ST2/03149). A.T.~acknowledges the financial support of Elvia and Federico Faggin Foundation (Silicon Valley Community Foundation Project ID No.~2020-214365). P.P.~acknowledges financial support from European Union - Next Generation EU through the PNNR MUR Project No.~PE0000023-NQSTI.

\bibliography{MSBCT-njphys-v2}
\bibliographystyle{myQuantum}

\widetext
\appendix

\section{Relation between atomicity of the identity transformation and irreversibility [Proof of \autoref{thm:opt:irrev:idAtomicity}]}
\label{opt:atomicityAndIrreversibility}

\begin{reptheorem}{thm:opt:irrev:idAtomicity}
	Let \OPTMath{} be a causal \ac{OPT} with a system $\system{A} \in \Sys{\OPTMath}$ such that $\sysDimension{A} \geq 2$, and the identity transformation $\identityTest{A}$ to it associated is atomic. Then, there exists an instrument $\eventTest{T}{x}{X} \in \Instr{A}{B}$, for some system $\system{B} \in \Sys{\OPTMath}$ that is intrinsically irreversible. Hence, the theory has irreversibility~\cite{erbaMeasurementIncompatibilityStrictly2024}.
\end{reptheorem}

\begin{proof}
Suppose by contradiction that this is not the case, i.e.~there exists a system \system{A} with $\sysDimension{A} \neq 1$, where the identity is atomic and there are no intrinsically irreversible instruments having system \system{A} as an input. By \autoref{opt:intrIrrev:identity} this implies that whatever instrument $\eventTest{T}{x}{X} \in \Instr{A}{B}$ one considers it is always possible to find a suitable collection of conditional instruments such that:

\begin{align*}
	\label{eqt:doesNotExclude}
	\myQcircuit{
		&\s{A}\qw&\gate{\event{T}{x}}&\s {B}\qw&\qw&
	} =& \; \sum_{\outcomeIncludedConditioned{z}{S}{x}}
	\myQcircuit{
		&\s{A}\qw&\multigate{1}{\event{C}{z}}&\qw&\s {B}\qw&\qw&
		\\
		&\pureghost{}&\pureghost{\event{C}{z}}&\s{E}\qw&\measureD{\observationUniqueDeterministic}
	},\\[10pt]
	\myQcircuit{
		&\qw&\qw&\s{A}\qw&\qw&\qw&\qw&
	} =& \; \sum_{\outcomeIncluded{z}{Z}}
	\myQcircuit{
		&\s{A}\qw&\multigate{1}{\event{C}{z}}&\s {B}\qw&\multigate{1}{\conditionedEvent{P}{}{z}}&\s{A}\qw&\qw&
		\\
		&\pureghost{}&\pureghost{\event{C}{z}}&\s{E}\qw&\ghost{\conditionedEvent{P}{}{z}}&\pureghost{}&
	}.
\end{align*}
Consequently, since the identity is atomic, it holds that:
\begin{equation*}
	\myQcircuit{
		&\s{A}\qw&\multigate{1}{\event{C}{z}}&\s {B}\qw&\multigate{1}{\conditionedEvent{P}{}{z}}&\s{A}\qw&\qw&
		\\
		&\pureghost{}&\pureghost{\event{C}{z}}&\s{E}\qw&\ghost{\conditionedEvent{P}{}{z}}&\pureghost{}&
	} = \quad\!\! \probabilityEvent{p}{z} \;
	\myQcircuit{
		&\qw&\s{A}\qw&\qw&\qw&
	},
\end{equation*}
where \probabilityEventTest{p}{z}{Z} is a suitable probability distribution. Applying the deterministic effect on both sides of the above equation obtains:
\begin{equation*}
	\myQcircuit{
		&\s{A}\qw&\multigate{1}{\event{C}{z}}&\s {B}\qw&\multigate{1}{\conditionedEvent{P}{}{z}}&\s{A}\qw&\measureD{\observationUniqueDeterministic}&
		\\
		&\pureghost{}&\pureghost{\event{C}{z}}&\s{E}\qw&\ghost{\conditionedEvent{P}{}{z}}&\pureghost{}&
	} = \quad\!\! \probabilityEvent{p}{z} \;
	\myQcircuit{
		&\qw&\s{A}\qw&\measureD{\observationUniqueDeterministic}&
	},
\end{equation*}
which, given that $\conditionedTest{P}{}{z}$ is deterministic $\forall \outcomeIncluded{z}{Z}$, implies:
\begin{equation*}
	\myQcircuit{
		&\s{A}\qw&\multigate{1}{\event{C}{z}}&\s {B}\qw&\measureD{\observationUniqueDeterministic}&
		\\
		&\pureghost{}&\pureghost{\event{C}{z}}&\s{E}\qw&\measureD{\observationUniqueDeterministic}&
	} = \quad\!\! \probabilityEvent{p}{z} \;
	\myQcircuit{
		&\qw&\s{A}\qw&\measureD{\observationUniqueDeterministic}&
	},
\end{equation*}
and, consequently, by summing over $\outcomeIncludedConditioned{z}{S}{x}$ for all \outcomeIncluded{x}{X}:
\begin{align}\label{eqt:atomicityVSirreversibility}
	\myQcircuit{
		&\s{A}\qw&\gate{\event{T}{x}}&\s {B}\qw&\measureD{\observationUniqueDeterministic}&
	} = \quad\!\! \probabilityEvent{p}{x} \;
	\myQcircuit{
		&\qw&\s{A}\qw&\measureD{\observationUniqueDeterministic}&
	}.
\end{align}
A particular set of instruments that satisfies the latter relation consists of those where a fixed deterministic state is prepared following a measurement:
\begin{align*}
	\myQcircuit{ 
		&\s{A}\qw&\gate{\event{T}{x}}&\s{B}\qw&\qw&
	} = \quad\!\!
	\myQcircuit{
		&\s{A}\qw&\measureD{\observationEvent{a}{x}}&\prepareC{\preparationEventNoDown{\rho}}&\s{B}\qw&\qw&
	} \quad \forall \outcomeIncluded{x}{X},
\end{align*}
where $\observationEventTest{a}{x}{X} \in \Obs{A}$ is a generic observation-instrument of the theory and $\eventNoDown{\rho} \in \StN{A}$ is a generic deterministic state. In this particular case, \eqref{eqt:atomicityVSirreversibility} implies that 
\begin{align*}
	\myQcircuit{
		&\s{A}\qw&\measureD{\observationEvent{a}{x}}&
	} = \quad\!\! \probabilityEvent{p}{x} \;
	\myQcircuit{
		&\qw&\s{A}\qw&\measureD{\observationUniqueDeterministic}&
	} \quad \forall \outcomeIncluded{x}{X}.
\end{align*}
Given that this holds for any observation-instrument of system \system{A}, this implies that for this system the only allowed observation-instruments are a randomisation of the deterministic event $\left\{ \probabilityEvent{p}{x} \observationUniqueDeterministic \right\}_{\outcomeIncluded{x}{X}}$. Since a system of this kind has $\dim\EffR{A} = 1$, as can be proven by direct calculation, it must also be $\dim\EffR{A} = \dim \StR{A} = \sysDimension{A} = 1$, contradicting the hypotheses.
\end{proof}

\section{Characterization of permutations on bipartite systems [Proof of \autoref{thm:OPT:symmetric:permutations:generalForm}]}
\label{app:opt:symmetric:permutations:generalForm}

\begin{reptheorem}{thm:OPT:symmetric:permutations:generalForm}[General form of permutations on bipartite systems]
In every symmetric \ac{OPT} with unique decomposition, for any permutation acting on a bipartite system $\system{AB}$ there exist suitable systems $\system{A}'$, $\system{B}'$, $\system{A}''$, $\system{B}''$, and reversible transformations $\Braid_{1}$, $\Braid_{2}$, $\Braid_{3}$, $\Braid_{4}$ such that
	\begin{equation}
		\myQcircuitSupMat{
			&\s{A}\qw&\multigate{1}{\Braid}&\s{C}\qw&\qw&
			\\
			&\s{B}\qw&\ghost{\Braid}&\s{D}\qw&\qw&
		} =
		\myQcircuitSupMat{
			&\s{A}\qw&\multigate{1}{\Braid_{3}}&\qw&\s{\prim{A}}\qw&\qw&\multigate{1}{\Braid_{4}}&\s{C}\qw&\qw&
			\\
			&\pureghost{}&\pureghost{\Braid_{3}}&\s{\secondE{A}}\qw&\braidingSym&\s{\prim{B}}\qw&\ghost{\Braid_{4}}\qw&
			\\
			&\s{B}\qw&\multigate{1}{\Braid_{1}}&\s{\prim{B}}\qw&\braidingGhost&\s{\secondE{A}}\qw&\multigate{1}{\Braid_{2}}&\s{D}\qw&\qw&
			\\
			&\pureghost{}&\pureghost{\Braid_{1}}&\qw&\s{\secondE{B}}\qw&\qw&\ghost{\Braid_{2}}&
		}, 
		\tag{\ref{eqt:OPT:permutation:symm:formula}}
	\end{equation}
	where \system{A}, \system{B} are generic systems of the theory and \system{C}, \system{D} are systems such that \system{CD} has the same decomposition in elementary systems as \system{AB}. In general, any of \system{A}, \system{B}, \system{C}, \system{D} can be the trivial system, and the same holds also for \system{\prim{A}}, \system{\secondE{A}}, \system{\prim{B}}, \system{\secondE{B}}.
\end{reptheorem}

\begin{proof}
	Let us start by considering the ordered decomposition of \system{AB} in its elementary subsystems:
	\begin{equation*}
		\{\system{A}_{1},\dots,\system{A}_{n},\system{B}_{1},\dots,\system{B}_{m}\}.
	\end{equation*} 
	The action of $\Braid$ is to permute its input systems (\autoref{lem:opt:permutations:characterisation}):
	\begin{equation*}
		\{\system{A}_{1},\dots,\system{A}_{n},\system{B}_{1},\dots,\system{B}_{m}\}  
	\end{equation*}
	\begin{equation*}
		\downarrow \Braid
	\end{equation*}
	\begin{equation*}
		\{\sigma\left( \system{A}_{1} \right), \dots, \sigma\left( \system{A}_{n} \right),\sigma\left( \system{B}_{1} \right),\dots,\sigma\left( \system{B}_{m} \right)\} =  \{\system{C}_{1},\dots,\system{C}_{l},\system{D}_{1},\dots,\system{D}_{k}\}
	\end{equation*}
	where, thanks the hypothesis of uniqueness of the decomposition, it holds that $\sigma\left( \system{A}_{1} \right) = C_{1}$ and so on. If we now define $N = \left\{ 1, \dots, n \right\}, M = \left\{ 1, \dots, m \right\}, L = \left\{ 1, \dots, l \right\}, K = \left\{ 1, \dots, k \right\}$, the most general transformation that can happen due to the action of $\Braid$ is that
	\begin{equation*}
		\begin{aligned}
			&\left\{ \system{A}_{i}\right\} _{i \in \prim{N}} \transfArrow{\Braid} \left\{ \system{C}_{i}\right\} _{i \in \prim{L}},\\
			&\left\{ \system{A}_{j}\right\} _{j \in \secondE{N}} \transfArrow{\Braid} \left\{ \system{D}_{j}\right\}_{j \in \prim{K}},
		\end{aligned}
	\end{equation*}
	where $N = \prim{N} \bigcup \secondE{N}$, $|\prim{N}| = |\prim{L}|$, $|\secondE{N}| = |\prim{K}|$, and analogously for \system{B},
	\begin{equation*}
		\begin{aligned}
			&\left\{ \system{B}_{i}\right\} _{i \in \prim{M}} \transfArrow{\Braid} \left\{ \system{C}_{i}\right\} _{i \in \secondE{L}},\\
			&\left\{ \system{B}_{j}\right\} _{j \in \secondE{M}} \transfArrow{\Braid} \left\{ \system{D}_{j}\right\}_{j \in \secondE{K}},
		\end{aligned}
	\end{equation*}
	where $M = \prim{M} \bigcup \secondE{M}$, $|\prim{M}| = |\secondE{L}|$, $|\secondE{M}| = |\secondE{K}|$, and $L = \prim{L} \bigcup \secondE{L}$, $K = \prim{K} \bigcup \secondE{K}$. With $|S|$ we denote the cardinality of the set $S$.
	
	Now we want to show that this permutation can always be achieved thorough a transformation with the same form as that of \eqref{eqt:OPT:permutation:symm:formula}.	We begin by observing that in the case of system \system{A} one can always find a permutation that reorganizes the systems in such a way that the ones that are mapped into states of \system{C} are on the top and the ones that are mapped into \system{D} are on the bottom,
	\begin{equation*}
		\myQcircuitSupMat{
			&\s{A}\qw&\gate{\Braid_{3}}&\qw&\qw&\sEnsembleDouble{A}{i}{N'}{j}{N''}\qw&\qw&\qw&\splitter&\qw&\qw&\sEnsemble{A}{i}{N'}\qw&\qw&\qw&\qw&
			\\
			&\pureghost{}&\pureghost{\Braid_{3}}&\pureghost{}&\pureghost{}&\pureghost{}&\pureghost{}&\pureghost{}&\splitterGhost&\qw&\qw&\sEnsemble{A}{j}{N''}\qw&\qw&\qw&\qw&
		},
	\end{equation*}
	where the ordering within $N'$ and $N''$ is not important. The same holds for \system{B},
	\begin{equation*}
		\myQcircuitSupMat{
			&\s{B}\qw&\gate{\Braid_{1}}&\qw&\qw&\sEnsembleDouble{B}{i}{M'}{j}{M''}\qw&\qw&\qw&\splitter&\qw&\qw&\sEnsemble{B}{i}{M'}\qw&\qw&\qw&\qw&
			\\
			&\pureghost{}&\pureghost{\Braid_{1}}&\pureghost{}&\pureghost{}&\pureghost{}&\pureghost{}&\pureghost{}&\splitterGhost&\qw&\qw&\sEnsemble{B}{j}{M''}\qw&\qw&\qw&\qw&
		}.
	\end{equation*}
	Now we have to exchange the subsystems of \system{A} that are mapped into \system{D}  and those of \system{B} that are mapped into \system{C}. This can be achieved by swapping $\left\{ \system{A}_{j}\right\}_{j \in \secondE{N}}$ with $\left\{ \system{B}_{i}\right\} _{i \in \prim{M}}$:
	\begin{equation*}
		\myQcircuitSupMat{
			&\s{A}\qw&\multigate{1}{\Braid_{3}}&\qw&\qw&\qw&\qw&\sEnsemble{A}{i}{N'}\qw&\qw&\qw&\qw&\qw&\qw&
			\\
			&\pureghost{}&\pureghost{\Braid_{3}}&\qw&\qw&\sEnsemble{A}{j}{N''}\qw&\qw&\braidingSym&\qw&\sEnsemble{B}{i}{M'}\qw&\qw&\qw&\qw&
			\\
			&\s{B}\qw&\multigate{1}{\Braid_{1}}&\qw&\qw&\sEnsemble{B}{i}{M'}\qw&\qw&\braidingGhost&\qw&\sEnsemble{A}{j}{N''}\qw&\qw&\qw&\qw&
			\\
			&\pureghost{}&\pureghost{\Braid_{1}}&\qw&\qw&\qw&\qw&\sEnsemble{B}{j}{M''}\qw&\qw&\qw&\qw&\qw&\qw&
		}.
	\end{equation*}
	Finally, $\Braid^{2}$ and $\Braid^{4}$ correctly reorder the subsystems to obtain \system{C} and \system{D}:
	\begin{equation*}
		\myQcircuitSupMat{
			&\s{A}\qw&\multigate{1}{\Braid_{3}}&\qw&\qw&\qw&\qw&\sEnsemble{A}{i}{N'}\qw&\qw&\qw&\qw&\qw&\multigate{1}{\Braid_{4}}&\s{C}\qw&\qw&
			\\
			&\pureghost{}&\pureghost{\Braid_{3}}&\qw&\qw&\sEnsemble{A}{j}{N''}\qw&\qw&\braidingSym&\qw&\sEnsemble{B}{i}{M'}\qw&\qw&\qw&\ghost{\Braid_{4}}\qw&
			\\
			&\s{B}\qw&\multigate{1}{\Braid_{1}}&\qw&\qw&\sEnsemble{B}{i}{M'}\qw&\qw&\braidingGhost&\qw&\sEnsemble{A}{j}{N''}\qw&\qw&\qw&\multigate{1}{\Braid_{2}}&\s{D}\qw&\qw&
			\\
			&\pureghost{}&\pureghost{\Braid_{1}}&\qw&\qw&\qw&\qw&\sEnsemble{B}{j}{M''}\qw&\qw&\qw&\qw&\qw&\ghost{\Braid_{2}}&
		}.
	\end{equation*}
	
	Therefore we have shown that, for any permutation $\Braid$, it is always possible to find a decomposition as in \eqref{eqt:OPT:permutation:symm:formula} achieving the same permutation of elementary systems. From the fact that permutations can be completely characterised by how they permute their input systems, the equality between the two transformations follows (\autoref{lem:opt:permutations:characterisation}).
\end{proof}

\section{Characterization of Minimal Operational Probabilistic Theories' instruments and transformations [Proofs of \autoref{lem:OPT:minimal:generalInstr} and \autoref{thm:OPT:minimal:symmetric:generalInstr}]}	
\label{app:mopt:instr:characterisation}

\begin{reptheorem}{lem:OPT:minimal:generalInstr}
	In every \ac{MOPT} any instrument $\eventTest{T}{x}{X} \in \Instr{A}{B}$ obtained as parallel and sequential composition of the elements of \eqref{eqt:OPT:minimal:def:test} is of the form:
	\begin{equation}
		\myQcircuit{
			&\prepareC{\preparationEventTest{\rho}{y}{Y}}&\s{C}\qw&\multigate{1}{\Braid}&\s{D}\qw&\measureD{\observationEventTest{a}{z}{Z}}&
			\\
			&\s{A}\qw&\qw&\ghost{\Braid}&\qw&\s{B}\qw&\qw&
		}, 
		\tag{\ref{eqt:OPT:minimal:instr:general}}
	\end{equation}
	where $\Braid \in \RevTransf{AC}{DB}$ is a suitable braid transformation, $\preparationEventTest{\rho}{y}{Y} \in \Prep{C}$, $\observationEventTest{a}{z}{Z} \in \Obs{A}$, the outcome space $\outcomeSpace{X} = \cartesianProduct{\outcomeSpace Y}{\outcomeSpace Z}$, and \system{A}, \system{B}, \system{C}, $\system{D} \in \Sys{\OPTMath}$ may also be equal to the trivial system~\cite{erbaMeasurementIncompatibilityStrictly2024}.
\end{reptheorem}

\begin{proof}
	The proof of \autoref{lem:OPT:minimal:generalInstr} is in a certain way constructive. Let $\eventTest{T}{x}{X} \in \Instr{A}{B}$ be an instrument that satisfies the hypotheses of the theorem. If this instrument does not contain within its decomposition either a preparation-instrument, or an observation-instrument, then it must be that $\eventTest{T}{x}{X} = \Braid$, i.e., \eqref{eqt:OPT:minimal:transf:general} with $\system{C} = \system{D} = \trivialSystem$. Suppose then that $n$ preparation- and observation-instruments are involved in the decomposition, and let us focus on an observation-instrument---the procedure remains the same if one chooses to start with a preparation-instrument. In the case in which an observation-instrument $\left\{ \observationEventNoDown{a}_{\outcome{z}_{1}} \right\}_{\outcome{z}_{1} \in \outcomeSpace{Z}_{1}} \in \Obs{D_{1}}$ is present, it is possible to isolate it and rewrite the instrument in the following way:
	\begin{equation*}
		\myQcircuitSupMat{
			&\s{A}\qw&\gate{\eventTest{T}{x}{X}}&\s{B}\qw&\qw&
		} =
		\myQcircuitSupMat{
			&\s{A}\qw&\multigate{2}{\eventTest{G}{y}{Y}}&\qw&\s{D_{2}}\qw&\qw&\multigate{2}{\eventTest{H}{z}{Z}}&\s{B}\qw&\qw&
			\\
			&\pureghost{}&\pureghost{\eventTest{G}{y}{Y}}&\s{D_{1}}\qw&\measureD{\left\{ \observationEventNoDown{a}_{\outcome{z}_{1}} \right\}_{\outcome{z}_{1} \in \outcomeSpace{Z}_{1}}}&\pureghost{}&\pureghost{\eventTest{H}{z}{Z}}&\pureghost{}&
			\\
			&\pureghost{}&\pureghost{\eventTest{G}{y}{Y}}&\qw&\s{D_{3}}\qw&\qw&\ghost{\eventTest{H}{z}{Z}}&\pureghost{}
		},
	\end{equation*}
	where $\eventTest{G}{y}{Y}$ and $\eventTest{H}{z}{Z}$ are such that $\eventTest{T}{x}{X} = \sequentialComp{\eventTest{H}{z}{Z}}{\sequentialComp{\left( \parallelComp{\identityTest{D_{2}}}{\parallelComp{\left\{ \observationEventNoDown{a}_{\outcome{z}_{1}} \right\}_{\outcome{z}_{1} \in \outcomeSpace{Z}_{1}}}{\identityTest{D_{3}}}} \right)}{\eventTest{G}{y}{Y}}}$ and $\system{D_{1}}, \system{D_{2}}, \system{D_{3}} \in \Sys{\OPTMath}$ are suitable systems. Using the reversibility of the braiding, it is possible to write
	\allowDisplayBreaks{
		\begin{equation*}
			\begin{aligned}
				& \myQcircuitSupMat{
					&\s{A}\qw&\multigate{2}{\eventTest{G}{y}{Y}}&\s{D_{2}}\qw&\braiding&\s{D_{1}}\qw&\braidingInv&\s{D_{2}}\qw&\qw&\qw&\multigate{2}{\eventTest{H}{z}{Z}}&\s{B}\qw&\qw&
					\\
					&\pureghost{}&\pureghost{\eventTest{G}{y}{Y}}&\s{D_{1}}\qw&\braidingGhost&\s{D_{2}}\qw&\braidingGhost&\s{D_{1}}\qw&\measureD{\left\{ \observationEventNoDown{a}_{\outcome{z}_{1}} \right\}_{\outcome{z}_{1} \in \outcomeSpace{Z}_{1}}}&\pureghost{}&\pureghost{\eventTest{H}{z}{Z}}&\pureghost{}&
					\\
					&\pureghost{}&\pureghost{\eventTest{G}{y}{Y}}&\qw&\qw&\qw&\s{D_{3}}\qw&\qw&\qw&\qw&\ghost{\eventTest{H}{z}{Z}}&\pureghost{}
				}\\[10pt]
				&  = \myQcircuitSupMat{
					&\s{A}\qw&\multigate{2}{\eventTest{G}{y}{Y}}&\s{D_{2}}\qw&\braiding&\s{D_{1}}\qw&\measureD{\left\{ \observationEventNoDown{a}_{\outcome{z}_{1}} \right\}_{\outcome{z}_{1} \in \outcomeSpace{Z}_{1}}}&\braidingInvId&\s{D_{1}}\qw&\multigate{2}{\eventTest{H}{z}{Z}}&\s{B}\qw&\qw&
					\\
					&\pureghost{}&\pureghost{\eventTest{G}{y}{Y}}&\s{D_{1}}\qw&\braidingGhost&\qw&\s{D_{2}}\qw&\braidingGhost&\pureghost{}&\pureghost{\eventTest{H}{z}{Z}}&\pureghost{}&
					\\
					&\pureghost{}&\pureghost{\eventTest{G}{y}{Y}}&\qw&\qw&\qw&\s{D_{3}}\qw&\qw&\qw&\ghost{\eventTest{H}{z}{Z}}&\pureghost{}
				}\\[10pt]
				& = \myQcircuitSupMat{
					&\s{A}\qw&\multigate{2}{ \left\{ \eventNoDown{G}'_{y} \right\}_{\outcomeIncluded{y}{Y}} }&\s{D_{1}}\qw&\measureD{\left\{ \observationEventNoDown{a}_{\outcome{z}_{1}} \right\}_{\outcome{z}_{1} \in \outcomeSpace{Z}_{1}}}&
					\\
					&\pureghost{}&\pureghost{\left\{ \eventNoDown{G}'_{y} \right\}_{\outcomeIncluded{y}{Y}}}&\s{D_{2}}\qw&\qw&\multigate{1}{\eventTest{H}{z}{Z}}&\s{B}\qw&\qw&
					\\
					&\pureghost{}&\pureghost{\left\{ \eventNoDown{G}'_{y} \right\}_{\outcomeIncluded{y}{Y}}}&\s{D_{3}}\qw&\qw&\ghost{\eventTest{H}{z}{Z}}&
				}\\[10pt]
				& = \myQcircuitSupMat{
					&\s{A}\qw&\multigate{1}{\left\{ \eventNoDown{T}^{1}_{x'} \right\}_{\outcomeIncluded{x'}{X'}}}&\s{D_{1}}\qw&\measureD{\left\{ \observationEventNoDown{a}_{\outcome{z}_{1}} \right\}_{\outcome{z}_{1} \in \outcomeSpace{Z}_{1}}}&
					\\
					&\pureghost{}&\pureghost{ \left\{ \eventNoDown{T}^{1}_{x'} \right\}_{\outcomeIncluded{x'}{X'}} }&\s{B}\qw&\qw&\qw&\qw&
				}.
			\end{aligned}
		\end{equation*}
	}	
	Now it is sufficient to iterate the procedure on $\left\{ \eventNoDown{T}^{1}_{x'_{1}} \right\}_{\outcomeIncluded{x'_{1}}{X'_{1}}}$ until, after $n$ steps, one obtains a transformation $\left\{ \eventNoDown{T}^{n}_{x'_{n}} \right\}_{\outcome{x}'_{n} \in \outcomeSpace{X}'_{n}} = \Braid$, which is the searched result.
	\begin{equation*}
		\myQcircuitSupMat{
			&\prepareC{\preparationEventTest{\rho}{y_{1}}{Y_{1}}}&\s{C_1}\qw&\multigate{3}{\left\{ \eventNoDown{T}^{n}_{x'_{n}} \right\}_{\outcome{x}'_{n} \in \outcomeSpace{X}'_{n}}}&\s{D_1}\qw&\measureD{\left\{ \observationEventNoDown{a}_{\outcome{z}_{1}} \right\}_{\outcome{z}_{1} \in \outcomeSpace{Z}_{1}}}&
			\\
			&\vdots&\vdots&\pureghost{\left\{ \eventNoDown{T}^{n}_{x'_{n}} \right\}_{\outcome{x}'_{n} \in \outcomeSpace{X}'_{n}}}&\vdots&\vdots&
			\\
			&\prepareC{\left\{\preparationEventNoDown{\rho}_{\outcome{y}_{l}}\right\}_{\outcome{y}_{l} \in \outcomeSpace{Y}_{l}} }&\sIndexDown{C}{l}\qw&\ghost{\left\{ \eventNoDown{T}^{n}_{x'_{n}} \right\}_{\outcome{x}'_{n} \in \outcomeSpace{X}'_{n}}}&\sIndexDown{D}{m}\qw&\measureD{\left\{ \observationEventNoDown{a}_{\outcome{x}_{m}} \right\}_{\outcome{x}_{m} \in \outcomeSpace{X}_{m}}}&
			\\
			&\s{A}\qw&\qw&\ghost{\left\{ \eventNoDown{T}^{n}_{x'_{n}} \right\}_{\outcome{x}'_{n} \in \outcomeSpace{X}'_{n}}}&\qw&\s{B}\qw&\qw&
		} =
		\myQcircuit{
			&\prepareC{\preparationEventTest{\rho}{y}{Y}}&\s{C}\qw&\multigate{1}{\Braid}&\s{D}\qw&\measureD{\observationEventTest{a}{z}{Z}}&
			\\
			&\s{A}\qw&\qw&\ghost{\Braid}&\qw&\s{B}\qw&\qw&
		}.
	\end{equation*} 
\end{proof}

\begin{reptheorem}{thm:OPT:minimal:symmetric:generalInstr}
	In every symmetric \ac{MOPT} any instrument $\eventTest{T}{x}{X} \in \Instr{A}{B}$ obtained as parallel and sequential composition of the elements of \eqref{eqt:OPT:minimal:def:test} is of the form:
	\begin{equation}
		\myQcircuit{
			&\pureghost{}&\multiprepareC{1}{\preparationEventTest{\rho}{y}{Y}}&\qw&\s{C}\qw&\qw&\multimeasureD{1}{\observationEventTest{a}{z}{Z}}&
			\\			&\pureghost{}&\pureghost{\preparationEventTest{\rho}{y}{Y}}&\s{\prim{B}}\qw&\braidingSym&\s{\prim{A}}\qw&\ghost{\observationEventTest{a}{z}{Z}}&
			\\
			&\s{A}\qw&\multigate{1}{\Braid_{1}}&\s{\prim{A}}\qw&\braidingGhost&\s{\prim{B}}\qw&\multigate{1}{\Braid_{2}}&\s{B}\qw&\qw&
			\\
			&\pureghost{}&\pureghost{\Braid_{1}}&\qw&\s{E}\qw&\qw&\ghost{\Braid_{2}}&
		},
		\tag{\ref{eqt:OPT:minimal:symmetric:instr:generic}}
	\end{equation}
	where $\Braid_{1}, \Braid_{2} \in \RevTransfA{\OPTMath}$ are suitable permutations, $\preparationEventTest{\rho}{y}{Ys} \in \Prep{C \prim{B}}$, $\observationEventTest{a}{z}{Z} \in \Obs{C \prim{A}}$, the outcome space $\outcomeSpace{X} = \cartesianProduct{Y}{Z}$, and \system{A}, \system{B}, $\system{\prim{A}}$, $\system{\prim{B}}$, $\system{C}$, $\system{E} \in \Sys{\OPTMath}$ may also be equal to the trivial system~\cite{erbaMeasurementIncompatibilityStrictly2024}.
\end{reptheorem}

\begin{proof}	
	One has just to apply \autoref{thm:OPT:symmetric:permutations:generalForm} to \eqref{eqt:OPT:minimal:instr:general} obtaining
	\begin{equation*}
		\myQcircuit{
			&\s{A}\qw&\gate{\eventTest{T}{x}{X}}&\s{B}\qw&\qw&
		} =
		\myQcircuit{
			&\prepareC{\left\{ \preparationEventNoDown{\rho}'_{\outcome{y}} \right\}_{\outcomeIncluded{y}{Y}}}&\s{\prim{C}}\qw&\multigate{1}{\Braid_{3}}&\qw&\s{C}\qw&\qw&\multigate{1}{\Braid_{4}}&\s{\prim{D}}\qw&\measureD{\left\{ \observationEventNoDown{a}'_{\outcome{z}} \right\}_{\outcomeIncluded{z}{Z}}}&
			\\
			&\pureghost{}&\pureghost{}&\pureghost{\Braid_{3}}&\s{\prim{B}}\qw&\braidingSym&\s{\prim{A}}\qw&\ghost{\Braid_{4}}&
			\\
			&\pureghost{}&\s{A}\qw&\multigate{1}{\Braid_{1}}&\s{\prim{A}}\qw&\braidingGhost&\s{\prim{B}}\qw&\multigate{1}{\Braid_{2}}&\s{B}\qw&\qw&
			\\
			&\pureghost{}&\pureghost{}&\pureghost{\Braid_{1}}&\qw&\s{E}\qw&\qw&\ghost{\Braid_{2}}&
		}.
	\end{equation*}

	Now absorbing $\Braid_{3}, \Braid_{4}$ into $\left\{ \preparationEventNoDown{\rho}'_{\outcome{y}} \right\}_{\outcomeIncluded{y}{Y}}$ and $\left\{ \observationEventNoDown{a}'_{\outcome{z}} \right\}_{\outcomeIncluded{z}{Z}}$ respectively, the proof is concluded:
	\begin{equation*}
		\myQcircuit{
			&\s{A}\qw&\gate{\eventTest{T}{x}{X}}&\s{B}\qw&\qw&
		} =
		\myQcircuit{
			&\pureghost{}&\multiprepareC{1}{\preparationEventTest{\rho}{y}{Y}}&\qw&\s{C}\qw&\qw&\multimeasureD{1}{\observationEventTest{a}{z}{Z}}&
			\\
			&\pureghost{}&\pureghost{\preparationEventTest{\rho}{y}{Y}}&\s{\prim{B}}\qw&\braidingSym&\s{\prim{A}}\qw&\ghost{\observationEventTest{a}{z}{Z}}&
			\\
			&\s{A}\qw&\multigate{1}{\Braid_{1}}&\s{\prim{A}}\qw&\braidingGhost&\s{\prim{B}}\qw&\multigate{1}{\Braid_{2}}&\s{B}\qw&\qw&
			\\
			&\pureghost{}&\pureghost{\Braid_{1}}&\qw&\s{E}\qw&\qw&\ghost{\Braid_{2}}&
		}.
	\end{equation*}
\end{proof}

\subsection{Characterization of the limits [Proofs of \autoref{thm:OPT:minimal:transf:stabilization} and \autoref{thm:OPT:minimal:transf:goodDeterministic}]}

\begin{reptheorem}{thm:OPT:minimal:transf:stabilization}
	In a symmetric \ac{MOPT} whenever one considers a Cauchy sequence of generic transformations obtained as parallel and sequential composition of the elements in \eqref{eqt:OPT:minimal:def:event},
	\begin{equation}
		\left\{
		\myQcircuit{
			&\pureghost{}&\multiprepareC{1}{\preparationEventNoDownSequence{\rho}{n}}&\qw&\sSequence{C}{n}\qw&\qw&\multimeasureD{1}{\observationEventNoDownSequence{a}{n}}&
			\\
			&\pureghost{}&\pureghost{\preparationEventNoDownSequence{\rho}{n}}&\sSequencePrime{B}{n}\qw&\braidingSym&\sSequencePrime{A}{n}\qw&\ghost{\observationEventNoDownSequence{a}{n}}&
			\\								&\s{A}\qw&\multigate{1}{\Braid_{1}^{n}}&\sSequencePrime{A}{n}\qw&\braidingGhost&\sSequencePrime{B}{n}\qw&\multigate{1}{\Braid_{2}^{n}}&\s{B}\qw&\qw&
			\\
			&\pureghost{}&\pureghost{\Braid_{1}^{n}}&\qw&\sSequence{E}{n}\qw&\qw&\ghost{\Braid_{2}^{n}}&
		}
		\right\}_{n \in \mathbb{N}},
		\tag{\ref{eqt:sequenceJelly}}
	\end{equation}	
	there always exists a subsequence where the systems $\systemSequence{E}{n}$, $\systemSequence{A'}{n}$, $\systemSequence{B'}{n}$ and the permutations $\Braid_{1}^{n}$, $\Braid_{2}^{n}$ are fixed:
	\begin{equation*}
		\left\{
		\myQcircuit{
			&\pureghost{}&\multiprepareC{1}{\preparationEventNoDownSequence{\rho}{n}}&\qw&\sSequence{C}{n}\qw&\qw&\multimeasureD{1}{\observationEventNoDownSequence{a}{n}}&
			\\
			&\pureghost{}&\pureghost{\preparationEventNoDownSequence{\rho}{n}}&\s{B'}\qw&\braidingSym&\s{A'}\qw&\ghost{\observationEventNoDownSequence{a}{n}}&
			\\
			&\s{A}\qw&\multigate{1}{\Braid_{1}}&\s{A'}\qw&\braidingGhost&\s{B'}\qw&\multigate{1}{\Braid_{2}}&\s{B}\qw&\qw&
			\\
			&\pureghost{}&\pureghost{\Braid_{1}}&\qw&\s{E}\qw&\qw&\ghost{\Braid_{2}}&
		}
		\right\}_{n \in \mathbb{N}}.
	\end{equation*}	
\end{reptheorem}

\begin{proof}
	Let us start by considering a Cauchy sequence such as the one in \eqref{eqt:sequenceJelly}. The proof can be divided in two steps
	\begin{enumerate}[I]
		\item Let us consider the two decompositions in elementary systems of \system{A} and \system{B}, which we know to be unique (\autoref{def:opt:system:uniqueDec}), and composed at most of a finite number of elementary systems. We remind now that permutations are completely characterised by how they permute the input wires, and since there is only a finite number of ways of permuting a finite number of elements, there must exist at least a pair of permutations $\Braid^{1}$ and $\Braid^{2}$ that appear infinitely many times together in the decomposition of the elements of the sequence~\eqref{eqt:sequenceJelly}. We can now concentrate on the subsequence with this pair of permutations
		\begin{equation}
			\left\{
			\myQcircuit{
				&\pureghost{}&\multiprepareC{1}{\preparationEventNoDown{\rho}_{n}}&\qw&\sSequence{C}{n}\qw&\qw&\multimeasureD{1}{\observationEventNoDown{a}_{n}}&
				\\
				&\pureghost{}&\pureghost{\preparationEventNoDown{\rho}_{n}}&\sSequencePrime{B}{n}\qw&\braidingSym&\sSequencePrime{A}{n}\qw&\ghost{\observationEventNoDown{a}_{n}}&
				\\									&\s{A}\qw&\multigate{1}{\Braid_{1}}&\sSequencePrime{A}{n}\qw&\braidingGhost&\sSequencePrime{B}{n}\qw&\multigate{1}{\Braid_{2}}&\s{B}\qw&\qw&
				\\
				&\pureghost{}&\pureghost{\Braid_{1}}&\qw&\sSequence{E}{n}\qw&\qw&\ghost{\Braid_{2}}&
			}
			\right\}_{n \in \mathbb{N}},
			\tag{\ref{eqt:sequenceJelly}}
		\end{equation}	
		being \eqref{eqt:sequenceJelly} Cauchy also its subsequences will be Cauchy and they will have the same limit.
		
		\item We now focus our attention on the systems $\system{E}_{n}$. Due to the fact that in the previous point we have fixed $\Braid^{1}$ and the fact that the decomposition in elementary systems is unique, the systems contained within the composite system $\system{A}_{n}'\system{E_{n}}$ will not change. Therefore, the only thing that can change at the variation of $n$ is how they are grouped. 
		For example if $\system{A}_{n}' = \system{S}_{1}$ and $\system{E}_{n} = \left(\system{S}_{2} \system{S}_{3} \system{S}_{4}\right)$, for a different value $\prim{n} \neq n$, it must be  $\system{A}_{n'}' = \system{S}_{1}\system{S}_{2}$ and $\system{E}_{\prim{n}} = \system{S}_{3} \system{S}_{4}$, or $\system{A}_{n'}' = \left(\system{S_{1}}\system{S_{2}}\system{S_{3}}\right)$ and $\system{E_{\prim{n}}} = \system{S}_{4}$, or any other possible regrouping (also the original one) in which the order of the $\system{S}_{i}$ does not change.
		Given that  $\system{A}_{n}'\system{E}_{n}$ can be composed only of a finite number of systems, and analogously for $\system{B}_{n}'\system{E}_{n}$, it is always possible to find at least a system \system{E} that appears infinitely many times in the considered sequence. By fixing \system{E}, then also the systems $\system{A}'$ and $\system{B}'$ are automatically fixed. This is the searched result.
	\end{enumerate}
\end{proof}

\begin{reptheorem}{thm:OPT:minimal:transf:goodDeterministic}
	In a causal symmetric \ac{MOPT} the limits of Cauchy sequences of deterministic transformations are still of the form~\cite{erbaMeasurementIncompatibilityStrictly2024}
	\begin{equation}
		\minimalDeterministicCausalDestroyReprep{A}{B}{\prim{A}}{\prim{B}}{E}{\Braid_{1}}{\Braid_{2}}{\rho}.
		\tag{\ref{eqt:OPT:minimal:transf:lem:causalDeterm}}
	\end{equation}
\end{reptheorem}
\begin{proof}
	Let us start by considering a Cauchy sequence of deterministic transformations from \system{A} to \system{B}, which by \eqref{eqt:OPT:minimal:transf:lem:causalDeterm} we know to be of the form
	\begin{equation}
		\label{proof:OPT:minimal:transf:thm:goodDeterministic:1}
		\left\{
		\minimalDeterministicCausalDestroyReprepSequencePrime{A}{B}{A}{B}{E}{\Braid_{1}^{n}}{\Braid_{2}^{n}}{\rho_{n}}{n}
		\right\}_{n \in \mathbb{N}}.
	\end{equation}
	\autoref{thm:OPT:minimal:transf:stabilization} guarantees the existence of a subsequence where the only elements that can vary thorough the sequence are the states $\preparationEventNoDownSequenceAll{\rho}{n} \subset \StN{\prim{B}}$:
	\begin{equation*}
		\left\{ \minimalDeterministicCausalDestroyReprep{A}{B}{\prim{A}}{\prim{B}}{E}{\Braid_{1}}{\Braid_{2}}{\rho_{n}} \right\}_{n \in \mathbb{N}}.
	\end{equation*}	
	The elements of this subsequence satisfy the following inequality relations $\forall n,m \in \mathbb{N}$:
	\allowDisplayBreaks{
		\begin{align*}
			& \left\lVert \quad \minimalDeterministicCausalDestroyReprep{A}{B}{\prim{A}}{\prim{B}}{E}{\Braid_{1}}{\Braid_{2}}{\preparationEventNoDownSequence{\rho}{n}}  \right. \\[10pt] & \hspace{1cm} - \; \left. \minimalDeterministicCausalDestroyReprep{A}{B}{\prim{A}}{\prim{B}}{E}{\Braid_{1}}{\Braid_{2}}{\preparationEventNoDownSequence{\rho}{m}}
			\right\rVert_{op}  \\[10pt] & = 
			\normOp{ \quad
				\myQcircuit{
					&\s{\prim{A}}\qw&\measureD{\observationUniqueDeterministic}&\pureghost{}&\prepareC{\preparationEventNoDownSequence{\rho}{n}}&\s{\prim{B}}\qw&\qw&
					\\
					&\qw&\qw&\s{E}\qw&\qw&\qw&\qw&
				} - \quad
				\myQcircuit{
					&\s{\prim{A}}\qw&\measureD{\observationUniqueDeterministic}&\pureghost{}&\prepareC{\preparationEventNoDownSequence{\rho}{m}}&\s{\prim{B}}\qw&\qw&
					\\
					&\qw&\qw&\s{E}\qw&\qw&\qw&\qw&
				}
			}  \\[10pt] & =
			\normOp{ \quad
				\myQcircuit{
					&\s{\prim{A}}\qw&\measureD{\observationUniqueDeterministic}&\prepareC{\preparationEventNoDownSequence{\rho}{n}}&\s{\prim{B}}\qw&\qw&
				} - \quad
				\myQcircuit{
					&\s{\prim{A}}\qw&\measureD{\observationUniqueDeterministic}&\prepareC{\preparationEventNoDownSequence{\rho}{m}}&\s{\prim{B}}\qw&\qw&
				}
			}  \\[10pt] & =
			\normOp{
				\myQcircuit{
					&\prepareC{\preparationEventNoDownSequence{\rho}{n}}&\s{\prim{B}}\qw&\qw&
				} - \quad
				\myQcircuit{
					&\prepareC{\preparationEventNoDownSequence{\rho}{m}}&\s{\prim{B}}\qw&\qw&
				}	
			},
		\end{align*}
	}
\noindent where in the penultimate step \autoref{lem:opt:norm:operational:monotonicity} was used, while the equality in the last step can be proved using \autoref{lem:opt:norm:operational:invariaceAncilas}. This implies that the sequence of deterministic states of this particular subsequence of \eqref{proof:OPT:minimal:transf:thm:goodDeterministic:1} is Cauchy. Therefore, we can conclude that the subsequence considered in this point, and consequently \eqref{proof:OPT:minimal:transf:thm:goodDeterministic:1}, converges to 
	\begin{equation*}
		\minimalDeterministicCausalDestroyReprep{A}{B}{\prim{A}}{\prim{B}}{E}{\Braid_{1}}{\Braid_{2}}{\rho},
	\end{equation*}
	where $\eventNoDown{\rho} = \lim_{n \to \infty} \preparationEventNoDownSequence{\rho}{n}$. 
	With this we conclude our proof, since we found the desired result.
\end{proof}

\section{Iteration procedure for transformations in the Proof of \autoref{thm:MSOPT:symmetric:idAtomicity}}
\label{app:proof:MSOPT:atomicity}
Here we provide the detailed analysis of the structure of single transformations composing the instruments that appear in the sequence \eqref{eqt:proof:MSOPT:atomicity:3} and prove that all of them are proportional to the identity if their coarse-graining is the identity itself~\eqref{eqt:proof:MSOPT:atomicity:4}, which means that the identity is an atomic map. Given that we are now interested in studying transformations, in the following discussion we will assume a fixed outcome of the conditional instrument, that is $\left( \outcome{\boldsymbol{x}'}, \outcome{x_{k-1}}, \outcome{x_{k}} \right) \in \cartesianProduct{\outcomeSpace{\boldsymbol{X}}'}{\cartesianProduct{\outcomeSpace{X}_{k-1}}{\outcomeSpace{X}_{k}}}$. We highlighted in the list of outcomes those of the last two transformations in the sequential composition since those are be the ones we will focus on at first. These transformations have generally the form
\begin{align*}
	\lim_{m_{k-1} \to \infty} \; \lim_{m_{k} \to \infty} \quad
	&\myQcircuit{
		&\pureghost{}&\pureghost{}&\pureghost{}&\multiprepareC{1}{{\preparationEventNoDown{\Phi}_{\outcome{x_{k-1}}}^{\left(\outcome{\boldsymbol{x}'}\right)}}^{m_{k-1},n}}&\qw&\qw&\qw&\qw&\sSequenceConditioned{H'}{\boldsymbol{x}'}{m_{k-1},n}\qw&\qw&\qw&\qw&\qw&\multimeasureD{1}{{\observationEventNoDown{B}_{\outcome{x_{k-1}}}^{\left(\outcome{\boldsymbol{x}'}\right)}}^{m_{k-1},n}}&
		\\
		&\pureghost{}&\pureghost{}&\pureghost{}&\pureghost{{\preparationEventNoDown{\Phi}_{\outcome{x_{k-1}}}^{\left(\outcome{\boldsymbol{x}'}\right)}}^{m_{k-1},n}}&\qw&\qw&\sSequenceConditioned{G}{\boldsymbol{x}'}{m_{k-1},n}\qw&\qw&\braidingSym&\qw&\sSequenceConditioned{F}{\boldsymbol{x}'}{m_{k-1},n}\qw&\qw&\qw&\ghost{{\observationEventNoDown{B}_{\outcome{x_{k-1}}}^{\left(\outcome{\boldsymbol{x}'}\right)}}^{m_{k-1},n}}&
		\\
		&\s{A}\qw&\gate{{\event{T}{\boldsymbol{x}'}'}^{n}}&\sSequenceIndex{A}{k-2}{n}\qw&\multigate{1}{{\permutation_{5}^{(\outcome{\boldsymbol{x}'})}}^{m_{k-1},n}}&\qw&\qw&\sSequenceConditioned{F}{\boldsymbol{x}'}{m_{k-1},n}\qw&\qw&\braidingGhost&\qw&\sSequenceConditioned{G}{\boldsymbol{x}'}{m_{k-1},n}\qw&\qw&\qw&\multigate{1}{{\permutation_{6}^{(\outcome{\boldsymbol{x}'})}}^{m_{k-1},n}}&\sSequenceIndex{A}{k}{n-1}\qw&\qw&
		\\
		&\pureghost{}&\pureghost{}&\pureghost{}&\pureghost{{\permutation_{5}^{(\outcome{\boldsymbol{x}'})}}^{m_{k-1},n}}&\qw&\qw&\qw&\qw&\sSequenceConditioned{H}{\boldsymbol{x}'}{m_{k-1},n}\qw&\qw&\qw&\qw&\qw&\ghost{{\permutation_{6}^{(\outcome{\boldsymbol{x}'})}}^{m_{k-1},n}}&\pureghost{}&
	} \cdots \\[10pt] & \cdots \quad
	\myQcircuit{
		&\pureghost{}&\multiprepareC{1}{{\preparationEventNoDown{\Psi}_{\outcome{x_{k}}}^{\left(\outcome{\boldsymbol{x}}\right)}}^{m_{k},n}}&\qw&\qw&\qw&\qw&\sSequenceConditioned{E'}{\boldsymbol{x}}{m_{k},n}\qw&\qw&\qw&\qw&\qw&\multimeasureD{1}{{\observationEventNoDown{A}_{\outcome{x_{k}}}^{\left(\outcome{\boldsymbol{x}}\right)}}^{m_{k},n}}&
		\\
		&\pureghost{}&\pureghost{{\preparationEventNoDown{\Psi}_{\outcome{x_{k}}}^{\left(\outcome{\boldsymbol{x}}\right)}}^{m_{k},n}}&\qw&\qw&\sSequenceConditioned{D}{\boldsymbol{x}}{m_{k},n}\qw&\qw&\braidingSym&\qw&\sSequenceConditioned{C}{\boldsymbol{x}}{m_{k},n}\qw&\qw&\qw&\ghost{{\observationEventNoDown{A}_{\outcome{x_{k}}}^{\left(\outcome{\boldsymbol{x}}\right)}}^{m_{k},n}}&
		\\
		&\sSequenceIndex{A}{k-1}{n}\qw&\multigate{1}{{\permutation_{1}^{(\outcome{\boldsymbol{x}})}}^{m_{k},n}}&\qw&\qw&\sSequenceConditioned{C}{\boldsymbol{x}}{m_{k},n}\qw&\qw&\braidingGhost&\qw&\sSequenceConditioned{D}{\boldsymbol{x}}{m_{k},n}\qw&\qw&\qw&\multigate{1}{{\permutation_{2}^{(\outcome{\boldsymbol{x}})}}^{m_{k},n}}&\s{A}\qw&\qw&
		\\
		&\pureghost{}&\pureghost{{\permutation_{1}^{(\outcome{\boldsymbol{x}})}}^{m_{k},n}}&\qw&\qw&\qw&\qw&\sSequenceConditioned{E}{\boldsymbol{x}}{m_{k},n}\qw&\qw&\qw&\qw&\qw&\ghost{{\permutation_{2}^{(\outcome{\boldsymbol{x}})}}^{m_{k},n}}&\pureghost{}&
	},
\end{align*}
where the limits with respect to the variable $m_{k-1}$ and $m_{k}$ are due to the fact that the transformations that compose the instruments of \eqref{eqt:proof:MSOPT:atomicity:3} are those of \acp{MOPT}. Therefore, they may also be limits of sequences of transformations of the form \eqref{eqt:OPT:minimal:symmetric:transf:generic}.        
Given~\autoref{thm:OPT:minimal:transf:stabilization}, one can already remove the dependence of many elements from the indexes $m_{k-1}$ and $m_{k}$:
\begin{align*}
	\lim_{m_{k-1} \to \infty} \; \lim_{m_{k} \to \infty} \quad
	&\myQcircuit{
		&\pureghost{}&\pureghost{}&\pureghost{}&\multiprepareC{1}{{\preparationEventNoDown{\Phi}_{\outcome{x_{k-1}}}^{\left(\outcome{\boldsymbol{x}'}\right)}}^{m_{k-1},n}}&\qw&\qw&\qw&\sSequenceConditioned{H'}{\boldsymbol{x}'}{m_{k-1},n}\qw&\qw&\qw&\qw&\multimeasureD{1}{{\observationEventNoDown{B}_{\outcome{x_{k-1}}}^{\left(\outcome{\boldsymbol{x}'}\right)}}^{m_{k-1},n}}&
		\\
		&\pureghost{}&\pureghost{}&\pureghost{}&\pureghost{{\preparationEventNoDown{\Phi}_{\outcome{x_{k-1}}}^{\left(\outcome{\boldsymbol{x}'}\right)}}^{m_{k-1},n}}&\qw&\qw&\sSequenceConditioned{G}{\boldsymbol{x}'}{n}\qw&\braidingSym&\qw&\sSequenceConditioned{F}{\boldsymbol{x}'}{n}\qw&\qw&\ghost{{\observationEventNoDown{B}_{\outcome{x_{k-1}}}^{\left(\outcome{\boldsymbol{x}'}\right)}}^{m_{k-1},n}}&
		\\
		&\s{A}\qw&\gate{{\event{T}{\boldsymbol{x}'}'}^{n}}&\sSequenceIndex{A}{k-2}{n}\qw&\multigate{1}{{\permutation_{5}^{(\outcome{\boldsymbol{x}'})}}^{n}}&\qw&\qw&\sSequenceConditioned{F}{\boldsymbol{x}'}{n}\qw&\braidingGhost&\qw&\sSequenceConditioned{G}{\boldsymbol{x}'}{n}\qw&\qw&\multigate{1}{{\permutation_{6}^{(\outcome{\boldsymbol{x}'})}}^{n}}&\sSequenceIndex{A}{k}{n-1}\qw&\qw&
		\\
		&\pureghost{}&\pureghost{}&\pureghost{}&\pureghost{{\permutation_{5}^{(\outcome{\boldsymbol{x}'})}}^{n}}&\qw&\qw&\qw&\sSequenceConditioned{H}{\boldsymbol{x}'}{n}\qw&\qw&\qw&\qw&\ghost{{\permutation_{6}^{(\outcome{\boldsymbol{x}'})}}^{n}}&\pureghost{}&
	} \cdots \\[10pt] & \cdots \quad
	\myQcircuit{
		&\pureghost{}&\multiprepareC{1}{{\preparationEventNoDown{\Psi}_{\outcome{x_{k}}}^{\left(\outcome{\boldsymbol{x}}\right)}}^{m_{k},n}}&\qw&\qw&\qw&\sSequenceConditioned{E'}{\boldsymbol{x}}{m_{k},n}\qw&\qw&\qw&\qw&\multimeasureD{1}{{\observationEventNoDown{A}_{\outcome{x_{k}}}^{\left(\outcome{\boldsymbol{x}}\right)}}^{m_{k},n}}&
		\\
		&\pureghost{}&\pureghost{{\preparationEventNoDown{\Psi}_{\outcome{x_{k}}}^{\left(\outcome{\boldsymbol{x}}\right)}}^{m_{k},n}}&\qw&\qw&\sSequenceConditioned{D}{\boldsymbol{x}}{n}\qw&\braidingSym&\qw&\sSequenceConditioned{C}{\boldsymbol{x}}{n}\qw&\qw&\ghost{{\observationEventNoDown{A}_{\outcome{x_{k}}}^{\left(\outcome{\boldsymbol{x}}\right)}}^{m_{k},n}}&
		\\
		&\sSequenceIndex{A}{k-1}{n}\qw&\multigate{1}{{\permutation_{1}^{(\outcome{\boldsymbol{x}})}}^{n}}&\qw&\qw&\sSequenceConditioned{C}{\boldsymbol{x}}{n}\qw&\braidingGhost&\qw&\sSequenceConditioned{D}{\boldsymbol{x}}{n}\qw&\qw&\multigate{1}{{\permutation_{2}^{(\outcome{\boldsymbol{x}})}}^{n}}&\s{A}\qw&\qw&
		\\
		&\pureghost{}&\pureghost{{\permutation_{1}^{(\outcome{\boldsymbol{x}})}}^{n}}&\qw&\qw&\qw&\sSequenceConditioned{E}{\boldsymbol{x}}{n}\qw&\qw&\qw&\qw&\ghost{{\permutation_{2}^{(\outcome{\boldsymbol{x}})}}^{n}}&\pureghost{}&
	}.
\end{align*}
Following the reasoning made for \eqref{eqt:proof:MSOPT:atomicity:6}, one can the proceed to remove the dependence from the index $n$ on the last transformation in our composition
\begin{align*}
	\lim_{m_{k-1} \to \infty} \; \lim_{m_{k} \to \infty} \quad
	&\myQcircuit{
		&\pureghost{}&\pureghost{}&\pureghost{}&\multiprepareC{1}{{\preparationEventNoDown{\Phi}_{\outcome{x_{k-1}}}^{\left(\outcome{\boldsymbol{x}'}\right)}}^{m_{k-1},n}}&\qw&\qw&\qw&\sSequenceConditioned{H'}{\boldsymbol{x}'}{m_{k-1},n}\qw&\qw&\qw&\qw&\multimeasureD{1}{{\observationEventNoDown{B}_{\outcome{x_{k-1}}}^{\left(\outcome{\boldsymbol{x}'}\right)}}^{m_{k-1},n}}&
		\\
		&\pureghost{}&\pureghost{}&\pureghost{}&\pureghost{{\preparationEventNoDown{\Phi}_{\outcome{x_{k-1}}}^{\left(\outcome{\boldsymbol{x}'}\right)}}^{m_{k-1},n}}&\qw&\qw&\sSequenceConditioned{G}{\boldsymbol{x}'}{n}\qw&\braidingSym&\qw&\sSequenceConditioned{F}{\boldsymbol{x}'}{n}\qw&\qw&\ghost{{\observationEventNoDown{B}_{\outcome{x_{k-1}}}^{\left(\outcome{\boldsymbol{x}'}\right)}}^{m_{k-1},n}}&
		\\
		&\s{A}\qw&\gate{{\event{T}{\boldsymbol{x}'}'}^{n}}&\sSequenceIndex{A}{k-2}{n}\qw&\multigate{1}{{\permutation_{5}^{(\outcome{\boldsymbol{x}'})}}^{n}}&\qw&\qw&\sSequenceConditioned{F}{\boldsymbol{x}'}{n}\qw&\braidingGhost&\qw&\sSequenceConditioned{G}{\boldsymbol{x}'}{n}\qw&\qw&\multigate{1}{{\permutation_{6}^{(\outcome{\boldsymbol{x}'})}}^{n}}&\sSequenceIndex{A}{k}{n-1}\qw&\qw&
		\\
		&\pureghost{}&\pureghost{}&\pureghost{}&\pureghost{{\permutation_{5}^{(\outcome{\boldsymbol{x}'})}}^{n}}&\qw&\qw&\qw&\sSequenceConditioned{H}{\boldsymbol{x}'}{n}\qw&\qw&\qw&\qw&\ghost{{\permutation_{6}^{(\outcome{\boldsymbol{x}'})}}^{n}}&\pureghost{}&
	} \cdots \\[10pt] & \cdots \quad
	\myQcircuit{
		&\pureghost{}&\multiprepareC{1}{{\preparationEventNoDown{\Psi}_{\outcome{x_{k}}}^{\left(\outcome{\boldsymbol{x}}\right)}}^{m_{k},n}}&\qw&\qw&\qw&\sSequenceConditioned{E'}{\boldsymbol{x}}{m_{k},n}\qw&\qw&\qw&\qw&\multimeasureD{1}{{\observationEventNoDown{A}_{\outcome{x_{k}}}^{\left(\outcome{\boldsymbol{x}}\right)}}^{m_{k},n}}&
		\\
		&\pureghost{}&\pureghost{{\preparationEventNoDown{\Psi}_{\outcome{x_{k}}}^{\left(\outcome{\boldsymbol{x}}\right)}}^{m_{k},n}}&\qw&\qw&\sConditioned{D}{\boldsymbol{x}}\qw&\braidingSym&\qw&\sSequenceConditioned{C}{\boldsymbol{x}}{n}\qw&\qw&\ghost{{\observationEventNoDown{A}_{\outcome{x_{k}}}^{\left(\outcome{\boldsymbol{x}}\right)}}^{m_{k},n}}&
		\\
		&\sSequenceIndex{A}{k-1}{n}\qw&\multigate{1}{{\permutation_{1}^{(\outcome{\boldsymbol{x}})}}^{n}}&\qw&\qw&\sSequenceConditioned{C}{\boldsymbol{x}}{n}\qw&\braidingGhost&\qw&\sConditioned{D}{\boldsymbol{x}}\qw&\qw&\multigate{1}{{\permutation_{2}^{(\outcome{\boldsymbol{x}})}}}&\s{A}\qw&\qw&
		\\
		&\pureghost{}&\pureghost{{\permutation_{1}^{(\outcome{\boldsymbol{x}})}}^{n}}&\qw&\qw&\qw&\sConditioned{E}{\boldsymbol{x}}\qw&\qw&\qw&\qw&\ghost{{\permutation_{2}^{(\outcome{\boldsymbol{x}})}}}&\pureghost{}&
	}.
\end{align*}
But, we have prove that $\systemConditioned{D}{\boldsymbol{x}} = \trivialSystem$. Therefore, the transformation becomes
\begin{align*}
	\lim_{m_{k-1} \to \infty} \; \lim_{m_{k} \to \infty} \quad
	&\myQcircuit{
		&\pureghost{}&\pureghost{}&\pureghost{}&\multiprepareC{1}{{\preparationEventNoDown{\Phi}_{\outcome{x_{k-1}}}^{\left(\outcome{\boldsymbol{x}'}\right)}}^{m_{k-1},n}}&\qw&\qw&\qw&\sSequenceConditioned{H'}{\boldsymbol{x}'}{m_{k-1},n}\qw&\qw&\qw&\qw&\multimeasureD{1}{{\observationEventNoDown{B}_{\outcome{x_{k-1}}}^{\left(\outcome{\boldsymbol{x}'}\right)}}^{m_{k-1},n}}&
		\\
		&\pureghost{}&\pureghost{}&\pureghost{}&\pureghost{{\preparationEventNoDown{\Phi}_{\outcome{x_{k-1}}}^{\left(\outcome{\boldsymbol{x}'}\right)}}^{m_{k-1},n}}&\qw&\qw&\sSequenceConditioned{G}{\boldsymbol{x}'}{n}\qw&\braidingSym&\qw&\sSequenceConditioned{F}{\boldsymbol{x}'}{n}\qw&\qw&\ghost{{\observationEventNoDown{B}_{\outcome{x_{k-1}}}^{\left(\outcome{\boldsymbol{x}'}\right)}}^{m_{k-1},n}}&
		\\
		&\s{A}\qw&\gate{{\event{T}{\boldsymbol{x}'}'}^{n}}&\sSequenceIndex{A}{k-2}{n}\qw&\multigate{1}{{\permutation_{5}^{(\outcome{\boldsymbol{x}'})}}^{n}}&\qw&\qw&\sSequenceConditioned{F}{\boldsymbol{x}'}{n}\qw&\braidingGhost&\qw&\sSequenceConditioned{G}{\boldsymbol{x}'}{n}\qw&\qw&\multigate{1}{{\permutation_{6}^{(\outcome{\boldsymbol{x}'})}}^{n}}&\sSequenceIndex{A}{k}{n-1}\qw&\qw&
		\\
		&\pureghost{}&\pureghost{}&\pureghost{}&\pureghost{{\permutation_{5}^{(\outcome{\boldsymbol{x}'})}}^{n}}&\qw&\qw&\qw&\sSequenceConditioned{H}{\boldsymbol{x}'}{n}\qw&\qw&\qw&\qw&\ghost{{\permutation_{6}^{(\outcome{\boldsymbol{x}'})}}^{n}}&\pureghost{}&
	}\cdots \\[10pt] & \cdots \quad
	\myQcircuit{
		&\sSequenceIndex{A}{k}{n-1}\qw&\multigate{1}{{\permutation_{1}^{(\outcome{\boldsymbol{x}})}}^{n}}&\qw&\sSequenceConditioned{C}{\boldsymbol{x}}{n}\qw&\qw&\measureD{{\observationEventNoDown{a}_{\outcome{x_{k}}}^{\left(\outcome{\boldsymbol{x}}\right)}}^{m_{k},n}}&
		\\
		&\pureghost{}&\pureghost{{\permutation_{1}^{(\outcome{\boldsymbol{x}})}}^{n}}&\qw&\sConditioned{E}{\boldsymbol{x}}\qw&\qw&\gate{{\permutation_{2}^{(\outcome{\boldsymbol{x}})}}}&\s{A}\qw&\qw&\qw&
	},
\end{align*}
where
\begin{equation*}
	\myQcircuit{
		&\prepareC{{\preparationEventNoDown{\Psi}_{\outcome{x_{k}}}^{\left(\outcome{\boldsymbol{x}}\right)}}^{m_{k},n}}&\qw&\qw&\sSequenceConditioned{E'}{\boldsymbol{x}}{m_{k},n}\qw&\qw&\multimeasureD{1}{{\observationEventNoDown{A}_{\outcome{x_{k}}}^{\left(\outcome{\boldsymbol{x}}\right)}}^{m_{k},n}}&
		\\
		&\qw&\qw&\qw&\sSequenceConditioned{C}{\boldsymbol{x}}{n}\qw&\qw&\ghost{{\observationEventNoDown{A}_{\outcome{x_{k}}}^{\left(\outcome{\boldsymbol{x}}\right)}}^{m_{k},n}}&
	} = \quad\!\! \myQcircuit{
		&\qw&\sSequenceConditioned{C}{\boldsymbol{x}}{n}\qw&\qw&\measureD{{\observationEventNoDown{a}_{\outcome{x_{k}}}^{\left(\outcome{\boldsymbol{x}}\right)}}^{m_{k},n}}&
	}.
\end{equation*}
The latter equation, following the same passages that lead to \eqref{eqt:proof:MSOPT:atomicity:10}, becomes
\begin{equation*}
	\lim_{m_{k-1} \to \infty} \; \lim_{m_{k} \to \infty} \;
	\myQcircuit{
		&\pureghost{}&\pureghost{}&\pureghost{}&\multiprepareC{1}{{\preparationEventNoDown{\Phi}_{\outcome{x_{k-1}}}^{\left(\outcome{\boldsymbol{x}'}\right)}}^{m_{k-1},n}}&\qw&\qw&\qw&\sSequenceConditioned{H'}{\boldsymbol{x}'}{m_{k-1},n}\qw&\qw&\qw&\qw&\multimeasureD{1}{{\observationEventNoDown{B}_{\outcome{x_{k-1}}}^{\left(\outcome{\boldsymbol{x}'}\right)}}^{m_{k-1},n}}&
		\\
		&\pureghost{}&\pureghost{}&\pureghost{}&\pureghost{{\preparationEventNoDown{\Phi}_{\outcome{x_{k-1}}}^{\left(\outcome{\boldsymbol{x}'}\right)}}^{m_{k-1},n}}&\qw&\qw&\sSequenceConditioned{G}{\boldsymbol{x}'}{n}\qw&\braidingSym&\qw&\sSequenceConditioned{F}{\boldsymbol{x}'}{n}\qw&\qw&\ghost{{\observationEventNoDown{B}_{\outcome{x_{k-1}}}^{\left(\outcome{\boldsymbol{x}'}\right)}}^{m_{k-1},n}}&
		\\
		&\s{A}\qw&\gate{{\event{T}{\boldsymbol{x}'}'}^{n}}&\sSequenceIndex{A}{k-2}{n}\qw&\multigate{1}{{\permutation_{5}^{(\outcome{\boldsymbol{x}'})}}^{n}}&\qw&\qw&\sSequenceConditioned{F}{\boldsymbol{x}'}{n}\qw&\braidingGhost&\qw&\sSequenceConditioned{G}{\boldsymbol{x}'}{n}\qw&\qw&\multigate{1}{{\permutation_{6}^{(\outcome{\boldsymbol{x}'})}}^{n}}&\sSequenceIndex{A}{k}{n-1}\qw&\multigate{1}{\permutation_{4}^{n}}&\sSequence{C}{n}\qw&\measureD{{{\observationEventNoDown{a}_{\outcome{x_{k}}}'}^{\left(\outcome{\boldsymbol{x}}\right)}}^{m_{k},n}}&
		\\
		&\pureghost{}&\pureghost{}&\pureghost{}&\pureghost{{\permutation_{5}^{(\outcome{\boldsymbol{x}'})}}^{n}}&\qw&\qw&\qw&\sSequenceConditioned{H}{\boldsymbol{x}'}{n}\qw&\qw&\qw&\qw&\ghost{{\permutation_{6}^{(\outcome{\boldsymbol{x}'})}}^{n}}&\pureghost{}&\pureghost{\permutation_{4}^{n}}&\s{A}\qw&\qw&\qw&\qw&
	},
\end{equation*}
or equivalently
\begin{equation}
	\label{eqt:proof:MSOPT:atomicity:15}
	\lim_{m_{k-1} \to \infty} \; \lim_{m_{k} \to \infty} \quad \myQcircuit{
		&\pureghost{}&\pureghost{}&\pureghost{}&\multiprepareC{1}{{\preparationEventNoDown{\Phi}_{\outcome{x_{k-1}}}^{\left(\outcome{\boldsymbol{x}'}\right)}}^{m_{k-1},n}}&\qw&\qw&\qw&\sSequenceConditioned{H'}{\boldsymbol{x}'}{m_{k-1},n}\qw&\qw&\qw&\qw&\multimeasureD{1}{{\observationEventNoDown{B}_{\outcome{x_{k-1}}}^{\left(\outcome{\boldsymbol{x}'}\right)}}^{m_{k-1},n}}&
		\\				&\pureghost{}&\pureghost{}&\pureghost{}&\pureghost{{\preparationEventNoDown{\Phi}_{\outcome{x_{k-1}}}^{\left(\outcome{\boldsymbol{x}'}\right)}}^{m_{k-1},n}}&\qw&\qw&\sSequenceConditioned{G}{\boldsymbol{x}'}{n}\qw&\braidingSym&\qw&\sSequenceConditioned{F}{\boldsymbol{x}'}{n}\qw&\qw&\ghost{{\observationEventNoDown{B}_{\outcome{x_{k-1}}}^{\left(\outcome{\boldsymbol{x}'}\right)}}^{m_{k-1},n}}&
		\\				&\s{A}\qw&\gate{{\event{T}{\boldsymbol{x}'}'}^{n}}&\sSequenceIndex{A}{k-2}{n}\qw&\multigate{1}{{\permutation_{5}^{(\outcome{\boldsymbol{x}'})}}^{n}}&\qw&\qw&\sSequenceConditioned{F}{\boldsymbol{x}'}{n}\qw&\braidingGhost&\qw&\sSequenceConditioned{G}{\boldsymbol{x}'}{n}\qw&\qw&\multigate{1}{{\permutation_{7}^{(\outcome{\boldsymbol{x}'})}}^{n}}&\sSequence{C}{n}\qw&\measureD{{{\observationEventNoDown{a}_{\outcome{x_{k}}}'}^{\left(\outcome{\boldsymbol{x}}\right)}}^{m_{k},n}}&
		\\
		&\pureghost{}&\pureghost{}&\pureghost{}&\pureghost{{\permutation_{5}^{(\outcome{\boldsymbol{x}'})}}^{n}}&\qw&\qw&\qw&\sSequenceConditioned{H}{\boldsymbol{x}'}{n}\qw&\qw&\qw&\qw&\ghost{{\permutation_{7}^{(\outcome{\boldsymbol{x}'})}}^{n}}&\s{A}\qw&\qw&\qw&\qw&
	},
\end{equation}
where the local permutation on the system $\systemSequence{C}{n}$, as done in \eqref{eqt:proof:MSOPT:atomicity:10}, was absorbed within the effect ${\observationEventNoDown{a}_{\outcome{x_{k}}}^{\left(\outcome{\boldsymbol{x}}\right)}}^{m_{k},n}$. It is now explicit the reason why before we decided to maintain the dependence form the outcomes for the deterministic effect $\observationUniqueDeterministic\left(\outcome{\boldsymbol{x}}\right)$, acting on $\systemSequence{C}{n}$. The reason being that while the deterministic effect does not depend on it, the observation-instrument $\left\{ {{\observationEventNoDown{a}_{\outcome{x_{k}}}'}^{\left(\outcome{\boldsymbol{x}}\right)}}^{m_{k},n} \right\}_{\outcomeIncluded{x_{k}}{X_{K}}} \in \ObsOPT{\systemSequence{C}{n}}$ does.
We have now arrived at the slightly delicate point exposed before that leads to \eqref{eqt:proof:MSOPT:atomicity:14}. Here, things are a bit more tricky, since we cannot split the observation-instrument as we did previously for the deterministic effect. However, with a little of patience, also this situation can be dealt with. Let us start again by splitting the systems ${\systemConditioned{G}{\boldsymbol{x}'}}^{n} = \systemSequence{J}{n}{\systemConditioned{G'}{\boldsymbol{x}'}}^{n}$ and $\systemSequence{C}{n} = \systemSequence{J}{n} \systemSequence{C'}{n}$,\footnote{We recall that the system $\systemSequence{J}{n}$ that appears is defined up to a local permutation that can be absorbed within the observation-instrument.} and study the the permutation
\begin{equation*}
	\myQcircuit{
		&\qw&\sSequence{J}{n}\qw&\qw&\multigate{2}{{\permutation_{7}^{(\outcome{\boldsymbol{x}'})}}^{n}}&\sSequence{J}{n}\qw&\qw&
		\\
		&\qw&\sSequenceConditioned{G'}{\boldsymbol{x}'}{n}\qw&\qw&\ghost{{\permutation_{7}^{(\outcome{\boldsymbol{x}'})}}^{n}}&\sSequence{C'}{n}\qw&\qw&
		\\
		&\qw&\sSequenceConditioned{H}{\boldsymbol{x}'}{n}\qw&\qw&\ghost{{\permutation_{7}^{(\outcome{\boldsymbol{x}'})}}^{n}}&\s{A}\qw&\qw&
	}.
\end{equation*}
Given that permutations are completely characterized by how they permute their input and output systems (\autoref{lem:opt:permutations:characterisation}), if one finds two different permutations that permute in the same way their input and output wires, then they are the same transformation. Therefore, an equivalent transformation to ${\permutation_{7}^{(\outcome{\boldsymbol{x}'})}}^{n}$ is given by 
\begin{equation*}
	\myQcircuitBox{
		&\qw&\qw&\qw&\sSequence{J}{n}\qw&\qw&\qw&
		\\
		&\qw&\sSequenceConditioned{G'}{\boldsymbol{x}'}{n}\qw&\qw&\multigate{1}{{\permutation_{8}^{(\outcome{\boldsymbol{x}'})}}^{n}}&\sSequence{C'}{n}\qw&\qw&
		\\
		&\qw&\sSequenceConditioned{H}{\boldsymbol{x}'}{n}\qw&\qw&\ghost{{\permutation_{8}^{(\outcome{\boldsymbol{x}'})}}^{n}}&\s{A}\qw&\qw&
	},
\end{equation*}
which substituted in \eqref{eqt:proof:MSOPT:atomicity:15} becomes

\begin{equation*}
	\lim_{m_{k-1} \to \infty} \; \lim_{m_{k} \to \infty} \quad
	\myQcircuit{
		&\pureghost{}&\pureghost{}&\pureghost{}&\multiprepareC{1}{{\preparationEventNoDown{\Phi}_{\outcome{x_{k-1}}}^{\left(\outcome{\boldsymbol{x}'}\right)}}^{m_{k-1},n}}&\qw&\sSequenceConditioned{H'}{\boldsymbol{x}'}{m_{k-1},n}\qw&\qw&\qw&\multimeasureD{1}{{\observationEventNoDown{B}_{\outcome{x_{k-1}}}^{\left(\outcome{\boldsymbol{x}'}\right)}}^{m_{k-1},n}}&
		\\
		&\pureghost{}&\pureghost{}&\pureghost{}&\pureghost{{\preparationEventNoDown{\Phi}_{\outcome{x_{k-1}}}^{\left(\outcome{\boldsymbol{x}'}\right)}}^{m_{k-1},n}}&\sSequence{J}{n}\qw&\braidingSym&\sSequenceConditioned{F}{\boldsymbol{x}'}{n}\qw&\qw&\ghost{{\observationEventNoDown{B}_{\outcome{x_{k-1}}}^{\left(\outcome{\boldsymbol{x}'}\right)}}^{m_{k-1},n}}&
		\\
		&\s{A}\qw&\gate{{\event{T}{\boldsymbol{x}'}'}^{n}}&\sSequenceIndex{A}{k-2}{n}\qw&\multigate{1}{{\permutation_{5}^{(\outcome{\boldsymbol{x}'})}}^{n}}&\sSequenceConditioned{F}{\boldsymbol{x}'}{n}\qw&\braidingGhost&\sSequence{J}{n}\qw&\qw&\multimeasureD{1}{{{\observationEventNoDown{a}_{\outcome{x_{k}}}'}^{\left(\outcome{\boldsymbol{x}}\right)}}^{m_{k},n}}&
		\\
		&\pureghost{}&\pureghost{}&\pureghost{}&\pureghost{{\permutation_{5}^{(\outcome{\boldsymbol{x}'})}}^{n}}&\sSequenceConditioned{H}{\boldsymbol{x}'}{n}\qw&\multigate{1}{{\permutation_{8}^{(\outcome{\boldsymbol{x}'})}}^{n}}&\sSequence{C'}{n}\qw&\qw&\ghost{{{\observationEventNoDown{a}_{\outcome{x_{k}}}'}^{\left(\outcome{\boldsymbol{x}}\right)}}^{m_{k},n}}&
		\\
		&\pureghost{}&\pureghost{}&\pureghost{}&\pureghost{}&\pureghost{}&\pureghost{{\permutation_{8}^{(\outcome{\boldsymbol{x}'})}}^{n}}&\s{A}\qw&\qw&\qw&\qw&\qw&
	},
\end{equation*}
or equivalently,
\begin{equation*}
	\lim_{m_{k-1} \to \infty} \; \lim_{m_{k} \to \infty} \quad
	\myQcircuit{
		&\pureghost{}&\pureghost{}&\pureghost{}&\multiprepareC{1}{{\preparationEventNoDown{\Phi}_{\outcome{x_{k-1}}}^{\left(\outcome{\boldsymbol{x}'}\right)}}^{m_{k-1},n}}&\qw&\sSequenceConditioned{H'}{\boldsymbol{x}'}{m_{k-1},n}\qw&\qw&\qw&\multimeasureD{1}{{\observationEventNoDown{B}_{\outcome{x_{k-1}}}^{\left(\outcome{\boldsymbol{x}'}\right)}}^{m_{k-1},n}}&
		\\
		&\pureghost{}&\pureghost{}&\pureghost{}&\pureghost{{\preparationEventNoDown{\Phi}_{\outcome{x_{k-1}}}^{\left(\outcome{\boldsymbol{x}'}\right)}}^{m_{k-1},n}}&\sSequence{J}{n}\qw&\braidingSym&\sSequenceConditioned{F}{\boldsymbol{x}'}{n}\qw&\qw&\ghost{{\observationEventNoDown{B}_{\outcome{x_{k-1}}}^{\left(\outcome{\boldsymbol{x}'}\right)}}^{m_{k-1},n}}&
		\\
		&\s{A}\qw&\gate{{\event{T}{\boldsymbol{x}'}'}^{n}}&\sSequenceIndex{A}{k-2}{n}\qw&\multigate{2}{{\permutation_{9}^{(\outcome{\boldsymbol{x}'})}}^{n}}&\sSequenceConditioned{F}{\boldsymbol{x}'}{n}\qw&\braidingGhost&\sSequence{J}{n}\qw&\qw&\multimeasureD{1}{{{\observationEventNoDown{a}_{\outcome{x_{k}}}'}^{\left(\outcome{\boldsymbol{x}}\right)}}^{m_{k},n}}&
		\\
		&\pureghost{}&\pureghost{}&\pureghost{}&\pureghost{{\permutation_{9}^{(\outcome{\boldsymbol{x}'})}}^{n}}&\qw&\sSequence{C'}{n}\qw&\qw&\qw&\ghost{{{\observationEventNoDown{a}_{\outcome{x_{k}}}'}^{\left(\outcome{\boldsymbol{x}}\right)}}^{m_{k},n}}&
		\\
		&\pureghost{}&\pureghost{}&\pureghost{}&\pureghost{{\permutation_{9}^{(\outcome{\boldsymbol{x}'})}}^{n}}&\qw&\s{A}\qw&\qw&\qw&\qw&\qw&
	},
\end{equation*}
where the fact that it is possible to find a subsequence in which ${\systemConditioned{G'}{\boldsymbol{x}'}}^{n} = \trivialSystem$ was also exploited.

Following the same procedure as before, one can then also remove the dependence from the set of outcomes $\outcome{\boldsymbol{x}'}$ both from the permutation and the system ${\systemConditioned{F}{\boldsymbol{x}'}}^{n}$
\begin{equation*}
	\lim_{m_{k-1} \to \infty} \; \lim_{m_{k} \to \infty} \quad
	\myQcircuit{
		&\pureghost{}&\pureghost{}&\pureghost{}&\multiprepareC{1}{{\preparationEventNoDown{\Phi}_{\outcome{x_{k-1}}}^{\left(\outcome{\boldsymbol{x}'}\right)}}^{m_{k-1},n}}&\qw&\sSequenceConditioned{H'}{\boldsymbol{x}'}{m_{k-1},n}\qw&\qw&\qw&\multimeasureD{1}{{\observationEventNoDown{B}_{\outcome{x_{k-1}}}^{\left(\outcome{\boldsymbol{x}'}\right)}}^{m_{k-1},n}}&
		\\
		&\pureghost{}&\pureghost{}&\pureghost{}&\pureghost{{\preparationEventNoDown{\Phi}_{\outcome{x_{k-1}}}^{\left(\outcome{\boldsymbol{x}'}\right)}}^{m_{k-1},n}}&\sSequence{J}{n}\qw&\braidingSym&\sSequence{F}{n}\qw&\qw&\ghost{{\observationEventNoDown{B}_{\outcome{x_{k-1}}}^{\left(\outcome{\boldsymbol{x}'}\right)}}^{m_{k-1},n}}&
		\\
		&\s{A}\qw&\gate{{\event{T}{\boldsymbol{x}'}'}^{n}}&\sSequenceIndex{A}{k-2}{n}\qw&\multigate{2}{\permutation_{9}^{n}}&\sSequence{F}{n}\qw&\braidingGhost&\sSequence{J}{n}\qw&\qw&\multimeasureD{1}{{{\observationEventNoDown{a}_{\outcome{x_{k}}}'}^{\left(\outcome{\boldsymbol{x}}\right)}}^{m_{k},n}}&
		\\
		&\pureghost{}&\pureghost{}&\pureghost{}&\pureghost{\permutation_{9}^{n}}&\qw&\sSequence{C'}{n}\qw&\qw&\qw&\ghost{{{\observationEventNoDown{a}_{\outcome{x_{k}}}'}^{\left(\outcome{\boldsymbol{x}}\right)}}^{m_{k},n}}&
		\\
		&\pureghost{}&\pureghost{}&\pureghost{}&\pureghost{\permutation_{9}^{n}}&\qw&\s{A}\qw&\qw&\qw&\qw&\qw&
	}.
\end{equation*}

Exploiting then the fact that

\begin{equation*}
	\myQcircuit{
		&\pureghost{}&\multiprepareC{1}{{\preparationEventNoDown{\Phi}_{\outcome{x_{k-1}}}^{\left(\outcome{\boldsymbol{x}'}\right)}}^{m_{k-1},n}}&\qw&\qw&\sSequenceConditioned{H'}{\boldsymbol{x}'}{m_{k-1},n}\qw&\qw&\multimeasureD{1}{{\observationEventNoDown{B}_{\outcome{x_{k-1}}}^{\left(\outcome{\boldsymbol{x}'}\right)}}^{m_{k-1},n}}&
		\\
		&\pureghost{}&\pureghost{{\preparationEventNoDown{\Phi}_{\outcome{x_{k-1}}}^{\left(\outcome{\boldsymbol{x}'}\right)}}^{m_{k-1},n}}&\sSequence{J}{n}\qw&\braidingSym&\sSequence{F}{n}\qw&\qw&\ghost{{\observationEventNoDown{B}_{\outcome{x_{k-1}}}^{\left(\outcome{\boldsymbol{x}'}\right)}}^{m_{k-1},n}}&
		\\
		&\qw&\qw&\sSequence{F}{n}\qw&\braidingGhost&\sSequence{J}{n}\qw&\qw&\multimeasureD{1}{{{\observationEventNoDown{a}_{\outcome{x_{k}}}'}^{\left(\outcome{\boldsymbol{x}}\right)}}^{m_{k},n}}&
		\\
		&\qw&\qw&\qw&\sSequence{C'}{n}\qw&\qw&\qw&\ghost{{{\observationEventNoDown{a}_{\outcome{x_{k}}}'}^{\left(\outcome{\boldsymbol{x}}\right)}}^{m_{k},n}}&
	} = \quad\!\!
	\myQcircuit{
		&\sSequence{F}{n}\qw&\measureD{{\observationEventNoDown{b}_{\outcome{x_{k-1}}}^{\left(\outcome{\boldsymbol{x}'}\right)}}^{m_{k-1},n}}&
		\\
		&\sSequence{C'}{n}\qw&\measureD{{{\observationEventNoDown{a}_{\outcome{x_{k-1},x_{k}}}''}^{\left(\outcome{\boldsymbol{x}}\right)}}^{m_{k-1},m_{k},n}}&
	},
\end{equation*}
for some suitable observation-instrument, one arrives to
\begin{equation}
	\label{eqt:proof:MSOPT:atomicity:16}
	\lim_{m_{k-1} \to \infty} \; \lim_{m_{k} \to \infty} \;
	\myQcircuit{
		&\s{A}\qw&\gate{{\event{T}{\boldsymbol{x}'}'}^{n}}&\sSequenceIndex{A}{k-2}{n}\qw&\multigate{2}{\permutation_{9}^{n}}&\sSequence{F}{n}\qw&\measureD{{\observationEventNoDown{b}_{\outcome{x_{k-1}}}^{\left(\outcome{\boldsymbol{x}'}\right)}}^{m_{k-1},n}}&
		\\
		&\pureghost{}&\pureghost{}&\pureghost{}&\pureghost{\permutation_{9}^{n}}&\sSequence{C'}{n}\qw&\measureD{{{\observationEventNoDown{a}_{\outcome{x_{k-1},x_{k}}}''}^{\left(\outcome{\boldsymbol{x}}\right)}}^{m_{k-1},m_{k},n}}&
		\\
		&\pureghost{}&\pureghost{}&\pureghost{}&\pureghost{\permutation_{9}^{n}}&\s{A}\qw&\qw&\qw&\qw&\qw&
	},
\end{equation}
which is still of the form \eqref{eqt:proof:MSOPT:atomicity:15}, as can be seen by expanding ${\event{T}{\boldsymbol{x}'}'}^{n}$ and highlighting the $k-2$-th conditional step. So the procedure we followed from \eqref{eqt:proof:MSOPT:atomicity:15} can then be iterated. In particular, by stopping before the last step, what one finds is of the form
\begin{equation*}
	\lim_{m_{1} \to \infty} \; \lim_{\overline{\boldsymbol m} \to \infty} \;
	\myQcircuit{
		&\pureghost{}&\multiprepareC{1}{\preparationEvent{\Gamma}{x_{1}}^{m_{1},n}}&\qw&\sSequencePrime{N}{m_{1},n}\qw&\qw&\multimeasureD{1}{\observationEvent{C}{x_{1}}^{m_{1},n}}&
		\\
		&\pureghost{}&\pureghost{\preparationEvent{\Gamma}{x_{1}}^{m_{1},n}}&\sSequence{M}{n}\qw&\braidingSym&\s{L}\qw&\ghost{\observationEvent{C}{x_{1}}^{m_{1},n}}&
		\\
		&\s{A}\qw&\multigate{1}{\permutation_{10}}&\s{L}\qw&\braidingGhost&\sSequence{M}{n}\qw&\multigate{1}{\permutation_{11}^{n}}&\sSequence{P}{n}\qw&\measureD{{\observationEvent{d}{\overline{\boldsymbol x}}^{\left( \outcome{\boldsymbol{x}} \right)}}^{\overline{\boldsymbol m},n}}&
		\\
		&\pureghost{}&\pureghost{\permutation_{10}}&\qw&\s{N}\qw&\qw&\ghost{\permutation_{11}^{n}}&\s{A}\qw&\qw&\qw&
	},
\end{equation*}
where $\overline{\boldsymbol m} = m_{2}, \ldots, m_{k}$, $\outcome{\overline{\boldsymbol x}} = \left( \outcome{x_{2}, \ldots, x_{k}} \right)$, $\permutation_{11}^{n}$ is a suitable permutation, $\systemSequence{P}{n}$ is a suitable system, $\left\{ {\observationEvent{d}{\overline{\boldsymbol x}}^{\left( \outcome{\boldsymbol{x}} \right)}}^{\overline{\boldsymbol m},n} \right\}_{\overline{\boldsymbol x} \in \overline{\outcomeSpace{X}} = \cartesianProduct{\outcomeSpace{X}_{2}}{\cartesianProduct{\cdots}{\outcomeSpace{X}_{k}}}}$ is a suitable observation-instrument, the dependence form the index $m_{1}$ was already removed from wherever possible and we suppose to consider the subsequence where the leftmost permutation with its systems is fixed. We highlight that the fact that it is possible to remove the dependence form the index $n$ of the sequence from the system \system{L} is extremely important to conclude our proof. Since the effect ${\observationEvent{d}{\overline{\boldsymbol x}}^{\left( \outcome{\boldsymbol{x}} \right)}}^{\boldsymbol{m},n}$ must be completely ``absorbed'' by the state $\preparationEvent{\Gamma}{x_{1}}^{m_{1},n}$ to avoid  a transformation which cannot decompose the identity, it must be $\systemSequence{M}{n} = \systemSequence{M'}{n}\systemSequence{P}{n}$ with $\systemSequence{M'}{n} = \trivialSystem$, due to the observations made following \eqref{eqt:proof:MSOPT:atomicity:7}. Therefore, exploiting again the fact that permutations are completely characterized by how they permute the systems wires and grouping whatever is possible, the transformation becomes
\begin{equation}
	\label{eqt:proof:MSOPT:atomicity:17}
	\lim_{m_{1} \to \infty} \; \lim_{\overline{\boldsymbol m} \to \infty} \;
	\myQcircuit{
		&\pureghost{}&\multiprepareC{1}{\preparationEvent{\Gamma}{x_{1}}^{m_{1},n}}&\qw&\sSequencePrime{N}{m_{1},n}\qw&\qw&\multimeasureD{1}{\observationEvent{C}{x_{1}}^{m_{1},n}}&
		\\
		&\pureghost{}&\pureghost{\preparationEvent{\Gamma}{x_{1}}^{m_{1},n}}&\sSequence{P}{n}\qw&\braidingSym&\s{L}\qw&\ghost{\observationEvent{C}{x_{1}}^{m_{1},n}}&
		\\
		&\s{A}\qw&\multigate{1}{\permutation_{12}}&\s{L}\qw&\braidingGhost&\sSequence{P}{n}\qw&\measureD{{\observationEvent{d}{\overline{\boldsymbol x}}^{\left( \outcome{\boldsymbol{x}} \right)}}^{\overline{\boldsymbol m},n}}&
		\\
		&\pureghost{}&\pureghost{\permutation_{12}}&\qw&\s{A}\qw&\qw&\qw&\qw&\qw&
	}.
\end{equation}
We highlight that also $\permutation_{12}$ still does not depend on $n$, because it was obtained by composing $\permutation_{10}$ with a permutation $\permutation \in \RevTransf{N}{A}$ and neither \system{N} nor \system{A} depend form the index $n$ of the sequence.
Proceeding now to resolve the calculations related to the state and the effects the following transformation is obtained
\begin{equation*}
	\lim_{m_{1} \to \infty} \; \lim_{\overline{\boldsymbol m} \to \infty} \;
	\myQcircuit{
		&\s{A}\qw&\multigate{1}{\permutation_{12}}&\s{L}\qw&\measureD{{\observationEvent{c}{\outcome{x}_{1},\overline{\boldsymbol x}}^{\left( \outcome{\boldsymbol{x}} \right)}}^{m_{1},\overline{\boldsymbol m},n}}&
		\\
		&\pureghost{}&\pureghost{\permutation_{12}}&\s{A}\qw&\qw&\qw&
	},
\end{equation*}
which is ill defined unless $\system{L} = \trivialSystem$.

Finally, substituting this latter result into \eqref{eqt:proof:MSOPT:atomicity:17} one obtains that any transformation that composes the instrument \eqref{eqt:proof:MSOPT:atomicity:3} must be of the form 
\begin{equation}
	\label{eqt:proof:MSOPT:atomicity:18}
	\lim_{m_{1} \to \infty} \; \lim_{\overline{\boldsymbol m} \to \infty} \; 
	\myQcircuit{
		&\pureghost{}&\multiprepareC{1}{\preparationEvent{\Gamma}{x_{1}}^{m_{1},n}}&\qw&\sSequencePrime{N}{m_{1},n}\qw&\qw&\measureD{\observationEvent{C}{x_{1}}^{m_{1},n}}&
		\\
		&\pureghost{}&\pureghost{\preparationEvent{\Gamma}{x_{1}}^{m_{1},n}}&\qw&\sSequence{P}{n}\qw&\qw&\measureD{{\observationEvent{d}{\overline{\boldsymbol x}}^{\left( \outcome{\boldsymbol{x}} \right)}}^{\overline{\boldsymbol m},n}}&
		\\
		&\qw&\qw&\qw&\s{A}\qw&\qw&\qw&\qw&\qw&
	}
\end{equation}
if one wants the full coarse-graining of the instrument to be equal to the identity, i.e., for \eqref{eqt:proof:MSOPT:atomicity:4} to hold.

\end{document}